\documentclass[hebrew,english,12pt,a4paper,oneside,onecolumn]{book}

\usepackage{silence}
\usepackage[T1]{fontenc}
\usepackage[utf8x]{inputenc}

\makeatletter\let\l@hebrew\l@nohyphenation\makeatother
\usepackage[main=english,hebrew]{babel}

\usepackage{amsthm}
\usepackage[framemethod=default]{mdframed}

\makeatletter
\newcommand{\addnumberlesstotoc}[2]{
  \addcontentsline{toc}{#1}{\protect\numberline{}#2}
}
\makeatother

\usepackage{datetime}
\ddmmyyyydate

\usepackage{setspace}
\usepackage{fancyhdr}
\fancypagestyle{plain}{\pagestyle{fancy}}
\usepackage[labelformat=simple]{subcaption}

\usepackage{amsmath}
\usepackage{diagbox}

\usepackage{mdframed}
\usepackage{listings}
\usepackage{xcolor}
\usepackage[splitrule]{footmisc}
\usepackage{titlesec}
\usepackage{float}
\usepackage{graphicx}
\usepackage{bigints}
\graphicspath{{images/}}
\usepackage{mathpazo}
\usepackage{hyperref}
\usepackage{bookmark}
\titleformat{\chapter}[hang]{\bf\huge}{\thechapter}{2pc}{}
\usepackage{multirow}
\DeclareMathOperator{\spn}{span}
 
\DeclareMathOperator*{\argmin}{argmin}
\usepackage{tikz}

\usepackage{epigraph}
\usepackage{bbm}
\usepackage{relsize}
\usepackage{mathtools}

\usetikzlibrary{arrows, automata,
                quotes,
                positioning
                }
\usetikzlibrary{patterns}

\DeclareMathOperator{\rank}{rank}
\DeclareMathOperator{\diag}{diag}

\newmdtheoremenv{theorem}{Theorem}[section]
\newtheoremstyle{definition}
{9pt}           
{9pt}           
{}              
{}           
{\bfseries}     
{\ }            
{0.5em}             
{}              

\theoremstyle{definition}
\newtheorem{definition}{Definition}[section]

\theoremstyle{plain}
\newmdtheoremenv{corollary}{Corollary}[section]
\newmdtheoremenv{lemma}{Lemma}[section]

\newmdtheoremenv{proposition}{Proposition}[section]

\theoremstyle{remark}
\newtheorem{remark}{Remark}[section]
\usepackage{algorithm2e}
\definecolor{pycomment}{rgb}{0.0, 0.5, 0.0}
\definecolor{pykeyword}{rgb}{0.7, 0.0, 0.5}
\definecolor{pystring}{rgb}{0.6, 0.1, 0.1}

\lstdefinestyle{mypython}{
  language=Python,
  basicstyle=\ttfamily\scriptsize,
  keywordstyle=\color{pykeyword}\bfseries,
  commentstyle=\color{pycomment}\slshape,
  stringstyle=\color{pystring},
  showstringspaces=false,
  breaklines=true,
  tabsize=2,
  numbers=left,
  numberstyle=\tiny\color{gray},
  numbersep=8pt,
  frame=lines,
  framesep=3mm
}

\lstdefinestyle{mypythoninline}{
  language=Python,
  basicstyle=\ttfamily\small,
  keywordstyle=\color{pykeyword}\bfseries,
  commentstyle=\color{pycomment}\slshape,
  stringstyle=\color{pystring},
  showstringspaces=false
}

\newcommand*{\pyth}{\lstinline[style=mypythoninline]}
\newcommand*{\pythoninline}{\lstinline[style=mypythoninline]}

\lstnewenvironment{minted}[2][]{%
  \lstset{style=mypython}%
}{}
\title{BGU Thesis Template}

\begin{document}

\pdfbookmark{Cover}{Cover}
\frontmatter
\begin{titlepage}
    \begin{center}
        \vspace*{1cm}
        
        \includegraphics[width=0.1\textwidth]{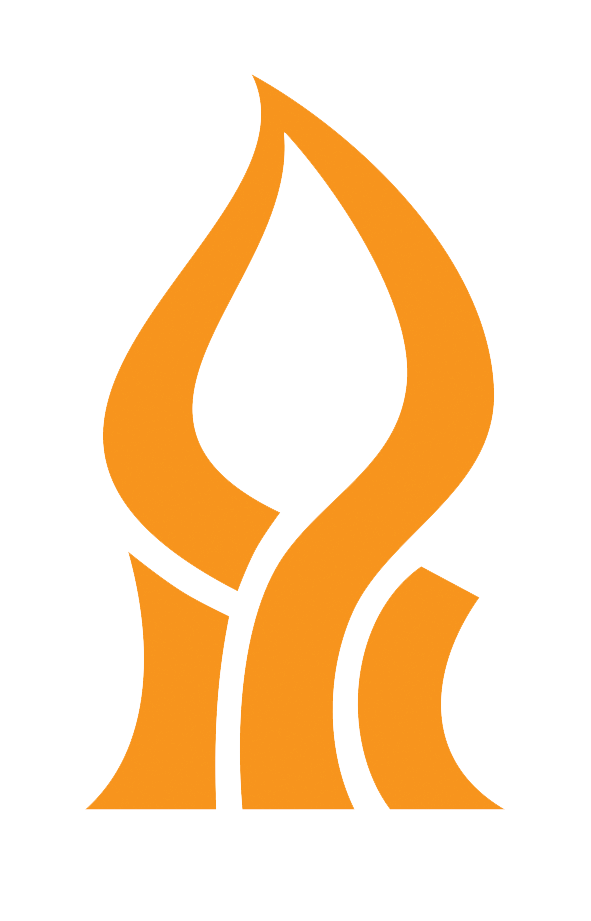}\\
        Ben-Gurion University of the Negev\\
        The Faculty of Natural Sciences\\
        The Department of Computer Science
        
        \vspace{2cm}
        
        {\Large \textbf{Differential Games for Compositional Handling of Competing Control Tasks}}
        
        \vspace{1.5cm}
        
        \textbf{Joshua Shay Kricheli}
        
        \vspace{1cm}
        
        Thesis submitted in partial fulfillment of the requirements\\for the Master of Sciences degree
        
        \vspace{1cm}
        
        Under the supervision of \\\textbf{Prof. Gera Weiss}, The Department of Computer Science \\
        and \textbf{Dr. Shai Arogeti}, The Department of Mechanical Engineering
        
        \vfill
        
        \textbf{November 2022}
    \end{center}
\end{titlepage}
\begin{titlepage}
    \begin{center}
        
        \includegraphics[width=0.1\textwidth]{logos/bgu.png}\\
        Ben-Gurion University of the Negev\\
        The Faculty of Natural Sciences\\
        The Department of Computer Science
        
        \vspace{1.3cm}
        
        {\Large \textbf{Differential Games for Compositional Handling of Competing Control Tasks}}
        
        \vspace{1cm}
        
        \textbf{Joshua Shay Kricheli}
        
        \vspace{1cm}
        
        Thesis submitted in partial fulfillment of the requirements\\for the Master of Sciences degree
        
        \vspace{1cm}
        
        Under the supervision of \\\textbf{Prof. Gera Weiss}, The Department of Computer Science \\
        and \textbf{Dr. Shai Arogeti}, The Department of Mechanical Engineering
        
        \vspace{1cm}
        \tikz[remember picture,overlay]
    \path (current page.south west) ++ (10.2,6.9) node {\includegraphics[width=0.15\columnwidth]{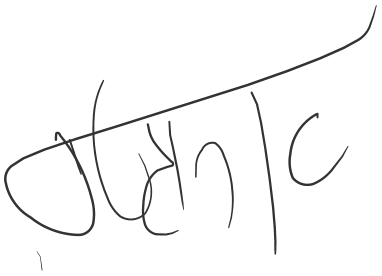}
    };
    \tikz[remember picture,overlay]
    \path (current page.south west) ++ (9,8.3) node {\includegraphics[width=0.15\columnwidth]{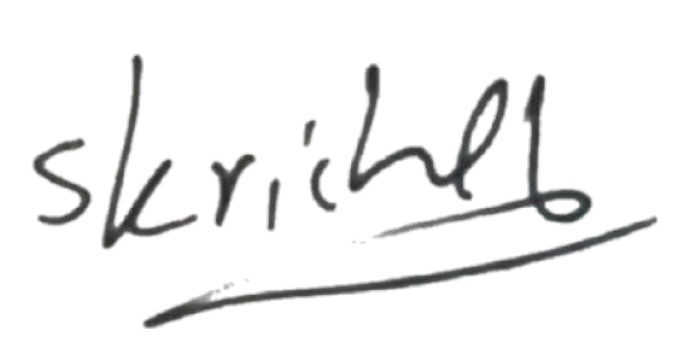}
    };
    \tikz[remember picture,overlay]
    \path (current page.south west) ++ (9.3,7.5) node {\includegraphics[width=0.08\columnwidth]{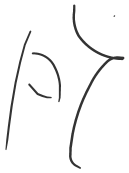}
    };
        \begin{flushleft}
        Signature of student: \( \rule{3cm}{0.15mm} \) \hfill Date: \( \underline{ \,\,\,\,\,\,\,\,08.11.2022\,\,\,\,\,\,\,\,\,\,   } \)\\
        Signature of supervisor: \( \rule{3cm}{0.15mm} \) \hfill Date: \( \underline{ \,\,\,\,\,\,\,\,08.11.2022\,\,\,\,\,\,\,\,\,\,   } \)\\
        Signature of supervisor: \( \rule{3cm}{0.15mm} \) \hfill Date: \( \underline{ \,\,\,\,\,\,\,\,08.11.2022\,\,\,\,\,\,\,\,\,\,   } \)\\
        Signature of chairperson of the\\committee for graduate studies: \( \rule{3cm}{0.15mm} \) \hfill Date: \( \underline{ \,\,\,\,\,\,\,\,\,\,\,\,\,\,\,\,\,\,\,\,\,\,\,\,\,\,\,\,\,\,\,\,\,\,\,\,\,\,\,\,\,   } \)
        \end{flushleft}
        \vfill
        \textbf{November 2022}
    \end{center}
\end{titlepage}
\clearpage

\pdfbookmark{Abstract}{Abstract}
\thispagestyle{plain}

\begin{center}
    \Large
    Differential Games for Compositional Handling of Competing Control Tasks

    \vspace{0.2cm}
    \large
    Joshua Shay Kricheli
       
    \vspace{0.2cm}
    \large
    Master of Sciences Thesis 
      
    \vspace{0.2cm}
    \large
    Ben-Gurion University of the Negev 
    
    \vspace{0.2cm}
    \large
    2022
    
    \vspace{0.2cm}
    \Large
    \textbf{Abstract}
\end{center}

We introduce a novel Divide and Conquer control design methodology to implement the use of differential games in single-agent, multi-objective dynamical systems. The described approach starts with associating each control objective with a virtual input and then considering a non cooperative, finite or infinite, differential game between a corresponding set of representative players, each trying to attain the best strategy for his designated goal, all the while knowing the rest of the players' chosen optimal policy. The approach is flexible in that it associates a virtual cost function to each player, effectively supplying each its weighting parameters for his objective, the entire system state, and all other virtual inputs. By guarantying a Nash Equilibrium for this game, we synthesize a composite controller that establishes a stable balance between the objectives and allows the control engineer an approachable manner to re-tune the parameters along the design cycle. We provide a mathematical derivation of the design pattern, both for continuous and discrete-time systems, and employ it in single-agent large-scale frameworks, where multiple elaborate control tasks can often conflict with each other dynamically, thus rendering the weighting of the overall system beforehand highly challenging. To demonstrate the use of the suggested approach, we develop an open-source Python package, in which we enact a novel algorithm we developed for solving Algebraic Riccati Equations that arise in the infinite horizon differential game derivation. We use two case studies to formulate appropriate differential games and secure a solution; an inverted pendulum on a moving cart and a non-linear hierarchically-controlled quadrotor. We compare the resulting performance with the Linear Quadratic Regulator optimal control technique across multiple transient and steady-state control metrics and show preferable results.
\clearpage

\pdfbookmark{Acknowledgements}{Acknowledgements}
\chapter*{Acknowledgements}
This paper was written after several years of copious efforts and it is the culmination of the work our team has put together. I was in the Computer Science department at Ben Gurion University for the past 4.5 years, which was a remarkable journey. I hope this work concludes this journey respectfully.

I was at the start of the fourth year of my B.Sc. degree in Mechanical Engineering at Ben Gurion University, near the end of 2017, when I decided I wish to proceed and strengthen my knowledge in the subjects I majored in - Robotics, Mechatronics and mainly, what this work focuses on - Modern Control Theory. As I found interest in its formal mathematical, algorithmic and computational aspects, I reached out to the Computer Science department at the university. Upon doing so, I met Prof. Gera Weiss, who was the Head of the Teaching Committee of advanced studies. 

Gera was extremely welcoming and enthusiastic and greeted me with nothing but kindness and graciousness. Moreover, Gera shared with me that his Ph.D. thesis was also exactly in Control Theory, which immediately initiated a cooperative effort we have established since then. 

We founded a team of multiple expert researchers, from Mechanical Engineering, Software Engineering, Information Systems Engineering, and Computer Science, each with his expertise, but all with a shared enthusiasm to improve contemporary computational intelligent models. Gera led (and still leading when this is written) the team, along with Dr. Shai Arogeti, who is the head of the Bachelor's track and Master's program in Robotics, Mechatronics, and Control at the Mechanical Engineering department. 

We have since then embarked on an adventurous journey together, published several articles as a team, submitted and won several prizes, and made interpersonal friendships with one another, so much so that when we all meet it does not even fell as a work meeting but a '\textit{parliament of friends}' as Shai Arogeti once said.

I wish to thank all my colleagues, friends, and fellow researchers in the Software Engineering, Information System Engineering, Electrical Engineering, Mechanical Engineering, Mathematics and Physics departments, mostly at Ben Gurion University but in some other institutions as well. Most notably I would like to thank \textbf{Dr. Aviran Sadon} who was there at troublesome nights to help me with stubborn coding bugs. Without his help, both our conference paper and this thesis would not have been.

I wish to extend the utmost gratitude and respect to my two advisors: 

\begin{itemize}
    \item \textbf{Prof. Gera Weiss}, who in our countless meetings, phone talks, and correspondences has become a close friend of mine, to a degree that we feel comfortable with each other like we had known each other from high school. He has aided me in so many ways and has taught me so many things, that I am a better researcher and a professional thanks to him.
    \item \textbf{Dr. Shai Arogeti}, whom I know and work with for more than 5 years through my Mechanical Engineering studies, has always been an extraordinary person that upheld the most excellent research standards, all the while maintaining a fun collaborative environment to work in. Shai has aided me in this work in so many ways that I am ever thankful to him.
\end{itemize}

And in the end it all boils down to my dearest friends and then... \\
To my family, \textbf{Dudu}, \textbf{Elizabeth}, \textbf{Mom} and \textbf{Dad}. This is all for you.
\vspace{7pt}
\hrule
\vspace{5pt}
This research was supported in part by the Helmsley Charitable Trust through the Agricultural, Biological and Cognitive (ABC) Robotics Initiative and by the Marcus Endowment Fund both at Ben-Gurion University of the Negev, Israel. This research was also supported by The Israeli Smart Transportation Research Center (ISTRC) by The Technion and Bar-Ilan Universities, Israel.

\tikz[remember picture,overlay]
    \path (current page.south west) ++ (7,5) node {\includegraphics[width=0.4\columnwidth]{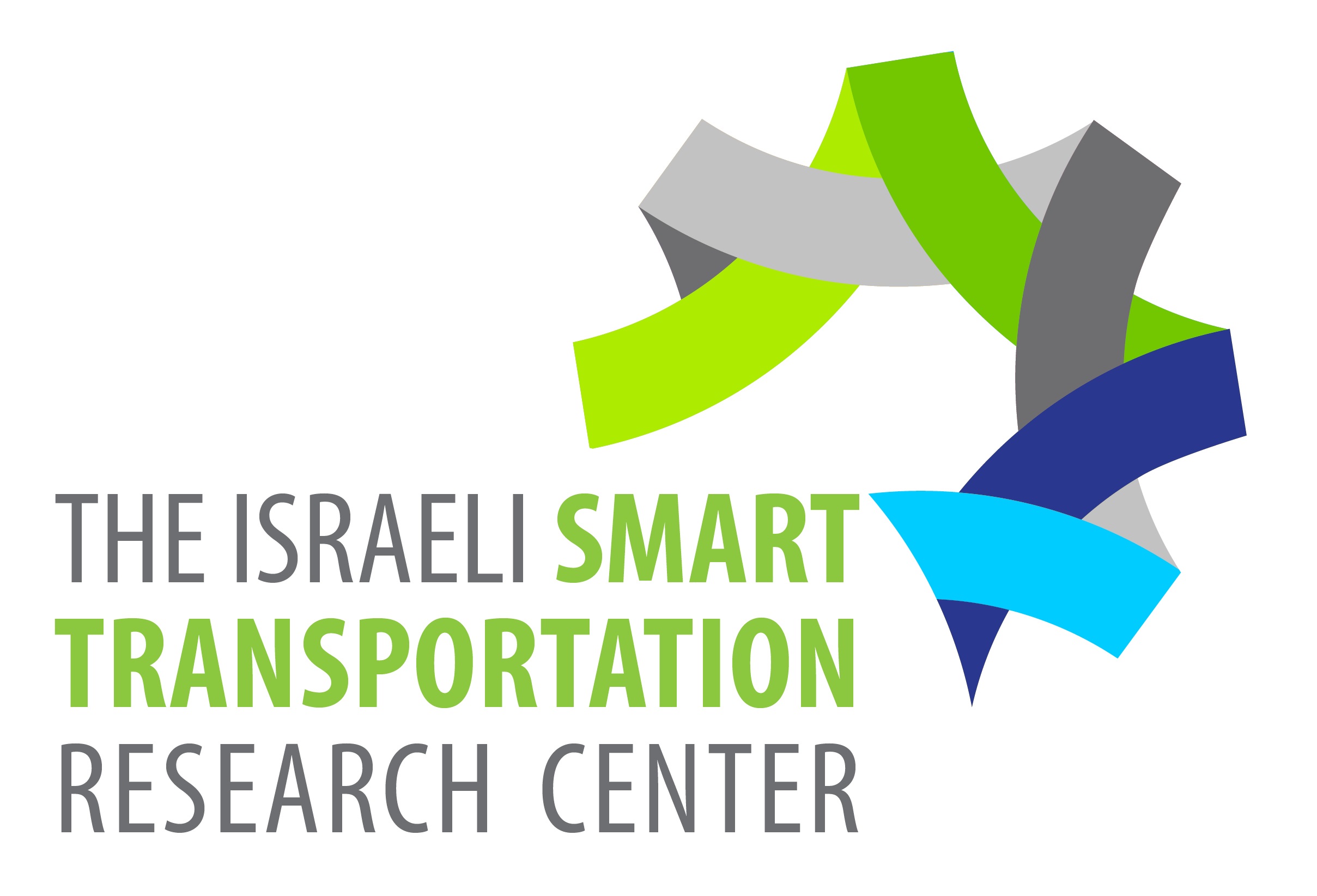}};
    
\tikz[remember picture,overlay]
    \path (current page.south west) ++ (14,5) node {\includegraphics[width=0.45\columnwidth]{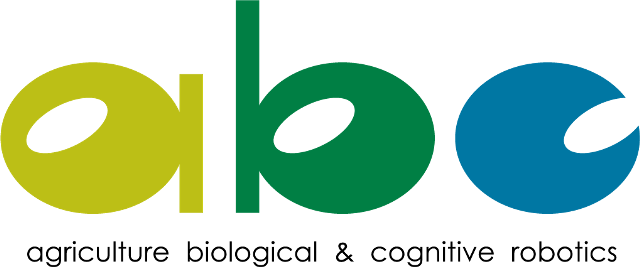}};
\clearpage

\pdfbookmark{\contentsname}{\contentsname}
\tableofcontents
\clearpage

\makeatletter
\let\everyparbck\everypar
\let\everypar\heb@o@everypar
\makeatother

\phantomsection
\addnumberlesstotoc{chapter}{\listfigurename}
\listoffigures
\cleardoublepage

\phantomsection
\addnumberlesstotoc{chapter}{\listtablename}
\listoftables

\mainmatter

\chapter{Introduction}
\label{chap:intro}
\epigraph{\textit{A captain should endeavor with every act to divide the forces of the enemy.}}{\textbf{Niccolò Machiavelli}}

As engineers, scientists and academics are constantly racing to improve the human understanding, modeling and handling of physical systems, the methods that are involved in doing so are constantly evolving. Tremendous energy is allocated to this effort, as we try to better understand how things work and how we can make them do good by us. Principles from various scientific fields are wielded to aid in this effort, and as time goes by, the tides are shifted towards those that provide better results. Some problems are better solved by different methods, and one of the main challenges of modern engineering is to try and accommodate solutions to a given problem, such that are befitting the situation at hand and the complications that arise with it.

In this we focus on combining aspects from multiple fields to provide a framework for handling use cases in \textbf{Control Theory}, namely control tasks in specific situations where there are multiple dynamically changing goals we wish to achieve simultaneously. We propose simple, yet effective, tools for designing controllers that are capable of handling a reality where objectives change dynamically depending on the evolving conditions of the system and its environment, e.g., the road and the behavior of the surrounding cars.

\section{Novel Differential Game Approach for \\Single-Agent Multi-Objective Systems}

In \textbf{Economics}, forming competition among internal business units within a company is considered an effective tool for balancing competing objectives in complex organizations~\cite{ECO} and maintaining centralized power by the entity enacting the strategy.
The maxim commonly used to describe this concept, '\textit{Divide and Conquer}' (abbreviated '\textit{D}\&\textit{C}') is frequently invoked not just in economics, but also in political and legal theory~\cite{posner2010divide}, among other social sciences and in computer science as well~\cite{CandC}. In a broad sense, Divide and Conquer is the act of:
\begin{quote}
"Gaining and maintaining overall power by breaking up larger concentrations of power into pieces that individually have less power than the one initially implementing the strategy"~\cite{Divide_History}.
\end{quote}
 In social structures, this term often invokes a darker and more sinister tone; as it employed as a means to control the public. In his book, '\textit{Political Writings}',  Immanuel Kant\footnote{\textbf{Immanuel Kant}, 1724 – 1804, German philosopher, and one of the central Enlightenment thinkers.} demonstrates support for D\&C, noting:
\begin{quote}
    "The problem of setting up a state can be solved even by a nation of devils, so long as they possess an appropriate constitution which pits opposing factions against each other with a system of checks and balances"~\cite{Political_Writings}.
\end{quote}

Niccolò Machiavelli\footnote{\textbf{Niccolò di Bernardo dei Machiavelli}, 1469 – 1527, Notorious Italian diplomat, philosopher and historian. He is best known for his political book '\textit{The Prince}', where he mainly claims politics is always bound to be played with deception, treachery, and crime.} suggests an application to warfare, advising in his book '\textit{The Art of War}': 

\begin{quote}
    ``A captain should endeavor with every act to divide the forces of the enemy''~\cite{machiavelli2009art}.
\end{quote}

 In economics, the notion is used to refer to a strategy to motivate the most potential out of factions in a competitive market.

Implementing these concepts, either theoretically or physically, is often by means of \textbf{Game Theory}, a branch of mathematics that deals with modeling strategic interactions between multiple players in a game~\cite{game_th}. The field defines and models several types of games, and the concept of D\&C can relate to both \textbf{non-cooperative}~\cite{zahedi2020seeking} and \textbf{adversarial}~\cite{RL_in_AG} games. The main distinguishing feature of non-cooperative games from \textbf{cooperative} games, is the absence of a protocol that allows for cooperative behavior. Thus players cannot form coalitions and must compete independently. This, however, does not mean that a valid solution to such a game cannot be such that all players fully achieve (or fail) their respective goals, as opposed to the adversarial case, in which players are necessarily opponents of each other, such that each one's gain is some other's loss. Consequently, adversarial games are also termed \textbf{zero-sum games}, as the sum of all the gains (assumed as some positive quantity) and losses (assumed as negative) across all players is zero.~\cite{zero_sum} Non-adversarial games (either cooperative or non-cooperative) are termed \textbf{non-zero-sum games}.

Some board games we can think about such as Backgammon, Checkers and Chess and so on are  \textbf{discrete}, i.e., each move is treated as a single atomic operation chosen from a finite set, as opposed to a \textbf{continuous} game, as in a basketball and football. Another property of games is their length. The aforementioned games last a \textbf{finite} amount of time, while other games such as the world economy or climate change last forever (or an \textbf{infinite} amount of time).

A special class of games we will be dealing in this work is \textbf{differential games}, which are closely related to optimal control theory (chapter 10 of~\cite{optimal_control_book}), as they describe continuous, infinite games where the dynamics of the players' behavior is described by \textbf{differential equations}~\cite{A_First_Course_in_Differential_Equations}.

One problem in using differential games in real-time controller design is that it is hard to handle a multitude of dynamically-changing objectives. Imagine, for example, the many considerations that a basketball player must weigh in real time. Current man-made controllers are very far from achieving such performance.

In this thesis, we propose to handle complex controllers by separating control tasks to components that represent different objectives, thus forming an internal competition between them as they act as players in a game, by use of optimizing appropriate cost functions, to ultimately achieve a sufficiently balanced solution to the problem at hand. Specifically, we apply the aforementioned techniques from economics in control theory.

Our and others' experience shows that defining cost functions by arithmetic operations such as a weighted sum with manually-designed parameters solves certain problems but does not constitute a holistic, comprehensive, solution to the problem of control in multi-purpose scenarios~\cite{lqr_ea, localized_weighted_sum, preference_based_weighted_sum}. The main reason for this is that sometimes engineers cannot easily model the intricate relations between the different objectives and thus resort to elaborate estimation and prediction techniques~\cite{weight_changing_model, heuristic_cost_function_weights}.
Specifically, it can be shown mathematically that if one considers control as a game between several players with each having several goals, the result obtained as an optimal solution of the game is not the result obtained for any possible weighting of the goal functions~\cite{differential_game_approach_formation}. 

In relation to control theory, we thus propose a novel approach, one which its core principle is more common in more recent Artificial Intelligence and Machine Learning works.

\textbf{Artificial Intelligence} (AI) is an umbrella term, used to refer to many methods and approaches that aim to automate the concept of acting \textit{rationally}\footnote{A \textbf{rational} entity is one that always aims to perform optimal actions based on given premises and information.}. Instead of a computer program designed to perform specific tasks explicitly, by imperative commands, AI is used in situations where there is a measure of uncertainty in the system model to consider, the possible changes that can occur or any other possible aspect that can demonstrate unforeseeable behaviour.~\cite{AI_1, AI_2, AI_3} AI is highly interdisciplinary and thus has many sub-branches, with one of them being Machine Learning. \textbf{Machine Learning} (ML) is a field which is at the center of attention for the past few decades, as it provides superior performance across a multitude of use cases and applications in many scientific fields~\cite{ML_1, ML_2, ML_3}. One specific type of learning is \textbf{Adversarial Learning}, where the learning task is modeled as a carefully designed competition between competing elements, such that their mutual gradual improvement corresponds to learning the \textit{statistics} of the data, and thus gaining the ability to make future predictions \cite{adv_1, adv_2}. One class of adversarial learning models specifically worth of mentioning due to its use of zero-sum games is the \textbf{Generative Adversarial Network (GAN)}, originally used for unsupervised learning. Since their introduction by Ian Goodfellow\footnote{\textbf{Ian J. Goodfellow} ,born 1985 or 1986, a researcher and director of machine learning in \textit{Apple Inc.}. He was previously employed at \textit{Google Brain} and has made several contributions to the field of deep learning, such as the formulation aforementioned GAN.}, et. al. at 2014~\cite{GAN}, GANs have become a staple of generative modeling. The GAN is composed of two elements that are defined as opposing rivals: a Generator, tasked with generating new 'made up' data from the learned distribution, and a Discriminator that is tasked with distinguishing between real and the 'fake' generated examples. These two compete in a zero-sum game that is tailored to produce the desired distribution of the input data.

Zero-sum formulations indeed provide a clear, concise manner to model the mutual dynamics of multiplayer games as they satisfy the property that each result of a zero-sum game is \textbf{Pareto-optimal}\footnote{Named after \textbf{Vilfredo Federico Damaso Pareto} (1848–1923), an Italian civil engineer and economist, a situation is called \textbf{Pareto-optimal} or \textbf{Pareto-efficient} if there does not exist a change that could lead to greater gain for some player without some other player suffering loss.}, and so under reasonable assumptions, a solution can be guaranteed~\cite{pareto_zero_sum}. For this reason zero-sum games are extremely useful in situations where it is reasonable to assume the zero-sum property, like in the case of a GAN. That being said, as demonstrated by Robert Wright in his book '\textit{Nonzero: The Logic of Human Destiny}':

\begin{quote}
    "All aspects of society are generally non-zero-sum as they becomes more complex, specialized, and interdependent"~\cite{nonzero}.
\end{quote}

 Zero-sum games generally fail to fully represent the conflicts faced in the everyday world. Problems in day-to-day basis usually do not have a straightforward, unique result, and could have none or multiple results. Thus non-zero-sum games better represent situations in the world we live in. This is the reason why the work on non-zero sum games, and specifically \textbf{Nash Equilibria}\footnote{\textbf{Nash Equilibrium} is named after \textbf{John Forbes Nash Jr.} (1928 – 2015), an American mathematician and Nobel Prize laureate in Economics, who made multiple contributions to Game Theory and other mathematical fields. Plainly put, it is a situation in a non-cooperative game where each player is assumed to know the equilibrium strategies other players, and no player has anything to gain by only changing their own strategy~\cite{Elementary_game_theory}.}, is the preferable tool to use in many applications.

We thus suggest, in a nutshell, the use of non-cooperative, non-zero-sum game strategy to be employed in single-agent multi-objective control problems. More precisely, given a control task to be applied on a system, we propose to:
\begin{enumerate}
    \item First, decompose the control input of the target system to multiple abstract \textit{virtual inputs};
    \item Next, assign a \textit{virtual objective} to each virtual input;
    \item Then, consider a differential game where each player controls a virtual input and aims at minimizing a \textit{cost function} that models its own personal objective;
    \item We ultimately propose to compute a \textit{Nash Equilibrium} for this game and use it to control the system along its progression.
\end{enumerate}
This methodology is defined and developed in details in a paper we published at the \textit{2021 29th Mediterranean Conference on Control and Automation (MED)} named \textit{Composition of Dynamic Control Objectives Based on Differential Games}~\cite{med_paper}.
 
Our method is most effective for multi-objective control problems where there is a decomposition of the input space such that each virtual input influences one of the goals significantly and its affects on the other goals is small. This situation is common in multi-objective controllers because independent goals like steering, speed control, etc., are usually affected by independent actuators (or sets thereof) such as the steering system or the gas pedal as demonstrated in~\cite{multi_actuator}.

Our approach allows control engineers to define dynamically changing games with time varying input decomposition and objective functions, as demonstrated by a proof-of-concept tool that we have developed. The game abstraction, allows a design of complex controllers without having to explicitly balance all goals and objectives. The freedom to change the game dynamically allows reaction to changing environments.  The tool we demonstrate our results with is a Python library that we have made available online~\cite{package}. It provides engineers with a rich and accessible specification language for controller design. We demonstrate how this tool performs in action by elaborating on the design process of a control strategy for a quadrotor. The control system that we develop, for this demonstration, handles time-varying partially competing objectives using our methodology.

Chapter~\ref{chap:d&c} provides a full discussion. 

\section{Novel Method for Solving AREs}
\label{sect:ARE_intro}

A class of equations arising frequently in optimal control problems is the Algebraic Riccati Equations (AREs). These equations appear in many optimal control problems, especially in cases where the objective function is quadratic. In this study we will first demonstrate formally how they are derived in cases where the optimality problem to be solved is considered along an infinite time horizon. These equations consist of a class of non-linear, quadratic matrix algebraic equations for an unknown matrix, which is required for setting the optimal controller. The AREs have been studied extensively from both mathematical and applicational points of view~\cite{riccati_review}.

In addition to the introduction of the aforementioned method, this work also focused on formulating a novel approach to solve these.  We have seen that solving this class of equations, while mainly done by numerical algorithms, can manifest solvability issues due to its large amount of possible solutions and high dependence on initial guesses~\cite{yet}. 

Moreover, due to the solution of this set being there to solve an underlying control problem, a specific solution is required that satisfies certain characteristics and is also always unique. This fact coupled with the complexity issues mentioned can cause extreme difficulties in finding that exact single desired solution, since, as we show, the number of solutions increases exponentially when the system state variables and a number of objectives (in the case of finding the Nash Equilibrium as described earlier) grow, and can even be infinite and even more so uncountable. 

Moreover, in order to obtain closed-loop strategies in a Nash differential game with an infinite horizon, as we will show, one needs to solve a system of \textbf{coupled} AREs. Unlike the regular LQR case, it is well established that there is no general set of conditions to guarantee uniqueness or even existence to this set~\cite{new_algorithm_coupled_algebraic_Riccati_equations}.

Thus we will suggest converting these equations to a corresponding set of DREs (Differential Riccati Equations), which will (if exists, and under certain conditions) yield the unique desired solution, which is always the one that fits the control-related required properties. 

Chapter~\ref{chap:riccati} provides a full discussion. 

        

\section{Main Contributions}
This thesis makes the following core contributions:

\begin{enumerate}
    \item Derivation of the theoretical basis and formal mathematical formulation for controllers that apply for single-agent, multi-objective dynamic systems, by formulating and solving non-cooperative, non-zero-sum differential games and obtaining their respective Nash Equilibria, for continuous-time control systems;
    \item Extending the aforementioned theoretical basis and formal mathematical formulation of the technique of single-agent multi-objective Nash Equilibria, for direct-design discrete-time control systems;
    \item Development of an open-source Python package named \textit{PyDiffGame}, implementing the proposed method, both for the continuous and discrete-time case;
    \item Derivation of a novel method for solving matrix algebraic Riccati equations (AREs) by converting them to differential Riccati equations (DREs) and solving them repetitively until convergence;
    \item Implementing the method of solving AREs by reduction to DREs in the Python package \textit{PyDiffGame}.
\end{enumerate}

\chapter{Preliminaries}
\label{chap:preliminaries}
\epigraph{\textit{Dogmas take endless forms, and when you can persuade different people to hold opposing dogmas, the manipulation of conflict and control through "divide and rule" becomes easy. It is happening today in the same way - more so, in fact - as it has throughout human history.}}{\textbf{David Icke}}

In this section, we will present several advanced principles in control theory required to showcase our work. An introductory section to system modeling is laid out at Appendix~\ref{app:sys_modeling} an to control systems at~\ref{app:controlled_systems}.

\section{Continuous Optimal Control}
\label{section:optimal_control}
One means of controlling composite systems, is by the notion of \textbf{optimality}. Optimal control deals with controllers that are derived by optimizing a set of performance criteria, defined with accordance to some preferred behaviour we would like the system to exhibit. In this section we provide a detailed introduction to the field, which is the basis to our proposed method. The matter is covered extensively in the book with the same title~\cite{optimal_control_book}. 

One way to derive an optimal controller is by means of defining a \textbf{cost function} that assigns a value to some undesired performance the system exhibits, and then finding a controller that minimizes that function. The cost function can be set to consider the deviation of the state vector from its desired value and the effort required in order to obtain that preferable performance. In that context of optimality, we use the term 'preferable' with accordance to some pre-determined criterion we call an \textbf{objective}. All of these concepts will now be formally introduced, based on the system modeling notations described in Appendix~\ref{app:sys_modeling}. 

\begin{definition}
 We associate a cost function for the linear dynamics of the continuous system $\mathcal{S}$. In the continuous case this is referred to as a \textbf{value integral} $J$ \footnote{The definition of $\mathcal{X}$, $\mathcal{U}$ and $\mathcal{I}$, along with any following non-explicitly stated notations, are given in Appendix~\ref{app:sys_modeling}.} $ \colon \mathcal{X} \times \mathcal{U} \times\mathcal{I} \rightarrow \mathbb{R}^+$ and is of the following general form:
\begin{equation}
\begin{aligned}
        J\left(\mathbf{x}(t), \mathbf{u}(t)\footnotemark, t\right) \coloneqq 
       \int_{t}^{T_f} r\left(\mathbf{x}(\tau), \mathbf{u}(\tau)\right)\mathrm{d}\tau + r_f\left(\mathbf{x}(T_f)\right),
\end{aligned}
\label{eqn:general_cost}
\end{equation}

\footnotetext{The cost function $J$ is a \textbf{functional}, as in a function that receives functions as inputs. In this context, $\mathbf{u}(\cdot)$ is a function which we wish to find in order to minimize the value for $J$. Thus $\mathbf{u}$ is in fact not a function of the variable $t$ only, but of all the time points in the interval $[t,T_f]$. With that, we still denote $\mathbf{u}(t)$ for simplicity.}
where:
\begin{itemize}
    \item $r \colon \mathcal{X} \times \mathcal{U} \rightarrow \mathbb{R}^+$ is a temporal cost function defined for all $\tau \in \mathcal{I}$. We will denote it as $r(t)$ for conciseness. For a given point in time $t \in \mathcal{I}$, $r(t)$ returns the total evaluated cost with accordance to the virtual inputs and state vector at the point in time;
    \item $r_f \colon \mathcal{X} \rightarrow \mathbb{R}^+$ is a function returning the cost of the final state vector;
    \item Depending on the scenario, we would consider either a finite horizon ($T_f < \infty$), for which we will define a non-zero value for $r_f$ in Equation~\eqref{eqn:general_cost}, or an infinite horizon ($T_f \rightarrow \infty$), for which we will set $r_f \equiv 0$.
\end{itemize}
We will denote $J(t)$ for conciseness.
\end{definition}

\subsection{Continuous Linear Quadratic Regulator (LQR)}
The main predecessor to our proposed D\&C method is the widely-known Linear-Quadratic Regulator (LQR), as it handles single-agent multiple objective cases, but with a single weighing of the state variables and control effort, and does not easily permit the decomposition into virtual inputs as we will describe. 

The LQR is defined by using the following approach:

Given a continuous LTI system $\mathcal{S}$ adhering the model at~\eqref{eqn:basic_sys}\footnote{Appendix~\ref{app:LTI} introduces LTI systems in detail.}, let us define the following:

\begin{itemize}
    \item \begin{definition}
A control input $\mathbf{u}(t) \in \mathbb{R}^m$ is said to be \textbf{admissible} if it is continuous over $\mathcal{I}$, stabilizes
the system $\mathcal{S}$, and makes the value of $J(t)$ at~\eqref{eqn:general_cost} finite. This definition induces the definition of the set $\mathcal{U}$;
\end{definition}
\item \begin{definition}
Let us assume there exists a pre-determined objective $O$ we would like to enact upon the system $\mathcal{S}$. This \textbf{objective} is defined by a pair of the form:
\begin{equation}
    O \coloneqq \left(Q, R\right),
\end{equation}
where:
\begin{itemize}
    \item $Q \in \mathbb{R}^{n \times n}, Q \geq 0$, i.e., $Q$ is positive-semi-definite (PSD);
    \item $R \in \mathbb{R}^{m \times m}$ and $R > 0$, i.e., $R$ is positive-definite (PD).
\end{itemize}
$Q$ and $R$ are weight matrices that dictate how each state and input variable precisely affects the corresponding cost function.
\end{definition}
\item \begin{definition}
We deal with cost function assignments that are quadratic with respect to the state and input vectors, i.e., $r(t)$ in~\eqref{eqn:general_cost} takes on the following form:

\begin{equation}
   r\left(\mathbf{x}(t), \mathbf{u}(t)\right) \equiv \mathbf{x}^T(t)Q\mathbf{x}(t) + \mathbf{u}^T(t)R\mathbf{u}(t).
   \label{eqn:LQR_cost}
\end{equation}

In this scenario, the control model is called a \textbf{Linear Quadratic Regulator} (LQR), since the model is linear and the cost function is quadratic with respect to the state.
\end{definition}
\end{itemize}

\begin{definition}
Given a continuous LTI system $\mathcal{S}$ with an LQR cost function $J\left( \mathbf{x}(t),\mathbf{u}(t), t\right)$, we define the \textbf{continuous open-loop LQR optimal control problem}, formally stated as:
\begin{equation}
    \begin{aligned}
         \min_{\substack{\mathbf{u}(t) \in \mathcal{U}}} \quad & J\left(\mathbf{x}(t), \mathbf{u}(t), t_0\right) \\
         = \min_{\substack{\mathbf{u}(t) \in \mathcal{U}}} \quad & \int_{\tau \in \mathcal{I}} r\left(\mathbf{x}(\tau), \mathbf{u}(\tau)\right)\mathrm{d}\tau + r_f\left(\mathbf{x}(T_f)\right) \\
         = \min_{\substack{\mathbf{u}(t) \in \mathcal{U}}} \quad & \bigintssss_{\tau \in \mathcal{I}} \left[ \mathbf{x}^T(\tau)Q\mathbf{x}(\tau) + \mathbf{u}^T(\tau)R\mathbf{u}(\tau) \right]\mathrm{d}\tau + r_f\left(\mathbf{x}(T_f)\right)\\
         \textrm{subject to} \quad & \mathbf{x}(t_0) = \mathbf{x_0};\\
         \quad & \dot{\mathbf{x}}(t) = A\mathbf{x}(t) + B \mathbf{u}(t); \quad & \forall t \in \mathcal{I}.
    \end{aligned}
    \label{eqn:optimal_control_problem_formulation}
\end{equation}
\end{definition}
Given a continuous LTI system $\mathcal{S}$, let us consider the open-loop LQR optimal control problem given in~\eqref{eqn:optimal_control_problem_formulation} along with its cost function $J\left( \mathbf{x}(t),\mathbf{u}(t), t\right)$, let us define the following:
\begin{itemize}
    \item \begin{definition}
Let $\mathbf{u^*}(t) \in \mathcal{U}$ be a \textbf{optimal input} of $\mathcal{S}$ that solves the optimal control problem, defined as:

\begin{equation}
     \mathbf{u^*}(t) \coloneqq \argmin_{\mathbf{u}(t) \in \mathcal{U}} J\left( \mathbf{x}(t),\mathbf{u}(t), t\right).
\end{equation}
\label{def:optimal_input}
\end{definition}
\begin{remark}
Notice Definition~\eqref{def:optimal_input} correlates to optimality in the open-loop sense, as we refer to a minimizing \textit{input}, which does not necessarily has to be a feedback controller;
\end{remark}
\item \begin{definition}
Let $\mathbf{x^*}(t) \in \mathcal{X}$ be the \textbf{optimal state trajectory} of $\mathcal{S}$ with respect to the optimal control problem, defined as the solution to the LTI model given at~\eqref{eqn:basic_sys} with the aforementioned optimal input, i.e., it satisfies:
\begin{equation}
        \dot{\mathbf{x}}^*(t) =  A\mathbf{x^*}(t) + B \mathbf{u^*}(t),
    \label{eqn:optimal_trajectory}
\end{equation}
and by that extension, using the solution for the state solution at~\eqref{eqn:LTI_solution}:
\begin{equation}
        \mathbf {x^*}(t) \coloneqq e^{A(t-t_0)}{\mathbf  {x_0}}+\int _{{t_{0}}}^{t}e^{A(t-\tau)}{B}{\mathbf  {u^*}}(\tau )\mathrm{d}\tau.
\end{equation}
\end{definition}
\end{itemize}

\begin{theorem}[\textbf{Continuous LQR Quadratic Optimal Cost}, Table 3.3-1, Page 145, Chapter 3 of~\cite{optimal_control_book}]
Given a continuous LTI system $\mathcal{S}$, let us consider the open-loop LQR optimal control problem given in~\eqref{eqn:optimal_control_problem_formulation} along with its cost function $J(t)$ and induced optimal state trajectory $\mathbf{x^*}(t)$. Then for any $T_f$, there exists $P \colon \mathcal{I} \rightarrow \mathbb{R}^{n \times n}$ such that the optimal cost is quadratic with respect to the optimal state, i.e.:
\begin{equation}
    J^*(t) = \mathbf{x^*}^T(t) P(t) \mathbf{x^*}(t)
\end{equation}
for all $t \in \mathcal{I}$.
\label{the:lqr_quad_cost}
\end{theorem}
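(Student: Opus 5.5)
The plan is to prove the statement by dynamic programming, using the Hamilton--Jacobi--Bellman (HJB) equation with a quadratic ansatz. Write the optimal cost-to-go as $J^*(\mathbf{x},t)=\min_{\mathbf{u}}J(\mathbf{x},\mathbf{u},t)$, where the trajectory starts at state $\mathbf{x}$ at time $t$. Bellman's principle of optimality gives
\begin{equation*}
-\frac{\partial J^*}{\partial t}=\min_{\mathbf{u}}\left[\mathbf{x}^TQ\mathbf{x}+\mathbf{u}^TR\mathbf{u}+\left(\frac{\partial J^*}{\partial \mathbf{x}}\right)^T(A\mathbf{x}+B\mathbf{u})\right].
\end{equation*}
In the finite-horizon case this holds with $J^*(\mathbf{x},T_f)=r_f(\mathbf{x})$, which I take to be $\mathbf{x}^TF\mathbf{x}$ with $F\geq 0$, the standard LQR terminal cost.

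Next I substitute the guess $J^*(\mathbf{x},t)=\mathbf{x}^TP(t)\mathbf{x}$ with $P$ symmetric. The expression inside the minimum is strictly convex in $\mathbf{u}$ because $R>0$. Setting its gradient to zero gives $\mathbf{u}^*=-R^{-1}B^TP(t)\mathbf{x}$. Plugging this back in, every term is a quadratic form in $\mathbf{x}$, so the HJB equation holds for all $\mathbf{x}$ exactly when $P$ solves the differential Riccati equation
\begin{equation*}
-\dot P=A^TP+PA-PBR^{-1}B^TP+Q,\qquad P(T_f)=F.
\end{equation*}
It remains to show that this terminal-value problem has a solution on all of $\mathcal{I}$. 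Local existence follows from Picard--Lindel\"of, since the right-hand side is polynomial in $P$. Blow-up in finite time is ruled out by two-sided bounds. First, $\mathbf{x}^TP(t)\mathbf{x}\geq 0$, because it equals a cost built from nonnegative terms. Second, it is at most the cost of using $\mathbf{u}\equiv 0$ from $(\mathbf{x},t)$, which is a continuous function of $t$. Hence $P$ stays bounded and extends to the whole interval.

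A verification step then shows the ansatz really is the optimal cost. For any admissible $\mathbf{u}$, differentiate $\mathbf{x}^TP\mathbf{x}$ along the trajectory and integrate. Completing the square gives
\begin{equation*}
J(\mathbf{x},\mathbf{u},t)=\mathbf{x}^T(t)P(t)\mathbf{x}(t)+\int_t^{T_f}(\mathbf{u}-\mathbf{u}^*)^TR(\mathbf{u}-\mathbf{u}^*)\,\mathrm{d}\tau .
\end{equation*}
So the minimum equals $\mathbf{x}^TP\mathbf{x}$, attained at $\mathbf{u}^*$. Evaluating along the optimal trajectory, the optimal cost from $(\mathbf{x}^*(t),t)$ is $\mathbf{x^*}^T(t)P(t)\mathbf{x^*}(t)$, which is the claim.

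The infinite-horizon case ($T_f\to\infty$, $r_f\equiv 0$) is the main obstacle. Let $P_{T}(t)$ be the finite-horizon solution with $F=0$. Then $P_T(t)$ is monotone nondecreasing in $T$, and it is bounded above provided some stabilizing input gives finite cost. Stabilizability of $(A,B)$ guarantees this, and it is implicit in $\mathcal{U}$ being nonempty. So $P_T$ converges to a constant $P$ that solves the algebraic Riccati equation. The last step is passing the identity through the limit $T\to\infty$. Here I would use that admissible inputs stabilize the system, so $\mathbf{x}(T)^TP\mathbf{x}(T)\to 0$ and the boundary term in the completion-of-squares identity vanishes.
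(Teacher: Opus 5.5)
The paper gives no proof of this statement. It cites Table 3.3-1 of the textbook, uses the quadratic form only as an ansatz in its HJB derivations, and adds a scalar heuristic in the appendix. Your finite-horizon argument is therefore a genuine proof, and it is correct. The identity $J(\mathbf{x},\mathbf{u},t)=\mathbf{x}^TP(t)\mathbf{x}+\int_t^{T_f}(\mathbf{u}-\mathbf{u}^*)^TR(\mathbf{u}-\mathbf{u}^*)\,\mathrm{d}\tau$ holds on any interval where $P$ exists. That is what gives $0\le\mathbf{x}^TP(t)\mathbf{x}\le J(\mathbf{x},\mathbf{0},t)$ and rules out blow-up, so present the verification before the extension step.

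The gap is in the infinite-horizon case. Take $P=\lim_{T\to\infty}P_T$ and kill the boundary term using $\mathbf{x}(T)\to\mathbf{0}$. The completed square then gives only the \emph{lower} bound $J\ge\mathbf{x}^TP\mathbf{x}$ over stabilizing inputs. To say the bound is attained, you need $\mathbf{u}^*=-R^{-1}B^TP\mathbf{x}$ to be admissible, i.e.\ $A-BR^{-1}B^TP$ Hurwitz, and nothing you wrote gives this.

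It can fail even with $(A,B)$ controllable. Take $n=m=1$, $A=a>0$, $B=1$, $Q=0$, $R=1$. Then $P_T\equiv0$, so $P=0$ and $\mathbf{u}^*\equiv0$, which does not stabilize. Yet every stabilizing input satisfies $\int_t^\infty u^2\,\mathrm{d}\tau=2a\,x(t)^2+\int_t^\infty(u+2ax)^2\,\mathrm{d}\tau$. So the true optimal cost is $2a\,x^2$, attained by $u=-2ax$. The theorem survives with a different $P$, but your argument produces the wrong one. The paper's scalar appendix makes exactly the selection you are missing: it discards the non-stabilizing root.

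To close the gap, assume $(A,\sqrt{Q})$ detectable, which is the standing hypothesis of the paper's CARE existence/uniqueness corollary, and prove that $\mathbf{u}^*$ stabilizes. Along the closed loop the boundary term $-\mathbf{x}(T)^TP\mathbf{x}(T)$ is nonpositive, so $\int_t^\infty(\mathbf{x}^TQ\mathbf{x}+\mathbf{u}^{*T}R\mathbf{u}^*)\,\mathrm{d}\tau\le\mathbf{x}(t)^TP\mathbf{x}(t)<\infty$. Hence $\sqrt{Q}\mathbf{x}$ and $\mathbf{u}^*$ are in $L^2$. Writing $\dot{\mathbf{x}}=(A-L\sqrt{Q})\mathbf{x}+L\sqrt{Q}\mathbf{x}+B\mathbf{u}^*$ with $A-L\sqrt{Q}$ Hurwitz then forces $\mathbf{x}\to\mathbf{0}$.

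Alternatively, run the completion of squares with the stabilizing CARE solution, when it exists, instead of the limit from $F=0$.

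If admissibility meant only finite cost, you would not need stability at all. $J\ge\mathbf{x}^TP_T\mathbf{x}$ for every $T$ gives the lower bound for all inputs. The nonpositive boundary term gives $J(\mathbf{x},\mathbf{u}^*,t)\le\mathbf{x}^TP\mathbf{x}$.
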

\begin{remark}
To solve the open-loop optimal control problem, we will use an approach based on the Hamiltonian function and Pontryagin’s Minimum Principle. Then by applying the obtained solution to Bellman's Dynamic Programming method, we will show how the open loop optimal input is actually a closed-loop controller, exactly as described in chapter 10 of~\cite{optimal_control_book}. 
\end{remark}
Given a continuous LTI system $\mathcal{S}$, let us consider the open-loop LQR optimal control problem given in~\eqref{eqn:optimal_control_problem_formulation} along with its cost function $J\left( \mathbf{x}(t),\mathbf{u}(t), t\right)$ and induced optimal state trajectory $\mathbf{x^*}(t)$, let us define the following:
\begin{itemize}
    \item \begin{definition}
 The \textbf{partial derivatives of the cost function}\footnote{Notice $\partial_t J(t)$ and $\mathbf{\partial_x J}(t)$ also take values from $\mathcal{X} \times \mathcal{U} \times\mathcal{I}$, but we write them as a function of only $t$ for conciseness. Moreover, notice $\partial_t J$ is the derivative of a scalar with respect to another scalar - thus it is also a scalar, but $\mathbf{\partial_x J}$ is the derivative of a scalar with respect to a vector - which is a vector. As a matter of fact, we have: $\partial_t J \colon \mathcal{X} \times \mathcal{U} \times\mathcal{I} \rightarrow \mathbb{R}$ and $\mathbf{\partial_x J} \colon \mathcal{X} \times \mathcal{U} \times\mathcal{I} \rightarrow \mathbb{R}^n$.} with respect to time $t$ and state $\mathbf{x}(t)$, respectively, are:
\begin{equation}
\begin{aligned}
    \partial_t J(t) \coloneqq & \frac{\partial J\left( \mathbf{x}(t), \mathbf{u}(t), t \right)}{\partial t};\\
     \mathbf{\partial_x J}(t) \coloneqq & \frac{\partial J\left( \mathbf{x}(t), \mathbf{u}(t), t \right)}{\partial \mathbf{x}(t)} \\
     = & \left[ \frac{\partial J\left( \mathbf{x}(t), \mathbf{u}(t), t \right)}{\partial x_1(t)}, \frac{\partial J\left( \mathbf{x}(t), \mathbf{u}(t), t \right)}{\partial x_2(t)}, \dots, \frac{\partial J\left( \mathbf{x}(t), \mathbf{u}(t), t \right)}{\partial x_n(t)} \right]^T;
\end{aligned}
\label{eqn:partial_def}
\end{equation}
\end{definition}
\item \begin{definition}
 Let us use the gradient operator on the cost function $\nabla J\left( \mathbf{x}(t),\mathbf{u}(t), t\right)$ to define the \textbf{derivative with respect to either time $t$ or state $\mathbf{x}(t)$}.
\end{definition}
\item \begin{definition}
 Let us define the \textbf{optimal cost value} $J^*(t)$ and \textbf{optimal cost value gradient} $\nabla J^*(t)$, both with regards to state trajectory\footnote{Even though the optimal state trajectory is defined by the optimal input, these definitions are valid, assuming there exists an optimal input that induces the described state trajectory.} as:
\begin{equation}
\begin{aligned}
     J^*(t) \coloneqq & J\big( \mathbf{x}(t),\mathbf{u}(t), t\big) \ \big|_{\mathbf{x}(t) \equiv \mathbf{x^*}(t)};\\
     \nabla J^*(t) \coloneqq & \nabla J\big( \mathbf{x}(t),\mathbf{u}(t), t\big) \ \big|_{\mathbf{x}(t) \equiv \mathbf{x^*}(t)} \ ,
\end{aligned}
\end{equation}
and more specifically, optimal cost value derivatives with respect to time and state vector:
\begin{equation}
\begin{aligned}
    \mathbf{\partial_x J^*}(t) \coloneqq \mathbf{\partial_x J}(t) \ \big|_{\mathbf{x}(t) \equiv \mathbf{x^*}(t)};\\
    \partial_t J^*(t) \coloneqq \partial_t J(t) \ \big|_{\mathbf{x}(t) \equiv \mathbf{x^*}(t)}.
    \end{aligned}
\end{equation}
\end{definition}
\end{itemize}

\begin{proposition}
Given a continuous LTI system $\mathcal{S}$, let us consider the open-loop LQR optimal control problem given in~\eqref{eqn:optimal_control_problem_formulation} along with its cost function $J\left( \mathbf{x}(t),\mathbf{u}(t), t\right)$, induced optimal state trajectory $\mathbf{x^*}(t)$ and corresponding state partial derivative $\mathbf{\partial_x J}(t)$. The optimal input that solves the problem is then given by the following expression:
\begin{equation}
    \mathbf{u^*}(t) = - \frac{1}{2} R^{-1}B^T \ \mathbf{\partial_x J}^{*}(t).
\end{equation}
\end{proposition}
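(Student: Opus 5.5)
We would prove the proposition via the Hamilton–Jacobi–Bellman (HJB) equation, which is how the remark preceding it suggests combining Pontryagin's principle with Bellman's dynamic programming. The argument has three steps: derive the HJB equation, minimize the Hamiltonian over the input, and check that the stationary point is the global minimizer.

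\textbf{Step 1: the HJB equation.} We apply Bellman's principle of optimality to the value integral~\eqref{eqn:general_cost}. Over a short interval $[t,t+\Delta t]$, the optimal cost-to-go splits as
\begin{equation*}
J^*(t) = \min_{\mathbf{u}} \left\{ \int_t^{t+\Delta t} r\left(\mathbf{x}(\tau),\mathbf{u}(\tau)\right)\mathrm{d}\tau + J^*(t+\Delta t) \right\}.
\end{equation*}
We then expand $J^*(t+\Delta t)$ to first order in $\Delta t$, substitute $\dot{\mathbf{x}} = A\mathbf{x}+B\mathbf{u}$, divide by $\Delta t$, and let $\Delta t \to 0$. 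This gives
\begin{equation*}
-\partial_t J^*(t) = \min_{\mathbf{u}} \mathcal{H}(\mathbf{x},\mathbf{u},t), \qquad \mathcal{H}(\mathbf{x},\mathbf{u},t) \coloneqq \mathbf{x}^TQ\mathbf{x} + \mathbf{u}^TR\mathbf{u} + \mathbf{\partial_x J^*}^T(t)\left(A\mathbf{x}+B\mathbf{u}\right).
\end{equation*}
Equivalently, $\mathcal{H}$ is the Hamiltonian with costate $\boldsymbol{\lambda}=\mathbf{\partial_x J^*}$, so Pontryagin's Minimum Principle says the optimal input minimizes $\mathcal{H}$ pointwise in time.

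\textbf{Step 2: first-order condition.} Setting $\partial \mathcal{H}/\partial \mathbf{u}=0$ and using the symmetry of $R$ gives
\begin{equation*}
2R\mathbf{u}+B^T\mathbf{\partial_x J^*}(t)=0.
\end{equation*}
Since $R>0$, it is invertible, so we can solve for
\begin{equation*}
\mathbf{u^*}(t) = -\tfrac12 R^{-1}B^T\mathbf{\partial_x J^*}(t).
\end{equation*}

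\textbf{Step 3: global minimality.} The Hessian of $\mathcal{H}$ in $\mathbf{u}$ is $\partial^2\mathcal{H}/\partial\mathbf{u}^2 = 2R > 0$. Hence $\mathcal{H}$ is strictly convex in $\mathbf{u}$, and the stationary point found in Step 2 is its unique global minimizer.

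\textbf{Main obstacle.} The delicate step is justifying Step 1: that $J^*$ is differentiable, so that $\mathbf{\partial_x J^*}$ and $\partial_t J^*$ exist and the first-order expansion is legitimate. We would handle this by invoking Theorem~\ref{the:lqr_quad_cost}, which gives $J^*(t)=\mathbf{x^*}^T(t)P(t)\mathbf{x^*}(t)$. This form is smooth in $\mathbf{x}$, and differentiable in $t$ provided $P$ is. With $P$ taken symmetric, the theorem also gives $\mathbf{\partial_x J^*}=2P\mathbf{x^*}$, so the input becomes $\mathbf{u^*}=-R^{-1}B^TP\mathbf{x^*}$. This foreshadows the closed-loop interpretation mentioned in the remark preceding the proposition.
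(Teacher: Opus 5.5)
Your argument is correct. Its decisive computation is the same as the paper's: set $\partial\mathcal{H}/\partial\mathbf{u} = 2R\mathbf{u} + B^T\mathbf{\partial_x J}^{*}(t) = 0$ and invert $R>0$. What differs is how you justify that the optimal input minimizes $\mathcal{H}$.

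The paper proceeds as follows. It differentiates the value integral with Leibniz's rule and equates the result with $\mathbf{\partial_x J}^T(t)\dot{\mathbf{x}}(t)$ via the chain rule. It defines $\mathcal{H}$ from this and then invokes Pontryagin's principle as a black box, asserting that minimizing $\mathcal{H}$ is equivalent to the stationarity condition. It also argues separately that the cross-derivative $\partial_{\mathbf{u}}\mathbf{\partial_x J}^{*}$ vanishes.

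You instead derive the HJB equation from Bellman's principle of optimality, which buys you three things. First, the pointwise minimization comes out directly rather than being imported. Second, the $\mathbf{u}$-independence of $\mathbf{\partial_x J}^{*}$ is automatic, since the value function depends only on $(\mathbf{x},t)$. Third, you keep the explicit term $\partial_t J^*$ separate from $\mathbf{\partial_x J}^{*T}\dot{\mathbf{x}}$, rather than identifying them as the paper's chain-rule step does; that conflation is harmless here only because the term does not depend on $\mathbf{u}$.

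Your strict-convexity step, $\partial^2\mathcal{H}/\partial\mathbf{u}^2 = 2R > 0$, also supplies something the paper only asserts: the stationary point is the unique global minimizer, not merely a critical point.

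The paper's route is shorter once PMP is granted. Yours is more self-contained but requires $J^*$ to be $C^1$, which you reasonably take from Theorem~\ref{the:lqr_quad_cost}. Note, however, that the theorem is stated only along $\mathbf{x^*}(t)$. Your first-order expansion and the comparison against non-optimal inputs need $J^*$ as a function of arbitrary $(\mathbf{x},t)$. This follows because the initial condition is arbitrary, but you should say so.
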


\begin{proof}
Let us differentiate both sides of the definition of the value integral at~\eqref{eqn:general_cost} ,with respect to time $t$, using Leibniz’s formula for differentiation under the integral sign\footnote{
\begin{definition}
In calculus, the \textbf{Leibniz integral rule for differentiation under the integral sign}~\cite{Intermediate_Calculus}, named after \textbf{Gottfried Wilhelm Leibniz} (1646–1716), a German mathematician, philosopher, scientist, and diplomat, states that for an integral of the form $\int _{a(x)}^{b(x)}f(x,t)\,dt$, the derivative is expressible as:
$$\displaystyle {\frac {d}{dx}}\left[\int _{a(x)}^{b(x)}f(x,t)\,dt\right]=f{\big (}x,b(x){\big )}\cdot {\frac {d}{dx}}b(x)-f{\big (}x,a(x){\big )}\cdot {\frac {d}{dx}}a(x)+\int _{a(x)}^{b(x)}{\frac {\partial }{\partial x}}f(x,t)\,dt$$
\end{definition}
}:

\begin{equation}
    \partial_t J(t) =
       - r\big(\mathbf{x}(t), \mathbf{u}(t)\big).
       \label{eqn:J_dot}
\end{equation}

Using the chain rule for vectors~\cite{matrix_cookbook} the LHS can be written as:
\begin{equation}
\begin{aligned}
    \partial_t J(t) & = \frac{\partial J\big( \mathbf{x}(t), \mathbf{u}(t), t \big)}{\partial t} = \left< \frac{\partial J\big( \mathbf{x}(t), \mathbf{u}(t), t \big)}{\partial \mathbf{x}(t)},  \frac{\mathrm{d} \mathbf{x}(t)}{\mathrm{d}t}\right>\footnotemark \\
    & = \left< \mathbf{\partial_x J}(t),  \frac{\mathrm{d} \mathbf{x}(t)}{\mathrm{d}t}\right> = \mathbf{\partial_x J}^T(t) \ \frac{\mathrm{d} \mathbf{x}(t)}{\mathrm{d}t} = \mathbf{\partial_x J}^T(t) \ \dot{\mathbf{x}}(t).
\end{aligned}
    \label{eqn:vectors_chain_rule}
\end{equation}
\footnotetext{The notation $\left< \cdot \right>$ signifies the inner-product operation.}

 Plugging this into~\eqref{eqn:J_dot} and moving sides we have:

\begin{equation}
    \mathbf{\partial_x J}^T(t) \ \dot{\mathbf{x}}(t)
       + r\big(\mathbf{x}(t), \mathbf{u}(t)\big) = 0.
\end{equation}

Plugging in the expression for $\dot{\mathbf{x}}(t)$ from~\eqref{eqn:basic_sys}, we have:

\begin{equation}
    \mathbf{\partial_x J}^T(t) \ \left(A\mathbf{x}(t) + B\mathbf{u}(t) \right)
       + r\big(\mathbf{x}(t), \mathbf{u}(t)\big) = 0.
       \label{eqn:pre_hamilton}
\end{equation}

\begin{definition}
We denote the LHS of~\eqref{eqn:pre_hamilton} as the \textbf{Control Hamiltonian} $\mathcal{H}$ of $\mathcal{S}$:

\begin{equation}
\begin{aligned}
     \mathcal{H} & \left( \mathbf{x}(t), \nabla J(t), \mathbf{u}(t) \right)  \coloneqq \partial_t J(t) + r\left(\mathbf{x}(t), \mathbf{u}(t)\right) \\
      = & \mathbf{\partial_x J}^T(t) \ \left(A\mathbf{x}(t) + B\mathbf{u}(t) \right)
       + r\left(\mathbf{x}(t), \mathbf{u}(t)\right),
       \label{eqn:hamiltonian}
\end{aligned}
\end{equation}

which is an ancillary function used to solve the optimal control problem.\footnote{Inspired by, but distinct from, the Hamiltonian of classical mechanics, the Hamiltonian of optimal control theory was developed by \textbf{Lev Pontryagin} (1908 – 1988), a Soviet mathematician, as part of his mentioned principle. Pontryagin proved that a necessary condition for solving the optimal control problem is that the control should be chosen so as to minimize the Control Hamiltonian.}
\end{definition}

\begin{theorem}[\textbf{Cont. Pontryagin’s Minimum Principle (PMP)}~\cite{pontryagin}]
\label{the:PMP}
See Equations 3.3-6, Page 136, Chapter 3 of~\cite{optimal_control_book}.
Given a continuous LTI system $\mathcal{S}$, let us consider the open-loop LQR optimal control problem given in~\eqref{eqn:optimal_control_problem_formulation} along with its cost function $J\left(t\right)$, its gradient $\nabla J(t)$ and optimal value $\nabla J^*(t)$ corresponding to the induced optimal state trajectory $\mathbf{x^*}(t)$. A necessary and sufficient
condition to obtain the optimal control policy that solves the optimal control problem is to obtain the control policy that minimizes the Hamiltonian. Formally, this may be stated as:

\begin{equation}
     \mathbf{u^*}(t) = \argmin_{\mathbf{u}(t) \in \mathcal{U}} \mathcal{H} \left( \mathbf{x^*}(t), \nabla J^*(t), \mathbf{u}(t) \right),
     \label{eqn:argmin_Hamiltonian}
\end{equation}

which is equivalent to solving the following stationarity condition:

\begin{equation}
     \frac{\partial}{\partial \mathbf{u}(t)} \mathcal{H} \left( \mathbf{x^*}(t), \nabla J^*(t), \mathbf{u}(t) \right) \big|_{\mathbf{u}(t) \equiv \mathbf{u^*}(t)} = 0.
     \label{eqn:stationarity}
\end{equation}
\end{theorem}

Given an LTI system $\mathcal{S}$ with an LQR cost function $J(t)$, optimal cost value gradient in regards to state trajectory $\nabla J^*(t)$ and optimal state trajectory $\mathbf{x^*}(t)$, we now apply PMP. Differentiating the Hamiltonian from~\eqref{eqn:hamiltonian} with respect to the input $\mathbf{u}(t)$ yields:

\begin{equation}
\begin{aligned}
    & \frac{\partial}{\partial \mathbf{u}(t)} \mathcal{H} \left( \mathbf{x^*}(t), \nabla J^*(t), \mathbf{u}(t) \right) =\\
    & \frac{\partial}{\partial \mathbf{u}(t)} \left[ \mathbf{\partial_x J}^{*^T}(t)\Big(A\mathbf{x^*}(t) + B\mathbf{u}(t) \Big)
       + r\big(\mathbf{x^*}(t), \mathbf{u}(t)\big) \right] = \\
       & \frac{\partial}{\partial \mathbf{u}(t)} \left[ \mathbf{\partial_x J}^{*^T}(t)\Big(A\mathbf{x^*}(t) + B\mathbf{u}(t) \Big)
       + \mathbf{x^*}^T(t)Q\mathbf{x^*}(t) + \mathbf{u}^T(t)R\mathbf{u}(t) \right].
\end{aligned}
\label{eqn:diff_H_1}
\end{equation}

By the definition of the cost function at~\eqref{eqn:infinte_value_integral}, we can see that its derivative with respect to the state vector $\mathbf{x}(t)$ is not a function of the input $\mathbf{u}(t)$. Thus we have that the cross-derivative cancels out, i.e.:

\begin{equation}
    \frac{\partial}{\partial \mathbf{u}(t)}\mathbf{\partial_x J}^{*^T}(t) = 0,
    \label{eqn:null_cross_derivative}
\end{equation}
and so the first term in~\eqref{eqn:diff_H_1} yields\footnote{Given two matrices $X \in \mathbb{R}^{n \times m}, Y \in \mathbb{R}^{m \times k}$, we have~\cite{matrix_cookbook}:
$$\frac{\partial XY}{\partial Y} = X^T$$}:

\begin{equation}
\begin{aligned}
    \frac{\partial}{\partial \mathbf{u}(t)} \left[ \mathbf{\partial_x J}^{*^T}(t)\left(A\mathbf{x^*}(t) + B\mathbf{u}(t) \right) \right]  =
    \left( \mathbf{\partial_x J}^{*^T}(t) B \right)^T = B^T \ \mathbf{\partial_x J}^{*}(t).
\end{aligned}
\label{eqn:diff_H_2}
\end{equation}

The second term in~\eqref{eqn:diff_H_1} is not a function of $\mathbf{u}(t)$ so its derivative with respect to it also cancels out, and the third term yields\footnote{Given a vector $\mathbf{x} \in \mathbb{R}^{n}$ and a matrix $Y \in \mathbb{R}^{n \times n}$, we have~\cite{matrix_cookbook}:
$$\frac{\partial \mathbf{x}^T Y \mathbf{x}}{\mathbf{x}} = (Y + Y^T) \mathbf{x}.
$$}

\begin{equation}
\begin{aligned}
    \frac{\partial \mathbf{u}^T(t)R\mathbf{u}(t)}{\partial \mathbf{u}(t)} = \big(R + R^T)\mathbf{u}(t) =  2R\mathbf{u}(t).
\end{aligned}
\label{eqn:diff_H_3}
\end{equation}

Plugging~\eqref{eqn:diff_H_2} and~\eqref{eqn:diff_H_3} into~\eqref{eqn:diff_H_1} and equating to zero (as in~\eqref{eqn:stationarity}) we have:

\begin{equation}
    \left[B^T \ \mathbf{\partial_x J}^{*}(t) + 2R\mathbf{u}(t)\right]\Big|_{\mathbf{u}(t) \equiv \mathbf{u^*}(t)} = 0.
\end{equation}

Solving this for $\mathbf{u}(t)$ yields the following optimal value $\mathbf{u^*}(t)$:

\begin{equation}
    \mathbf{u^*}(t) = - \frac{1}{2} R^{-1}B^T \ \mathbf{\partial_x J}^{*}(t).
    \label{eqn:optimal_u}
\end{equation}
Notice $R >0$ and thus it has no zero eigenvalues, and so $R^{-1}$ exists.
\end{proof}

Given a continuous LTI system $\mathcal{S}$, let us consider the open-loop LQR optimal control problem given in~\eqref{eqn:optimal_control_problem_formulation} along with its cost function $J\left( \mathbf{x}(t),\mathbf{u}(t), t\right)$, its gradient $\nabla J(t)$ and optimal value $\nabla J^*(t)$ corresponding to the induced optimal state trajectory $\mathbf{x^*}(t)$. Let us then define:

\begin{itemize}
\item
\begin{definition}
    \item Let us use the notation for the Control Hamiltonian $\mathcal{H}$ from~\eqref{eqn:hamiltonian} and plug it in~\eqref{eqn:pre_hamilton} to define the \textbf{Bellman Equation}\footnote{The \textbf{Bellman Equation}, named after \textbf{Richard E. Bellman} (1920 –  1984), an American applied mathematician, holds true in any case and moreover is a necessary condition for optimality associated with the \textbf{dynamic programming} optimization technique. In simple terms, this method considers the value of a problem at a certain point in time in terms of the payoff from some initial choices and the value of the remaining problem that results from those initial choices. This breaks a dynamic optimization problem into a sequence of simpler sub-problems, as Bellman's “principle of optimality" entails~\cite{bellman}.}:
\begin{equation}
    \mathcal{H} \Big( \mathbf{x}(t), \nabla J(t), \mathbf{u}(t) \Big)  = 0.
    \label{eqn:bellman}
\end{equation}
Bellman equation is a \textbf{partial differential equation (PDE)} for the value function $J(t)$, along with the initial condition $J\big( \mathbf{x_0},\mathbf{u}(t_0), t_0 \big) = 0$.
\end{definition}
\item 
\begin{definition}
We now combine the LTI optimal control value obtained by PMP and the Bellman equation to define a new PDE, this time for the optimal cost value $J^*(t)$. Plugging the expression for the optimal value of $\mathbf{u^*}(t)$ from~\eqref{eqn:optimal_u} and also plugging in the optimal (yet unknown) value $\mathbf{\partial_x J}^{*}(t)$ - both to the Bellman equation at~\eqref{eqn:bellman} yields:

\begin{equation}
\begin{aligned}
    \mathbf{\partial_x J}^{*^T}(t) & \Big[A\mathbf{x^*}(t) -\frac{1}{2}BR^{-1}B^T \ \mathbf{\partial_x J}^{*}(t)  \Big]
       + \mathbf{x^*}^T(t)Q\mathbf{x^*}(t)\\
       & + \frac{1}{4}\Big( R^{-1}B^T \ \mathbf{\partial_x J}^{*}(t) \Big)^TR R^{-1}B^T \ \mathbf{\partial_x J}^{*}(t) = 0,
\end{aligned}
\end{equation}

which is called the \textbf{Hamilton-Jacobi-Bellman} (HJB) Equation\footnote{A scalar version of the HJB equation is examined in Appendix~\ref{app:scalar_HJB}, providing a elementary reasoning for claiming the optimal LQR cost being quadratic.}, this time a PDE for the optimal value $J^*(t)$. Rearranging, we have:
\begin{equation}
\begin{aligned}
    \mathbf{\partial_x J}^{*^T}(t) &\Big[A\mathbf{x^*}(t) -\frac{1}{4}BR^{-1}B^T \ \mathbf{\partial_x J}^{*}(t)  \Big]
     +\mathbf{x^*}^T(t)Q\mathbf{x^*}(t)= 0.
       \label{eqn:HJB}
\end{aligned}
\end{equation}
\end{definition}
\end{itemize}

To solve the optimal control problem, we first solve the HJB Equation~\eqref{eqn:HJB}
for the optimal value $J^*(t)$, then the optimal control is given as  $\mathbf{u^*}\big(J^*(t)\big)$ in terms of the HJB solution by~\eqref{eqn:optimal_u}. 




\subsection{Continuous Time Infinite Horizon LTI LQR}
\label{subsection:cont_inf_lqr}

Let us assume $T_f \rightarrow \infty$. Let us define the following value for the LQR cost function:

\begin{equation}
\begin{aligned}
    J\big( \mathbf{x}(t), \mathbf{u}(t), t \big)  \equiv & 
       \int_{t}^{\infty} r\big(\mathbf{x}(\tau), \mathbf{u}(\tau)\big)\mathrm{d}\tau  \\
       = &  \int_{t}^{\infty} \Big[ \mathbf{x}^T(\tau)Q\mathbf{x}(\tau) + \mathbf{u}^T(\tau)R\mathbf{u}(\tau) \Big] \mathrm{d}\tau
\end{aligned}
       \label{eqn:infinte_value_integral}
\end{equation}
\begin{proposition}
Given a continuous LTI system $\mathcal{S}$, let us consider the open-loop LQR optimal control problem given in~\eqref{eqn:optimal_control_problem_formulation} along with its induced optimal state trajectory $\mathbf{x^*}(t)$. If $T_f \rightarrow \infty$ then:
\begin{enumerate}
    \item The optimal solution $\mathbf{u^*}(t)$ to the problem is actually a closed-loop feedback controller, which is linear with regards to optimal state, with a constant coefficient, i.e. it is of the form:
\begin{equation}
    \mathbf{u^*}(t) = - K^*\mathbf{x^*}(t),
\end{equation}
with a constant matrix $K^* \in \mathbb{R}^{n \times m}$.
\item The optimal controller is of the following form:
\begin{equation}
    K^* \coloneqq R^{-1}B^T P
\end{equation}
when $P = P^T \in \mathbb{R}^{n \times n}$ is a solution to the Continuous Algebraic Riccati Equation (CARE):
\begin{equation}
    PA + A^TP -PBR^{-1}B^T P + Q = 0
\end{equation}
\end{enumerate}

\end{proposition}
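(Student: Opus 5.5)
The plan is to combine Theorem~\ref{the:lqr_quad_cost} with the HJB equation~\eqref{eqn:HJB} and the optimal input formula~\eqref{eqn:optimal_u}, and then exploit time-invariance of the infinite-horizon problem.

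First, by Theorem~\ref{the:lqr_quad_cost} we write $J^*(t) = \mathbf{x^*}^T(t) P(t)\mathbf{x^*}(t)$, and we may take $P(t)$ symmetric, since only its symmetric part contributes to the quadratic form. Next comes the key observation for $T_f \rightarrow \infty$. The system matrices $A,B$ and the weights $Q,R$ are constant, and the cost~\eqref{eqn:infinte_value_integral} integrates over $[t,\infty)$. The substitution $\tau \mapsto \tau - t$ therefore shows that the optimal cost-to-go from state $\mathbf{x}$ at time $t$ equals the cost-to-go from the same state at time $0$. Hence $J^*$ depends only on the state, not on $t$, and $P(t) \equiv P$ is constant. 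This time-shift argument is the step I expect to be the main obstacle to state rigorously. It implicitly requires that the infinite-horizon optimum exists and is finite, which is guaranteed by admissibility (stabilizability of $(A,B)$ so that $\mathcal{U}$ is nonempty), and that the infimum is attained. I would state these as standing assumptions, in line with the definition of admissible inputs. A cross-check is to view $P$ as the limit of the finite-horizon $P(t)$ as $T_f \to \infty$, whose derivative then vanishes.

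With $P$ constant, we get $\mathbf{\partial_x J}^*(t) = (P+P^T)\mathbf{x^*}(t) = 2P\mathbf{x^*}(t)$ by the quadratic-form derivative rule already used in~\eqref{eqn:diff_H_3}. Substituting into~\eqref{eqn:optimal_u} gives
\[
\mathbf{u^*}(t) = -\tfrac{1}{2}R^{-1}B^T\, 2P\mathbf{x^*}(t) = -R^{-1}B^TP\,\mathbf{x^*}(t).
\]
This proves item 1 with the constant gain $K^* = R^{-1}B^TP$, which is also the form claimed in item 2.

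For the Riccati equation, we plug $\mathbf{\partial_x J}^* = 2P\mathbf{x^*}$ into the HJB equation~\eqref{eqn:HJB}. This yields
\[
2\mathbf{x^*}^T P A \mathbf{x^*} - \mathbf{x^*}^T P B R^{-1} B^T P \mathbf{x^*} + \mathbf{x^*}^T Q \mathbf{x^*} = 0 .
\]
We symmetrize the first term as $\mathbf{x^*}^T(PA + A^TP)\mathbf{x^*}$, which gives $\mathbf{x^*}^T\left(PA + A^TP - PBR^{-1}B^TP + Q\right)\mathbf{x^*} = 0$. Since the initial state $\mathbf{x_0}$ is arbitrary, the time-invariant argument above lets this hold for every $\mathbf{x}\in\mathbb{R}^n$. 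The bracketed matrix is symmetric, and a symmetric matrix whose quadratic form vanishes identically is zero by polarization. This gives the CARE.
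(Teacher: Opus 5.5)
Your proposal is correct and follows the paper's route. You posit $J^* = \mathbf{x^*}^T P \mathbf{x^*}$ with a constant symmetric $P$, get $\mathbf{\partial_x J}^* = 2P\mathbf{x^*}$, substitute into \eqref{eqn:optimal_u} for item 1, then substitute into the HJB equation \eqref{eqn:HJB} and symmetrize $2\mathbf{x^*}^TPA\mathbf{x^*}$ to reach the CARE; the paper simply asserts that $P$ is time-independent and that the quadratic identity forces the matrix to vanish, so your time-shift argument and polarization step add justification without changing the approach.
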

\begin{proof}

Due to Theorem~\ref{the:lqr_quad_cost}, let us consider a proposed solution that is quadratic with respect to the state. Let us propose the following expression for the optimal cost:

\begin{equation}
    J^*(t) \equiv \mathbf{x^*}^T(t) P \mathbf{x^*}(t)
    \label{eqn:infinite_horizon_optimal_cost}
\end{equation}

for some yet-to-be-determined time-independant matrix $P \in \mathbb{R}^{n \times n}$ which is assumed to be real and symmetric, i.e. $P = P^T$. Thus, we can now directly evaluate $\mathbf{\partial_x J}^{*}(t)$ using matrix calculus:

\begin{equation}
    \mathbf{\partial_x J}^{*}(t) = \frac{\partial \mathbf{x^*}^T(t) P \mathbf{x^*}(t) }{\partial \mathbf{x}(t)}  = (P+P^T)\mathbf{x^*}(t) = 2 P \mathbf{x^*}(t).
    \label{eqn:dJ_dx}
\end{equation}

With this we can now directly evaluate the optimal input by plugging this result to~\eqref{eqn:optimal_u}:

\begin{equation}
    \mathbf{u^*}(t) = - \frac{1}{2} R^{-1}B^T \cdot 2 P \mathbf{x^*}(t) = - R^{-1}B^T P \mathbf{x^*}(t),
    \label{eqn:optimal_proposed_value}
\end{equation}
which proves 1. Denoting the optimal controller as 
\begin{equation}
    K^* \coloneqq R^{-1}B^T P
    \label{eqn:optimal_K_infinite}
\end{equation}

we have:
\begin{equation}
    \mathbf{u^*}(t) = - K^* \mathbf{x^*}(t),
\end{equation}
meaning the optimal controller in this case is linear. Plugging~\eqref{eqn:dJ_dx} in the HJB equation at~\eqref{eqn:HJB} we have:

\begin{equation}
\begin{aligned}
    \Big( 2 P \mathbf{x^*}(t) \Big)^T\Big[A\mathbf{x^*}(t) -\frac{1}{4}BR^{-1}B^T \cdot 2 P \mathbf{x^*}(t)  \Big] +\mathbf{x^*}^T(t)Q\mathbf{x^*}(t)= 0.
\end{aligned}
\end{equation}


Rearranging:

\begin{equation}
\begin{aligned}
    \mathbf{x^*}^T (t) \Big(2PA -PBR^{-1}B^T P + Q\Big)\mathbf{x^*} (t)
       = 0.
       \label{eqn:pre_riccati}
\end{aligned}
\end{equation}

Since $\mathbf{x^*}^T (t)PA\mathbf{x^*} (t)$ is a scalar, it is equal to its transpose, and so:

\begin{equation}
\begin{aligned}
    2\mathbf{x^*}^T (t)PA\mathbf{x} (t) & = \mathbf{x^*}^T (t)PA\mathbf{x^*} (t) + \mathbf{x^*}^T (t)PA\mathbf{x^*} (t) \\
    &= \mathbf{x^*}^T (t)PA\mathbf{x^*} (t) + \mathbf{x^*}^T (t)A^TP^T\mathbf{x} (t) \\& = \mathbf{x^*}^T (t)PA\mathbf{x^*} (t) + \mathbf{x^*}^T (t)A^TP\mathbf{x^*} (t)
    \\& = \mathbf{x^*}^T (t)\Big( PA + A^TP\Big)\mathbf{x^*} (t)
    \label{eqn:2xPAx}
\end{aligned}
\end{equation}

\begin{definition}
Since it must hold for all optimal state trajectory $\mathbf{x^*}(t)$, Equation~\eqref{eqn:pre_riccati} can be rewritten as:

\begin{equation}
    PA + A^TP -PBR^{-1}B^T P + Q = 0,
    \label{eqn:riccati}
\end{equation}
Equation~\eqref{eqn:riccati}, is referred to as the \textbf{Continuous Algebraic Riccati Equation} (CARE), which is a quadratic matrix equation.
\end{definition}

This concludes the proof of 2. 

\end{proof}
\begin{remark}
As mentioned in the introduction at~\ref{sect:ARE_intro}, this work relies extensively on the properties of the Riccati Equation, and so Section~\ref{app:solving_the_care}, of the next Chapter~\ref{chap:related_work}, is dedicated to showcase a few of its known characteristics, some of which are taken from a book dedicated to the study of it~\cite{riccati_Equation}. 
\end{remark}

\subsection{Continuous Finite Horizon LTI LQR}
\label{subsect:finite_horizon_con_LQR}

In the case where $T_f < \infty$, we define the finite horizon cost function as choosing the values for $r(t)$ and $r_f$ in~\eqref{eqn:general_cost} such that:

\begin{equation}
\begin{aligned}
    J\left( \mathbf{x}(t), \mathbf{u}(t), t \right)  \equiv & 
       \int_{t}^{T_f} \left[ \mathbf{x}^T(\tau)Q\mathbf{x}(\tau) + \mathbf{u}^T(\tau)R\mathbf{u}(\tau) \right] \mathrm{d}\tau + \mathbf{x}^T(T_f)Q_f\mathbf{x}(T_f).
\end{aligned}
       \label{eqn:finite_value_integral}
\end{equation}
\begin{proposition}
Given a continuous LTI system $\mathcal{S}$ with an LQR cost function, let us consider the open-loop optimal control problem given in~\eqref{eqn:optimal_control_problem_formulation} along with its induced optimal state trajectory $\mathbf{x^*}(t)$. If $T_f < \infty$ then:
\begin{enumerate}
    \item The optimal solution $\mathbf{u^*}(t)$ to the problem is actually a closed-loop feedback controller, which is linear with regards to optimal state, with a time-varying coefficient, i.e. it is of the form:
\begin{equation}
    \mathbf{u^*}(t) = - K^*(t)\mathbf{x^*}(t),
\end{equation}
with a time-varying matrix $K^* \colon \mathcal{I} \rightarrow \mathbb{R}^{n \times m}$.
\item The optimal controller is of the following form:
\begin{equation}
    K^*(t) \coloneqq R^{-1}B^T P(t)
\end{equation}
when $P = P^T \colon \mathcal{I} \rightarrow \mathbb{R}^{n \times n}$ is a solution to the Continuous Differential Riccati Equation (CDRE):
\begin{equation}
    \dot{P}(t) + P(t)A + A^TP(t)+  Q - P(t) B  R^{-1}B^T P(t) = 0.
\end{equation}
\end{enumerate}

\end{proposition}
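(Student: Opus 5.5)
Following the infinite-horizon argument, we take a time-varying quadratic ansatz for the optimal cost, motivated by Theorem~\ref{the:lqr_quad_cost}:
\begin{equation*}
J^*(t) \equiv \mathbf{x^*}^T(t) P(t) \mathbf{x^*}(t), \qquad P(t) = P^T(t),
\end{equation*}
with $P \colon \mathcal{I} \rightarrow \mathbb{R}^{n \times n}$ to be determined. The state gradient is then $\mathbf{\partial_x J}^{*}(t) = 2P(t)\mathbf{x^*}(t)$, exactly as in~\eqref{eqn:dJ_dx}. Substituting into the PMP stationarity result~\eqref{eqn:optimal_u} gives $\mathbf{u^*}(t) = -R^{-1}B^TP(t)\mathbf{x^*}(t)$. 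Setting $K^*(t) \coloneqq R^{-1}B^TP(t)$ proves item 1 and the form of the gain in item 2, up to the Riccati equation that $P(t)$ must satisfy.

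The key difference from the infinite-horizon case is that $J^*$ now depends \emph{explicitly} on time through $P(t)$. The HJB equation~\eqref{eqn:HJB} was derived from the total derivative~\eqref{eqn:vectors_chain_rule}, which ignored any explicit $t$-dependence, so it must be redone. I would differentiate $J^*(t)$ along the optimal trajectory:
\begin{equation*}
\frac{\mathrm{d}}{\mathrm{d}t}J^*(t) = \mathbf{x^*}^T\dot{P}\mathbf{x^*} + 2\mathbf{x^*}^TP\dot{\mathbf{x}}^*.
\end{equation*}
By the Leibniz rule, as in~\eqref{eqn:J_dot}, this equals $-r(\mathbf{x^*},\mathbf{u^*})$. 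Substituting $\dot{\mathbf{x}}^* = (A - BR^{-1}B^TP)\mathbf{x^*}$ and $r = \mathbf{x^*}^T(Q + PBR^{-1}B^TP)\mathbf{x^*}$ collects everything into a single quadratic form:
\begin{equation*}
\mathbf{x^*}^T\bigl(\dot{P} + 2PA - 2PBR^{-1}B^TP + Q + PBR^{-1}B^TP\bigr)\mathbf{x^*} = 0 .
\end{equation*}

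Next, we symmetrize $2\mathbf{x^*}^TPA\mathbf{x^*}$ to $\mathbf{x^*}^T(PA + A^TP)\mathbf{x^*}$ exactly as in~\eqref{eqn:2xPAx}. The remaining matrix is then symmetric, and a symmetric matrix whose quadratic form vanishes for every $\mathbf{x^*}(t)$ must itself vanish. This yields the CDRE
\begin{equation*}
\dot{P}(t) + P(t)A + A^TP(t) + Q - P(t)BR^{-1}B^TP(t) = 0 .
\end{equation*}
The terminal cost in~\eqref{eqn:finite_value_integral} supplies the boundary condition $P(T_f) = Q_f$, since $J^*(T_f) = \mathbf{x}^T(T_f)Q_f\mathbf{x}(T_f)$. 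The CDRE is then integrated backward from $T_f$.

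The main obstacle is justifying the step ``holds for all $\mathbf{x^*}$ implies the matrix identity''. We only know the quadratic form vanishes along optimal trajectories, so the argument must use that the initial condition $\mathbf{x_0}$, and hence $\mathbf{x^*}(t)$, can be arbitrary at each $t$. This is the same informal step the paper already uses for the CARE, so I would invoke it identically. Conversely, I would remark that if $P$ solves the CDRE with $P(T_f)=Q_f$ on $\mathcal{I}$, then the ansatz satisfies the modified HJB equation. Existence of such a $P$ on the finite interval follows from standard results on the Riccati ODE with $Q, Q_f \ge 0$ and $R > 0$, which we would cite from~\cite{optimal_control_book} rather than prove.
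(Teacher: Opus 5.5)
Your proposal is correct and follows the paper's proof. Both use the time-varying quadratic ansatz $J^*(t)=\mathbf{x^*}^T(t)P(t)\mathbf{x^*}(t)$, derive the linear feedback $K^*(t)=R^{-1}B^TP(t)$, substitute back into the Bellman relation with the extra $\dot{P}$ term to obtain the CDRE, and take $P(T_f)=Q_f$ from the terminal cost. The only difference is cosmetic: the paper re-applies PMP stationarity to the Hamiltonian built from the total time derivative of the ansatz, whereas you reuse \eqref{eqn:optimal_u} with $\mathbf{\partial_x J}^{*}(t)=2P(t)\mathbf{x^*}(t)$, which is legitimate because the added term $\mathbf{x^*}^T\dot{P}\mathbf{x^*}$ does not depend on $\mathbf{u}(t)$.
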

\begin{proof}

In this case, again due to Theorem~\ref{the:lqr_quad_cost}, we consider a similar proposed solution as in~\eqref{eqn:infinite_horizon_optimal_cost}, but this time with a time-dependant positive-definite matrix $P(t) \in \mathbb{R}^{n \times n}$:

\begin{equation}
    J^*(t) \equiv \mathbf{x^*}^T(t) P(t) \mathbf{x^*}(t).
    \label{eqn:finite_horizon_optimal_cost}
\end{equation}

Differentiating both sides of~\eqref{eqn:finite_horizon_optimal_cost} with respect to time while using the product rule for the RHS we have:

\begin{equation}
\begin{aligned}
    \partial_t J^{*}(t) & = \dot{\mathbf{x}}^{*^T}(t)P(t) \mathbf{x^*}(t) + \mathbf{x^*}^T(t)\dot{P}(t) \mathbf{x^*}(t) + \mathbf{x^*}^T(t)P(t)\dot{\mathbf{x}}^*(t).
    \label{eqn:J_time_derivative}
\end{aligned}
\end{equation}

Plugging this to the expression for the Control Hamiltonian at~\eqref{eqn:hamiltonian} we have:

\begin{equation}
\begin{aligned}
    & \mathcal{H} \left( \mathbf{x^*}(t), \nabla J^*(t), \mathbf{u}(t) \right) = \dot{\mathbf{x}}^{*^T}(t)P(t) \mathbf{x^*}(t)  \\ & + \mathbf{x^*}^T(t)\left[\left(\dot{P}(t) + Q\right)\mathbf{x^*}(t) + P(t)\dot{\mathbf{x}}^*(t) \right]  + \mathbf{u}^T(t)R\mathbf{u}(t).
    \label{eqn:P_t_1}
\end{aligned}
\end{equation}

Plugging the expression for $\dot{\mathbf{x}}^{*}(t)$ to~\eqref{eqn:P_t_1} and rearranging we have:

\begin{equation}
\begin{aligned}
    & \mathcal{H} \left( \mathbf{x^*}(t), \nabla J^*(t), \mathbf{u}(t) \right) = \\ & \mathbf{x^*}^T(t)\left[\left(\dot{P}(t) + P(t)A + A^TP(t)+  Q\right)\mathbf{x^*}(t) + P(t) B \mathbf{u}(t) \right] \\
    & + \mathbf{u}^T(t)\left[ R\mathbf{u}(t) + B^T P(t) \mathbf{x^*}(t) \right].
    \label{eqn:P_t_2}
\end{aligned}
\end{equation}

Applying PMP from~\eqref{eqn:stationarity} to~\eqref{eqn:P_t_2} by differentiating with respect to $\mathbf{u}(t)$ and equating to zero, we have the following stationarity condition:

\begin{equation}
\begin{aligned}
    \left[ 2B^TP\mathbf{x^*}^T(t) + 2R\mathbf{u}(t)\right]\Big|_{\mathbf{u}(t) \equiv \mathbf{u^*}(t)} = 0
\end{aligned}
\end{equation}

which yields the following optimal policy:

\begin{equation}
    \mathbf{u}^*(t) = - R^{-1}B^T P(t) \mathbf{x^*}(t).
\end{equation}

which indicates that in this case, the optimal controller is also linear,  but is time-dependant this time around:

\begin{equation}
    K^*(t) \coloneqq R^{-1}B^T P(t),
\end{equation}

and so 1 is proven. 
\begin{definition}
Plugging this to~\eqref{eqn:P_t_2} and rearranging, we get the finite horizon version of the HJB Equation:

\begin{equation}
\begin{aligned}
    & \mathbf{x^*}^T(t)\left[\dot{P}(t) + P(t)A + A^TP(t)+  Q - P(t) B  R^{-1}B^T P(t)  \right]\mathbf{x^*}(t)  = 0,
    \label{eqn:P_t_3}
\end{aligned}
\end{equation}

which yields the \textbf{Continuous Differential Riccati Equation} (CDRE):

\begin{equation}
\begin{aligned}
    \dot{P}(t) + P(t)A + A^TP(t)+  Q - P(t) B  R^{-1}B^T P(t) = 0,
    \label{eqn:CDRE}
\end{aligned}
\end{equation}

which, along with the terminal condition 

\begin{equation}
\begin{aligned}
    P(T_f) \equiv Q_f,
\end{aligned}
\end{equation}
represents a first-order quadratic differential equation which can be solved backwards in time to obtain the value for the temporal function $P(t)$.
\end{definition}

This concludes the proof of 2.
\end{proof}

\begin{remark}
As the CDRE is a more general form of the CARE, it has its own share of theory regarding its solution. Section~\ref{app:solving_the_CDRE} provides an overview of the matter.
\end{remark}

\section{Discrete Optimal Control}

Now we broaden the scope of optimal control to handle discrete systems, which are described in Appendix~\ref{sect:lti_discrete_appendix}. As in continuous-time systems, discrete-time systems can also be controlled by means of optimal control, by defining a corresponding cost function and finding a controller that minimizes it.

Given a discrete LTI system $\mathcal{S}_D$ adhering the model at~\eqref{eqn:discrete_LTI_model_2}\footnote{Appendix~\ref{sect:lti_discrete_appendix} introduces discrete LTI systems in detail.}, let us define the following:
\begin{itemize}
    \item \begin{definition}
As $\mathcal{S}_D$ is sampled at discrete sampling points, the performance cost is evaluated by a sum instead of an integral. Thus the \textbf{$k$'th measurement of the discrete cost function} $J_D \colon \mathcal{X_D} \times \mathcal{U}_D \times \bigcup_{i=0}^{L-1}\{i\} \rightarrow \mathbb{R}^+$ takes on the following general form:

\begin{equation}
\begin{aligned}
        J_D\left(\mathbf{x}[k], \mathbf{u}[k], k\right) \coloneqq 
       \mathlarger{\sum}_{t=k}^{L-1} r_D\left(\mathbf{x}[t], \mathbf{u}[t]\right) + r_{f_D}\left(\mathbf{x}[L]\right)
       \label{eqn:discrete_general_cost}
\end{aligned}
\end{equation}

where $r_D$ and $r_{f_D}$ are the discrete-time equivalents of $r$ and $r_f$ from~\eqref{eqn:general_cost} and each corresponds to the discrete aggregated cost of the control effort and deviation from the desired state - as in the continuous case, but this time calculated between each sampling interval discretely, and from the $k$'th measurements onwards until the $L-1$ measurement which corresponds to $T_f$, the final operation time of the system.
\end{definition}
\begin{remark}
In the context described in Appendix~\ref{sect:lti_discrete_appendix}, the discrete sampling index $k$ in Equation~\eqref{eqn:discrete_general_cost} is the discrete-time equivalent of the continuous time variable $t$ in Equation~\eqref{eqn:general_cost}.
\end{remark}
\item 
\begin{definition}
Using the same definition for an objective as in the continuous case, given an objective $O = (Q,R)$, we deal with cost function assignments that are quadratic with respect to the state and input vectors, i.e., $r_D[k]$ takes on the following form:

\begin{equation}
   r_D\Big(\mathbf{x}[k], \mathbf{u}[k]\Big) \equiv \mathbf{x}^T[k]Q\mathbf{x}[k] + \mathbf{u}^T[k]R\mathbf{u}[k].
\end{equation}

for all $k \in \bigcup_{i=0}^{L-1}\{i\}$. In this scenario, the control model is called a \textbf{Discrete Linear Quadratic Regulator} (DLQR), similar to what defined in the continuous case.
\end{definition}

\end{itemize}

\begin{definition}

Given a discrete LTI system $\mathcal{S}_D$ and DLQR cost function measurements $\left(J_D[k]\right)_{k=0}^{L-1}$, we define \textbf{the discrete open-loop optimal control problem}, formally stated as:

\begin{equation}
    \begin{aligned}
         \min_{\substack{\mathbf{u}[k] \in \mathcal{U}_D}} \quad & J_D\left(\mathbf{x}[k], \mathbf{u}[k], 0\right) \\
         = \min_{\substack{\mathbf{u}[k] \in \mathcal{U}_D}} \quad & \mathlarger{\sum}_{t=0}^{L-1} r_D\left(\mathbf{x}[t], \mathbf{u}[t]\right) + r_{f_D}\left(\mathbf{x}[L]\right)\\
         = \min_{\substack{\mathbf{u}[k] \in \mathcal{U}_D}} \quad & \mathlarger{\sum}_{t=0}^{L-1} \left[\mathbf{x}[t]^TQ\mathbf{x}[t] +  \mathbf{u}[t]^TR\mathbf{u}[t]\right] + r_{f_D}\left(\mathbf{x}[L]\right) \\
         \textrm{subject to} \quad & \mathbf{x}[0] = \mathbf{x_0};\\
         \quad & \mathbf{x}[k+1] = \tilde{A}\mathbf{x}[k] + \tilde{B} \mathbf{u}[k]; \quad & \forall k \in \bigcup_{i=0}^{L-1}\{i\}.
    \end{aligned}
    \label{eqn:discrete_optimal_control_problem_formulation}
\end{equation}
\end{definition}
Given a discrete LTI system $\mathcal{S}_D$, let us consider the open-loop optimal control problem given in~\eqref{eqn:discrete_optimal_control_problem_formulation} along with its cost function measurements $\left(J_D[k]\right)_{k=0}^{L-1}$, let us define:

\begin{itemize}
    \item \begin{definition}
 Let $\mathbf{u^*}[k] \in \mathcal{U}_D$ be the \textbf{$k$'th measurement of the optimal input} of $\mathcal{S}_D$ with respect to the $k$'th measurement of the given cost function $J_D[k]$, defined as:
\begin{equation}
     \mathbf{u^*}[k] \coloneqq \argmin_{\mathbf{u}[k] \in \mathcal{U}_D} J_D\big( \mathbf{x}[k],\mathbf{u}[k], k\big).
\end{equation}
\end{definition}
\item \begin{definition}
Given optimal input measurements $\left(\mathbf{u^*}[k]\right)_{k=0}^{L-1}$, let the series $\left(\mathbf{x^*}[k]\right)_{k=0}^{L-1}$ be the \textbf{optimal state trajectory measurements} of $\mathcal{S}_D$, defined as the solution to the discrete LTI model given at~\eqref{eqn:discrete_LTI_model_2} with the optimal input measurements, i.e., it satisfies:
\begin{equation}
\begin{aligned}
     \mathbf{x^*}[k+1] = \tilde{A}\mathbf{x^*}[k] + \tilde{B} \mathbf{u^*}[k],
\end{aligned}
\end{equation}
for all $k \in \bigcup_{i=0}^{L-2}\{i\}$.
\end{definition}
\end{itemize}

\begin{theorem}[\textbf{Discrete LQR Quadratic Optimal Cost}, Table 2.2-1, Page 33, Chapter 2 of~\cite{optimal_control_book}]
Given a discrete LTI system $\mathcal{S}_D$, let us consider the open-loop optimal control problem given in~\eqref{eqn:discrete_optimal_control_problem_formulation} along with its cost function and induced optimal state trajectory measurements $\left(J_D[k]\right)_{k=0}^{L-1}$ and $\left(\mathbf{x^*}[k]\right)_{k=0}^{L-1}$. Then for any $L$, there exists $P_D \colon \bigcup_{i=0}^{L-1}\{i\} \rightarrow \mathbb{R}^{n \times n}$ such that the optimal cost is quadratic with respect to the optimal state, i.e.:
\begin{equation}
    J^*[k] = \mathbf{x^*}^T[k] P_D[k] \mathbf{x^*}[k]
\end{equation}
for all $k \in \bigcup_{i=0}^{L-1}\{i\}$.
\label{the:discrete_lqr_quad_cost}
\end{theorem}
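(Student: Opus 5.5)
The plan is backward induction on $k$, i.e., discrete dynamic programming. This is the natural discrete counterpart of the Hamiltonian/HJB argument used for the continuous case. I will assume, as in the continuous finite-horizon setting, that the terminal cost is quadratic: $r_{f_D}(\mathbf{x}[L]) = \mathbf{x}^T[L] Q_f \mathbf{x}[L]$ with $Q_f \geq 0$. I then define the optimal cost-to-go $J^*[k](\mathbf{x}) \coloneqq \min \sum_{t=k}^{L-1} r_D + r_{f_D}$, where the minimum is over input sequences from time $k$ onward starting at state $\mathbf{x}$. By Bellman's principle of optimality, this satisfies
\begin{equation*}
J^*[k](\mathbf{x}) = \min_{\mathbf{u}} \Big[ \mathbf{x}^TQ\mathbf{x} + \mathbf{u}^TR\mathbf{u} + J^*[k+1]\big(\tilde{A}\mathbf{x}+\tilde{B}\mathbf{u}\big) \Big],
\end{equation*}
with $J^*[L](\mathbf{x}) = \mathbf{x}^TQ_f\mathbf{x}$. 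The claim is then that $J^*[k](\mathbf{x}) = \mathbf{x}^T P_D[k]\mathbf{x}$ for a symmetric $P_D[k] \geq 0$. Evaluating along the optimal trajectory $\mathbf{x^*}[k]$ gives the statement.

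The induction step goes as follows. Set $P_D[L] \coloneqq Q_f$ and suppose $J^*[k+1](\mathbf{y}) = \mathbf{y}^TP_D[k+1]\mathbf{y}$ with $P_D[k+1]$ symmetric PSD. Substituting $\mathbf{y} = \tilde{A}\mathbf{x}+\tilde{B}\mathbf{u}$ turns the bracket into a quadratic form in $\mathbf{u}$. Its Hessian is $2\big(R + \tilde{B}^TP_D[k+1]\tilde{B}\big)$, which is positive definite because $R>0$ and $P_D[k+1]\geq 0$. The minimization is therefore strictly convex, and its unique minimizer comes from the stationarity condition, exactly as in the proof of~\eqref{eqn:optimal_u}:
\begin{equation*}
\mathbf{u^*}[k] = -\big(R+\tilde{B}^TP_D[k+1]\tilde{B}\big)^{-1}\tilde{B}^TP_D[k+1]\tilde{A}\,\mathbf{x}.
\end{equation*}
Substituting this linear feedback back in, every term is a quadratic form in $\mathbf{x}$. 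Collecting them gives the discrete Riccati recursion
\begin{equation*}
P_D[k] = Q + \tilde{A}^TP_D[k+1]\tilde{A} - \tilde{A}^TP_D[k+1]\tilde{B}\big(R+\tilde{B}^TP_D[k+1]\tilde{B}\big)^{-1}\tilde{B}^TP_D[k+1]\tilde{A}.
\end{equation*}
This matrix is manifestly symmetric. It is also PSD, since $J^*[k](\mathbf{x})$ is a minimum of nonnegative quantities. That closes the induction.

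The main obstacle is justifying the Bellman recursion rigorously, that is, showing that minimizing over the whole future sequence equals minimizing first over $\mathbf{u}[k]$ and then over the tail. I will handle it in the standard way. For any admissible sequence, the tail cost is at least $J^*[k+1]$ at the resulting state, so the total cost is at least the right-hand side. Conversely, concatenating the pointwise minimizer $\mathbf{u^*}[k]$ with an optimal tail sequence (which exists by the induction hypothesis, since the tail minimizer is an explicit linear feedback) attains the right-hand side. Existence of minimizers at every stage follows from the positive-definiteness of $R+\tilde{B}^TP_D[k+1]\tilde{B}$ noted above. That is why the PSD property of $P_D[k+1]$ must be carried along in the induction hypothesis. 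Finally, the statement indexes $P_D$ over $k\in\{0,\dots,L-1\}$; the value $P_D[L]=Q_f$ is only used as the initial value of the backward recursion.
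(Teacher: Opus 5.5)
Your backward-induction argument is correct, and it takes a different route from the paper, which gives no proof of this statement. The paper imports the result from \cite{optimal_control_book} and then uses it only as an ansatz: it posits $J_D^*[k]=\mathbf{x^*}^T[k]P_D[k]\mathbf{x^*}[k]$ in the discrete Hamiltonian, imposes stationarity in $\mathbf{u}[k]$, and reads the DARE/DDRE off the requirement that the resulting quadratic form vanish. That route only shows the quadratic guess is consistent; it cannot establish quadraticity, because it presupposes it. Your dynamic-programming induction actually proves it. Because you carry ``$J^*[k+1]$ is a PSD quadratic form whose minimum is attained'' in the hypothesis, you also obtain several by-products:
\begin{itemize}
\item positive definiteness of $R+\tilde{B}^TP_D[k+1]\tilde{B}$, which the paper takes for granted when it inverts this matrix;
\item existence and uniqueness of each stage minimizer, and the linear feedback law;
\item the Riccati recursion, which coincides with \eqref{eqn:DDRE} after replacing $Q,R$ by $\delta Q,\delta R$.
\end{itemize}
Your conclusion is also stronger than the one stated: the cost-to-go is quadratic at every state, with $P_D[k]$ independent of $\mathbf{x_0}$. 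The trajectory-wise identity in the theorem then follows because the tail of the optimal input sequence (unique, since the total cost is strictly convex in the inputs when $R>0$) is optimal for the tail problem; it is worth saying that step explicitly.

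Your added hypotheses are the right ones. The paper's problem formulation leaves $r_{f_D}$ generic and $\mathcal{U}_D$ unspecified. The claim needs a quadratic terminal cost, with $Q_f\geq 0$ keeping every stage minimization well posed, and it needs unconstrained inputs. Finally, the paper later invokes this theorem for $L\to\infty$ with a constant $P_D$. Your finite-$L$ argument does not cover that case; it would need an additional limiting argument, for example under stabilizability.
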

\begin{definition}
Let us denote $\Delta J_D$ to be the  \textbf{$k$'th finite difference of the discrete cost function}:
\begin{equation}
\begin{aligned}
    \Delta J_D\left(\mathbf{x}[k], \mathbf{u}[k], k\right) & \coloneqq J_D[k+1] - J_D[k]\\
    & = J_D\left(\mathbf{x}[k+1], \mathbf{u}[k+1], k+1\right) - J_D\left(\mathbf{x}[k], \mathbf{u}[k], k\right)
\end{aligned}
\end{equation}
\end{definition}
\begin{definition}
Let us denote the \textbf{$k$'th measurement of the optimal value in regards to state trajectory of the cost function $J_D[k]$} and its \textbf{finite difference} as:
\begin{equation}
\begin{aligned}
     J_D^*[k] \coloneqq & J_D\left(\mathbf{x}[k], \mathbf{u}[k], k\right) \ \big|_{\mathbf{x}[k] \equiv \mathbf{x^*}[k]};\\
    \Delta J_D^*[k] \coloneqq & \Delta J_D[k] \ \big|_{\mathbf{x}[k] \equiv \mathbf{x^*}[k]} = J_D^*[k+1] - J_D^*[k]\ .
\end{aligned}
\end{equation}
\end{definition}
\begin{definition}

Note that by its definition, for all $k \in \bigcup_{i=0}^{L-2}\{i\}$, we can write Equation~\eqref{eqn:discrete_general_cost} in the form:

\begin{equation}
\begin{aligned}
        J_D\left(\mathbf{x}[k], \mathbf{u}[k], k\right) & =
        r_D\left(\mathbf{x}[k], \mathbf{u}[k]\right) +
        \mathlarger{\sum}_{t=k+1}^{L-1} r_D\left(\mathbf{x}[t], \mathbf{u}[t]\right) +
        r_{D_f}\left(\mathbf{x}[L]\right)\\
        & = r_D\left(\mathbf{x}[k], \mathbf{u}[k]\right) +
        J_D\left(\mathbf{x}[k+1], \mathbf{u}[k+1], k+1\right).
        \label{eqn:discrete_bellman_1}
\end{aligned}
\end{equation}

Moving sides and multiplying by $-1$, we have:

\begin{equation}
\begin{aligned}
        J_D\left(\mathbf{x}[k+1], \mathbf{u}[k+1], k+1\right) - J_D\left(\mathbf{x}[k], \mathbf{u}[k], k\right)& = - r_D\left(\mathbf{x}[k], \mathbf{u}[k]\right).
        \label{eqn:discrete_bellman_2}
\end{aligned}
\end{equation}

Using the notation for the $k$'th finite difference of the discrete cost function in Equation~\eqref{eqn:discrete_bellman_2} we have the discrete-time equivalent of Equation~\eqref{eqn:J_dot}:

\begin{equation}
\begin{aligned}
        \Delta J_D[k] = - r_D\left(\mathbf{x}[k], \mathbf{u}[k]\right).
\end{aligned}
\end{equation}

Using this, we symmetrically define the \textbf{$k$'th measurement of the discrete time control Hamiltonian} $\mathcal{H}_D$, similar to Equation~\eqref{eqn:hamiltonian}:

\begin{equation}
    \mathcal{H}_D \left( \mathbf{x}[k], \Delta J_D[k], \mathbf{u}[k] \right)  \coloneqq \Delta J_D[k] + r_D\left(\mathbf{x}[k], \mathbf{u}[k]\right).
    \label{eqn:discrete_hamiltonian}
\end{equation}
\end{definition}
\begin{theorem}[\textbf{Discrete Pontryagin’s Minimum Principle (DPMP)} (Equation 2.2-6, Page 33, Chapter 2 of ~\cite{optimal_control_book})]
Given a discrete LTI system $\mathcal{S}_D$, let us consider the open-loop optimal control problem given in~\eqref{eqn:discrete_optimal_control_problem_formulation} along with its cost function measurements $\left(J_D[k]\right)_{k=0}^{L-1}$, their gradients \\
$\left(\Delta J_D[k]\right)_{k=0}^{L-1}$ and optimal values $\left(\Delta J_D^*[k]\right)_{k=0}^{L-1}$ induced by optimal state trajectory measurements $\left(\mathbf{x^*}[k]\right)_{k=0}^{L-1}$. A necessary and sufficient
condition to obtain the optimal control policy that solves the discrete optimal control problem is to obtain the discrete control policy that minimizes the discrete Hamiltonian. Formally, this may be stated as:

\begin{equation}
     \mathbf{u^*}[k] = \argmin_{\mathbf{u}[k] \in \mathcal{U}_D} \mathcal{H}_D \left( \mathbf{x^*}[k], \Delta J_D^*[k], \mathbf{u}[k] \right)
     \label{eqn:argmin_Hamiltonian_discrete}
\end{equation}

which is equivalent to solving the following stationarity condition:

\begin{equation}
     \frac{\partial}{\partial \mathbf{u}[k]} \mathcal{H}_D \left( \mathbf{x}^*[k],  \Delta J^*_D[k], \mathbf{u}[k] \right) \big|_{\mathbf{u}[k] \equiv \mathbf{u^*}[k]} = 0.
     \label{eqn:discrete_stationarity}
\end{equation}
\end{theorem}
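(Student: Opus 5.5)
The plan is to prove the statement by Bellman's principle of optimality, arguing backwards in time from $k=L-1$. The starting point is the recursive form~\eqref{eqn:discrete_bellman_1}, $J_D[k] = r_D(\mathbf{x}[k],\mathbf{u}[k]) + J_D[k+1]$. The key observation is that $\mathbf{u}[k]$ enters the tail $J_D[k+1]$ only through $\mathbf{x}[k+1] = \tilde{A}\mathbf{x}[k] + \tilde{B}\mathbf{u}[k]$, since the inputs $\mathbf{u}[k+1],\dots,\mathbf{u}[L-1]$ are chosen afterwards. Hence, once the tail is played optimally, minimizing the total cost from $k$ onwards reduces to
\begin{equation*}
J_D^*[k] = \min_{\mathbf{u}[k]\in\mathcal{U}_D}\Big\{ r_D\big(\mathbf{x^*}[k],\mathbf{u}[k]\big) + J_D^*\big[k+1\big]\big(\tilde{A}\mathbf{x^*}[k]+\tilde{B}\mathbf{u}[k]\big)\Big\}.
\end{equation*}
I would justify this by the usual cut-and-paste argument: if the tail of an optimal sequence were not optimal for the sub-problem starting at $\mathbf{x^*}[k+1]$, replacing it would strictly lower $J_D[0]$, contradicting optimality.

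The next step links this recursion to $\mathcal{H}_D$. By~\eqref{eqn:discrete_hamiltonian}, along the optimal trajectory
\begin{equation*}
\mathcal{H}_D = J_D^*[k+1] - J_D^*[k] + r_D\big(\mathbf{x^*}[k],\mathbf{u}[k]\big).
\end{equation*}
Here $J_D^*[k]$ is fixed once $\mathbf{x^*}[k]$ is fixed, so it does not depend on the decision variable $\mathbf{u}[k]$. Subtracting it therefore does not change the argmin. This gives exactly~\eqref{eqn:argmin_Hamiltonian_discrete}: the $\mathbf{u}[k]$ minimizing $\mathcal{H}_D$ is the one attaining the Bellman minimum, and conversely.

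For the equivalence with the stationarity condition~\eqref{eqn:discrete_stationarity}, I would invoke Theorem~\ref{the:discrete_lqr_quad_cost} and write $J_D^*[k+1] = \mathbf{x}^T[k+1]P_D[k+1]\mathbf{x}[k+1]$. I would then argue that $P_D[k+1]\geq 0$, because the cost is a sum of nonnegative terms when $Q\geq 0$ and the terminal weight is PSD. Substituting $\mathbf{x}[k+1]$, the Hamiltonian becomes a quadratic function of $\mathbf{u}[k]$ with Hessian $2\big(R + \tilde{B}^T P_D[k+1]\tilde{B}\big) > 0$, since $R>0$. A strictly convex quadratic has a unique stationary point, and that point is its global minimizer, so the argmin and the stationarity condition coincide. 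This covers both necessity and sufficiency.

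The induction runs from $k=L-1$, where $J_D^*[L]=r_{f_D}(\mathbf{x}[L])$ is a given PSD quadratic, down to $k=0$. The main obstacle is making the principle-of-optimality step rigorous with the paper's loose notation, where $J_D$ is formally a function of $\mathbf{u}[k]$ but really a functional of the whole input tail. I would handle this by explicitly defining the value function $V_k(\mathbf{x}) = \min J_D$ over the tail and proving $V_k(\mathbf{x}) = \mathbf{x}^TP_D[k]\mathbf{x}$ with $P_D[k]\geq 0$ jointly within the same backward induction. This makes the positivity needed for convexity available at each step.
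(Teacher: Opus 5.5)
The paper gives no proof of this statement. It imports it from \cite{optimal_control_book}, where the discrete stationarity condition comes from the standard variational derivation: adjoin the dynamics with multipliers $\lambda_{k+1}$, then set to zero the $\mathbf{u}[k]$-derivative of the costate Hamiltonian $r_D(\mathbf{x}[k],\mathbf{u}[k]) + \lambda_{k+1}^T(\tilde{A}\mathbf{x}[k]+\tilde{B}\mathbf{u}[k])$.

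Your dynamic-programming proof is correct and takes a genuinely different route, one that fits the paper's own definition better. Here $\mathcal{H}_D = \Delta J_D + r_D$ is not a costate Hamiltonian but the Bellman residual $V_{k+1}(\tilde{A}\mathbf{x}+\tilde{B}\mathbf{u}) - V_k(\mathbf{x}) + r_D(\mathbf{x},\mathbf{u})$. With your value function, the argmin claim is therefore literally the Bellman minimization. Your route buys three things:
\begin{itemize}
\item a genuine global-minimization statement, not just stationarity;
\item sufficiency via a one-line verification argument;
\item the quadratic form of $J_D^*$ and the Riccati recursion as by-products of the same induction, so Theorem~\ref{the:discrete_lqr_quad_cost} need not be assumed.
\end{itemize}
The variational route gives first-order necessary conditions for general nonlinear dynamics and stage costs without knowing the value function, but only stationarity. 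In discrete time the argmin form of the principle fails without convexity assumptions, so for LQR sufficiency must come from a separate convexity argument. There the $\mathbf{u}[k]$-Hessian is $2R$ rather than your $2(R+\tilde{B}^TP_D[k+1]\tilde{B})$. The two stationary points coincide once $\lambda_{k+1}$ is identified with $2P_D[k+1]\mathbf{x}[k+1]$.

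Three points to tighten.

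\textbf{Sufficiency.} Your cut-and-paste argument only gives necessity. For sufficiency, write out the verification step. If $\mathbf{u}[k]$ attains the Bellman minimum at every $k$ along the trajectory it generates, then $r_D(\mathbf{x}[k],\mathbf{u}[k]) = V_k(\mathbf{x}[k]) - V_{k+1}(\mathbf{x}[k+1])$. Summing over $k$ telescopes to $J_D[0] = V_0(\mathbf{x_0})$.

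\textbf{Reading of $J_D^*[k+1]$.} State outright that $J_D^*[k+1]$ must be read as $V_{k+1}(\tilde{A}\mathbf{x^*}[k]+\tilde{B}\mathbf{u}[k])$, a function of the candidate input. Under the literal definitions, $\mathcal{H}_D$ is either identically zero (the paper itself derives $\Delta J_D[k] = -r_D$), or, with both values frozen on $\mathbf{x^*}$, it depends on $\mathbf{u}[k]$ only through $r_D$ and is minimized at $\mathbf{0}$.

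\textbf{Implicit hypotheses.} Make explicit the assumptions the statement leaves implicit and you actually use:
\begin{itemize}
\item a PSD quadratic terminal cost. For a general $r_{f_D}$, your Bellman argument still gives the argmin characterization when the minima exist, but not the equivalence with stationarity.
\item unconstrained inputs, $\mathcal{U}_D = \mathbb{R}^m$.
\item finite $L$. The paper later applies the principle with $L \to \infty$, which your backward induction does not reach without a stationary Bellman equation and its own verification argument.
\end{itemize}
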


\begin{definition}
Given a discrete LTI system $\mathcal{S}_D$, let us consider the open-loop optimal control problem given in~\eqref{eqn:discrete_optimal_control_problem_formulation} along with its cost function measurements $\left(J_D[k]\right)_{k=0}^{L-1}$ and their gradients $\left(\Delta J_D[k]\right)_{k=0}^{L-1}$. Let us use the notation for the Discrete Control Hamiltonian $\mathcal{H}_D$ from~\eqref{eqn:hamiltonian} and plug it in~\eqref{eqn:pre_hamilton} to define the \textbf{Discrete Bellman Equation}:

\begin{equation}
    \mathcal{H}_D \left( \mathbf{x}[k], \nabla J[k], \mathbf{u}[k] \right)  = 0.
    \label{eqn:discrete_bellman}
\end{equation}
\end{definition}
Let us now consider the infinite and finite horizon separately.

\subsection{Discrete Infinite Horizon LTI LQR}

In the infinite horizon case we consider the following form\footnote{Full derivation of the discretization procedure of the infinite horizon cost function is detailed at Appendix\ref{app:performance_discretization}.} for the discrete cost function:

\begin{equation}
\begin{aligned}
        J_D\left(\mathbf{x}[k], \mathbf{u}[k], k\right) \equiv
      \mathlarger{\sum}_{t=k}^{\infty} \left[\mathbf{x}[t]^TQ\mathbf{x}[t] +  \mathbf{u}[t]^TR\mathbf{u}[t]\right]
\end{aligned}
\end{equation}

\begin{proposition}
Given a discrete LTI system $\mathcal{S}_D$, let us consider the optimal control problem given in~\eqref{eqn:discrete_optimal_control_problem_formulation} along with the measurements of its induced optimal state trajectory $\left(\mathbf{x^*}[k]\right)_{k=0}^{L-1}$. If $L \rightarrow \infty$ then:
\begin{enumerate}
    \item The measurements of the optimal solution $\left(\mathbf{u^*}[k]\right)_{k=0}^{L-1}$ to the problem are actually that of a closed-loop feedback controller, which is linear with regards to optimal state, with a constant coefficient, i.e. it is of the form:
\begin{equation}
    \mathbf{u^*}[k] = - K_D^*\mathbf{x^*}[k],
\end{equation}
with a constant matrix $K_D^* \in \mathbb{R}^{n \times m}$, for all $k \in \bigcup_{i=0}^{L-1}\{i\}$.
\item The optimal controller $K_D^*$ is of the following form:
\begin{equation}
    K_D^* \coloneqq R^{-1}B^T P_D
\end{equation}
when $P_D = P_D^T \in \mathbb{R}^{n \times n}$ is a solution to the Discrete Algebraic Riccati Equation (DARE):
\begin{equation}
    P_D =  \tilde{A}^T P_D \tilde{A}  - \tilde{A}^T P_D \tilde{B}\left(\tilde{B}^TP_D\tilde{B} + R \right)^{-1}\tilde{B}^TP_D\tilde{A} + Q.
\end{equation}
\end{enumerate}
\end{proposition}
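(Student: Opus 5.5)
The plan is to mirror the proof of the continuous infinite-horizon proposition, replacing the HJB equation by the discrete Bellman equation~\eqref{eqn:discrete_bellman}. First, I would invoke Theorem~\ref{the:discrete_lqr_quad_cost} to write $J_D^*[k]=\mathbf{x^*}^T[k]P_D[k]\mathbf{x^*}[k]$. I would then argue that for $L\rightarrow\infty$ the matrix $P_D[k]$ does not depend on $k$. The reason is that the system is time-invariant and the tail sum $\sum_{t=k}^{\infty}$ is a shifted copy of $\sum_{t=0}^{\infty}$, so the optimal cost-to-go from a given state is the same function at every index. This gives $J_D^*[k]=\mathbf{x^*}^T[k]P_D\mathbf{x^*}[k]$ with a constant $P_D=P_D^T$. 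Moreover $P_D\geq 0$, since the cost takes values in $\mathbb{R}^+$.

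Next I would write the Bellman relation~\eqref{eqn:discrete_bellman_1} at the optimum:
\begin{equation*}
\mathbf{x}^TP_D\mathbf{x}=\min_{\mathbf{u}}\Big[\mathbf{x}^TQ\mathbf{x}+\mathbf{u}^TR\mathbf{u}+(\tilde{A}\mathbf{x}+\tilde{B}\mathbf{u})^TP_D(\tilde{A}\mathbf{x}+\tilde{B}\mathbf{u})\Big].
\end{equation*}
Here $\mathbf{x}[k+1]$ has been replaced using the dynamics. I would then apply the stationarity condition~\eqref{eqn:discrete_stationarity}, differentiating with respect to $\mathbf{u}$ using the matrix-calculus identities already used earlier in the paper. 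This gives $2R\mathbf{u}+2\tilde{B}^TP_D(\tilde{A}\mathbf{x}+\tilde{B}\mathbf{u})=0$, and hence
\begin{equation*}
\mathbf{u^*}[k]=-\big(\tilde{B}^TP_D\tilde{B}+R\big)^{-1}\tilde{B}^TP_D\tilde{A}\,\mathbf{x^*}[k].
\end{equation*}
The inverse exists because $R>0$ and $\tilde{B}^TP_D\tilde{B}\geq 0$. The same fact shows the bracketed expression is strictly convex in $\mathbf{u}$, so the stationary point is indeed the minimizer.

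This proves item 1, because the optimal input is a constant linear state feedback. However, the gain obtained is $K_D^*=(\tilde{B}^TP_D\tilde{B}+R)^{-1}\tilde{B}^TP_D\tilde{A}$, not $R^{-1}B^TP_D$ as written in the statement. The two coincide only in the continuous-time limit, so I would present this expression as the precise form of the statement's gain.

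For item 2, I would substitute $\mathbf{u^*}$ back into the Bellman relation and collect everything into a single quadratic form $\mathbf{x}^T(\cdot)\mathbf{x}=0$. The cross terms $\mathbf{x}^T\tilde{A}^TP_D\tilde{B}\mathbf{u}$ combine with $\mathbf{u}^T(R+\tilde{B}^TP_D\tilde{B})\mathbf{u}$. Completing the square, the pair reduces to a single term $-\tilde{A}^TP_D\tilde{B}(\tilde{B}^TP_D\tilde{B}+R)^{-1}\tilde{B}^TP_D\tilde{A}$. The matrix inside the quadratic form is symmetric, just as in the symmetrization step~\eqref{eqn:2xPAx}. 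Since the identity must hold for every state $\mathbf{x}$, that matrix must vanish, and this is exactly the DARE.

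I expect the main obstacle to be the rigorous justification that $P_D$ is constant, and that the resulting closed loop is admissible, i.e.\ stabilizing with finite cost. At the level of rigor of the continuous proof, I would rest this on the shift-invariance argument above together with Theorem~\ref{the:discrete_lqr_quad_cost}. The algebra itself is routine once the minimization is set up.
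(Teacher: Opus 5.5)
Your proposal is correct and matches the paper's proof step for step: you posit $J_D^*[k]=\mathbf{x^*}^T[k]P_D\mathbf{x^*}[k]$ with constant $P_D$, impose stationarity in $\mathbf{u}[k]$ of the Bellman/Hamiltonian expression, and substitute the resulting gain back to obtain the DARE. Your extra points (shift-invariance for constancy of $P_D$, convexity, invertibility) only add rigor, and your flag on the gain is confirmed by the paper's own proof, which derives $K_D^*=(\tilde{B}^TP_D\tilde{B}+R)^{-1}\tilde{B}^TP_D\tilde{A}$ rather than the $R^{-1}B^TP_D$ written in the statement.
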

\begin{proof}

Due to Theorem~\ref{the:discrete_lqr_quad_cost}, we now consider the $k$'th measurement of the optimal cost function to be quadratic with respect to the $k$'th measurement of the optimal state:

\begin{equation}
    J_D^*[k] \equiv \mathbf{x^*}^T[k] P_D \mathbf{x^*}[k].
    \label{eqn:discrete_infinite_horizon_optimal_cost}
\end{equation}

for some constant positive-definite matrix $P_D \in \mathbb{R}^{n \times n}$. Plugging all optimal the expressions to the Hamiltonian at~\eqref{eqn:discrete_hamiltonian} and:

\begin{equation}
\begin{aligned}
      \mathcal{H}_D & \left( \mathbf{x}^*[k],  \Delta J^*_D, \mathbf{u}[k] \right)    \\ & =J^*_D\left(\mathbf{x}[k+1], \mathbf{u}[k+1]\right)
    - J^*_D\left(\mathbf{x}[k], \mathbf{u}[k], k\right) + r_D\left(\mathbf{x}^*[k], \mathbf{u}[k]\right)\\
    &= \mathbf{x^*}^T[k+1] P_D \mathbf{x^*}[k+1]
    - \mathbf{x^*}^T[k] P_D \mathbf{x^*}[k] + \mathbf{x}^{*^T}[k]Q\mathbf{x}^*[k] +  \mathbf{u}[k]^TR\mathbf{u}[k]\\
    &= \left(\tilde{A}\mathbf{x}^*[k] + \tilde{B} \mathbf{u}[k]\right)^T P_D \left(\tilde{A}\mathbf{x}^*[k] + \tilde{B} \mathbf{u}[k]\right)
     + \mathbf{x}^{*^T}[k]\left(Q-P_D\right)\mathbf{x}^*[k] +  \mathbf{u}[k]^TR\mathbf{u}[k]\\
     & = \mathbf{x}^{*^T}[k] \left( \tilde{A}^T P_D \tilde{A} + Q - P_D\right)\mathbf{x}^{*}[k] + \mathbf{x}^{*^T}[k]\tilde{A}^TP_D\tilde{B}\mathbf{u}[k] +  \mathbf{u}^T[k]\tilde{B}^TP_D\tilde{A}\mathbf{x}^*[k]\\
     & + \mathbf{u}[k]^T\left( \tilde{B}^TP_D\tilde{B} + R\right)\mathbf{u}[k].
     \label{eqn:discrete_infinite_Hamiltonian}
\end{aligned}
\end{equation}

Applying DPMP to Equation~\eqref{eqn:discrete_infinite_Hamiltonian} we have:

\begin{equation}
\begin{aligned}
      \frac{\partial}{\partial \mathbf{u}[k]} \mathcal{H}_D & \Big( \mathbf{x}^*[k],  \Delta J^*_D, \mathbf{u}[k] \Big)\Big|_{\mathbf{u}[k] \equiv \mathbf{u^*}[k]}\\ 
     & = \Bigg[ \left(\mathbf{x}^{*^T}[k]\tilde{A}^TP_D\tilde{B}\right)^T +  \tilde{B}^TP_D\tilde{A}\mathbf{x}^*[k]
     \\
     & + \Big[ \tilde{B}^TP_D\tilde{B} + R + \big( \tilde{B}^TP_D\tilde{B}  + R\big)^T\Big]\mathbf{u}[k]\Bigg]\Bigg|_{\mathbf{u}[k] \equiv \mathbf{u^*}[k]}\\
     & = \left[2\tilde{B}^TP_D\tilde{A}\mathbf{x}^*[k] + 2 \left(\tilde{B}^TP_D\tilde{B} + R \right)\mathbf{u}[k]\right]\Big|_{\mathbf{u}[k] \equiv \mathbf{u^*}[k]} = 0.
\end{aligned}
\end{equation}

Which yields the following discrete optimal input:

\begin{equation}
\begin{aligned}
      \mathbf{u^*}[k] = - \left(\tilde{B}^TP_D\tilde{B} + R \right)^{-1}\tilde{B}^TP_D\tilde{A}\mathbf{x}^*[k].
\end{aligned}
\end{equation}

Once again denoting the optimal discrete controller as:

\begin{equation}
    K_D^* \coloneqq \left(\tilde{B}^TP_D\tilde{B} + R \right)^{-1}\tilde{B}^TP_D\tilde{A},
\end{equation}

we have the the optimal control law in the discrete case is also linear. 

\begin{definition}
Plugging the optimal input to the expression for the Hamiltonian at~\eqref{eqn:discrete_infinite_Hamiltonian} and then to the discrete Bellman equation at~\eqref{eqn:discrete_bellman}, we have the \textbf{discrete HJB (DHJB)} Equation:

\begin{equation}
\begin{aligned}
     &\mathbf{x}^{*^T}[k] \Bigg[ \tilde{A}^T P_D \tilde{A} - P_D - \tilde{A}^T P_D \tilde{B}\left(\tilde{B}^TP_D\tilde{B} + R \right)^{-1}\tilde{B}^TP_D\tilde{A}
      + Q \Bigg]\mathbf{x}^{*}[k]= 0.
     \label{eqn:discrete_HJB}
\end{aligned}
\end{equation}

Which corresponds to the \textbf{Discrete time Algebraic Riccati Equation} (DARE):

\begin{equation}
\begin{aligned}
     P_D =  \tilde{A}^T P_D \tilde{A}  - \tilde{A}^T P_D \tilde{B}\left(\tilde{B}^TP_D\tilde{B} + R \right)^{-1}\tilde{B}^TP_D\tilde{A} + Q
     \label{eqn:DARE}
\end{aligned}
\end{equation}

from which $P_D$ can be obtained.
\end{definition}
\end{proof}

\subsection{Discrete Finite Horizon LTI LQR}

In the finite horizon case we consider the following form\footnote{Full derivation of the discretization procedure of the finite horizon cost function is detailed at Appendix\ref{app:performance_discretization}.} for the discrete cost function:

\begin{equation}
\begin{aligned}
        J_D\Big(\mathbf{x}[k], \mathbf{u}[k], k\Big) \equiv \delta
      \mathlarger{\sum}_{t=k}^{{L-1}} \left[\mathbf{x}[t]^TQ\mathbf{x}[t] +  \mathbf{u}[t]^TR\mathbf{u}[t]\right] + \mathbf{x}[L]^TQ\mathbf{x}[L].
\end{aligned}
\end{equation}
\begin{proposition}
Given a discrete LTI system $\mathcal{S}_D$, let us consider the optimal control problem given in~\eqref{eqn:discrete_optimal_control_problem_formulation} along with the measurements of its induced optimal state trajectory $\left(\mathbf{x^*}[k]\right)_{k=0}^{L-1}$. If $L < \infty$ then:
\begin{enumerate}
    \item The measurements of the optimal solution $\left(\mathbf{u^*}[k]\right)_{k=0}^{L-1}$ to the problem are actually that of a closed-loop feedback controller, which is linear with regards to optimal state, with a time-varying coefficient, i.e. it is of the form:
\begin{equation}
    \mathbf{u^*}[k] = - K_D^*[k]\mathbf{x^*}[k],
\end{equation}
with a time-varying matrix $K_D^* \colon \bigcup_{i=0}^{L-1}\{i\} \rightarrow \mathbb{R}^{n \times m}$, for all $k \in \bigcup_{i=0}^{L-1}\{i\}$.
\item The measurements of the optimal controller $\left(K_D^*[k]\right)_{k=0}^{L-1}$ are of the following form:
\begin{equation}
    K_D^*[k] \coloneqq R^{-1}B^T P_D[k]
\end{equation}
for all $k \in \bigcup_{i=0}^{L-1}\{i\}$, when $P_D = P_D^T \colon \bigcup_{i=0}^{L-1}\{i\} \rightarrow \mathbb{R}^{n \times n}$ is a solution to the Discrete Dynamic Riccati Equation (DDRE):
\begin{equation}
    P_D[k] = \tilde{A}^T A_{cl}[k+1]\tilde{A} + \delta Q,
\end{equation}
where:
\begin{equation}
    A_{cl}[k] = P_D[k] \left[I_{n} -  \tilde{B}\left(\tilde{B}^TP_D[k]\tilde{B} + \delta R \right)^{-1}\tilde{B}^TP_D[k]\right]
\end{equation}
\end{enumerate}
\end{proposition}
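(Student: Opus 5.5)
The plan mirrors the proof of the infinite-horizon discrete case, now with a time-indexed kernel. By Theorem~\ref{the:discrete_lqr_quad_cost}, I will postulate $J_D^*[k] = \mathbf{x^*}^T[k] P_D[k] \mathbf{x^*}[k]$ with $P_D[k] = P_D^T[k]$. The terminal term of the finite-horizon cost fixes the boundary condition $P_D[L] = Q$.

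The key step is to write the discrete Hamiltonian~\eqref{eqn:discrete_hamiltonian} at index $k$. Here the "next" cost uses $P_D[k+1]$, not $P_D[k]$, and the running cost carries the factor $\delta$. Substituting $\mathbf{x}[k+1] = \tilde{A}\mathbf{x}^*[k] + \tilde{B}\mathbf{u}[k]$ gives
\begin{equation*}
\mathcal{H}_D = \mathbf{x}^{*^T}[k]\left(\tilde{A}^T P_D[k+1]\tilde{A} + \delta Q - P_D[k]\right)\mathbf{x}^*[k] + 2\mathbf{u}^T[k]\tilde{B}^T P_D[k+1]\tilde{A}\mathbf{x}^*[k] + \mathbf{u}^T[k]\left(\tilde{B}^T P_D[k+1]\tilde{B} + \delta R\right)\mathbf{u}[k].
\end{equation*}
The cross terms are merged as in~\eqref{eqn:discrete_infinite_Hamiltonian}, using that a scalar equals its transpose.

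Next I apply DPMP via the stationarity condition~\eqref{eqn:discrete_stationarity}. The matrix $\tilde{B}^T P_D[k+1]\tilde{B} + \delta R$ is positive definite, since $R>0$, $\delta>0$ and $P_D[k+1]\geq 0$. The latter follows from $P_D[L]=Q\geq 0$ together with the nonnegativity of $J_D^*$, which propagates backwards. Hence this matrix is invertible and
\begin{equation*}
\mathbf{u^*}[k] = -\left(\tilde{B}^T P_D[k+1]\tilde{B} + \delta R\right)^{-1}\tilde{B}^T P_D[k+1]\tilde{A}\,\mathbf{x^*}[k].
\end{equation*}
This is a linear feedback with a $k$-dependent gain $K_D^*[k]$, which proves item 1.

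For item 2, I substitute $\mathbf{u^*}[k]$ back into $\mathcal{H}_D$ and impose the discrete Bellman equation~\eqref{eqn:discrete_bellman}. The two $\mathbf{u}$-dependent terms collapse to $-\mathbf{x}^{*^T}\tilde{A}^T P_D[k+1]\tilde{B}(\tilde{B}^T P_D[k+1]\tilde{B}+\delta R)^{-1}\tilde{B}^T P_D[k+1]\tilde{A}\mathbf{x}^*$. Since the resulting identity must hold for every optimal state (the same argument as in passing from~\eqref{eqn:pre_riccati} to~\eqref{eqn:riccati}), I equate the symmetric kernel to zero. This gives
\begin{equation*}
P_D[k] = \tilde{A}^T\left(P_D[k+1] - P_D[k+1]\tilde{B}\left(\tilde{B}^T P_D[k+1]\tilde{B}+\delta R\right)^{-1}\tilde{B}^T P_D[k+1]\right)\tilde{A} + \delta Q.
\end{equation*}
Factoring $P_D[k+1]$ on the left of the bracket gives exactly $\tilde{A}^T A_{cl}[k+1]\tilde{A} + \delta Q$ with $A_{cl}$ as defined in the statement. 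The recursion is solved backwards from $P_D[L]=Q$.

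The main obstacle is bookkeeping rather than analysis. One must use $P_D[k+1]$ in the next-step cost, since with a time-varying kernel the infinite-horizon shortcut of a single constant $P_D$ fails. One must also track the factor $\delta$ consistently in $R$ and $Q$ but not in the terminal term.

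The gain formula also needs a caveat. The gain that actually emerges is $(\tilde{B}^TP_D[k+1]\tilde{B}+\delta R)^{-1}\tilde{B}^TP_D[k+1]\tilde{A}$, not $R^{-1}B^TP_D[k]$. The latter form can only be obtained by a continuous-time style reinterpretation, and it does not follow from the derivation above as written. I would therefore state the gain in the derived form and note that it reduces to the claimed form only under that identification.
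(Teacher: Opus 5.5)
Your proposal is correct and follows the paper's proof essentially step for step: the quadratic ansatz with a time-varying kernel, the discrete Hamiltonian built with $P_D[k+1]$ and the $\delta$-weighted running cost, DPMP stationarity, substitution back into the discrete Bellman equation, and factoring out $P_D[k+1]$ to obtain the DDRE, solved backwards from $P_D[L]=Q$. Your caveat about the gain is well founded: the paper's own derivation also yields $K_D^*[k]=\bigl(\tilde{B}^TP_D\tilde{B}+\delta R\bigr)^{-1}\tilde{B}^TP_D\tilde{A}$ rather than $R^{-1}B^TP_D[k]$, though it writes $P_D[k]$ there, whereas your $P_D[k+1]$ is the index consistent with the Hamiltonian.
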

\begin{proof}

Now we also propose the optimal cost to be quadratic with respect to the state due to Theorem~\ref{the:discrete_lqr_quad_cost}:

\begin{equation}
    J_D^*[k] \equiv \mathbf{x^*}^T[k] P_D[k] \mathbf{x^*}[k].
    \label{eqn:discrete_finite_horizon_optimal_cost}
\end{equation}

but this time with the positive-definite matrix $P_D: \mathbb{N} \rightarrow \mathbb{R}^{n \times n}$ being dynamic with respect to the measurement index, and supposing the cost of the $k$'th measurement is calculated using the value of the function $P_D$ at the next ($k+1$)'th measurement, i.e. $P_D[k+1]$. Thus we again expect to solve backwards in measurements to find the values for $P_D[k]$ and $J_D[k]$. Plugging~\eqref{eqn:discrete_finite_horizon_optimal_cost} to the expression for the Hamiltonian at~\eqref{eqn:discrete_hamiltonian} we have:

\begin{equation}
\begin{aligned}
      \mathcal{H}_D & \left( \mathbf{x}^*[k],  \Delta J^*_D[k], \mathbf{u}[k] \right)    \\ & =J^*_D\left(\mathbf{x}[k+1], \mathbf{u}[k+1]\right)
    - J^*_D\left(\mathbf{x}[k], \mathbf{u}[k], k\right) + r_D\left(\mathbf{x}^*[k], \mathbf{u}[k]\right)\\
    &= \mathbf{x^*}^T[k+1] P_D[k+1] \mathbf{x^*}[k+1]
    - \mathbf{x^*}^T[k] P_D[k] \mathbf{x^*}[k] + \delta\mathbf{x}^{*^T}[k]Q\mathbf{x}^*[k] \\
    &=+  \delta\mathbf{u}[k]^TR\mathbf{u}[k]\\
    &= \left(\tilde{A}\mathbf{x}^*[k] + \tilde{B} \mathbf{u}[k]\right)^T P_D[k+1] \left(\tilde{A}\mathbf{x}^*[k] + \tilde{B} \mathbf{u}[k]\right) + \mathbf{x}^{*^T}[k]\left(\delta Q-P_D[k]\right)\mathbf{x}^*[k]
     \\
     & +  \mathbf{u}[k]^T\delta R\mathbf{u}[k]\\
     & = \mathbf{x}^{*^T}[k] \left( \tilde{A}^T P_D[k+1] \tilde{A} + \delta Q - P_D[k]\right)\mathbf{x}^{*}[k] + \mathbf{x}^{*^T}[k]\tilde{A}^TP_D[k+1]\tilde{B}\mathbf{u}[k] \\
     & +  \mathbf{u}^T[k]\tilde{B}^TP_D[k+1]\tilde{A}\mathbf{x}^*[k]+ \mathbf{u}[k]^T\left( \tilde{B}^TP_D[k+1]\tilde{B} + \delta R\right)\mathbf{u}[k].
     \label{eqn:discrete_finite_Hamiltonian}
\end{aligned}
\end{equation}

Applying DPMP to Equation~\eqref{eqn:discrete_finite_Hamiltonian} we have:

\begin{equation}
\begin{aligned}
      \frac{\partial}{\partial \mathbf{u}[k]} \mathcal{H}_D & \left( \mathbf{x^*}[k],  \Delta J^*_D[k], \mathbf{u}[k] \right)\Big|_{\mathbf{u}[k] \equiv \mathbf{u^*}[k]}\\ 
     & =  \Bigg[\left(\mathbf{x^*}^{T}[k]\tilde{A}^TP_D[k]\tilde{B}\right)^T +  \tilde{B}^TP_D[k]\tilde{A}\mathbf{x^*}[k]
     \\
     & + \left[ \tilde{B}^TP_D[k]\tilde{B} + \delta R + \left( \tilde{B}^TP_D[k]\tilde{B}  + \delta R\right)^T\right]\mathbf{u}[k]\Bigg]\Bigg|_{\mathbf{u}[k] \equiv \mathbf{u^*}[k]}\\
     & = \left[2\tilde{B}^TP_D[k]\tilde{A}\mathbf{x^*}[k] + 2 \left(\tilde{B}^TP_D[k]\tilde{B} + \delta R \right)\mathbf{u}[k]\right]\Big|_{\mathbf{u}[k] \equiv \mathbf{u^*}[k]} = 0.
\end{aligned}
\end{equation}

Which yields the following discrete optimal input:

\begin{equation}
\begin{aligned}
      \mathbf{u^*}[k] = - \left(\tilde{B}^TP_D[k]\tilde{B} + \delta R \right)^{-1}\tilde{B}^TP_D[k]\tilde{A}\mathbf{x^*}[k].
\end{aligned}
\end{equation}

Once again denoting the optimal discrete controller as:

\begin{equation}
    K_D^*[k] \coloneqq \left(\tilde{B}^TP_D[k]\tilde{B} + \delta R \right)^{-1}\tilde{B}^TP_D[k]\tilde{A},
\end{equation}

we have the the optimal control law in the discrete case is also linear. Plugging the optimal input to the expression for the Hamiltonian at~\eqref{eqn:discrete_finite_Hamiltonian} and then to the discrete Bellman equation at~\eqref{eqn:discrete_bellman}, we have the DHJB equation  for the finite horizon case:

\begin{equation}
\begin{aligned}
     \mathbf{x}^{*^T}[k] \Big[& \tilde{A}^T P_D[k+1] \tilde{A} - P_D[k] \\
     &- \tilde{A}^T P_D[k+1] \tilde{B}\left(\tilde{B}^TP_D[k+1]\tilde{B} + \delta R \right)^{-1}\tilde{B}^TP_D[k+1]\tilde{A}\\
      & + \delta Q \Big]\mathbf{x}^{*}[k]= 0.
     \label{eqn:finite_discrete_HJB}
\end{aligned}
\end{equation}

Which corresponds to the \textbf{Discrete-time Dynamic Riccati Equation} (DDRE):

\begin{equation}
\begin{aligned}
     P_D[k] = & \tilde{A}^T P_D[k+1] \left[I_{n} -  \tilde{B}\left(\tilde{B}^TP_D[k+1]\tilde{B} + \delta R \right)^{-1}\tilde{B}^TP_D[k+1]\right]\tilde{A} + \delta Q,
     \label{eqn:DDRE}
\end{aligned}
\end{equation}

or to its more concise form:

\begin{equation}
\begin{aligned}
     P_D[k] = & \tilde{A}^T A_{cl}[k+1]\tilde{A} + \delta Q,
\end{aligned}
\end{equation}

while denoting:
    
\begin{equation}
    A_{cl}[k+1] \coloneqq P_D[k+1] \left[I_{n} -  \tilde{B}\left(\tilde{B}^TP_D[k+1]\tilde{B} + \delta R \right)^{-1}\tilde{B}^TP_D[k+1]\right],
\end{equation}

from which $P_D[k]$ can be obtained by solving iteratively backwards in time with the terminal condition\footnote{While setting this terminal condition might seem odd, as the matrix $Q$ is the weighting matrix for the state deviation, and has no immediate connection to take the place for $P$ at the terminal time, We will later see that setting the terminal condition to by any semi-positive $n \ times n$ matrix will provide desirable results, and for this reason this methodology is typically used.}: $P[L] \equiv Q$.
\end{proof}

\chapter{Related Work}
\label{chap:related_work}
\epigraph{\textit{The public is not to see where power lies, how it shapes policy, and for what ends. Rather, people are to hate and fear one another.}}{\textbf{Noam Chomsky}}

Many have dealt with `\textit{divide and conquer}' methods in different science and engineering disciplines. In Computer Science, for example, it is common, when focusing on the design of software components, to consider implementing them by means of the result of non-cooperative, competitive and even adversarial behaviours of their corresponding  components~\cite{clark2016software}. Divide and Conquer actually is an accepted name for a whole class of algorithms that enact the aforementioned methodology to solve problems.

\section{Divide and Conquer Algorithms}
 A D\&C algorithm is termed so as it breaks a problem down into smaller sub-problems again and again, until they become simple enough to be solved separately, then the solutions are combined to account for the original problem. 
 Directly quoted from the introduction to Chapter 5 of~\cite{alma9926618719804361}:
 
 \begin{quote}
     One of the most powerful techniques for solving problems is to break them down into smaller, more easily solved pieces. Smaller problems are less overwhelming, and they permit us to focus on details that are lost when we are studying the whole thing. A recursive algorithm starts to become apparent whenever we can break the problem into smaller instances of the same type of problem. Two important algorithm design paradigms are based on breaking problems down into smaller problems:
     
     \begin{enumerate}
         \item Dynamic programming, which typically removes one element from the problem, solves the smaller problem, and then adds back the element to the solution of this smaller problem in the proper way.
         \item Divide and conquer instead splits the problem into (say) halves, solves each half, then stitches the pieces back together to form a full solution.
     \end{enumerate}
     
       Thus, to use divide and conquer as an algorithm design technique, we must divide the problem into two smaller subproblems, solve each of them recursively, and then meld the two partial solutions into one solution to the full problem. Whenever the merging takes less time than solving the two subproblems, we get an efficient algorithm. 
       
       Mergesort~\cite{mergesort}, is the classic example of a divide-and-conquer algorithm. It takes only linear time to merge two sorted lists of $n/2$ elements, each of which was obtained in $O(nlogn)$ time. Divide and conquer is a design technique with many important algorithms to its credit, including mergesort, the Fast Fourier Transform (FFT)~\cite{FFT}, and Strassen’s matrix multiplication algorithm~\cite{STRASSEN1969}.
 \end{quote}

 This paragraph demonstrates one motivation for D\&C algorithms in the form of a \textbf{time speedup}~\cite{HennessyPatterson12}, which is of course reasonable and very much desirable. What we will claim, is that our method is not necessarily \textbf{faster} than performing direct LQR or MPC, or other methods, but that it provides a much \textbf{simpler way to distribute weights across a large amount of objectives}, especially when the number of state variables and objectives grows, and that can result in better transient and steady-state performance.



\section{Single-Agent Multi-Objective Methods}

The methodology described in this paper brings this kind of separation of concerns to control engineering. We are not the first to propose ways of modular construction of composite controllers. Specifically for control systems, the following methodologies were described:

\begin{itemize}
    \item Behavioral Control~\cite{mataricbehavior,brooks2robust};
    \item Null Space Behaviour (NSB)~\cite{ antonelli2010flocking, arad2014coordinated};
    \item Model Predictive Control (MPC)~\cite{ camacho2013model, lee2011model};
    \item  Multi-Objective Optimization (MOO)~\cite{MOO_dial_a_ride, MOO_energy, MOO_fuel, MOO_green_vehicle, MOO_HMSC}
\end{itemize}
 These are state-of-the-art methods to achieve goals similar to what we are aiming to solve. We will now provide a brief overview of some of these methods.
 
 \subsection{Behavioral Control}
 \label{subsect:BP}
 The \textbf{Behavioral Control} technique, first proposed by Brooks is a popular control method to handle conflicts between objectives. This includes the layered approach and the motor schema approach~\cite{arkin1989motor}. In the layered approach components that handle objectives compete on execution time where a lower priority components can only be executed after the completion of higher priority components. The motor schema approach is based on cooperation: the output is computed as a weighted sum of the tasks~\cite{Dynamic_Task_Priority_Planning}.
 
 \subsection{Null Space Behaviour (NSB)}
 One such cooperative approach, proposed by Antonelli and Chiaverini~\cite{antonelli2009experiments} is called \textbf{Null-Space-based Behavior}. NSB gives tools to design controllers when there is a clear (possibly dynamic) hierarchy of the goals. In such case, the method proposes to pursue less important goals only withing the "null space" of the more important ones, i.e., only by changing the control signal in a way that does not affect the performance of the controller that pursues the important goals. This has proven very useful in many applications, but it is very conservative. It is not hard to imagine situations where strict hierarchy is not optimal, e.g., when we can gain much by degrading the performance of the important controller a little in exchange to a considerable improvement of the performance of less important ones. Another issue with NSB is that it composes the control signals at a specific time without looking at other times.

\subsection{Model  Predictive  Control  (MPC)}
\label{subsect:MPC}
\textbf{Model Predictive Control} is an effective means of dealing with large multi-variable constrained control problems. The main idea is to choose the control action by repeatedly solving online an optimal control problem, aiming to minimize a performance criterion over a future horizon, possibly subject to constraints on the manipulated inputs and outputs. Future behavior is computed according to a model of the plant~\cite{camacho2013model}. The main weaknesses of the approach are: 
\begin{enumerate}
    \item It is hard to guarantee closed-loop stability;
    \item It is hard to handle model uncertainty;
    \item It requires heavy online computations~\cite{camacho2013model, eren2017model}.
\end{enumerate}
While these are significant limitations, many case studies have shown that the approach gives viable solutions for many practical problems, especially when the system's dynamics is relatively slow~\cite{ vazquez2014model}. Although MPC in its basic configuration allows some representation of multiple objectives, through weighting and through the representation of some objectives as constraints, in the last decade significant work has been done on expanding MPC's scope for explicit multi-objective control, under the title of Multi-Objective MPC (MOMPC)~\cite{ peitz2017multiobjective, zhang2019multi}.

\subsection{Multi-Objective Optimization (MOO)}

Many complex control tasks can be treated as Multi-Objective Optimization (MOO) problems, since the decisions required by a controller often affect various parameters such as safety, speed, fuel consumption and more~\cite{MOO_dial_a_ride, MOO_fuel, MOO_energy, MOO_green_vehicle, MOO_HMSC}. While MOO issues have been explored since the nineteenth century following Pareto's groundbreaking work~\cite{pareto}, tools are still lacking for control engineers to build controllers that take into account the need for constant balancing of different targets as needed for solving, e.g., complex road situations~\cite{control_limits, inconsistencies}.

\section{Contemporary Methods for Solving the CARE}
\label{app:solving_the_care}

 The CARE at~\eqref{eqn:riccati} has been extensively studied for its methods of solutions and their properties. This~\cite{CARE_comparison} article provides, as it is titles a \textit{Numerical Comparison of Different Solvers for Large-Scale, Continuous-Time Algebraic Riccati Equations and LQR Problems}. While these methods rely on numerical analysis, we will take on a different approach and suggest a novel theoretical method of solving the equation. We will of course employ numerical solvers to implement these methods at real-time, but the main contribution in this work is from a theoretical point of view. 
 
 Let us ten first discuss the possible solutions the CARE can produce. It is well-known that it can possess a variety of solutions. First of all, it
may have no solution at all. If it does have one, there can be both real and complex
solutions, some of them being Hermitian or symmetric. Finally, there can be even an infinite and moreover uncountable amount of solutions. It is also known that due to the underlying physical problem, only non-negative solutions are applicable solutions.~\cite{riccati_review}

\subsection{Number of Solutions to the CARE}
\label{sect:single_CARE_solution_num}
 As we laid out, the CARE is a quadratic matrix equation for which the solution is a constant matrix $P$. When deriving the equation, we considered $P$ to be symmetric with respect to the number of variables in the state vector, i.e., $P \in \mathbb{R}^{n \times n}$. This means $P$ can be written in the following general form:
 \begin{equation}
     P = \begin{bmatrix}
     p_{1,1} & p_{1,2} & \cdots & p_{1,n-1} & p_{1,n}\\
     p_{1,2} & p_{2,2} & \cdots & p_{2,n-1} & p_{2,n}\\
     \vdots & \vdots & & \vdots &\vdots\\
     p_{1,n-1} & p_{2,n-1} & \cdots & p_{n-1,n-1} & p_{n-1,n}\\
     p_{1,n} & p_{2,n} & \cdots & p_{n-1,n} & p_{n,n}
      \end{bmatrix}.
 \end{equation}
 
 where $\forall 1 \leq r < s \leq n \ ; \ p_{r,s} \in \mathbb{R}$. Let us consider the matrix equation as an equivalent set of scalar equations. 
 
 \begin{proposition}
 \label{prop:balanced_CARE}
The CARE at~\eqref{eqn:riccati} admits a balanced\footnote{
\begin{definition}
Consider a set of $e \in \mathbb{N}$ polynomial equations with $u \in \mathbb{N}$ unknowns. Then:
 \begin{itemize}
     \item If $e < u$, then we call the set \textbf{underdetermined};
     \item If $e > u$, then we call the set \textbf{overdetermined};
     \item If $e = u$, then we call the set \textbf{balanced}.
 \end{itemize}
 \end{definition}} set of equations.
 \end{proposition}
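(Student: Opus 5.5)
We will count unknowns and independent scalar equations and show the two counts coincide. The unknowns are the entries of the symmetric matrix $P$. Symmetry identifies $p_{r,s}$ with $p_{s,r}$, so the free unknowns are the diagonal entries $p_{i,i}$ ($n$ of them) together with the strictly upper-triangular entries $p_{r,s}$, $r<s$ ($\binom{n}{2}$ of them). This gives
\begin{equation*}
u = n + \frac{n(n-1)}{2} = \frac{n(n+1)}{2}.
\end{equation*}

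Next we count the scalar equations. Denote the left-hand side of the CARE~\eqref{eqn:riccati} by
\begin{equation*}
F(P) \coloneqq PA + A^TP - PBR^{-1}B^TP + Q .
\end{equation*}
We show that $F(P)$ is itself symmetric whenever $P=P^T$. Transposing termwise gives $(PA)^T = A^TP$ and $(A^TP)^T = PA$, so the linear part is symmetric. For the quadratic term, $R>0$ implies $R$ is symmetric, hence $(R^{-1})^T = R^{-1}$, and then $(PBR^{-1}B^TP)^T = PBR^{-1}B^TP$. Finally $Q\geq 0$ is symmetric by definition. So $F(P)=F(P)^T$.

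Consequently the $n^2$ scalar equations $[F(P)]_{r,s}=0$ are not all distinct: the equation at position $(r,s)$ is identical, as a polynomial identity in the unknowns, to the one at $(s,r)$. The genuinely distinct equations are those with $r\le s$, namely
\begin{equation*}
e = \frac{n(n+1)}{2}.
\end{equation*}
Each is a polynomial equation (of degree two) in the $p_{r,s}$, so the definition of a balanced set applies. Since $e=u$, the set is balanced.

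The main subtle point is the notion of equation count. A given instance might still have algebraically dependent equations among the $r\le s$ ones, for example a zero row in $F$ for degenerate data. The statement should therefore be read as counting formally distinct equations, and we will make this explicit. No generic independence argument is needed for the balancedness claim as defined.
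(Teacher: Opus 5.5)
Your proposal is correct and follows the same route as the paper: count the $n(n+1)/2$ free entries of the symmetric $P$, note that the left-hand side of the CARE is symmetric, and discard the lower-triangle equations to get $n(n+1)/2$ equations in $n(n+1)/2$ unknowns. Two things you do better than the paper: you verify the symmetry of $PA + A^TP - PBR^{-1}B^TP + Q$ explicitly, where the paper only asserts it, and you correctly observe that the lower-triangle equations are identical to the upper-triangle ones, not merely linearly dependent on them.
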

  \begin{proof}
 Notice the total amount of unique scalar elements in $P$ will be less than $n^2$, and more precisely, exactly $T_n$, the triangular number\footnote{
 \begin{definition}
 Let $k \in \mathbb{N}$. The sum of all numbers from 1 to $k$, commonly known as the \textbf{triangular number of order $k$}~\cite{Hayes2006GausssDO}, is denoted $T_k$ and given by:
 \begin{equation}
     T_k = \sum_{i=1}^k i = \frac{k(k+1)}{2}
     \label{eqn:T_n}
 \end{equation}
 \end{definition}
 } of order $n$. With that, we have that the matrix $P$ has all of its unique scalar elements in its upper triangle. Since the LHS of the CARE at~\eqref{eqn:riccati} is also symmetric, this fact will cause the equations of the lower triangle elements to be linearly-dependant upon the equations of the upper triangle elements. Thus we can discard either set of equations and end up with a set of $T_n$  unique quadratic equations for $T_n$ unique unknowns. 
 \end{proof}
 
 \begin{corollary}
 \label{cor:CARE_exp}
 Let us consider the CARE at~\eqref{eqn:riccati}, and assume it is well-behaved\footnote{\begin{definition}
 Consider a set of equations. If it has a solution, then we call the set \textbf{consistent}, and \textbf{inconsistent} otherwise.
 \end{definition}
 \begin{definition}
 Consider a consistent set of equations. If it admits a finite number of solutions, then we call the set \textbf{zero-dimensional}. Otherwise, in the case it admits a infinite number of solutions, then we call the set \textbf{positive-dimensional}.
 \end{definition}
 \begin{definition}
 Consider a consistent and balanced set of equations. If the set is also zero-dimensional then we call the set \textbf{well-behaved}.
 \end{definition}}. Then the maximal number of solutions to the CARE is exponential with respect to $n$.
 \end{corollary}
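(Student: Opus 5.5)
By Proposition~\ref{prop:balanced_CARE}, the CARE at~\eqref{eqn:riccati} reduces to a balanced system of $T_n = \frac{n(n+1)}{2}$ scalar polynomial equations in the $T_n$ unknowns $\{p_{r,s}\}_{1 \leq r \leq s \leq n}$. I would first show that each of these equations has total degree at most two. Every entry of $PA + A^TP + Q$ is affine in the $p_{r,s}$, and every entry of $PBR^{-1}B^TP$ is a quadratic form in them. Hence the $(r,s)$ equation has the form $f_{r,s}(p) = 0$ with $\deg f_{r,s} \leq 2$.

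The main tool will then be B\'ezout's theorem, in its affine form for a zero-dimensional system. If a system of $e$ polynomial equations in $e$ unknowns has finitely many solutions over $\mathbb{C}$, then that number is at most the product of the degrees. Under the well-behaved hypothesis the system is consistent, balanced and zero-dimensional, so this theorem applies directly. It gives at most
\begin{equation*}
    \prod_{1 \leq r \leq s \leq n} \deg f_{r,s} \;\leq\; 2^{T_n} = 2^{\frac{n(n+1)}{2}}
\end{equation*}
complex solutions, and therefore at most that many real symmetric ones. This bound is exponential in $n$. More precisely it is $2^{O(n^2)}$, which is in particular $2^{\mathrm{poly}(n)}$, and I would state the corollary in that sense.

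The step I expect to be the main obstacle is the meaning of ``maximal number'' as an exponential lower bound, that is, showing the count is not merely bounded by an exponential but can actually reach one. For this I would exhibit a family of instances. Take $A = \diag(a_1,\dots,a_n)$, $B = I_n$, $R = I_n$, and $Q = \diag(q_1,\dots,q_n)$ with $q_i > 0$. Any diagonal $P = \diag(p_1,\dots,p_n)$ then solves the CARE exactly when each scalar equation $2a_ip_i - p_i^2 + q_i = 0$ holds. Each of these has two distinct real roots $p_i = a_i \pm \sqrt{a_i^2+q_i}$, so the CARE has at least $2^n$ distinct real symmetric solutions. In this family I would also check well-behavedness by verifying that there are no non-diagonal solutions.

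Generically the CARE solutions correspond to choices of $n$-dimensional invariant subspaces of the Hamiltonian matrix $\begin{bmatrix} A & -BR^{-1}B^T \\ -Q & -A^T \end{bmatrix}$. With $2n$ distinct eigenvalues this gives at most $\binom{2n}{n}$ solutions, which is also exponential. So the maximal number of solutions is exponential in $n$, bounded above by B\'ezout and attained up to base by the diagonal construction.
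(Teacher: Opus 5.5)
Your upper bound is the paper's proof: $T_n$ scalar equations of degree at most two, then B\'ezout, giving at most $2^{T_n}$ solutions. The paper writes $O(2^{T_n})=O(2^{n^2})$ and stops there. Like the paper, you tacitly read ``zero-dimensional'' as finiteness over $\mathbb{C}$, which the affine B\'ezout bound needs.

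The lower bound is your own addition, and as written it has a gap. The diagonal family is not well-behaved in general, so the planned check that there are no non-diagonal solutions fails. With $a_i=0$ and $q_i=1$ the CARE becomes $P^2=I$. Already for $n=2$, every $\begin{bmatrix}\cos\phi & \sin\phi\\ \sin\phi & -\cos\phi\end{bmatrix}$ solves it, which is a continuum.

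You must require $a_i^2+q_i$ to be pairwise distinct (e.g.\ $a_i=0$, $q_i=i$). Then $H$ splits into the blocks $\begin{bmatrix} a_i & -1\\ -q_i & -a_i\end{bmatrix}$ acting on $\operatorname{span}\{e_i,e_{n+i}\}$. It has $2n$ distinct eigenvalues $\pm\lambda_i$, where $\lambda_i=\sqrt{a_i^2+q_i}$, with eigenvectors $e_i+(a_i-\mu)e_{n+i}$ for $\mu=\pm\lambda_i$. Every $n$-dimensional $H$-invariant subspace is spanned by $n$ of these eigenvectors. Such a subspace is a graph $\operatorname{Im}\begin{bmatrix} I_n\\ X\end{bmatrix}$ exactly when its top components span $\mathbb{C}^n$, i.e.\ when it takes one eigenvector per block. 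So there are exactly $2^n$ complex solutions, namely your diagonal ones, and the instance is well-behaved.

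Your closing sentence also overstates what you have. The bounds $2^n$ and $2^{n(n+1)/2}$ differ in the exponent, not the base. So B\'ezout plus the diagonal family only pins the maximum down to $2^{\mathrm{poly}(n)}$. That is also all the paper's $O(2^{n^2})$ shows, despite its wording.

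What closes the gap is your invariant-subspace remark, at least when $H$ has simple spectrum. Distinct solutions have distinct invariant graphs, and there are only $\binom{2n}{n}\le 4^n$ invariant $n$-dimensional subspaces.

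For the symmetric solutions the paper counts, this becomes sharp. Take $W_n=\begin{bmatrix}0 & -I_n\\ I_n & 0\end{bmatrix}$.
\begin{itemize}
\item The matrix $W_nH$ is symmetric, so eigenvectors with $\lambda+\mu\neq0$ satisfy $v^TW_nw=0$.
\item By nondegeneracy, $v_\lambda^TW_nv_{-\lambda}\neq0$.
\item The graph of a symmetric $X$ is $W_n$-isotropic, so it contains exactly one eigenvector from each pair $\pm\lambda_j$.
\end{itemize}
This gives at most $2^n$ symmetric solutions, attained by your corrected diagonal family. This route is strictly sharper than the paper's B\'ezout count, which is far from tight, but it needs a spectral hypothesis on $H$.
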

 \begin{proof}
 The CARE at~\eqref{eqn:riccati} is well-behaved and balanced with $(T_i)_{i=1}^{T_n}$ polynomial equations. Moreover, the CARE is a set of second degree\footnote{\begin{definition}
 Consider a polynomial equation
 \begin{equation}
     T(x)\equiv \sum_{i=1}^k a_i x^i = 0,
 \end{equation}
 with $x$ as its unknown, i.e. a function of the form $T \colon \mathbb{R} \rightarrow \mathbb{R}$. The \textbf{degree} $d(T)$ of the equation is the maximal power of $x$ in the equation.
 \end{definition}} polynomials, i.e., it satisfies $d(T_i) \equiv 2$ for all $1 \leq i \leq T_n$, since each polynomial is quadratic. 
 \begin{lemma}
 [\textbf{Bézout's Theorem}~\cite{bezout}]
 \label{the:bez}
 Consider a well-behaved series of $e \in \mathbb{N}$ polynomial equations $(T_i)_{i=1}^{e}$, with corresponding degrees $(d(T_i))_{i=1}^{e}$. Then the number of solutions $\mathcal{N}$ to the set satisfies:
 \begin{equation}
     \mathcal{N} = O\left(\prod_{i=1}^{e}d(T_i)\right)
 \end{equation}
 \end{lemma}
 Thus, by Lemma~\ref{the:bez} we have that the its number of possible solutions to the CARE satisfies:
 \begin{equation}
         \mathcal{N} = O\left(\prod_{i=1}^{T_n}d(T_i)\right) =  O\left(\prod_{i=1}^{T_n}2\right)= O(2^{T_n}) = O(2^{n^2}).
     \end{equation}
 \end{proof}

\subsection{General Solution Construction to the CARE}

We will now describe the general method for constructing symmetric solutions to the CARE, as demonstrated in section 13.1 of~\cite{robust_optimal}.

\begin{definition}
Let us consider the CARE at~\eqref{eqn:riccati}. Let the \textbf{CARE Hamiltonian} $H \in \mathbb{R}^{2n \times 2n}$ be defined as:
\begin{equation}
    H \coloneqq \begin{bmatrix}
    A & -BR^{-1}B^T\\
    -Q & -A^T
    \end{bmatrix}
\end{equation}
\end{definition}
\begin{proposition}
Given the CARE at~\eqref{eqn:riccati} and its CARE Hamiltonian $H$, we have that $H$ is Hamiltonian~\ref{subsect:Hamiltonian_Matrices}.
\end{proposition}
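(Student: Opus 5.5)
We show that $H$ satisfies the defining identity of a Hamiltonian matrix. The paper refers to Subsection~\ref{subsect:Hamiltonian_Matrices} for the definition; we take it to be the standard one. With
\begin{equation*}
J \coloneqq \begin{bmatrix} 0 & I_n\\ -I_n & 0\end{bmatrix},
\end{equation*}
a matrix $H \in \mathbb{R}^{2n\times 2n}$ is Hamiltonian iff $JH$ is symmetric, i.e. $(JH)^T = JH$, or equivalently $H^TJ + JH = 0$.

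An equivalent and convenient characterization works blockwise. A matrix $\begin{bmatrix} E & F\\ G & -E^T\end{bmatrix}$ is Hamiltonian iff $F = F^T$ and $G = G^T$. If the referenced definition is stated in this block form, the argument simply checks these conditions.

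The plan has three steps.

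First, compute $JH$ directly by block multiplication:
\begin{equation*}
JH = \begin{bmatrix} 0 & I_n\\ -I_n & 0\end{bmatrix}\begin{bmatrix} A & -BR^{-1}B^T\\ -Q & -A^T\end{bmatrix} = \begin{bmatrix} -Q & -A^T\\ -A & BR^{-1}B^T\end{bmatrix}.
\end{equation*}

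Second, transpose this matrix blockwise. The off-diagonal blocks $-A^T$ and $-A$ are swapped into each other's positions under transposition, so they match. The diagonal blocks are preserved provided $Q^T = Q$ and $(BR^{-1}B^T)^T = BR^{-1}B^T$.

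Third, verify those two symmetries from the objective definition. There $Q \geq 0$ and $R > 0$, and such matrices are taken to be symmetric in the PSD/PD convention. From $R = R^T$ we get $(R^{-1})^T = (R^T)^{-1} = R^{-1}$. Hence
\begin{equation*}
(BR^{-1}B^T)^T = B (R^{-1})^T B^T = BR^{-1}B^T.
\end{equation*}
Therefore $(JH)^T = JH$, and $H$ is Hamiltonian.

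The calculation itself is routine. The one point that needs explicit justification is the symmetry of $Q$ and $R$. The paper only writes $Q \geq 0$ and $R > 0$, so we state explicitly that these definiteness notions are taken for symmetric matrices, as is standard. Alternatively, if non-symmetric weights were allowed, one could replace $Q$ and $R$ by their symmetric parts without changing the quadratic cost in~\eqref{eqn:LQR_cost}. The derivation of the CARE already implicitly used this, e.g. in writing $R + R^T = 2R$.
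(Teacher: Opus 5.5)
Your proof is correct and is essentially the paper's argument. The paper directly computes $W_nH^TW_n$ with $W_n=\begin{bmatrix}0 & -I_n\\ I_n & 0\end{bmatrix}=-J$ (your $J$) and recovers $H$, which is equivalent to your check $(JH)^T=JH$ because $J^{-1}=J^T=-J$. Your explicit appeal to the symmetry of $Q$ and $BR^{-1}B^T$ makes precise a step the paper uses silently when writing out $H^T$; that symmetry is guaranteed anyway by the paper's appendix, which defines (semi-)definiteness only for symmetric matrices.
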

\begin{proof}
Plugging $H$ to the LHS of Equation~\eqref{eqn:hamiltonian_matrix_condition} to obtain:

\begin{equation}
\begin{aligned}
    W H^T W = & \begin{bmatrix}
    O_{n \times n} & -I_{n}\\
    I_{n} &O_{n \times n}
    \end{bmatrix} \begin{bmatrix}
    A^T & -Q\\
    -BR^{-1}B^T & -A
    \end{bmatrix} \begin{bmatrix}
    O_{n \times n} & -I_{n}\\
    I_{n} &O_{n \times n}
    \end{bmatrix} \\
    = & \begin{bmatrix}
    BR^{-1}B^T & A\\
    A^T & -Q
    \end{bmatrix} \begin{bmatrix}
    O_{n \times n} & -I_{n}\\
    I_{n} &O_{n \times n}
    \end{bmatrix} \\
    = & \begin{bmatrix}
    A & BR^{-1}B^T\\
    -Q & -A^T
    \end{bmatrix} = H
    \end{aligned}
\end{equation}
\end{proof}

\begin{proposition}[Theorem 13.2 from page 321 of~\cite{robust_optimal}]
\label{prop:care_con}
Let us consider the CARE at~\eqref{eqn:riccati} and its CARE Hamiltonian $H$. Then the CARE is consistent, i.e. there exists a solution $X \in \mathbb{R}^{n \times n}$ if and only if the following conditions are satisfied:
\begin{enumerate}
    \item There exists an $H$-invariant subspace $\mathcal{W} \subseteq \mathbb{R}^{2n}$;
    \item There exist two real square matrices $X_1,X_2 \in \mathbb{R}^{n \times n}$ such that:
\begin{equation}
    \mathcal{W} = \operatorname{Im} \left(\begin{bmatrix}
    X_1 \\
    X_2
    \end{bmatrix}\right);
    \label{eqn:W_image}
    \end{equation}
    \item $X_1$ in nonsingular and thus $X_1^{-1}$ exists.
    \item The solution $X$ is of the form $X \coloneqq X_2X_1^{-1}$
\end{enumerate}
\end{proposition}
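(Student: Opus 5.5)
The statement is the standard invariant-subspace characterization of CARE solutions. It should be read with $\mathcal{W}$ an $n$-dimensional $H$-invariant subspace, so that $\begin{bmatrix} X_1 \\ X_2\end{bmatrix}$ has full column rank $n$. I would prove the two directions separately, and the whole argument rests on one block identity. Writing the CARE at~\eqref{eqn:riccati} as $XA + A^TX - XBR^{-1}B^TX + Q = 0$, a direct multiplication gives
\begin{equation*}
H\begin{bmatrix} I_n \\ X\end{bmatrix} = \begin{bmatrix} A - BR^{-1}B^TX \\ -Q - A^TX\end{bmatrix} = \begin{bmatrix} I_n \\ X\end{bmatrix}\left(A - BR^{-1}B^TX\right) + \begin{bmatrix} O_{n\times n} \\ -\left(XA + A^TX - XBR^{-1}B^TX + Q\right)\end{bmatrix}.
\end{equation*}
The first step is to verify this identity, which is routine block arithmetic.

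\textbf{Necessity.} Suppose $X$ solves the CARE. Then the residual block in the identity vanishes, so
\begin{equation*}
H\begin{bmatrix} I_n \\ X\end{bmatrix} = \begin{bmatrix} I_n \\ X\end{bmatrix}(A - BR^{-1}B^TX).
\end{equation*}
Set $X_1 = I_n$ and $X_2 = X$, and let $\mathcal{W} = \operatorname{Im}\begin{bmatrix} I_n \\ X\end{bmatrix}$. This subspace has dimension $n$ and is $H$-invariant. Moreover $X_1$ is nonsingular and $X = X_2X_1^{-1}$, so conditions 1–4 all hold.

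\textbf{Sufficiency.} Suppose $\mathcal{W} = \operatorname{Im}\begin{bmatrix} X_1 \\ X_2\end{bmatrix}$ is $H$-invariant. Then every column of $H\begin{bmatrix} X_1 \\ X_2\end{bmatrix}$ lies in $\mathcal{W}$, so there is a matrix $\Lambda \in \mathbb{R}^{n\times n}$ with $H\begin{bmatrix} X_1 \\ X_2\end{bmatrix} = \begin{bmatrix} X_1 \\ X_2\end{bmatrix}\Lambda$. Since $X_1$ is invertible, right-multiply by $X_1^{-1}$ and put $X = X_2X_1^{-1}$, $\tilde\Lambda = X_1\Lambda X_1^{-1}$. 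This gives $H\begin{bmatrix} I_n \\ X\end{bmatrix} = \begin{bmatrix} I_n \\ X\end{bmatrix}\tilde\Lambda$.

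Reading off the first block row gives $\tilde\Lambda = A - BR^{-1}B^TX$. Substituting this into the second block row gives $-Q - A^TX = X(A - BR^{-1}B^TX)$, which is exactly the CARE. Hence $X$ is a solution.

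\textbf{Main obstacle.} The delicate point is the existence of $\Lambda$ in the sufficiency direction. It uses only that the columns of $\begin{bmatrix} X_1 \\ X_2\end{bmatrix}$ span an invariant subspace; full column rank, guaranteed here by invertibility of $X_1$, makes $\Lambda$ unique but is not needed for existence. A second point worth remarking is that $X$ need not be symmetric. Symmetry holds when $\mathcal{W}$ is in addition Lagrangian, i.e.\ $X_1^TX_2 = X_2^TX_1$, which can be derived from the Hamiltonian property of $H$ established in the preceding proposition. The statement as posed does not require symmetry, so I would add this only as a remark.
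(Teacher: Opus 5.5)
Your proof is correct. The sufficiency direction is the paper's argument. Both of you reach $H\begin{bmatrix} I_n \\ X\end{bmatrix} = \begin{bmatrix} I_n \\ X\end{bmatrix}X_1\Lambda X_1^{-1}$. The paper then left-multiplies by $\begin{bmatrix} -X & I_n\end{bmatrix}$ to annihilate the right-hand side, while you read off the first block row and substitute it into the second. That is the same elimination.

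The genuine difference is in necessity. You take $X_1 = I_n$ and $X_2 = X$. The paper instead takes $X_1$ to be a matrix of generalized eigenvectors that brings $A_{cl} = A - BR^{-1}B^TX$ to Jordan form, sets $X_2 = XX_1$, and obtains $H\begin{bmatrix} X_1 \\ X_2\end{bmatrix} = \begin{bmatrix} X_1 \\ X_2\end{bmatrix}J_{A_{cl}}$. Since $\operatorname{Im}\begin{bmatrix} X_1 \\ XX_1\end{bmatrix} = \operatorname{Im}\begin{bmatrix} I_n \\ X\end{bmatrix}$ for any invertible $X_1$, both choices produce the same subspace $\mathcal{W}$.

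Your version needs no Jordan theory. In particular it avoids the paper's appeal to a \emph{real} $X_1$ bringing $A_{cl}$ to its (complex) Jordan form. That is impossible when $A_{cl}$ has non-real eigenvalues, because the real Jordan form then has $2\times 2$ blocks.

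What the paper's basis buys is explicit bookkeeping for the corollaries that follow. Each column of $\begin{bmatrix} X_1 \\ X_2\end{bmatrix}$ is a generalized eigenvector of $H$ for an eigenvalue of $A_{cl}$. This is how the paper ties each solution to a choice of $n$ eigenvalues of $H$, and later to the stabilizing solution. You get the same spectral fact at once: in the basis given by the columns of $\begin{bmatrix} I_n \\ X\end{bmatrix}$, the restriction $H|_{\mathcal{W}}$ is represented by $A_{cl}$.

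One small correction to your closing remark. $X_2X_1^{-1}$ is symmetric if and only if $X_1^TX_2 = X_2^TX_1$, and this is pure algebra that uses nothing about $H$. The Hamiltonian structure does something different: it makes particular invariant subspaces (e.g.\ the stable one when $\sigma(H)$ avoids the imaginary axis) Lagrangian, not every $H$-invariant $\mathcal{W}$. This does not affect your proof, since symmetry is not part of the statement.
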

\begin{proof}
\begin{enumerate}
    \item $\rightarrow$ \\
    Let us assume conditions $1-4$ are satisfied. Due to the relation in Equation~\eqref{eqn:W_image} we trivially have: 
\begin{equation}
    \begin{bmatrix}
    X_1 \\
    X_2
    \end{bmatrix} \in \mathcal{W}.
\end{equation}
Moreover, since $\mathcal{W}$ is $H$-invariant, by \eqref{eqn:inv_sub_cond} we also have: 
\begin{equation}
    H \begin{bmatrix}
    X_1 \\
    X_2
    \end{bmatrix} \in \mathcal{W}.
\end{equation}

Thus by Equation~\eqref{eqn:W_image}, 
there exists a matrix $\Lambda \in \mathbb{R}^{n \times n}$ such that:

\begin{equation}
    H \begin{bmatrix}
    X_1 \\
    X_2
    \end{bmatrix} = \begin{bmatrix}
    X_1 \\
    X_2
    \end{bmatrix} \Lambda.
\end{equation}

Now we multiply by $X_1^{-1}$ from the right and extract $X_1$ from the RHS to obtain:

\begin{equation}
    H \begin{bmatrix}
    I_{n} \\
    X_2 X_1^{-1}
    \end{bmatrix} = \begin{bmatrix}
    I_{n} \\
    X_2 X_1^{-1}
    \end{bmatrix} X_1 \Lambda X_1^{-1}.
\end{equation}

Let us denote $X \coloneqq X_2 X_1^{-1}$ and then multiply by $\begin{bmatrix}-X & I_{n} \end{bmatrix}$ from the right to obtain:

\begin{equation}
    \begin{bmatrix}-X & I_{n} \end{bmatrix} H \begin{bmatrix}
    I_{n} \\
    X
    \end{bmatrix} = \begin{bmatrix}-X & I_{n} \end{bmatrix} \begin{bmatrix}
    I_{n} \\
    X
    \end{bmatrix} X_1 \Lambda X_1^{-1}.
\end{equation}
Notice the RHS is now:

\begin{equation}
    \begin{bmatrix}-X & I_{n} \end{bmatrix} \begin{bmatrix}
    I_{n} \\
    X
    \end{bmatrix} X_1 \Lambda X_1^{-1} = \left(-X + X\right) X_1 \Lambda X_1^{-1} = 0_{n \times n},
\end{equation}
Thus we have:

\begin{equation}
    \begin{bmatrix}-X & I_{n} \end{bmatrix} \begin{bmatrix}
    A & -BR^{-1}B^T\\
    -Q & -A^T
    \end{bmatrix} \begin{bmatrix}
    I_{n} \\
    X
    \end{bmatrix} = 0_{n \times n},
\end{equation}

which corresponds to:

\begin{equation}
    \begin{bmatrix}
    -XA - Q & XBR^{-1}B^T - A^T
    \end{bmatrix} \begin{bmatrix}
    I_{n} \\
    X
    \end{bmatrix} = 0_{n \times n}.
\end{equation}

Performing matrix multiplication:

\begin{equation}
    -XA - Q + XBR^{-1}B^TX - A^TX
     = 0_{n \times n},
\end{equation}

and finally, multiplying both sides by $-1$ and rearranging we have:

\begin{equation}
     XA + A^TX - XBR^{-1}B^TX + Q 
     = 0_{n \times n}.
\end{equation}

We see that we got an equivalent expression to the CARE from \eqref{eqn:riccati}, and thus $X$ solves the CARE.
    \item $\leftarrow$ \\
    Let us assume there exists a solution $X \in \mathbb{R}^{n \times n}$ that solves the CARE. 
    Thus the closed-loop dynamics is:
    \begin{equation}
        A_{cl} \equiv A - BR^{-1}B^TX.
    \end{equation}
    Plugging this to the CARE we have:
    \begin{equation}
        XA_{cl} = - Q - A^T X .
    \end{equation}
    Notice the last two equations can be written in block matrix form:
    \begin{equation}
    \begin{aligned}
        \begin{bmatrix}
         A - BR^{-1}B^TX\\
        - Q  - A^T X
        \end{bmatrix} = \begin{bmatrix}
        A_{cl}\\
        XA_{cl}
        \end{bmatrix}.
    \end{aligned}
    \end{equation}
    Taking $A_{cl}$ out from the RHS:
    \begin{equation}
    \begin{aligned}
        \begin{bmatrix}
        A - BR^{-1}B^TX\\
        - Q - A^T X
        \end{bmatrix} = \begin{bmatrix}
        I_n\\
        X
        \end{bmatrix}A_{cl},
    \end{aligned}
    \end{equation}
    Notice the LHS can be written as:
    \begin{equation}
    \begin{aligned}
        \begin{bmatrix}
        A & - BR^{-1}B^T\\
        - Q & - A^T
        \end{bmatrix} 
        \begin{bmatrix}
        I_n \\
        X
        \end{bmatrix}= 
        \begin{bmatrix}
        I_n\\
        X
        \end{bmatrix}A_{cl},
    \end{aligned}
    \end{equation}
    Notice the leftmost matrix is exactly the CARE Hamiltonian $H$, and thus we have:
    \begin{equation}
    \begin{aligned}
        H \begin{bmatrix}
        I_n \\
        X
        \end{bmatrix}= \begin{bmatrix}
        I_n\\
        X
        \end{bmatrix}A_{cl}.
    \end{aligned}
    \label{eqn:before_jordan}
    \end{equation}
    Let:
    \begin{equation}
        J_{A_{cl}} \coloneqq X_1^{-1} A_{cl} X_1,
        \label{eqn:a_cl_jordan}
    \end{equation}
     be an $A_{cl}$-Jordan matrix and let $\sigma ( A_{cl} )$ be the corresponding spectrum of $A_{cl}$. Thus by the definition of the Jordan normal form, we have:
     \begin{equation}
        J_{A_{cl}} = \diag \left( ( \lambda )_{\lambda \in \sigma ( A_{cl} ) }\right),
    \end{equation} 
    with $X_1$ composed the generalised eigenvectors of $A_{cl}$. Moreover, from Theorem~\ref{the:real_jordan}, we can arrange for $X_1$ to be real. So let us choose $J_{A_{cl}}$ such that $X_1 \in \mathbb{R}^{n \times n}$.  Multiplying~\eqref{eqn:before_jordan} by $X_1X_1^{-1} = I$ (and thus not changing anything) in the middle of the RHS we have:
    \begin{equation}
    \begin{aligned}
        H \begin{bmatrix}
        I_n \\
        X
        \end{bmatrix}= \begin{bmatrix}
        X_1\\
        XX_1
        \end{bmatrix}X_1^{-1}A_{cl}.
    \end{aligned}
    \end{equation}
    Let us define:
    \begin{equation}
        X_2 \coloneqq X X_1.
        \label{eqn:x_x1_x2}
    \end{equation}
    Notice we have $X_2 \in \mathbb{R}^{n \times n}$. Multiplying both sides by $X_1$ from the right we have:
    \begin{equation}
    \begin{aligned}
        H \begin{bmatrix}
        X_1 \\
        X_2
        \end{bmatrix}= \begin{bmatrix}
        X_1\\
        X_2
        \end{bmatrix}X_1^{-1}A_{cl} X_1.
    \end{aligned}
    \end{equation}
    Notice the rightmost expression is exactly $J_{A_{cl}}$ and thus we have:
    \begin{equation}
    \begin{aligned}
        H \begin{bmatrix}
        X_1 \\
        X_2
        \end{bmatrix}= \begin{bmatrix}
        X_1\\
        X_2
        \end{bmatrix}J_{A_{cl}}.
    \end{aligned}
    \end{equation}
    Let us define $\mathcal{W} \coloneqq \operatorname{Im} \left(\begin{bmatrix}
    X_1 \\
    X_2
    \end{bmatrix}\right)$ and thus we have:
    \begin{itemize}
        \item Since $\begin{bmatrix}
        X_1 & X_2
        \end{bmatrix}^T J_{A_{cl}} \in \mathcal{W}$, then we have that applying $H$ to $\begin{bmatrix}
        X_1 & X_2
        \end{bmatrix}^T \in \mathcal{W}$ keeps it in $\mathcal{W}$, so $\mathcal{W} \subseteq \mathbb{R}^{2n}$ is an $H$-invariant subspace by definition;
        \item $\mathcal{W}$ is the image of the two real square matrices $X_1, X_2$;
        \item From its definition by the Jordan normal form in~\eqref{eqn:a_cl_jordan}, $X_1$ is invertible;
        \item Multiplying by $X_1^{-1}$ from both sides of~\eqref{eqn:x_x1_x2} we have the solution is of the form: $X = X_2X_1^{-1} \in \mathbb{R}^{n \times n}$.
    \end{itemize}
\end{enumerate}
\end{proof}

\begin{remark}
There are a few things to notice regarding Proposition~\ref{prop:care_con}:

\begin{itemize}
    \item By this method we see that in order to obtain solutions to the Riccati Equation, it is necessary to be able to construct bases for those invariant subspaces of $H$, a task which is not always so easily done.
    \item One way of constructing those invariant subspaces is to use eigenvectors and generalized eigenvectors of $H$, as described in an example in page 322 of~\cite{robust_optimal}, but this strategy lacks numerical stability and is mainly avoided for this reason.~\cite{indefinite_linear_algebra}
    \item This construction guarantees the solution $X$ is square, but does not force it to necessarily be symmetric or even real. 
    \item That being said, by the derivation of the CARE itself as we carried out, assigned with real coefficient matrices $A, B, R, Q$, we already assumed the conditions for these to happen (as elaborated in chapter 13 of~\cite{robust_optimal}) are satisfied. 
    \item According to the last point mentioned, the aforementioned construction, in this case and under the specified conditions, will provide a solution $X$ that is indeed real and symmetric, but we are searching for a solution that stabilizes a controlled system, so it must satisfy an additional set of conditions.
\end{itemize}
\end{remark}
 \subsection{Stabilizing Solution to the CARE}
 \label{sect:stabalizing_care}

\begin{definition}
Let us consider the CARE at~\eqref{eqn:riccati} and assume it is consistent. Let $X \in \mathbb{R}^{n \times n}$ be a solution to the CARE. If the controlled dynamics $A_{cl} \equiv A - BR^{-1}B^TX$ is stable, then $X$ is  \textbf{stabilizing}.
\end{definition}
\begin{remark}
Now let us reconsider Proposition~\ref{prop:care_con}:

\begin{itemize} 
    \item  Alas, by definition, the aforementioned general method provides no guarantee that any obtained solution will be stabilizing.
    \item The CARE is guaranteed to be consistent, but still it can be positive-dimensional with infinite solutions, and even if its zero-dimensional with a finite amount of solutions, this amount can still be exponential.
    \item As discussed in section 13 of~\cite{robust_optimal}, there may some solutions that are not necessarily real, not necessarily Hermitian, not necessarily non-negative, and not necessarily stabilizing the resulting controlled system.
\end{itemize}
\end{remark}

Now let us describe the conditions for the existence and uniqueness of the stabilizing solution. To do so, we present the following:

\begin{corollary}
Let us consider the CARE at~\eqref{eqn:riccati}, with its CARE Hamiltonian $H$ and its spectrum $\sigma (H )$. Each solution to the CARE $X \coloneqq X_2 X_1 ^{-1}$ corresponds to a choice of $n$ eigenvalues $\{ \lambda_i \}_{i=1}^n$ of $\sigma (H )$ such that $X_1^{-1}$ exists.
\label{cor:sol_eig_asso}
\end{corollary}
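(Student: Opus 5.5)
Both directions will come from Proposition~\ref{prop:care_con}, with the Jordan form of $H$ restricted to an $n$-dimensional invariant subspace supplying the link to eigenvalues.

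\textbf{Solution to eigenvalues.} Let $X$ solve the CARE. The $\leftarrow$ part of the proof of Proposition~\ref{prop:care_con} gives
\begin{equation*}
H\begin{bmatrix} I_n \\ X\end{bmatrix} = \begin{bmatrix} I_n \\ X\end{bmatrix} A_{cl}.
\end{equation*}
Hence $\mathcal{W}=\operatorname{Im}\begin{bmatrix} I_n & X^T\end{bmatrix}^T$ is an $n$-dimensional $H$-invariant subspace, and its columns are linearly independent because of the identity block. The restriction $H|_{\mathcal{W}}$ is represented by $A_{cl}$ in this basis, so $\sigma(A_{cl})\subseteq\sigma(H)$. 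Counting algebraic multiplicities, $\sigma(A_{cl})$ is a selection of $n$ eigenvalues $\{\lambda_i\}_{i=1}^n$ of $\sigma(H)$: the characteristic polynomial of a restriction to an invariant subspace divides that of $H$. Choosing $X_1$ to be the matrix of (real) generalized eigenvectors of $A_{cl}$, as in~\eqref{eqn:a_cl_jordan}, and setting $X_2=XX_1$, we get $X=X_2X_1^{-1}$ with $X_1$ invertible. This assigns to each solution a choice of $n$ eigenvalues, namely $\sigma(A_{cl})$.

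\textbf{Eigenvalues to solution.} Choose $n$ eigenvalues of $H$ (with multiplicity) and take $\mathcal{W}$ to be the sum of the corresponding generalized eigenspaces, using Jordan chains truncated to match the chosen multiplicities. Then $\mathcal{W}$ is $H$-invariant of dimension $n$. Write a basis of $\mathcal{W}$ as $\begin{bmatrix} X_1^T & X_2^T\end{bmatrix}^T$; the real Jordan form (Theorem~\ref{the:real_jordan}) lets us keep this basis real when the selection is closed under conjugation. If $X_1^{-1}$ exists, the $\rightarrow$ direction of Proposition~\ref{prop:care_con} shows that $X_2X_1^{-1}$ solves the CARE. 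This quantity does not depend on the basis chosen, since a change of basis $X_i\mapsto X_iT$ cancels.

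\textbf{Main obstacle.} The difficulty is making the correspondence well defined. When $H$ has repeated eigenvalues with several Jordan blocks, one eigenvalue selection can give infinitely many invariant subspaces, which is the positive-dimensional case. I would therefore state the corollary as ``each solution determines such a choice'', which is the direction the statement asserts, and note that the choice is generally not injective. For real solutions, I would also remark that the selection must be closed under complex conjugation.
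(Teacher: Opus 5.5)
Your proposal is correct and follows the same route as the paper, which proves the corollary in one line as a direct consequence of Proposition~\ref{prop:care_con}, i.e.\ through the $n$-dimensional $H$-invariant subspace $\operatorname{Im}\begin{bmatrix} I_n & X^T \end{bmatrix}^T$. You observe that $H$ restricted to this subspace is represented by $A_{cl}$, so its characteristic polynomial divides that of $H$ and $\sigma(A_{cl})$ is the selected sub-multiset of $\sigma(H)$; this is exactly the step the paper leaves implicit, and your converse direction and non-injectivity caveat are sound additions.
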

\begin{proof}
This is a direct result from Proposition~\ref{prop:care_con}.
\end{proof}
 \begin{corollary}
 Let us assume there exists a solution to the CARE at~\eqref{eqn:riccati} $X \coloneqq X_2 X_1 ^{-1} \in \mathbb{R}^{n \times n}$. Let us consider the corresponding closed loop dynamics $A_{cl} \equiv A - BR^{-1}B^TX$ and its corresponding spectrum $\sigma ( A_{cl})$. Then the eigenvalues in the spectrum $\sigma ( A_{cl})$ are associated with the column vectors of $X_1$.
 \label{cor:a_cl_eig_asso}
 \end{corollary}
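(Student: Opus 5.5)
The plan is to reuse the construction in the ``$\leftarrow$'' direction of the proof of Proposition~\ref{prop:care_con}, read column by column. Given a solution $X = X_2X_1^{-1}$, that proof produces the intertwining relation
\begin{equation*}
H \begin{bmatrix} I_n \\ X \end{bmatrix} = \begin{bmatrix} I_n \\ X \end{bmatrix} A_{cl}.
\end{equation*}
Multiplying on the right by $X_1$ and using $X_2 = XX_1$ turns this into $H\begin{bmatrix} X_1 \\ X_2\end{bmatrix} = \begin{bmatrix} X_1 \\ X_2\end{bmatrix} J_{A_{cl}}$, where $J_{A_{cl}} = X_1^{-1}A_{cl}X_1$ as in~\eqref{eqn:a_cl_jordan}.

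The key step is to look at the identity $A_{cl}X_1 = X_1 J_{A_{cl}}$ one column at a time. Write $\mathbf{v}_i$ for the $i$'th column of $X_1$. If the $i$'th column of $J_{A_{cl}}$ has diagonal entry $\lambda_i$ and no superdiagonal entry above it, the column identity reads $A_{cl}\mathbf{v}_i = \lambda_i \mathbf{v}_i$. So $\mathbf{v}_i$ is an eigenvector of $A_{cl}$ belonging to $\lambda_i \in \sigma(A_{cl})$. Otherwise $(A_{cl}-\lambda_i I)\mathbf{v}_i = \mathbf{v}_{i-1}$, which makes $\mathbf{v}_i$ a generalized eigenvector attached to the same $\lambda_i$.

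Since $X_1$ is invertible, its columns form a basis of $\mathbb{R}^n$ (or of $\mathbb{C}^n$ when eigenvalues are complex). Running over all Jordan blocks therefore associates every eigenvalue of $\sigma(A_{cl})$, counted with algebraic multiplicity, with its own set of columns of $X_1$. This is exactly the claimed association.

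As a consistency check with Corollary~\ref{cor:sol_eig_asso}, I would read the top block of the stacked identity. It says that $\begin{bmatrix} \mathbf{v}_i \\ X\mathbf{v}_i \end{bmatrix}$ is a (generalized) eigenvector of $H$ for the same $\lambda_i$. Hence $\sigma(A_{cl})$ is precisely the set of $n$ eigenvalues of $H$ selected by the solution $X$.

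The main obstacle is the real-versus-complex issue. Proposition~\ref{prop:care_con} chose $X_1$ real via the real Jordan form (Theorem~\ref{the:real_jordan}). In the real Jordan form, a complex-conjugate pair $\lambda, \bar\lambda$ corresponds to a $2\times 2$ real block, so a single real column is not an eigenvector by itself. I would handle this by stating the association over $\mathbb{C}$ using the complex Jordan form. Then, for each conjugate pair, I would note that the two real columns $\operatorname{Re}\mathbf{w}$ and $\operatorname{Im}\mathbf{w}$ of the real form span the same invariant subspace as the complex eigenvectors $\mathbf{w}, \bar{\mathbf{w}}$. The association then holds pairwise for real $X_1$.
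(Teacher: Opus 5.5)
Your proposal is correct and follows the paper's argument. Both take the Jordan basis $X_1$ built in the ``$\leftarrow$'' direction of Proposition~\ref{prop:care_con} and use $A_{cl}X_1 = X_1 J_{A_{cl}}$ to conclude that the columns of $X_1$ are (generalized) eigenvectors of $A_{cl}$; you just spell this out column by column. Your switch to the complex Jordan form for conjugate pairs is a worthwhile refinement, because the paper's proof relies on a real $X_1$ with $X_1^{-1}A_{cl}X_1$ a genuine Jordan matrix, and that is impossible whenever $A_{cl}$ has non-real eigenvalues.
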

 \begin{proof}
 As in the proof for Proposition~\ref{prop:care_con}, with $X_1$ let us consider the real $A_{cl}$-Jordan matrix $J_{A_{cl}} \coloneqq X_1^{-1} A_{cl} X_1$. Since $J_{A_{cl}}$ is an $A_{cl}$-Jordan matrix, we have that the eigenvalues of $A_{cl}$ are associated with the generalised eigenvectors of its transformation matrix. But the transformation matrix of $J_{A_{cl}}$ is exactly $X_1$ and thus the claim follows.
 \end{proof}
\begin{proposition}
Consider the CARE at~\eqref{eqn:riccati}. Then there exist $C \in \mathbb{R}^{n \times n}$ such that $Q = C^2$. Let us denote $\sqrt{Q} \coloneqq C$.
\end{proposition}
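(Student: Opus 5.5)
Since $Q$ is the state-weighting matrix of the objective $O=(Q,R)$, it is real, symmetric and positive semi-definite. The plan is to build $C$ from the spectral theorem for real symmetric matrices: take the square root of the eigenvalues and keep the eigenvectors.

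First, I apply the spectral theorem. Because $Q=Q^T\in\mathbb{R}^{n\times n}$, there is an orthogonal matrix $U\in\mathbb{R}^{n\times n}$, with $U^TU=UU^T=I_n$, and real eigenvalues $\lambda_1,\dots,\lambda_n$ such that
\begin{equation*}
Q = U\,\diag(\lambda_1,\dots,\lambda_n)\,U^T .
\end{equation*}

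Second, I show that every $\lambda_i\ge 0$. Let $\mathbf{v}_i$ be the unit eigenvector in the $i$'th column of $U$. Then $\lambda_i=\mathbf{v}_i^TQ\mathbf{v}_i\ge 0$ by positive semi-definiteness. Hence $\sqrt{\lambda_i}$ is a real number.

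Third, I define
\begin{equation*}
C \coloneqq U\,\diag\left(\sqrt{\lambda_1},\dots,\sqrt{\lambda_n}\right)U^T \in \mathbb{R}^{n\times n}.
\end{equation*}
Using $U^TU=I_n$ gives
\begin{equation*}
C^2 = U\,\diag\left(\sqrt{\lambda_1},\dots,\sqrt{\lambda_n}\right)^2U^T = U\,\diag(\lambda_1,\dots,\lambda_n)\,U^T = Q.
\end{equation*}
As a side remark, $C$ is also symmetric and PSD, so $C^TC=Q$ as well. This is the form in which $\sqrt{Q}$ is typically used later, for example in detectability conditions on the pair $(\sqrt{Q},A)$.

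There is essentially no obstacle. The only point needing care is making explicit that $Q\ge 0$ comes from the definition of an objective. This assumption is essential: without it the eigenvalues could be negative, and no real square root would exist. I would also note that $C$ need not be unique in general. The notation $\sqrt{Q}$ refers to this particular symmetric PSD choice, which is in fact the unique PSD square root, although uniqueness is not needed for the statement.
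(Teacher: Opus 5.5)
Your proof is correct and follows essentially the same route as the paper: orthogonally diagonalize $Q$ via the spectral theorem, use $Q \geq 0$ to get nonnegative eigenvalues, and take square roots on the diagonal. You also do two things the paper leaves implicit: you verify $C^2 = Q$ directly, where the paper only defines $\sqrt{Q}$ through its matrix-function convention, and you justify $\lambda_i = \mathbf{v}_i^T Q \mathbf{v}_i \geq 0$, where the paper misprints the hypothesis as $Q \leq 0$.
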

\begin{proof}
Let us consider $Q$ from the derivation of the CARE at~\eqref{eqn:riccati} and its spectrum $\sigma (Q)$. Since $Q \leq 0$, it does not have any negative eigenvalues. Since $Q$ is real symmetric, then by the spectral theorem there exists an orthogonal (and thus invertible) matrix $Q_1 \in \mathbb{R}^{n \times n}$  such that:
\begin{equation}
    Q = Q_1^{-1} \diag \left(( \lambda)_{\lambda \in \sigma (Q)}\right) Q_1.
\end{equation}
Thus, we can apply the function $f(x) = \sqrt{x}$ on $Q$ to obtain:
\begin{equation}
    \sqrt{Q} = Q_1^{-1} \diag \left(( \sqrt{\lambda})_{\lambda \in \sigma (Q)}\right) Q_1.
\end{equation}
\end{proof}
\begin{proposition}
\label{prop:existence_stability_zero_spectrum}
Let us consider the CARE at~\eqref{eqn:riccati} and its Hamiltonian $H$. Then a stabilizing solution $X \in \mathbb{R}^{n \times n}$ to the CARE exists if and only if:
\begin{enumerate}
    \item The pair $(A,B)$ is stabilizable;
    \item The spectrum of $H$ has no eigenvalues with zero real part, as in:
    \begin{equation}
        \forall \lambda \in \sigma (H) \ \colon \ \operatorname{Re}(\lambda) \neq 0 
    \end{equation}
\end{enumerate}
\end{proposition}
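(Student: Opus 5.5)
The plan is to prove the two directions separately, using the invariant-subspace characterization of Proposition~\ref{prop:care_con} together with the Hamiltonian structure of $H$. Throughout, write $S \coloneqq BR^{-1}B^T$, and note that $S \geq 0$ because $R>0$.

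\textbf{Necessity.} Suppose $X$ is a stabilizing solution, so $A_{cl} = A - SX$ is stable.
\begin{enumerate}
    \item Stabilizability of $(A,B)$ is immediate: the feedback $K = R^{-1}B^T X$ makes $A - BK = A_{cl}$ stable.
    \item For the spectrum of $H$, take the similarity $T = \begin{bmatrix} I_n & 0 \\ X & I_n\end{bmatrix}$. Using Equation~\eqref{eqn:before_jordan}, i.e.\ $H\begin{bmatrix} I_n \\ X\end{bmatrix} = \begin{bmatrix} I_n \\ X\end{bmatrix}A_{cl}$, and the CARE itself, I will verify that
    \begin{equation*}
        T^{-1} H T = \begin{bmatrix} A_{cl} & -S \\ 0 & -A_{cl}^T \end{bmatrix}.
    \end{equation*}
    \item Hence $\sigma(H) = \sigma(A_{cl}) \cup \sigma(-A_{cl}^T)$. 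Since $A_{cl}$ is stable, the first set lies in the open left half-plane and the second in the open right half-plane, so no eigenvalue of $H$ has zero real part.
\end{enumerate}

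\textbf{Sufficiency.} Assume both conditions hold.
\begin{enumerate}
    \item Because $H$ is Hamiltonian, $\sigma(H)$ is symmetric with respect to the imaginary axis ($\lambda \mapsto -\bar\lambda$). With no eigenvalues on the axis, exactly $n$ eigenvalues (with multiplicity) lie in the open left half-plane.
    \item Let $\mathcal{W}$ be the corresponding stable $H$-invariant subspace. It is real and $n$-dimensional, so I choose real $X_1, X_2$ with $\mathcal{W} = \operatorname{Im}\begin{bmatrix} X_1 \\ X_2\end{bmatrix}$ and $H\begin{bmatrix} X_1 \\ X_2\end{bmatrix} = \begin{bmatrix} X_1 \\ X_2\end{bmatrix}\Lambda$, where $\Lambda$ is stable.
    \item \textbf{Isotropy.} Left-multiply by $\begin{bmatrix} X_1^T & X_2^T\end{bmatrix}W$ and use the Hamiltonian identity~\eqref{eqn:hamiltonian_matrix_condition}. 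This shows that $Z \coloneqq X_1^TX_2 - X_2^TX_1$ satisfies the Lyapunov equation $\Lambda^T Z + Z\Lambda = 0$. Since $\Lambda$ is stable, this equation has only the trivial solution, so $X_1^TX_2$ is symmetric.
    \item \textbf{Invertibility of $X_1$ (main obstacle).} Stabilizability is used exactly here. I first show $\ker X_1$ is $\Lambda$-invariant. Take $v \in \ker X_1$. The first block row gives $X_1\Lambda v = AX_1v - SX_2v = -SX_2v$. Multiply by $v^TX_2^T$ and use the symmetry of $X_1^TX_2$ (so $v^T X_2^T X_1 = v^T X_1^T X_2 = 0$): this yields $v^TX_2^TSX_2v = 0$. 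Since $S \geq 0$, it follows that $B^TX_2v = 0$, hence $X_1\Lambda v = 0$.
    \item If $\ker X_1 \neq \{0\}$, pick an eigenvector $\Lambda x = \lambda x$ in it, with $\operatorname{Re}\lambda<0$. The second block row gives $-A^TX_2x = \lambda X_2 x$, while the previous step gives $B^TX_2x = 0$. Full column rank of $\begin{bmatrix} X_1 \\ X_2\end{bmatrix}$ forces $X_2x \neq 0$. So $X_2x$ is a left eigenvector of $A$ for an eigenvalue in the open right half-plane, orthogonal to $B$. This contradicts the PBH test for stabilizability. Hence $X_1$ is invertible.
    \item By Proposition~\ref{prop:care_con}, $X = X_2X_1^{-1}$ solves the CARE. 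It is symmetric because $X_1^{-T}(X_1^TX_2)X_1^{-1}$ is, and real by construction.
    \item Finally, $A_{cl} = A - SX = X_1\Lambda X_1^{-1}$, read off from the top block row as in Corollary~\ref{cor:a_cl_eig_asso}. So $A_{cl}$ is stable and $X$ is stabilizing.
\end{enumerate}
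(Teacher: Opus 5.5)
Your proof is correct. It shares the paper's skeleton (the invariant-subspace characterization of Proposition~\ref{prop:care_con} plus the spectral symmetry of $H$), but the key steps are argued quite differently.

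For necessity, the paper goes through Corollaries~\ref{cor:sol_eig_asso} and~\ref{cor:a_cl_eig_asso}: the eigenvalues of $A_{cl}$ are ``associated'' with $X_1$, which is ``associated'' with a choice of $n$ eigenvalues of $H$. It then invokes Proposition~\ref{prop:H_symmetry} for the remaining $n$. Your explicit block-triangularization $T^{-1}HT$ delivers $\sigma(H)=\sigma(A_{cl})\cup\sigma(-A_{cl}^T)$, with multiplicities, in one step, which is cleaner.

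One small point in that step: the $(2,2)$ block of $T^{-1}HT$ is really $XS-A^T$. This equals $-A_{cl}^T$ only when $X=X^T$, and the paper's definition of a stabilizing solution does not impose symmetry. You can close this in either of two ways:
\begin{itemize}
\item Subtract the transposed CARE from the CARE to get $ZA_{cl}+A_{cl}^TZ=0$ for $Z=X-X^T$, so $Z=0$ by stability of $A_{cl}$.
\item Note that the zero $(2,1)$ block already gives $\sigma(A_{cl})\subseteq\sigma(H)$. Together with the multiplicity-preserving symmetry $\lambda\mapsto-\bar\lambda$, this suffices.
\end{itemize}

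For sufficiency, the paper's argument is only a sketch. It asserts that the $n$ stable eigenvalues are ``associated with $X_1$'' and that stabilizability makes $A_{cl}$ stable. It never shows that the stable invariant subspace is a graph subspace, i.e.\ that $X_1$ is nonsingular, and that is exactly where stabilizability must enter. Your argument supplies this missing step:
\begin{itemize}
\item the isotropy lemma ($X_1^TX_2$ symmetric via $\Lambda^TZ+Z\Lambda=0$);
\item $\Lambda$-invariance of $\ker X_1$;
\item the PBH contradiction.
\end{itemize}
As a bonus, it yields realness and symmetry of $X=X_1^{-T}(X_1^TX_2)X_1^{-1}$ and the stability of $A_{cl}=X_1\Lambda X_1^{-1}$ directly.

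The price is importing two standard facts the paper does not state. The first is the PBH characterization of stabilizability; the paper defines stabilizability differently, via ``stable uncontrollable state variables''. The second is uniqueness of the trivial solution of the Lyapunov equation when $\Lambda$ is stable. What you gain is a self-contained proof of the direction the paper essentially leaves unproved.
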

\begin{proof}
\begin{enumerate}
    \item $\rightarrow$ :\\
    Let us assume a stabilizing solution to the CARE exists, denoted $X \in \mathbb{R}^{n \times n}$. Then from Proposition~\ref{prop:care_con} we have  $X \coloneqq X_2X_1^{-1}$ with $X_1, X_2 \in \mathbb{R}^{n \times n}$.
    Since $X$ is stabilizing, then by definition, all eigenvalues of the closed loop dynamics $A_{cl} \equiv A - BR^{-1}B^TX$ have negative real parts. Thus we have:
    \begin{equation}
        \forall \lambda \in \sigma ( A_{cl} ) \ \colon \ \operatorname{Re}(\lambda) < 0.
    \end{equation}
    From Corollary~\ref{cor:a_cl_eig_asso} we have that the spectrum $\sigma ( A_{cl} )$ is associated with the column vectors of $X_1$. Let us now consider the CARE Hamiltonian $H$. From Corollary~\ref{cor:sol_eig_asso} we know $X_1$ is associated with a choice of $n$ eigenvalues of $H$. So we have that there exist $n$ eigenvalues of $H$ with strictly negative real parts. From Proposition~\ref{prop:H_symmetry} we have that the rest of the $n$ eigenvalues of $H$ have strictly positive real parts, and thus no eigenvalue of $H$ has a zero real part. Moreover, since $X$ is stabilizing, then the corresponding optimal control law given in~\eqref{eqn:optimal_K_infinite} stabilizes the system, so the pair $(A,B)$ is stabilizable.
    \item Let us assume the pair $(A,B)$ is stabilizable and the spectrum of $H$ has no zero real parts. Since the spectrum of $H$ has no zero real parts then from their symmetry, there are $n$ eigenvalues on the left-hand complex plane and $n$ eigenvalues on the right-hand complex plane. The $n$ ones on the left are the stable ones which are associated with $X_1$ which is then associated with the spectrum $\sigma ( A_{cl} )$. Since the pair $(A,B)$ is stabilizable, the closed loop dynamics $A_{cl} \equiv A - BR^{-1}B^TX$ is stable, for some stabilizing solution to the CARE $X \in \mathbb{R}^{n \times n}$.
\end{enumerate}
\end{proof}

Let us present the following Theorem:

\begin{theorem}[Theorem 4 from~\cite{riccati_review}]
\label{the:unique_stab_detectable}
Let us consider the CARE at~\eqref{eqn:riccati}. The stabilizing solution is the only positive semi-definite solution to the CARE if
and only if the pair $(A, \sqrt{Q})$ is detectable.
\end{theorem}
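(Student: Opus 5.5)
The theorem presumes that a stabilizing solution $X$ exists; by Proposition~\ref{prop:existence_stability_zero_spectrum} this means $(A,B)$ is stabilizable and $\sigma(H)$ has no eigenvalues on the imaginary axis. I will write $C \coloneqq \sqrt{Q}$, so that $Q = C^TC$. Both directions rest on the same identity. For any solution $Y$ of the CARE~\eqref{eqn:riccati}, set $A_Y \coloneqq A - BR^{-1}B^TY$. The CARE can then be rewritten as the Lyapunov-type equation
\begin{equation}
A_Y^T Y + Y A_Y + Y B R^{-1} B^T Y + C^T C = 0 .
\end{equation}
Subtracting this identity for two solutions $X$ and $Y$ gives
\begin{equation}
A_X^T (Y - X) + (Y - X) A_Y = 0 ,
\end{equation}
which I will verify by direct expansion.

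\textbf{Sufficiency (detectability $\Rightarrow$ uniqueness).} First I show that $X \geq 0$. Since $A_X$ is Hurwitz, the Lyapunov form with $Y = X$ gives
\begin{equation}
X = \int_0^\infty e^{A_X^T t}\left(XBR^{-1}B^TX + C^TC\right)e^{A_X t}\,\mathrm{d}t \geq 0 .
\end{equation}
Next, let $Y \geq 0$ be any solution, and take an eigenpair $A_Y v = \lambda v$ with $\operatorname{Re}\lambda \geq 0$. Evaluating the Lyapunov identity for $Y$ at $v$ gives
\begin{equation}
2\operatorname{Re}(\lambda)\, v^*Yv + \left\|R^{-1/2}B^TYv\right\|^2 + \|Cv\|^2 = 0 .
\end{equation}
All three terms are nonnegative, so each vanishes: $Cv = 0$ and $B^TYv = 0$. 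The second gives $Av = A_Y v = \lambda v$. So $v$ is an unstable eigenvector of $A$ with $Cv = 0$, contradicting detectability of $(A,C)$ by the PBH test. Hence $A_Y$ is Hurwitz. Now both $A_X$ and $A_Y$ are Hurwitz, so the Sylvester equation $A_X^T Z + Z A_Y = 0$ has the unique solution $Z = 0$. Therefore $Y = X$.

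\textbf{Necessity (uniqueness $\Rightarrow$ detectability).} I argue by contrapositive. Suppose $(A,C)$ is not detectable. Then there is an $A$-invariant subspace $\mathcal{V} \subseteq \ker C$ on which $A$ has all its eigenvalues in the closed right half-plane; concretely, it is the sum of the generalized eigenspaces of unobservable modes with $\operatorname{Re}\lambda \geq 0$. I then construct a second solution $Y \geq 0$ with $Y \neq X$. The strategy is to pick a different invariant subspace of $H$ in Proposition~\ref{prop:care_con}: replace the stable eigenvectors of $H$ associated with the mirrored modes $-\bar\lambda$ by the vectors $\begin{bmatrix} v \\ 0 \end{bmatrix}$ with $v \in \mathcal{V}$. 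These are $H$-invariant, because $Cv = 0$ forces $Qv = 0$ and therefore
\begin{equation}
H\begin{bmatrix} v \\ 0 \end{bmatrix} = \begin{bmatrix} Av \\ 0 \end{bmatrix} .
\end{equation}
A cleaner route is to use the maximal solution $X_+$, which exists under stabilizability and is the stabilizing one here. In coordinates where $A$ is block triangular with respect to $\mathcal{V}$, I look for a solution of the form $Y = X - \Delta$ with $0 \leq \Delta$, $\Delta \neq 0$, obtained by solving a reduced CARE on $\mathcal{V}$ with zero state weight. On $\mathcal{V}$ the reduced equation admits the trivial solution $0$, while the stabilizing solution is nonzero, because the modes are not asymptotically stable and $B$ can reach them by stabilizability. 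This yields two distinct positive semidefinite solutions, contradicting uniqueness.

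The main obstacle is the necessity direction, in particular the case of purely imaginary unobservable modes. Under the standing no-imaginary-axis-eigenvalue hypothesis on $H$ this case cannot occur, since an imaginary unobservable mode of $A$ would also be an eigenvalue of $H$. This lets me restrict to strictly unstable unobservable modes, where the invariant-subspace swap produces a genuinely different real symmetric PSD solution. The remaining care is in checking that $X_1$ stays invertible and that the new $Y$ is symmetric and positive semidefinite.
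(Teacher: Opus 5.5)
The paper gives no proof of this statement; it imports it as Theorem~4 of the cited Riccati review, so your argument has to stand on its own. Your sufficiency half does. The Lyapunov form of the CARE and the integral representation give $X\geq 0$. The identity $2\operatorname{Re}(\lambda)\,v^*Yv+\|R^{-1/2}B^TYv\|^2+\|Cv\|^2=0$ forces every positive semi-definite solution to be stabilizing under detectability. The Sylvester step $A_X^T(Y-X)+(Y-X)A_Y=0\Rightarrow Y=X$ finishes it. All of this is correct.

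The necessity half, however, never actually produces the second positive semi-definite solution, and that is the entire content of this direction. For the invariant-subspace swap you leave open that the new $X_1$ is invertible, that $Y$ is symmetric, and that $Y\geq 0$. None of these follows from $H$-invariance alone. Moreover, when $-\bar\lambda$ has higher multiplicity in $\sigma(A_X)$ than $\lambda$ has as an unobservable mode, it is not even specified which part of the spectral subspace to remove. The ``cleaner route'' does not close the gap either. Writing $Y=X-\Delta$ with $\Delta\geq 0$ only gives $Y\leq X$, not $Y\geq 0$. And a solution of a zero-weight Riccati equation posed on the unobservable block does not lift to a solution of the full CARE, because that block stays coupled to the rest of the state (see below).

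The missing idea is to solve a reduced CARE on the \emph{observable} part and put zero on the unobservable part. The change of coordinates $X\mapsto T^TXT$, $A\mapsto T^{-1}AT$, $B\mapsto T^{-1}B$, $Q\mapsto (\sqrt{Q}\,T)^T(\sqrt{Q}\,T)$ maps solutions to solutions and preserves positive semi-definiteness, stability of the closed loop, and detectability. So you may take coordinates adapted to the unobservable subspace of $(A,\sqrt{Q})$, in which
\[
A=\begin{bmatrix}A_{11}&0\\A_{21}&A_{22}\end{bmatrix},\qquad
B=\begin{bmatrix}B_1\\B_2\end{bmatrix},\qquad
\sqrt{Q}\,T=\begin{bmatrix}C_1&0\end{bmatrix},
\]
with $(A_{11},C_1)$ observable and $\sigma(A_{22})$ the unobservable modes.

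A left null vector $w_1$ of $\begin{bmatrix}\lambda I-A_{11}&B_1\end{bmatrix}$ gives the left null vector $(w_1,0)$ of $\begin{bmatrix}\lambda I-A&B\end{bmatrix}$, so $(A_{11},B_1)$ is stabilizable. The reduced Hamiltonian therefore has no imaginary-axis eigenvalues, since such an eigenvalue would yield an imaginary mode that is unobservable for $(A_{11},C_1)$ or uncontrollable for $(A_{11},B_1)$. Hence, by Proposition~\ref{prop:existence_stability_zero_spectrum}, the reduced CARE for $(A_{11},B_1,C_1^TC_1,R)$ has a stabilizing solution $Y_{11}$. It is positive semi-definite by your own Lyapunov-integral argument.

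A direct block computation shows that $Y=\begin{bmatrix}Y_{11}&0\\0&0\end{bmatrix}\geq 0$ solves the full CARE, while
\[
A-BR^{-1}B^TY=\begin{bmatrix}A_{11}-B_1R^{-1}B_1^TY_{11}&0\\A_{21}-B_2R^{-1}B_1^TY_{11}&A_{22}\end{bmatrix}
\]
has $\sigma(A_{22})$ in its spectrum. If $(A,\sqrt{Q})$ is not detectable, $A_{22}$ has an eigenvalue with nonnegative real part. Then $Y$ is not stabilizing, so $Y\neq X$, and $X,Y$ are two distinct positive semi-definite solutions.

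The same computation shows why your reduction fails. For a candidate $\begin{bmatrix}Y_{11}&0\\0&P\end{bmatrix}$, the $(2,2)$ block of the CARE is exactly your zero-weight equation for $P$. But the off-diagonal block is $P\bigl(A_{21}-B_2R^{-1}B_1^TY_{11}\bigr)$, which vanishes for $P=0$ and not in general for the nonzero stabilizing $P$. This route needs neither the imaginary-axis discussion nor any invariant-subspace bookkeeping.
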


Now we can conclude the following:

\begin{corollary}[\textbf{CARE Existence and Uniqueness Conditions}, Corollary 13.8 from~\cite{robust_optimal}; Theorem 5 from~\cite{riccati_review}]
\label{CARE_Existence_Uniqueness}
Let us consider the CARE at~\eqref{eqn:riccati}, and assume the pair $(A, B)$ is stabilizable and the pair $(A, \sqrt{Q})$ is detectable.
Then the CARE at~\eqref{eqn:riccati} admits a unique positive semi-definite solution $X \in \mathbb{R}^{n \times n}$ which is also stabilizing.
\end{corollary}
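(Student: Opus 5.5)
The plan is to combine Proposition~\ref{prop:existence_stability_zero_spectrum} with Theorem~\ref{the:unique_stab_detectable}. The argument has three steps: existence of a stabilizing solution, its positive semi-definiteness, and uniqueness among positive semi-definite solutions.

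\textbf{Existence.} Stabilizability of $(A,B)$ is assumed, so by Proposition~\ref{prop:existence_stability_zero_spectrum} it remains to show that the CARE Hamiltonian $H$ has no eigenvalue on the imaginary axis. Suppose $H[\mathbf{v}_1;\mathbf{v}_2]=j\omega[\mathbf{v}_1;\mathbf{v}_2]$ with $\omega\in\mathbb{R}$ and a nonzero eigenvector. Writing out the two block rows gives
$(A-j\omega I)\mathbf{v}_1=BR^{-1}B^T\mathbf{v}_2$ and $-(A^T+j\omega I)\mathbf{v}_2=Q\mathbf{v}_1$.
Left-multiply the first equation by $\mathbf{v}_2^*$ and the second by $\mathbf{v}_1^*$. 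Combining the results, the cross terms cancel because $\mathbf{v}_2^*(A-j\omega I)\mathbf{v}_1=\overline{\mathbf{v}_1^*(A^T+j\omega I)\mathbf{v}_2}$. One should then obtain
$\mathbf{v}_1^*Q\mathbf{v}_1+\mathbf{v}_2^*BR^{-1}B^T\mathbf{v}_2=0$.
Since $Q\ge 0$ and $R>0$, this forces $\sqrt{Q}\mathbf{v}_1=0$ and $B^T\mathbf{v}_2=0$. Substituting back gives $A\mathbf{v}_1=j\omega \mathbf{v}_1$ with $\sqrt{Q}\mathbf{v}_1=0$, and $\mathbf{v}_2^*A=-j\omega\mathbf{v}_2^*$ with $\mathbf{v}_2^*B=0$.
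Detectability of $(A,\sqrt{Q})$ (PBH test) forbids an unobservable mode on the imaginary axis, so $\mathbf{v}_1=0$. Stabilizability of $(A,B)$ (PBH test) forbids an uncontrollable left eigenvector at $-j\omega$, so $\mathbf{v}_2=0$. This contradicts the eigenvector being nonzero, so the spectral condition holds and Proposition~\ref{prop:existence_stability_zero_spectrum} yields a stabilizing solution $X$. I expect this imaginary-axis step to be the main obstacle: the sign bookkeeping in the quadratic-form identity is delicate, and one must use that $\mathbf{v}^*Q\mathbf{v}=0$ implies $\sqrt{Q}\mathbf{v}=0$ for $Q\ge0$.

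\textbf{Symmetry and positive semi-definiteness.} Symmetry of the stabilizing solution follows from uniqueness: $X^T$ also solves the CARE and gives the same closed-loop spectrum, and the stable invariant subspace of $H$ is unique. To show $X\ge0$, rewrite the CARE in Lyapunov form,
$A_{cl}^TX+XA_{cl}=-(Q+XBR^{-1}B^TX)$, where $A_{cl}=A-BR^{-1}B^TX$.
Because $A_{cl}$ is Hurwitz, this Lyapunov equation has the unique solution
$X=\int_0^\infty e^{A_{cl}^T t}(Q+XBR^{-1}B^TX)e^{A_{cl}t}\,\mathrm{d}t$.
The integrand is positive semi-definite, so $X\ge 0$.

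\textbf{Uniqueness.} Since $(A,\sqrt{Q})$ is detectable, Theorem~\ref{the:unique_stab_detectable} states that the stabilizing solution is the only positive semi-definite solution of the CARE. Combining the three steps, the CARE admits exactly one positive semi-definite solution, and that solution is stabilizing.
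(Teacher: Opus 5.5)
Your proof is correct and follows the paper's route. The paper's entire proof is the one-line combination of Proposition~\ref{prop:existence_stability_zero_spectrum} with Theorem~\ref{the:unique_stab_detectable}, and your PBH argument excluding imaginary-axis eigenvalues of $H$, together with your Lyapunov-integral argument for $X\geq 0$, supplies exactly the hypotheses that this citation leaves unverified.

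Two small repairs are needed. First, from $A^T\mathbf{v}_2=-j\omega\mathbf{v}_2$ one gets $\mathbf{v}_2^*A=j\omega\mathbf{v}_2^*$; this is harmless, since $j\omega$ is still on the imaginary axis. Second, the claim that $X^T$ is also stabilizing needs a line. For any solution $X$ and $T=\begin{bmatrix} I_n & 0\\ X & I_n\end{bmatrix}$ one has $T^{-1}HT=\begin{bmatrix} A-BR^{-1}B^TX & -BR^{-1}B^T\\ 0 & -(A-BR^{-1}B^TX^T)^T\end{bmatrix}$, so the symmetry of $\sigma(H)$ about the imaginary axis forces $A-BR^{-1}B^TX^T$ to be Hurwitz as well.
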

\begin{proof}
Both sides are a direct result of Proposition~\ref{prop:existence_stability_zero_spectrum} and Theorem~\ref{the:unique_stab_detectable}.
\end{proof}

Corollary~\ref{CARE_Existence_Uniqueness} provides sufficient conditions of stabilizability and detectability under which we are promised a unique solution to the CARE, which are the standard prerequisites when dealing with the LQR problem. That being said, there still could be an exponential, if not infinite, if not, god forbid, uncountable amount of solutions to the equation, and the existence and uniqueness does not promise us we will be able to solve for that solution that easily, especially for a large number of state variables.

\subsubsection{Construction of the Unique Stabilizing Solution}

\begin{itemize}

    \item To construct the unique stabilizing solution, a class of algorithms exist known as Schur methods, which performs well on systems where the state vector is of length $n<100$.~\cite{schur}, but do less so for larger systems.
    \item The Schur method is the one used in MATLAB software's \pythoninline{care} and more recent \pythoninline{icare} methods.
    \item In plain words, the methods assumes Corollary~\ref{CARE_Existence_Uniqueness} and with that it finds the solution by performing a variation of the eigenvector decomposition approach, by using a set of self-proclaimed Schur vectors.~\cite{schur}
    \item The main giveaway is that this method performs at a complexity of $O(n^3)$ (Section IV, subsection C of~\cite{schur}).
\end{itemize}
\begin{remark}

The theory described so far corresponds to the continuous equation, but see Section III of~\cite{schur} for a corresponding similar solution for the discrete-time case.
\end{remark}



\chapter{Divide and Conquer for Control Tasks}
\label{chap:d&c}
\epigraph{\textit{Divide and rule}}{\textbf{Gaius Julius Caesar}}

 Sometimes, we are willing to sacrifice offline computing in order to achieve better online results, especially when dealing with complex systems that are composed of multiple modular sub-systems that can all affect each other in circumstances that are not so clear to easily point out from an overhead view. This is the reason why setting the weighs for an LQR controller of the constants for a PID~\cite{PID} controller is mainly a task of tiresome trial and error. Our methodology then, attempts to alleviate this procedure and provide an engineer the ability to deconstruct the system as he sees fit and the allocate the hyper-parameters accordingly. 
 
The method that we are presenting in this paper offers, in a sense, a midway between BP~\ref{subsect:BP} and MPC~\ref{subsect:MPC}. Instead of performing complex computations on-the-fly, as done with MPC, we propose to run lighter computations for solving simple differential games. The solution that we get, based on these differential games, does not depend on priorities nor it requires a specific technique for combining the control signals. 

As an alternative to NSB, our approach allows flexible compositions of the controllers over time. We rely on a competition between so called '\textit{players}' in the differential game we set in motion to automatically provide a balance between the objectives. 

Our approach is a generalization of LQR control for cases where there are multiple objectives to consider. Specifically, we propose to model each of the given objective as player in the game and the solve the multi-objective system using a game theory tool called Open-Loop Nash Equilibrium. In this context, one may ask how our approach compares with the alternative of designing one LQR controller for an objective that is a weighted sum of the objectives of the players that we consider. The answer is that our approach combines the objectives more dynamically while manual priority assignment often requires deep understanding of the inner details of the system. 

Experience shows that, in nature and in man-made systems, solving conflicts by allowing internal competition gives an agile solution that effectively balances competing objectives, especially when the number of objectives is large. This is reminiscent of a method used in economics where complex organizations avoid inefficiencies by dividing themselves to `\textit{profit units}'.

From a theoretical perspective, our method relies on the theory of differential games, that can either be finite or infinite. The approach to differential games in this work relies extensively on the book '\textit{Dynamic
Noncooperative
Game Theory}'~\cite{Basar}. Chapter 3 of Part I introduces Noncooperative Finite N-Person Nonzero-Sum games, with sections 3.4, 3.5 covering Nash Equilibria. Chapter 5 of Part II introduces the General Formulation of Infinite Dynamic Games, both for discrete and continuous time, and Chapter 6 deals with Nash and Saddle-Point Equilibria of Infinite Dynamic Games.

\section{Continuous LTI System Decomposition}

In this section, we will present a formal description of the approach we propose to implement. Let $\mathcal{S}$ be a dynamical continuous-time LTI system adhering the model in equation~\eqref{eqn:basic_sys} and let $\mathbf{x}(t)$ be the state vector of $\mathcal{S}$, defined for all $t \in \mathcal{I}$.\footnote{$\mathcal{I}$, the system interval, along with any following non-explicitly stated notations, are given in Appendix~\ref{app:sys_modeling}.}. We will propose a specific expression for the assigned input to the system, that captures the decomposition of the input signals into designated virtual inputs, corresponding to some pre-defined objectives.

\begin{definition}
Given a continuous LTI system $\mathcal{S}$, and $N \in \mathbb{N}$, we define its equivalent \textbf{continuous LTI virtual decomposed system}, denoted $\mathcal{S}_N$ as the LTI system that adheres the following specific model:

\begin{equation}
    \dot{\mathbf{x}}(t) = A\mathbf{x}(t) + \sum _{i=1}^N B_i \mathbf{v_i}(t)
    \label{eqn:sys_with_B1___Bn}
\end{equation}
\end{definition}
where:
\begin{definition}
    Let $\big( \mathbf{v_i}(t)  \big)_{i=1}^N$ be the \textbf{virtual input vectors}, with each being a vector function of length $m_i \in \mathbb{N}$ and $\mathbf{v_i} \colon \mathcal{I} \rightarrow \mathcal{V}_i$, where the vector subspace $\mathcal{V}_i \subseteq \mathbb{R}^{m_i}$ is the set of all possible admissable virtual inputs for the $i$'th objective. Let us denote $\mathcal{V} = \bigtimes_{i=1}^N\mathcal{V}_i$ so that $\mathcal{V} \subseteq \mathbb{R}^{\sum_{i=1}^Nm_i}$ is a subset of all the sorted possible admissable virtual input vectors of total length $\sum_{i=1}^Nm_i$;
\end{definition}
    \begin{definition}
    \label{def:virtual_coef}
    Let $\big(B_i\big)_{i=1}^N$ be the \textbf{virtual input coefficients} matrices where each $B_i \in \mathbb{R}^{n \times m_i}$.
\end{definition}

\begin{remark}
$\big( \mathbf{v_i}(t)  \big)_{i=1}^N$ and $\big(B_i\big)_{i=1}^N$ will be defined by closed-form expressions later on, given the matrix $B$ of $\mathcal{S}$ and a series of input design parameters we assume the system engineer submits as input, as we will explain. The point to be made here is that either way, since the original system model of $\mathcal{S}$ is as given at~\eqref{eqn:basic_sys}, then $\mathcal{S}_N$ must remain true to that same model, i.e. when designing $\mathcal{S}_N$, $\big( \mathbf{v_i}(t) \big)_{i=1}^N$ and $\big(B_i\big)_{i=1}^N$ must be such that:

\begin{equation}
    B \mathbf{u}(t) = \sum _{i=1}^N B_i \mathbf{v_i}(t),
\end{equation}

in any constellation the designer deems plausible. Notice that in the general case, this allows multiple ways of performing the aforementioned decomposition, as in given $\big( \mathbf{v_i}(t) \big)_{i=1}^N$ and $\big(B_i\big)_{i=1}^N$, the input $\mathbf{u}(t)$ is not determined uniquely.
\end{remark}

\begin{remark}
Once designed, the system $\mathcal{S}_N$ can be visualized using the following block diagram:

\begin{figure}[H]
\centering
\begin{tikzpicture}
\node[draw,
    circle,
    minimum size=0.8cm
] (sum) at (0,0){};
 
\draw (sum.north east) -- (sum.south west)
    (sum.north west) -- (sum.south east);
 
\draw (sum.north east) -- (sum.south west)
(sum.north west) -- (sum.south east);
 
\node[left=-1pt] at (sum.center){\tiny $+$};
\node[below] at (sum.center){\tiny $+$};
 
\node [draw,
    minimum width=2cm,
    minimum height=1.2cm,
    right=1.5cm of sum
]  (integrator) {$\int \mathrm{d}t$};

\node [draw,
    minimum width=0.2cm, 
    minimum height=5cm,
    left=1cm of sum
] (summer) {$+$};

\node [draw,
    minimum width=2cm, 
    minimum height=1.2cm,
    above left =1cm and 3cm of sum
] (B_1) {$B_1$};

\node[text width=3cm] at (-2.8,0.1) 
    {$\vdots$};
    
\node [draw,
    minimum width=2cm, 
    minimum height=1.2cm,
    below left =1cm and 3cm of sum
] (B_N) {$B_N$};
 
\node [draw,
    minimum width=2cm, 
    minimum height=1.2cm, 
    below=1cm of integrator
]  (A) {$A$};
 
\draw[-stealth] (sum.east) -- (integrator.west)
    node[midway,above]{$\dot{\mathbf{x}}(t)$};
 
\draw[-stealth] (integrator.east) -- ++ (2,0)
    node[midway,above](x){$\mathbf{x}(t)$};
 
\draw[-stealth] (x.south) |- (A.east);
 
\draw[-stealth] (A.west) -| (sum.south) 
    node[near end,left]{};
 
\draw[-stealth] ++(-6.5,1.9)
    node[above]{$\mathbf{v_1}(t)$} -- (B_1.west);

 \draw[-stealth] (B_1.east) -- (summer.west)
     node[midway,above]{};

\draw[-stealth] ++(-6.5,-1.9)
    node[above]{$\mathbf{v_N}(t)$} -- (B_N.west);

\draw[-stealth] (B_N.east) -- (summer.west)
    node[midway,above]{};
    
\draw[-stealth] (summer.east) -- (sum.west)
    node[midway,above]{};
 
\end{tikzpicture}
\caption{Block diagram of the decomposed system $\mathcal{S}_N$ described in equation~\eqref{eqn:sys_with_B1___Bn}.} \label{fig:S_N_diagram}
\end{figure}
\end{remark}

\begin{remark}
Let us note that the aforementioned transformation is only executed \textit{virtually}, i.e., not by means of a physical nor mathematical transformation. $\mathcal{S}_N$ differs with $\mathcal{S}$ only by means of its hypothetical control inputs and their corresponding coefficients. The exact division tactic will be elaborated later on. 
\end{remark}

\section{Input Division Strategy}
\label{sect:input_division_tactic}

We now present a formal method to apply our described transformation.

\begin{definition}
Let $\mathcal{S}_N$ be a continuous LTI decomposed system with virtual inputs $\big( \mathbf{v_i}(t) \big)_{i=1}^N$ and corresponding lengths $(m_i)_{i=1}^N$. Let us define the \textbf{augmented virtual input vector} as a function $\mathbf{v} \colon \mathcal{I} \rightarrow \mathcal{V}$ defined by:

\begin{equation}
    \mathbf{v}(t) \coloneqq 
    \begin{bmatrix} \mathbf{v_1}(t)\\
    \vdots \\
    \mathbf{v_N}(t)  
    \end{bmatrix}.
    \label{eqn:v_t_def}
\end{equation}
\end{definition}
Let $\mathcal{S}$ be a continuous LTI system with input vector $\mathbf{u}(t) \in \mathcal{U} \subseteq \mathbb{R}^m$.
Let $\mathcal{S}_N$ be a continuous LTI decomposed system of $\mathcal{S}$, with virtual inputs $\big( \mathbf{v_i}(t) \big)_{i=1}^N$ and corresponding lengths $(m_i)_{i=1}^N$. We define the following:
\begin{itemize}
    \item \begin{definition}
    Let us define the \textbf{input division matrices} $(M_i)_{i=1}^N$, where $M_i \in \mathbb{R}^{m_i \times m}$, which will be used to define the division of the input vector into virtual inputs, i.e.:
\begin{equation}
    \forall 1 \leq i \leq N \ \colon \ \mathbf{v_i}(t) = M_i \mathbf{u}(t)
    \label{eqn:u_to_v_i}
\end{equation}
\end{definition}
\item \begin{definition}
 Plugging~\eqref{eqn:u_to_v_i} to~\eqref{eqn:v_t_def} we have:
\begin{equation}
    \mathbf{v}(t) =
    \begin{bmatrix} \mathbf{v_1}(t)\\
    \vdots \\
    \mathbf{v_N}(t)  
    \end{bmatrix} = \begin{bmatrix} M_1 \mathbf{u}(t)\\
    \vdots \\
    M_N \mathbf{u}(t)  
    \end{bmatrix} = \begin{bmatrix} M_1 \\
    \vdots \\
    M_N  
    \end{bmatrix}\mathbf{u}(t).
\end{equation}
Denoting:
\begin{equation}
    M \coloneqq \begin{bmatrix}
          M_1 \\
          \vdots \\
          M_N 
            \end{bmatrix}
\end{equation}
such that $M \in \mathbb{R}^{\sum_{i=1}^N m_i \times m}$, we have:
\begin{equation}
    \mathbf{v}(t) = M \mathbf{u}(t).
    \label{eqn:u_to_v}
\end{equation}

In this context, $M$ is referred to as the \textbf{augmented division matrix} of $\mathcal{S}_N$ and is set to represent the linear transformation that decomposes $\mathcal{S}$, which is a system of one input vector, into $\mathcal{S}_N$, a system of $N$ distinct input vectors.
\end{definition}
\begin{remark}
The relationship at~\ref{eqn:u_to_v} is one that does not strictly require the matrix $M$ to be invertible or even square, i.e., generally we assume $\sum_{i=1}^Nm_i \neq m$. The relationship between $\mathbf{u}(t)$ and $\mathbf{v}(t)$ provides a general prescription to transform the original input of the original system $\mathcal{S}$ into the virtual inputs to the decomposed system $\mathcal{S}_N$, and is up to the control engineer to define per case.
\label{remark:u_to_v}
\end{remark}
\end{itemize}

\begin{definition}
Let $\mathcal{S}_N$ be a decomposed system of $\mathcal{S}$ with virtual inputs $\big( \mathbf{v_i}(t) \big)_{i=1}^N$ and corresponding lengths $(m_i)_{i=1}^N$ and let $M \in \mathbb{R}^{\sum_{i=1}^N m_i \times m}$ be the augmented division matrix of $\mathcal{S}_N$. If
\begin{itemize}
    \item $\sum_{i=1}^Nm_i \equiv m$;
    \item $M \in \mathbb{R}^{m \times m}$ is invertible;
\end{itemize}
 then we refer to the $\mathcal{S}_N$ as \textbf{inversely designable}.
\end{definition}

\begin{proposition}
Let $\mathcal{S}$ be a dynamical system. Let $\mathcal{S}_N$ be a decomposed system of $\mathcal{S}$. If $\mathcal{S}_N$ is inversely designable then when transforming  $\mathcal{S}$ into $\mathcal{S}_N$, $\mathbf{u}(t)$ is uniquely determined.
\end{proposition}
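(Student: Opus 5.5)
The plan is to argue directly from the defining relation~\eqref{eqn:u_to_v} between the original input and the augmented virtual input, $\mathbf{v}(t) = M\mathbf{u}(t)$, together with the two conditions in the definition of inverse designability. Here ``uniquely determined'' is read as follows: for every $t \in \mathcal{I}$, once the virtual inputs $\big(\mathbf{v_i}(t)\big)_{i=1}^N$ are fixed (for instance by the players' strategies in the game), there is exactly one $\mathbf{u}(t) \in \mathbb{R}^m$ that produces them through the division matrices $(M_i)_{i=1}^N$.

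First, I will use $\sum_{i=1}^N m_i = m$ to note that stacking the $M_i$ yields a square matrix $M \in \mathbb{R}^{m \times m}$. By assumption it is invertible, so $M^{-1}$ exists.

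Second, for existence, I will define $\mathbf{u}(t) \coloneqq M^{-1}\mathbf{v}(t)$ and check that $M\mathbf{u}(t) = \mathbf{v}(t)$. Reading this block-wise gives $M_i\mathbf{u}(t) = \mathbf{v_i}(t)$ for each $i$, so~\eqref{eqn:u_to_v_i} holds.

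Third, for uniqueness, I will suppose two inputs $\mathbf{u}(t)$ and $\mathbf{u}'(t)$ both satisfy~\eqref{eqn:u_to_v}. Subtracting gives $M(\mathbf{u}(t) - \mathbf{u}'(t)) = 0$. Since $M$ is invertible, its kernel is trivial, so $\mathbf{u}(t) = \mathbf{u}'(t)$ for every $t$.

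As a consistency check, I will also show that the decomposition respects the system model, in the spirit of the remark requiring $B\mathbf{u}(t) = \sum_{i=1}^N B_i\mathbf{v_i}(t)$. Partition $BM^{-1} = \begin{bmatrix} B_1 & \cdots & B_N \end{bmatrix}$ column-wise into blocks $B_i \in \mathbb{R}^{n \times m_i}$. Then $B\mathbf{u}(t) = BM^{-1}\mathbf{v}(t) = \sum_{i=1}^N B_i \mathbf{v_i}(t)$, which recovers~\eqref{eqn:sys_with_B1___Bn}. This gives a natural closed form for the virtual input coefficients of Definition~\ref{def:virtual_coef}.

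I expect no real technical obstacle. The only delicate point is interpretive: the statement must be read pointwise in $t$, as determining $\mathbf{u}$ from $\mathbf{v}$. The opposite direction, determining $\mathbf{v}$ from $\mathbf{u}$, is trivial for any $M$. I will contrast this with Remark~\ref{remark:u_to_v}, where a non-square or singular $M$ leaves $\mathbf{u}(t)$ non-unique, because any element of $\ker M$ can be added to it.
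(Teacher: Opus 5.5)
Your proposal is correct and follows essentially the same route as the paper: invert $M$ to obtain $\mathbf{u}(t) = M^{-1}\mathbf{v}(t)$, then split $M^{-1}$ (equivalently $BM^{-1}$) column-wise into blocks of widths $m_i$, giving $B_i = B\tilde{M}_i$ and recovering the decomposed model. Your explicit trivial-kernel argument for uniqueness spells out what the paper leaves implicit when it left-multiplies by $M^{-1}$.
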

\begin{proof}
Since $\mathcal{S}_N$ is inversely designable, we have that $M^{-1} \in \mathbb{R}^{m \times m}$ exists. Left-multiplying both sides of~\eqref{eqn:u_to_v} by $M^{-1}$, we have in that case:

\begin{equation}
    \mathbf{u}(t) = M^{-1}\mathbf{v}(t).
\end{equation}
 Knowing the lengths $(m_i)_{i=1}^N$, there exists a unique series of matrices $(\tilde{M}_i)_{i=1}^N$ such that each $\tilde{M}_i$ satisfies $\tilde{M}_i \in \mathbb{R}^{m \times m_i}$ and is a decomposition of $M^{-1}$ of the form:

\begin{equation}
    M^{-1} = \begin{bmatrix}
    \tilde{M}_1 & \cdots & \tilde{M}_N.
    \end{bmatrix}
\end{equation}

With that, let us consider the following expressions for the virtual input coefficients $\big( B_i \big)_{i=1}^N$: 

\begin{equation}
    \forall 1 \leq i \leq N \ \colon\  B_i \coloneqq B \tilde{M}_i.
\end{equation}
Notice we have $B_i \in \mathbb{R}^{n \times m_i}$. Plugging to the model at~\eqref{eqn:basic_sys} we have:
\begin{equation}
\begin{aligned}
    \dot{\mathbf{x}}(t) = & A \mathbf{x}(t) + B \mathbf{u}(t) \Big|_{\mathbf{u}(t) \equiv M^{-1}\mathbf{v}(t)}\\
    = & A \mathbf{x}(t) + B M^{-1}\mathbf{v}(t) \\
    = & A \mathbf{x}(t) + B \begin{bmatrix}
    \tilde{M}_1 & \cdots & \tilde{M}_N
    \end{bmatrix} \mathbf{v}(t)\\
    = & A \mathbf{x}(t) +  \begin{bmatrix}
    B\tilde{M}_1 & \cdots & B\tilde{M}_N
    \end{bmatrix} \mathbf{v}(t)\\
    = & A \mathbf{x}(t) + \begin{bmatrix}
    B_1 & \cdots & B_N
    \end{bmatrix}\begin{bmatrix} \mathbf{v_1}(t)\\
    \vdots \\
    \mathbf{v_N}(t)  
    \end{bmatrix}\\
    = & A \mathbf{x}(t) + \sum_{i=1}^N B_i \mathbf{v_i}(t)
\end{aligned}
\end{equation}
\end{proof}

\section{Continuous Multi-Objective Optimal Control}
Now that we speak in terms of the decomposed system, we can carry on to the next step, which is to define several criteria that quantify the performance of the system. As described at the LQR optimal control section at~\ref{section:optimal_control}, we do this by defining a set of appropriate cost functions. This section is based on the reasoning described for the regular LQR and extends the methods described for multiplayer games.

\begin{definition}
Let us assume there exists a series of pre-determined virtual objectives $(O_i)_{i=1}^N$ 
we would like to enact upon the system $\mathcal{S}_N$. Each \textbf{virtual objective} is defined by a triplet of the form:
\begin{equation}
    O_i \coloneqq \Big(Q_i, (R_{ij})_{j=1}^N, B_i\Big)
\end{equation}
where $Q_i \in \mathbb{R}^{n \times n}, Q_i \geq 0$, i.e., $Q_i$ is positive-semi-definite (PSD), $R_{ij} \in \mathbb{R}^{m_j \times m_j}$ and $R_{ij} > 0$, i.e., $R_{ij}$ is positive-definite (PD) and the matrix $B_i$ is the corresponding virtual input coefficient matrix. $Q_i$ and $(R_{ij})_{j=1}^N$ are weight matrices that dictate how each state and input variable precisely affects the corresponding cost function. While this relation can be defined in many ways, we choose the standard way of specifying weights in a quadratic function with respect to the variables.
\end{definition}

\begin{definition}

We now propose to define a \textbf{virtual cost function} $J_i$ for each control objective $O_i$, similar to what we defined in Equation~\eqref{eqn:general_cost}. We would like the input of our cost functions to be a series $(\mathbf{v_j}(t))_{j=1}^N$ of virtual inputs, along with the the state vector $\mathbf{x}(t)$ - as both of these elements affect the value the function will return. So we define two sets of functions $(r_i)_{i=1}^N$ and $(r_{f_i})_{i=1}^N$, and the set of cost functions $(J_i)_{i=1}^N$, where each $J_i \colon \mathcal{X} \times \mathcal{V} \times \mathcal{I} \rightarrow \mathbb{R}^+$ assumes the following form:

\begin{equation}
\begin{aligned}
        J_i\Big(\mathbf{x}(t), (\mathbf{v_j}(t))_{j=1}^N, t\Big) \coloneqq 
       \int_{t}^{T_f} r_i\Big(\mathbf{x}(\tau), (\mathbf{v_j}(\tau))_{j=1}^N\Big)\mathrm{d}\tau + r_{f_i}\Big(\mathbf{x}(T_f)\Big)
\end{aligned}
\label{objective_function}
\end{equation}
\end{definition}
Once again, we will denote $J_i(t)$ for conciseness.
\subsection{Optimality by Open-Loop Nash Equilibrium}
\begin{definition}
For a given decomposed system $\mathcal{S}_N$ with a state vector $\mathbf{x}(t)$, a series of virtual policies $ (\mathbf{v^*_j}(t))_{j=1}^N$ is said to constitute an \textbf{Open-Loop Nash Equilibrium}\footnote{Note that obtaining a Nash Equilibrium does not necessarily minimize the sum of the costs but represent a certain balance between the described objectives.} if for all $1 \leq i \leq N$, we have:

\begin{equation}
    J_i\bigg(\mathbf{x}(t), (\mathbf{v^*_j}(t))_{j=1}^N, t\bigg) \leq J_i\bigg(\mathbf{x}(t), (\mathbf{v^*_j}(t))_{\substack{j=1 \\ j \neq i}}^N, \mathbf{v_i}(t), t\bigg)
    \label{nash_equilibrium}
\end{equation}

for any other admissable virtual input $\mathbf{v_i}(t) \in \mathcal{V}_i$ and for all $t \in \mathcal{I}$. In words, a series of virtual inputs $ (\mathbf{v^*_j}(t))_{j=1}^N$ constitutes an Open-Loop Nash equilibrium if it is not possible to decrease any cost function $J_i(t)$ only by changing its corresponding virtual input $\mathbf{v_i^*}(t)$ to some other input $\mathbf{v_i}(t)$. Equality in~\eqref{nash_equilibrium} will be in the case where $\mathbf{v_i}(t) \equiv \mathbf{v_i^*}(t)$.
\end{definition}
\begin{remark}
Relating to what was described earlier, our choice of using the concept of Open-Loop Nash Equilibrium is because we propose to choose the decomposed system's virtual \textbf{inputs} by solving a game between objectives that may have common and competing interests. It so happens that the optimal inputs will be closed-loop control policies.
\end{remark}

Notice, similar to~\eqref{eqn:optimal_control_problem_formulation} this definition induces the following optimal problem:

\begin{definition}
Given an decomposed LTI system $\mathcal{S}_N$ adhering the model at~\eqref{eqn:sys_with_B1___Bn} and with GLQR cost functions $\left(J_i(t)\right)_{i=1}^N$, we define the \textbf{open-loop Nash Equilibrium optimal control problem}, formally stated as:
\begin{equation}
    \begin{aligned}
    \min_{\substack{\mathbf{v_i}(t) \in \mathcal{V}_i}} \quad & J_i\Big(\mathbf{x}(t), (\mathbf{v_j}(t))_{j=1}^N, t_0\Big) \\
         = \min_{\substack{\mathbf{v_i}(t) \in \mathcal{V}_i}} \quad & \int_{\tau \in \mathcal{I}} r_i\Big(\mathbf{x}(\tau), (\mathbf{v_j}(\tau))_{j=1}^N\Big)\mathrm{d}\tau + r_{f_i}\Big(\mathbf{x}(T_f)\Big) \\
    = \min_{\substack{\mathbf{v_i}(t) \in \mathcal{V}_i}} \quad & \bigintsss_{\tau \in \mathcal{I}} \Bigg[ \mathbf{x}^T(\tau)Q_i\mathbf{x}(\tau) + \sum_{j=1}^N \mathbf{v_j}^T(\tau)R_{ij}\mathbf{v_j}(\tau) \Bigg]\mathrm{d}\tau + r_f\Big(\mathbf{x}(T_f)\Big)\\
         \textrm{subject to} \quad & \mathbf{x}(t_0) = \mathbf{x_0};\\
         \quad & \dot{\mathbf{x}}(t) = A\mathbf{x}(t) + \sum _{j=1}^N B_j \mathbf{v_j}(t); \quad & \forall t \in \mathcal{I};\\
         & \forall 1 \leq i \leq N.
    \end{aligned}
    \label{eqn:optimal_nash_control_problem_formulation}
\end{equation}
\end{definition}

\begin{definition}
Given a decomposed LTI system $\mathcal{S}_N$, let us define a \textbf{game optimal state trajectory} of $\mathcal{S}_N$, as a state trajectory $\mathbf{x^*}(t) \in \mathcal{X}$ which is the solution to the model of $\mathcal{S}_N$ at~\eqref{eqn:sys_with_B1___Bn}, assigned with the Nash Equilibrium policies $ (\mathbf{v^*_j}(t))_{j=1}^N$, i.e., it satisfies:
\begin{equation}
    \dot{\mathbf{x}}^*(t) = A\mathbf{x^*}(t) + \sum _{i=1}^N B_i \mathbf{v^*_i}(t),
\end{equation}
and by that extension, using the solution for the state solution at~\eqref{eqn:LTI_solution}:
\begin{equation}
        \mathbf {x^*}(t) \coloneqq e^{A(t-t_0)}{\mathbf  {x_0}}+\int _{{t_{0}}}^{t}e^{A(t-\tau)}\sum _{i=1}^N B_i \mathbf{v^*_i}(\tau)\mathrm{d}\tau.
\end{equation}
\end{definition}
\begin{definition}
Let us use the gradient operator on each cost function $\nabla J_i(t)$ to indicate \textbf{derivative with respect to time} $t$ \textbf{or state vector} $\mathbf{x}(t)$.
\end{definition}
\begin{definition}
With $\mathbf{x^*}(t)$, and $\nabla J_i(t)$, let us denote the \textbf{optimal values in regards to state trajectory} for every cost function $J_i(t), \ 1 \leq i \leq N$, and its gradient as:
\begin{equation}
\begin{aligned}
     J_i^*(t) \coloneqq & J_i\big( \mathbf{x}(t),\mathbf{u}(t), t\big) \ \big|_{\mathbf{x}(t) \equiv \mathbf{x^*}(t)};\\
    \nabla J_i^*(t) \coloneqq & \nabla J_i \ \big|_{\mathbf{x}(t) \equiv \mathbf{x^*}(t)} \ .
\end{aligned}
\end{equation}
\end{definition}
  \begin{definition}
  Similar to equation \eqref{eqn:partial_def}, let us define the\textbf{ partial derivatives with respect to state vector and time} of each cost function as:

\begin{equation}
\begin{aligned}
    \partial_t J_i(t) \coloneqq & \frac{\partial J_i\big( \mathbf{x}(t), \mathbf{u}(t), t \big)}{\partial t};\\
     \mathbf{\partial_x J_i}(t) \coloneqq & \frac{\partial J_i\big( \mathbf{x}(t), \mathbf{u}(t), t \big)}{\partial \mathbf{x}(t)} \\
     = & \left[ \frac{\partial J_i\big( \mathbf{x}(t), \mathbf{u}(t), t \big)}{\partial x_1(t)}, \frac{\partial J_i\big( \mathbf{x}(t), \mathbf{u}(t), t \big)}{\partial x_2(t)}, \dots, \frac{\partial J_i\big( \mathbf{x}(t), \mathbf{u}(t), t \big)}{\partial x_n(t)} \right]^T.
\end{aligned}
\end{equation}
  \end{definition}
\begin{definition}
With that let us define the \textbf{optimal partial derivatives} as:
\begin{equation}
\begin{aligned}
    \mathbf{\partial_x J_i}^*(t) \coloneqq \mathbf{\partial_x J_i} \ \big|_{\mathbf{x}(t) \equiv \mathbf{x^*}(t)};\\
    \partial_t J_i^*(t) \coloneqq \partial_t J_i \ \big|_{\mathbf{x}(t) \equiv \mathbf{x^*}(t)}.
    \end{aligned}
\end{equation}
\end{definition}
\begin{definition}
Given an decomposed LTI system $\mathcal{S}_N$ adhering the model at~\eqref{eqn:sys_with_B1___Bn}, we again deal with quadratic cost function, i.e., the functions $(r_i)_{i=1}^N$ take on the following forms:

\begin{equation}
r_i\Big(\mathbf{x}(t), (\mathbf{v_j}(t))_{j=1}^N\Big) \equiv \mathbf{x}^T(t)Q_i\mathbf{x}(t) + \sum_{j=1}^N \mathbf{v_j}^T(t)R_{ij}\mathbf{v_j}(t).
\end{equation}

Correspondingly, the control model is called a \textbf{Game Linear Quadratic Regulator} (GLQR), with:

\begin{itemize}
    \item $Q_i \in \mathbb{R}^{n \times n}$ is a positive semi-definite matrix;
    \item $\forall 1 \leq j \leq N \ \colon \ R_{ij} \in \mathbb{R}^{m_j \times m_j}$ is a positive definite matrix.
\end{itemize}
\end{definition}
\begin{remark}
Notice we assume a one-to-one correspondence between $B_i$, $Q_i$, and $(R_{ij})_{j=1}^N$. This relation reflects the aforementioned assignment of the virtual actuators to their corresponding virtual objectives. Our experience, as demonstrated later in this work, shows that it is often natural to choose a value for $B_i$  for each objective cost function $J_i(t)$ such that the virtual input $\mathbf{v_i}(t)$ affects mostly $J_i(t)$. We do not insist that other cost functions $J_k(t)$ (where $k \neq i$) are not affected by $\mathbf{v_i}(t)$, only that such effects are small relative to its effect on $J_i(t)$.
\end{remark}

\begin{proposition}
Given a decomposed LTI system $\mathcal{S}_N$ with GLQR cost functions $(J_i)_{i=1}^N$, let us consider the Open-Loop Nash Equilibrium optimal control problem given in~\eqref{eqn:optimal_nash_control_problem_formulation}. If a solution to the problem exists, it is obtained by the series $(\mathbf{v^*_i}(t))_{i=1}^N$ where:
\begin{equation}
    \mathbf{v^*_i}(t) = - \frac{1}{2} R_{ii}^{-1}B_i^T \ \mathbf{\partial_x J_i}^{*}(t),
\end{equation}
and $\left(\mathbf{\partial_x J^*_i}(t)\right)_{i=1}^N$ are the corresponding partial derivatives of the GLQR cost functions with respect to the state, assigned with the induced game optimal state $\mathbf{x^*}(t)$.
\end{proposition}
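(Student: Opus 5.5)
The plan mirrors the single-player LQR derivation (the Proposition giving $\mathbf{u^*}(t) = -\frac{1}{2}R^{-1}B^T\mathbf{\partial_x J}^*(t)$), applied player by player. Fix an index $1 \leq i \leq N$ and freeze all other players at their equilibrium policies $(\mathbf{v^*_j}(t))_{j\neq i}$.

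By the definition of the Open-Loop Nash Equilibrium in~\eqref{nash_equilibrium}, $\mathbf{v^*_i}(t)$ must minimize $J_i$ over $\mathcal{V}_i$ with the others fixed. This is a single-agent optimal control problem for player $i$, with dynamics
\[
\dot{\mathbf{x}}(t) = A\mathbf{x}(t) + B_i\mathbf{v_i}(t) + \sum_{j\neq i}B_j\mathbf{v^*_j}(t),
\]
where the sum over $j\neq i$ is a known (exogenous) term.

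First, I differentiate~\eqref{objective_function} in $t$ using Leibniz's rule, exactly as in~\eqref{eqn:J_dot}. This gives $\partial_t J_i(t) = -r_i(\mathbf{x}(t),(\mathbf{v_j}(t))_{j=1}^N)$. The vector chain rule~\eqref{eqn:vectors_chain_rule} then yields player $i$'s Bellman relation, in which the Hamiltonian is
\[
\mathcal{H}_i = \mathbf{\partial_x J_i}^T(t)\Big(A\mathbf{x}(t) + \sum_{j=1}^N B_j\mathbf{v_j}(t)\Big) + \mathbf{x}^T(t)Q_i\mathbf{x}(t) + \sum_{j=1}^N \mathbf{v_j}^T(t)R_{ij}\mathbf{v_j}(t).
\]

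Second, I invoke Pontryagin's Minimum Principle (Theorem~\ref{the:PMP}) for player $i$'s problem, evaluated along the game optimal trajectory $\mathbf{x^*}(t)$ with $\nabla J_i^*(t)$. This gives the stationarity condition $\partial \mathcal{H}_i/\partial \mathbf{v_i}(t)=0$ at $\mathbf{v_i}=\mathbf{v^*_i}$. As in~\eqref{eqn:null_cross_derivative}, the cross-derivative of $\mathbf{\partial_x J_i}^*$ with respect to $\mathbf{v_i}$ is taken to vanish.

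The derivative of the linear term is $B_i^T\mathbf{\partial_x J_i}^*(t)$. Among the quadratic input terms only the $j=i$ term depends on $\mathbf{v_i}$, and it contributes $2R_{ii}\mathbf{v_i}(t)$ since $R_{ii}$ is symmetric. The terms involving $\mathbf{x}$ and $\mathbf{v_j}$ for $j\neq i$ drop out. Since $R_{ii}>0$, it is invertible, and solving gives the claimed formula. Repeating this for every $i$ produces the full series. Because each condition is necessary for player $i$'s best response, any existing equilibrium must be of this form.

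The main obstacle is justifying that PMP, stated earlier only for a single-input LTI problem, applies to player $i$'s subproblem, since that problem contains the extra forcing term $\sum_{j\neq i}B_j\mathbf{v^*_j}$. I would argue that this term is a fixed known function of time under the open-loop equilibrium, so it only adds an affine drift. Such a drift changes neither the $\mathbf{v_i}$-dependence of the Hamiltonian nor the convexity in $\mathbf{v_i}$ guaranteed by $R_{ii}>0$. Hence the stationarity condition remains both necessary and sufficient for the minimizer of $\mathcal{H}_i$ in $\mathbf{v_i}$, and the rest is routine matrix calculus using the identities already cited from~\cite{matrix_cookbook}.
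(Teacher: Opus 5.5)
Your proposal is correct and follows essentially the same route as the paper. You build the virtual Hamiltonian $\mathcal{H}_i$ from Leibniz's rule and the chain rule, apply PMP to each player separately (the paper's GPMP), drop the cross-derivative, and solve $B_i^T\mathbf{\partial_x J_i}^{*}(t) + 2R_{ii}\mathbf{v_i}(t) = 0$; your remark that the other players' open-loop inputs only add a known affine drift is a slightly more explicit justification of the per-player PMP step than the paper gives.
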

\begin{proof}

\begin{definition}
In practical terms, similar to what was defined in equation~\eqref{eqn:hamiltonian}, let us first define a \textbf{Virtual Control Hamiltonian} for each virtual input $\mathbf{v_i}(t), \ 1 \leq i \leq N$ by differentiating both sides of~\eqref{objective_function} with respect to time using Leibniz's formula for differentiation under the integral sign and then moving sides to give:

\begin{equation}
\begin{aligned}
     \mathcal{H}_i & \Big( \mathbf{x}(t), \nabla J_i(t), (\mathbf{v_j}(t))_{j=1}^N \Big)  \coloneqq \partial_t J_i(t) + r_i\Big(\mathbf{x}(t), (\mathbf{v_j}(t))_{j=1}^N\Big) \\
      = & \mathbf{\partial_x J_i}^T(t) \ \dot{\mathbf{x}}(t)
       + \mathbf{x}(t)^TQ_i\mathbf{x}(t) + \sum_{j=1}^N \mathbf{v_j}^T(t)R_{ij}\mathbf{v_j}(t)\\
       = & \mathbf{\partial_x J_i}^T(t) \ \left(A\mathbf{x}(t) + \sum _{j=1}^N B_j \mathbf{v_j}(t)\right)
       + \mathbf{x}(t)^TQ_i\mathbf{x}(t) + \sum_{j=1}^N \mathbf{v_j}^T(t)R_{ij}\mathbf{v_j}(t).
       \label{eqn:hamiltonian_i}
\end{aligned}
\end{equation}

Notice this holds for both the finite and infinite horizon case. 
\end{definition}

\begin{definition}
Similar to what we described in equation~\eqref{eqn:bellman}, we describe the following set of \textbf{Game Bellman Equations} as:

\begin{equation}
    \mathcal{H}_i \Big( \mathbf{x}(t), \nabla J_i(t), (\mathbf{v_j}(t))_{j=1}^N \Big)  = 0
    \label{eqn:bellman_N}
\end{equation}
\end{definition}
Since we assume there exists a solution to the problem, we can employ PMP to the $i$'th input individually to describe the following implementation of PMP:
\begin{definition}
Applying the stationarity condition at equation~\eqref{eqn:stationarity} by differentiating each Hamiltonian $\mathcal{H}_i$ by the corresponding $i$'th virtual input $\mathbf{v_i}(t)$, equating to zero and solving for $\mathbf{v_i}(t)$, we define the \textbf{Game PMP (GPMP)}. In that we obtain the set of optimal virtual inputs $\left(\mathbf{v^*_i}(t)\right)_{i=1}^N$. Thus the stationarity condition for multiplayer games corresponds to the following set of $N$ coupled equations\footnote{Notice this definition is just for the sake of simplicity, and does not constitute a separate theorem or even an extension of Theorem~\ref{the:PMP}, as it is just applying the same principle but to every individual virtual cost function and induced virtual Control Hamiltonian.}:
 
 \begin{equation}
     \frac{\partial}{\partial \mathbf{v_i}(t)} \mathcal{H}_i \Big( \mathbf{x^*}(t), \nabla J^*_i(t), (\mathbf{v_j}(t))_{j=1}^N \Big)\Big|_{\mathbf{v_i}(t) \equiv \mathbf{v_i^*}(t)} = 0.
     \label{eqn:stationarity_N}
\end{equation}
\end{definition}

Applying GPMP by differentiating the $i$'th Hamiltonian at~\eqref{eqn:hamiltonian_i} with respect to the $i$'th virtual input $\mathbf{v_i}(t)$, similar to what was done at~\eqref{eqn:diff_H_1}, we have:

\begin{equation}
\begin{aligned}
     \frac{\partial}{\partial \mathbf{v_i}(t)} &\mathcal{H}_i \Big( \mathbf{x^*}(t), \nabla J^*_i(t), (\mathbf{v_j}(t))_{j=1}^N \Big) =\\
     \frac{\partial}{\partial \mathbf{v_i}(t)} \Bigg[& \mathbf{\partial_x J_i}^{*^T}(t) \ \left(A\mathbf{x^*}(t) + \sum _{j=1}^N B_j \mathbf{v_j}(t)\right)
       \\
       &+ \mathbf{x^*}(t)^TQ_i\mathbf{x^*}(t) + \sum_{j=1}^N \mathbf{v_j}^T(t)R_{ij}\mathbf{v_j}(t) \Bigg].
\end{aligned}
\end{equation}

Similar to what we had in equation~\eqref{eqn:null_cross_derivative}, we have that the cross-derivative $\frac{\partial }{\partial \mathbf{v_i}(t)} \mathbf{\partial_x J_i(t)}$ cancels out, and so:

\begin{equation}
\begin{aligned}
    & \frac{\partial}{\partial \mathbf{v_i}(t)} \mathcal{H}_i \Big( \mathbf{x^*}(t), \nabla J^*_i(t), (\mathbf{v_j}(t))_{j=1}^N \Big)\Big|_{\mathbf{v_i}(t) \equiv \mathbf{v_i^*}(t)} =\\
    & \left[\left(\mathbf{\partial_x J_i}^{*^T}(t) B_i \right)^T
        + 2R_{ii}\mathbf{v_i}(t)\right]\Bigg|_{\mathbf{v_i}(t) \equiv \mathbf{v_i^*}(t)} =\\ & \left[B_i^T \partial_x J^*_i(t) + 2R_{ii}\mathbf{v_i}(t)\right]\Bigg|_{\mathbf{v_i}(t) \equiv \mathbf{v_i^*}(t)} = 0.
\end{aligned}
\end{equation}

Solving this for $\mathbf{v_i}(t)$ yields the following optimal value $\mathbf{v^*_i}(t)$:

\begin{equation}
    \mathbf{v^*_i}(t) = - \frac{1}{2} R_{ii}^{-1}B_i^T \ \mathbf{\partial_x J_i}^{*}(t).
    \label{eqn:optimal_v_i}
\end{equation}
\end{proof}

\begin{definition}

Plugging the expression for the optimal inputs at equation~\eqref{eqn:optimal_v_i} to the Game Bellman equations at~\eqref{eqn:bellman_N}, along with the optimal state and cost functions derivative, we have the following set of \textbf{Game Hamilton-Jacobi-Bellman} (GHJB) equations:

\begin{equation}
    \begin{aligned}
     \mathbf{\partial_x J_i}^{*^T}(t) & \left[A\mathbf{x^*}(t) - \frac{1}{2} \sum _{j=1}^N B_j  R_{jj}^{-1}B_j^T \ \mathbf{\partial_x J_j}^{*}(t)\right]
       + \mathbf{x^*}(t)^TQ_i\mathbf{x^*}(t) \\
       & + \frac{1}{4}\sum_{j=1}^N \mathbf{\partial_x J_j}^{*^T}(t)B_jR_{jj}^{-1} R_{ij} R_{jj}^{-1}B_j^T \ \mathbf{\partial_x J_j}^{*}(t) = 0
       \label{eqn:HJB_i}
    \end{aligned}
\end{equation}

Notice that all the derivation so far  holds for both the finite and infinite horizon case.
\end{definition}

Now let us consider each case separately.\footnote{It is well established at Chapter 3 of Part I and Chapter 6 of Part II at~\cite{Basar} that the optimal cost to the open-loop Nash Equilibrium optimal control problem with quadratic cost functions is also quadratic, as was in the LQR case. We will thus use this result as a given.
}

\subsection{Continuous Infinite Horizon LTI Game}
Let us consider an infinite time interval ($T_f \rightarrow \infty$). To this end, let us propose expressions for the aforementioned functions that will result in the following infinite-horizon \textbf{Game Linear Quadratic Regulators} (GLQR) of the form:

\begin{equation}\label{inf_objective_function}
\begin{aligned}
        J_i\Big(\mathbf{x}(t), (\mathbf{v_j}(t))_{j=1}^N, t\Big) = 
       \int_{t}^{\infty} \Big[\mathbf{x}(\tau)^TQ_i\mathbf{x}(\tau) + \sum_{j=1}^N \mathbf{v_j}^T(\tau)R_{ij}\mathbf{v_j}(\tau)\Big]\mathrm{d}\tau
\end{aligned}
\end{equation}
\begin{proposition}
Given a decomposed LTI system $\mathcal{S}_N$ with GLQR cost functions $(J_i)_{i=1}^N$, let us consider the Open-Loop Nash Equilibrium optimal control problem given in~\eqref{eqn:optimal_nash_control_problem_formulation}, in the case where $T_f \rightarrow \infty$. Let us assume there exist a series of open-loop optimal solutions $\left(\mathbf{v_i^*}(t)\right)_{i=1}^N$ to the problem. Then $\left(\mathbf{v_i^*}(t)\right)_{i=1}^N$ are all in fact closed-loop feedback controllers and are all linear with regards to the optimal state, with corresponding constant coefficients, i.e. they are of the form:
\begin{equation}
    \mathbf{v_i^*}(t) = - K_i^*\mathbf{x^*}(t),
\end{equation}
for some constants $\left(K_i^*\right)_{i=1}^N$ where $K_i^* \in \mathbb{R}^{n \times m}$.
\end{proposition}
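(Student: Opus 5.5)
The plan mirrors the proof given for the single-player infinite-horizon LQR in Subsection~\ref{subsection:cont_inf_lqr}, now applied to each player separately. First I invoke the fact cited in the footnote above, namely that by Chapter~6 of~\cite{Basar} the optimal cost of each player in the quadratic game is quadratic in the state. In the infinite-horizon case the dynamics are time-invariant and the integrand of~\eqref{inf_objective_function} does not depend explicitly on time. The remaining cost from any time $t$ onward therefore depends on $t$ only through $\mathbf{x^*}(t)$. This justifies the ansatz
\begin{equation*}
    J_i^*(t) \equiv \mathbf{x^*}^T(t) P_i \mathbf{x^*}(t), \qquad 1 \leq i \leq N,
\end{equation*}
with constant, real, symmetric matrices $P_i = P_i^T \in \mathbb{R}^{n \times n}$, exactly as in~\eqref{eqn:infinite_horizon_optimal_cost}.

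Next I differentiate this ansatz with respect to the state using the same matrix-calculus identity as in~\eqref{eqn:dJ_dx}. This gives $\mathbf{\partial_x J_i}^{*}(t) = (P_i + P_i^T)\mathbf{x^*}(t) = 2P_i\mathbf{x^*}(t)$. Substituting into the already established GPMP expression~\eqref{eqn:optimal_v_i} yields
\begin{equation*}
    \mathbf{v^*_i}(t) = -\frac{1}{2}R_{ii}^{-1}B_i^T \cdot 2P_i\mathbf{x^*}(t) = -R_{ii}^{-1}B_i^T P_i \mathbf{x^*}(t).
\end{equation*}
Setting $K_i^* \coloneqq R_{ii}^{-1}B_i^T P_i$, which is a constant matrix (dimensionally in $\mathbb{R}^{m_i \times n}$), gives $\mathbf{v_i^*}(t) = -K_i^*\mathbf{x^*}(t)$. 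This is the claimed linear constant-gain closed-loop form. Note that $R_{ii}^{-1}$ exists because $R_{ii} > 0$.

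For consistency, I would then plug these gains into the GHJB equations~\eqref{eqn:HJB_i}. Each equation becomes a quadratic form $\mathbf{x^*}^T(t)\,[\cdots]\,\mathbf{x^*}(t) = 0$. After symmetrizing the $P_iA$ term as in~\eqref{eqn:2xPAx}, this yields a set of $N$ coupled algebraic Riccati equations in $(P_i)_{i=1}^N$. The existence of the assumed open-loop solution means these equations admit the constant solution that produced it, which closes the argument. Writing out this coupled system explicitly is not needed for the proposition itself.

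The main obstacle is the justification that each $P_i$ is time-independent rather than a general $P_i(t)$. I would handle it by the stationarity argument above: shifting the initial time leaves the infinite-horizon game invariant, so the value of the game is a function of the state alone. Combined with the quadratic structure from~\cite{Basar}, this forces a constant $P_i$. A secondary subtlety is that the open-loop inputs only coincide with feedback laws along $\mathbf{x^*}(t)$. That is precisely what the statement claims, so no further argument is needed.
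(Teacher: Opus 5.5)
Your proposal is correct and follows the paper's proof essentially step for step: the constant symmetric quadratic ansatz $J_i^*(t)=\mathbf{x^*}^T(t)P_i\mathbf{x^*}(t)$, the gradient $\mathbf{\partial_x J_i}^{*}(t)=2P_i\mathbf{x^*}(t)$, substitution into \eqref{eqn:optimal_v_i} to get $K_i^*=R_{ii}^{-1}B_i^TP_i$, and then the GHJB-to-GCARE consistency step. Your time-invariance remark for constant $P_i$ and the corrected dimension $K_i^*\in\mathbb{R}^{m_i\times n}$ just make explicit what the paper leaves implicit; as in the paper, the argument rests on the quadratic-cost fact taken as given from \cite{Basar}.
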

\begin{proof}

 Similar to what was proposed at equation~\eqref{eqn:infinite_horizon_optimal_cost}, let us employ the same reasoning to suggest the following quadratic values with regards to the optimal state, for the optimal costs values $(J_i^*)_{i=1}^N$:

\begin{equation}
    J_i^*(t) \equiv \mathbf{x^*}^T(t) P_i \mathbf{x^*}(t),
    \label{eqn:infinite_horizon_optimal_cost_N}
\end{equation}

given a set of positive-definite matrices $(P_i)_{i=1}^N$ where $\in \mathbb{R}^{n \times n}$ that we will determine later on using the Game Bellman equations. Using vector calculus~\cite{matrix_cookbook} on~\eqref{eqn:infinite_horizon_optimal_cost_N} and using the fact that the matrices $(P_i)_{i=1}^N$ are positive-definite and thus symmetric, we have:
\begin{equation}
    \mathbf{\partial_x J_i}^{*}(t) = (P_i+P_i^T)\mathbf{x^*}(t) = 2P_i \mathbf{x^*}(t).
    \label{eqn:inf_horizon_optimal_cost_derivative}
\end{equation}

Plugging this to equation \eqref{eqn:optimal_v_i}, we have:

\begin{equation}
    \mathbf{v^*_i}(t) = -  R_{ii}^{-1}B_i^T P_i \mathbf{x^*}(t).
    \label{eqn:optimal_inf_solution}
\end{equation}

Denoting the set of optimal controllers:

\begin{equation}
    K^*_i \coloneqq R_{ii}^{-1}B_i^T P_i,
\end{equation}

we have:

\begin{equation}
    \mathbf{v^*_i}(t) = -  K_i^* \mathbf{x^*}(t).
\end{equation}

Thus in this case, as was for the regular LQR case, the optimal controller is linear with respect to the state. Plugging the optimal costs derivatives from~\eqref{eqn:inf_horizon_optimal_cost_derivative} to the GHJB equations at~\eqref{eqn:HJB_i} we have the set of infinite horizon GHJB equations:

\begin{equation}
    \begin{aligned}
     2\mathbf{x^*}^T(t) P_i  & \left[A\mathbf{x^*}(t) -  \sum _{j=1}^N B_j  R_{jj}^{-1}B_j^T \ P_j \mathbf{x^*}(t)\right]
       + \mathbf{x^*}(t)^TQ_i\mathbf{x^*}(t) \\
       & + \sum_{j=1}^N \mathbf{x^*}^T(t)P_j B_jR_{jj}^{-1} R_{ij} R_{jj}^{-1}B_j^T P_j \mathbf{x^*}(t) = 0
    \end{aligned}
\end{equation}


Rearranging we have:
\begin{equation}
    \begin{aligned}
    \mathbf{x^*}^T(t) \Bigg[ & 2 P_i  \ \left(A -  \sum _{j=1}^N B_j  R_{jj}^{-1}B_j^T P_j \right) + Q_i\\
       & + \sum_{j=1}^N P_j B_jR_{jj}^{-1} R_{ij} R_{jj}^{-1}B_j^T P_j \Bigg] \mathbf{x^*}(t)= 0
       \label{eqn:HJB_i_1}
    \end{aligned}
\end{equation}

Let us denote:

\begin{equation}
    \begin{aligned}
     A_{cl} \coloneqq A - \sum _{j=1}^N B_j R_{jj}^{-1} B_j^T P_j.
    \end{aligned}
\end{equation}

Plugging this into~\eqref{eqn:HJB_i_1} we have:

\begin{equation}
    \begin{aligned}
    \mathbf{x^*}^T(t) \Bigg[ & 2 P_i A_{cl} + Q_i + \sum_{j=1}^N P_j B_jR_{jj}^{-1} R_{ij} R_{jj}^{-1}B_j^T P_j \Bigg] \mathbf{x^*}(t)= 0
       \label{eqn:HJB_i_2}
    \end{aligned}
\end{equation}

Let us consider the first term $2 \mathbf{x^*}^T(t) P_i A_{cl} \mathbf{x^*}(t)$. By the shapes of all the matrices involved, we can notice the shape of the term is $(1 \times n) \times (n \times n) \times (n \times n) \times (n \times 1) = (1 \times 1)$, so it is a scalar. Thus it is equal to its transpose. By the same reasoning demonstrated in equation~\eqref{eqn:2xPAx} we have:

\begin{equation}
    2 \mathbf{x^*}^T(t) P_i A_{cl} \mathbf{x^*}(t) = \mathbf{x^*}^T(t) \left( P_i A_{cl} + A_{cl}^T P_i\right) \mathbf{x^*}(t)
\end{equation}

Plugging this into~\eqref{eqn:HJB_i_2} we have:

\begin{equation}
    \begin{aligned}
    \mathbf{x^*}^T(t) \Bigg[ & P_i A_{cl} + A_{cl}^T P_i + Q_i + \sum_{j=1}^N P_j B_jR_{jj}^{-1} R_{ij} R_{jj}^{-1}B_j^T P_j \Bigg] \mathbf{x^*}(t)= 0
    \end{aligned}
\end{equation}

This must hold for all optimal inputs $\mathbf{x}(t)$ and thus,  similar to what was shown in equation~\eqref{eqn:HJB}, this yields a set of \textbf{Game Continuous Algebraic Riccati Equations} (GCARE):

\begin{equation}
    \begin{aligned}
     P_iA_{cl} + A^T_{cl} P_i + Q_i  + \sum_{j=1}^N P_j B_j R_{jj}^{-1} R_{ij}R_{jj}^{-1} B_j^T P_j
       = 0.
    \end{aligned}
    \label{eqn:GCARE}
\end{equation}
\end{proof}
\subsection{Solutions to a System of GCAREs}

Note that when using our proposed method, by considering multiple objectives, we have introduced an additional layer of complexity with respect to a regular LQR; the GCAREs at equation~\eqref{eqn:GCARE} is actually a set of $N$ coupled algebraic matrix Riccati equations (or CAREs) for the matrices $(P_i)_{i=1}^N$, which have to be solved simultaneously. We would now like to establish claims regarding the properties of the solutions to the set of GCAREs.

Section 10.4 of Chapter 10 at~\cite{optimal_control_book} deals with multiplayer Nash differential games and the corresponding GCAREs that arise in them. As it well establishes, these coupled equations are very difficult to solve.

As an extension of the reasoning for the number of solutions to a single CARE provided in~\ref{sect:single_CARE_solution_num}, we would like to apply Bézout's Theorem~\cite{bezout} once again to establish the following natural (as the proof is trivial when considering the difference in sizes) extension of Proposition~\ref{prop:balanced_CARE} regarding the solutions to~\eqref{eqn:GCARE}:

\begin{theorem}
The set of GCAREs at~\eqref{eqn:GCARE} admits a balanced set of equations.
\label{the:gcare_bal}
\end{theorem}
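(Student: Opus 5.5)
The plan is to reduce the statement to Proposition~\ref{prop:balanced_CARE}, applied once per equation. I will count scalar unknowns and independent scalar equations separately and show the two counts agree.

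\textbf{Unknowns.} The unknowns of the set~\eqref{eqn:GCARE} are the $N$ matrices $(P_i)_{i=1}^N$. Each is assumed real and symmetric in the derivation (positive-definite, hence symmetric, as used in~\eqref{eqn:inf_horizon_optimal_cost_derivative}). So each $P_i$ has exactly $T_n = n(n+1)/2$ free scalar entries, namely those on and above the diagonal. The total number of scalar unknowns is therefore $u = N T_n$.

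\textbf{Equations.} The key step is to verify that the left-hand side of the $i$'th GCARE,
\begin{equation*}
F_i(P_1,\dots,P_N) \coloneqq P_iA_{cl} + A^T_{cl} P_i + Q_i + \sum_{j=1}^N P_j B_j R_{jj}^{-1} R_{ij}R_{jj}^{-1} B_j^T P_j,
\end{equation*}
is a symmetric matrix whenever all $P_j$ are symmetric. This holds term by term:
\begin{itemize}
\item $P_iA_{cl} + A_{cl}^TP_i$ is the sum of a matrix and its transpose, since $P_i^T=P_i$ (note that $A_{cl}$ itself is not symmetric, so this pairing is what is needed);
\item $Q_i$ is symmetric, being positive semi-definite;
\item each summand $P_jB_jR_{jj}^{-1}R_{ij}R_{jj}^{-1}B_j^TP_j$ has the form $Y^T R_{ij} Y$ with $Y = R_{jj}^{-1}B_j^TP_j$. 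This uses $R_{jj}^{-1}$ symmetric, which follows from $R_{jj}>0$, and $R_{ij}$ symmetric.
\end{itemize}
Consequently, as in the proof of Proposition~\ref{prop:balanced_CARE}, the $(r,s)$ and $(s,r)$ scalar equations of each $F_i=0$ coincide. Each matrix equation then contributes exactly $T_n$ distinct scalar polynomial equations, those on and above the diagonal, and these are quadratic in the entries of $(P_j)_{j=1}^N$. Summing over the $N$ coupled equations gives $e = N T_n$.

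\textbf{Conclusion and main obstacle.} Since $e = u = NT_n$, the set is balanced in the sense of the definition accompanying Proposition~\ref{prop:balanced_CARE}. The only delicate point is the symmetry of $F_i$. This is where the coupling enters: the cross terms involve $P_j$ for $j\neq i$, and the mixed term $P_iA_{cl}$ contains the products $P_iB_jR_{jj}^{-1}B_j^TP_j$, which are not symmetric on their own. I would handle this by never splitting $A_{cl}$ and instead keeping $P_iA_{cl}$ paired with its transpose $A_{cl}^TP_i$. I would also state explicitly the implicit assumption that each $R_{ij}$ is symmetric, which is the standard convention for positive-definite weights.
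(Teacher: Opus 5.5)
Your proposal is correct and takes the same route as the paper. The paper asserts the theorem as the ``trivial'' $N$-fold extension of Proposition~\ref{prop:balanced_CARE}: $N$ symmetric unknowns with $T_n$ free entries each, matched by $N$ symmetric matrix equations that each contribute $T_n$ distinct scalar equations, so $e = u = NT_n$. Your term-by-term check that the coupled left-hand side $F_i$ stays symmetric (pairing $P_iA_{cl}$ with $A_{cl}^TP_i$, and writing each cross term as $Y^TR_{ij}Y$) fills in the step the paper leaves implicit.
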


Using this and also extending Corollary~\ref{cor:CARE_exp}, we have the following result:

\begin{theorem}
Let us consider the set of GCAREs at~\eqref{eqn:GCARE}, and assume it is well-behaved. Then the maximal number of solutions to the set of GCAREs is exponential with respect to $N \cdot n^2$.
\label{the:gcare_exp}
\end{theorem}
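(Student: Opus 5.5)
The plan mirrors the proof of Corollary~\ref{cor:CARE_exp}, with the single CARE replaced by the coupled system \eqref{eqn:GCARE}. First, I will use Theorem~\ref{the:gcare_bal} to fix the counting. Each $P_i$ is symmetric, so it carries exactly $T_n = n(n+1)/2$ scalar unknowns. The left-hand side of the $i$'th equation in \eqref{eqn:GCARE} is also symmetric: $P_iA_{cl}+A_{cl}^TP_i$ is symmetric, and every term $P_jB_jR_{jj}^{-1}R_{ij}R_{jj}^{-1}B_j^TP_j$ is symmetric because $R_{ij}$ and $R_{jj}$ are symmetric. Hence the lower-triangular scalar equations duplicate the upper-triangular ones, exactly as in Proposition~\ref{prop:balanced_CARE}. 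This leaves $e = N\cdot T_n$ polynomial equations in $u = N\cdot T_n$ unknowns $\{p^{(i)}_{r,s}\}$.

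Second, I will bound the degree of each scalar equation. Substituting $A_{cl} = A - \sum_j B_jR_{jj}^{-1}B_j^TP_j$ gives the following terms in the $i$'th equation:
\begin{itemize}
  \item $P_iA$, $A^TP_i$ and $Q_i$, each of degree at most $1$;
  \item $P_iB_jR_{jj}^{-1}B_j^TP_j$ and its transpose, bilinear in $(P_i,P_j)$ and hence of degree $2$;
  \item $P_jB_jR_{jj}^{-1}R_{ij}R_{jj}^{-1}B_j^TP_j$, quadratic in $P_j$.
\end{itemize}
So every scalar polynomial $T_\ell$ has $d(T_\ell)\le 2$. Here the coupling only mixes variables across players; it does not raise the degree, and this is the point that makes the extension natural.

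Third, under the well-behavedness assumption (consistent, balanced, zero-dimensional), I will invoke Lemma~\ref{the:bez}. This yields
\begin{equation}
\mathcal{N} = O\left(\prod_{\ell=1}^{N T_n} d(T_\ell)\right) = O\left(2^{N T_n}\right) = O\left(2^{N n^2}\right),
\end{equation}
which is exponential in $N\cdot n^2$, as claimed.

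The main obstacle is the degree step. A careless expansion might suggest that products across players raise the total degree, or that some equation degenerates to degree below $2$, for instance when $B_j=0$. The former cannot happen, since each term contains at most two $P$ factors. The latter only lowers the Bézout bound and so does not affect an upper bound. I would therefore state the result as an upper bound with $d(T_\ell)\le 2$, consistent with how Corollary~\ref{cor:CARE_exp} reads ``maximal number''.
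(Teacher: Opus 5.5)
Your argument is correct and follows the route the paper intends. The paper gives no separate proof; it only says the result follows by applying Bézout's theorem once more, extending Corollary~\ref{cor:CARE_exp} to the balanced system of Theorem~\ref{the:gcare_bal}, i.e.\ $N T_n$ scalar quadratic equations in $N T_n$ unknowns, giving $O(2^{N T_n}) = O(2^{N n^2})$, and your explicit checks that each left-hand side of \eqref{eqn:GCARE} is symmetric and that the cross-player coupling keeps every scalar equation of degree at most $2$ supply the details the paper leaves as trivial (like the paper, you obtain only an upper bound).
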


This last result demonstrates we have a balanced system with the same respective maximal number of solutions just as a usual CARE does, assuming it is well behaved. But can we know beforehand it will be well behaved? So what can we say about existence of solutions?

\subsubsection{GCAREs Solutions Existence and Uniqueness}
We may intuitively argue that the set of GCAREs will yield solutions in similar fashion, respective to scaling considerations, as a single CARE. As in, there may be no solution, but under some considerations we may be able to guarantee existence of a solution. Sections 11.4 of~\cite{riccati_Equation} and 6.5.3 of~\cite{Basar} both tell us otherwise. In the case of the coupled matrix Riccati equations, there is no general set of conditions that would
guarantee uniqueness or even existence of a solution. So we cannot state a game-complementary theorem such as~\ref{CARE_Existence_Uniqueness} to the GCAREs. Let us then present what we do know for sure:

\begin{theorem}[\textbf{GCAREs Nash Equilibrium} Proposition 6.8 in page 337 of \cite{Basar}]
Let us consider the set of GCAREs at~\eqref{eqn:GCARE}, and assume it is well-behaved. Let us assume there exists a series $(P^*_i)_{j=1}^N$ that solves the GCAREs. With that, denoting:
\begin{equation}
\begin{aligned}
    A^*_{{cl}_i} \coloneqq A - \sum_{\substack{j=1 \\ j \neq i}}^N B_j R_{jj}^{-1} B_j^T P^*_j,
\end{aligned}
\end{equation}
let us further assume the following conditions hold for all $1 \leq i \leq N$:
\begin{itemize}
    \item The pair $\left(A^*_{{cl}_i} \ , \  B_i\right)$ is stabilizable;
    \item The pair $\left(A^*_{{cl}_i} \ , \  \sqrt{Q_i + \sum_{\substack{j=1 \\ j \neq i}}^N P^*_j B_j R_{jj}^{-1}R_{ij}R_{jj}^{-1}B_j^TP^*_j} \right)$ is detectable.
\end{itemize}
Then the following holds:
\begin{enumerate}
    \item The optimal policies at~\eqref{eqn:optimal_inf_solution} assigned with $(P^*_i)_{j=1}^N$, i.e.:
\begin{equation}
    \mathbf{v^*_i}(t)\Big|_{P_i \equiv P^*_i} = -  R_{ii}^{-1}B_i^T P^*_i \mathbf{x^*}(t),
\end{equation}
provide a solution to the open-loop Nash Equilibrium~\eqref{nash_equilibrium} problem, thus in fact providing a closed-loop feedback;
\item The optimal costs in the Nash Equilibrium sense as $t \rightarrow \infty$ converge to the following forms:
\begin{equation}
    \lim_{t \rightarrow \infty} J_i^*(t) =  \lim_{t \rightarrow \infty} \mathbf{x^*}^T(t) P_i \mathbf{x^*}(t) = \mathbf{x_0}^T P_i \mathbf{x_0};
\end{equation}
\item The resulting closed-loop dynamics 
\begin{equation}
    \dot{\mathbf{x}}(t) = A^*_{{cl}}\mathbf{x}(t),
\end{equation}
where $A^*_{{cl}} \coloneqq A_{cl}\big|_{(P_j)_{j=1}^N \equiv (P^*_j)_{j=1}^N} = A - \sum_{j=1}^N B_j R_{jj}^{-1} B_j^T P^*_j$, is asymptotically stable.
\end{enumerate}
\label{the:GCAREs_no_cond}
\end{theorem}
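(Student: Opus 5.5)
The plan is to reduce the $N$-player statement to $N$ single-player infinite-horizon LQR problems, one per player. Each of these can then be settled by Corollary~\ref{CARE_Existence_Uniqueness} together with a completion-of-squares argument.

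Fix $i$ and freeze the other players at their candidate feedback strategies $\mathbf{v_j}=-R_{jj}^{-1}B_j^TP^*_j\mathbf{x}$, $j\neq i$. Substituting into the dynamics \eqref{eqn:sys_with_B1___Bn} gives
\[
\dot{\mathbf{x}}=A^*_{cl_i}\mathbf{x}+B_i\mathbf{v_i}.
\]
Substituting into the running cost $r_i$ gives $\mathbf{x}^T\tilde Q_i\mathbf{x}+\mathbf{v_i}^TR_{ii}\mathbf{v_i}$, where
\[
\tilde Q_i\coloneqq Q_i+\sum_{j\neq i}P^*_jB_jR_{jj}^{-1}R_{ij}R_{jj}^{-1}B_j^TP^*_j\geq 0 .
\]
This is exactly the LQR problem of Subsection~\ref{subsection:cont_inf_lqr} with data $(A^*_{cl_i},B_i,\tilde Q_i,R_{ii})$.

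The central algebraic step is to show that the $i$'th GCARE \eqref{eqn:GCARE} is precisely the CARE \eqref{eqn:riccati} of this reduced problem. To see this, expand $A_{cl}=A^*_{cl_i}-B_iR_{ii}^{-1}B_i^TP^*_i$. The cross terms $-P^*_iB_iR_{ii}^{-1}B_i^TP^*_i$, counted twice, combine with the $j=i$ summand $+P^*_iB_iR_{ii}^{-1}B_i^TP^*_i$. The result is
\[
P^*_iA^*_{cl_i}+A^{*T}_{cl_i}P^*_i+\tilde Q_i-P^*_iB_iR_{ii}^{-1}B_i^TP^*_i=0.
\]
By the assumed stabilizability of $(A^*_{cl_i},B_i)$ and detectability of $(A^*_{cl_i},\sqrt{\tilde Q_i})$, Corollary~\ref{CARE_Existence_Uniqueness} and Theorem~\ref{the:unique_stab_detectable} then apply. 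I would argue that $P^*_i$ is the unique positive semi-definite solution of this CARE, and hence that it is stabilizing. Here I would additionally need $P^*_i\geq 0$, which I expect must be read into ``well-behaved'' or added as an assumption. Stabilizing means that $A^*_{cl_i}-B_iR_{ii}^{-1}B_i^TP^*_i=A^*_{cl}$ is Hurwitz, which proves item 3.

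For items 1 and 2, I would use completion of squares along any admissible $\mathbf{v_i}$, namely one that makes the cost finite and drives $\mathbf{x}\to 0$. Differentiating $\mathbf{x}^TP^*_i\mathbf{x}$ along trajectories and using the reduced CARE gives
\[
\tfrac{d}{dt}\mathbf{x}^TP^*_i\mathbf{x}=-\mathbf{x}^T\tilde Q_i\mathbf{x}-\mathbf{v_i}^TR_{ii}\mathbf{v_i}+(\mathbf{v_i}+R_{ii}^{-1}B_i^TP^*_i\mathbf{x})^TR_{ii}(\mathbf{v_i}+R_{ii}^{-1}B_i^TP^*_i\mathbf{x}).
\]
Integrating from $t$ to $\infty$, and using that $\mathbf{x}(\infty)=0$ by admissibility, yields
\[
J_i=\mathbf{x}^T(t)P^*_i\mathbf{x}(t)+\int_t^\infty\|\mathbf{v_i}+R_{ii}^{-1}B_i^TP^*_i\mathbf{x}\|^2_{R_{ii}}\,d\tau .
\]
This is minimized exactly by the feedback $\mathbf{v^*_i}=-R_{ii}^{-1}B_i^TP^*_i\mathbf{x}$. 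Since this holds for every $i$, inequality \eqref{nash_equilibrium} is satisfied, which proves item 1. The equilibrium value is $\mathbf{x}^T(t)P^*_i\mathbf{x}(t)$, and at $t=t_0$ this is $\mathbf{x_0}^TP^*_i\mathbf{x_0}$, which is item 2, read as the cost-to-go from the initial state. I would note that the limit as written degenerates to $0$ by item 3, so the intended content is the value at $t_0$.

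The main obstacle is the bookkeeping in the algebraic identification of the GCARE with the reduced CARE, together with justifying positive semi-definiteness of $P^*_i$ so that the uniqueness theorem pins it down as the stabilizing solution. If positivity cannot be extracted, I would instead take ``$P^*_i$ stabilizing'' as the working hypothesis and note where the detectability assumption is used.
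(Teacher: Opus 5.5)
The thesis gives no proof of this statement; it only cites the source, so there is nothing internal to compare against. Your reduction is the standard argument: freeze the opponents at their feedback laws, recognise the $i$th equation of \eqref{eqn:GCARE} as the CARE of the LQR problem with data $(A^*_{cl_i},B_i,\tilde Q_i,R_{ii})$, get stability from single-CARE theory, then complete squares. Your algebra is right, as are the integrated completion-of-squares identity and your reading of item~2 as $J_i^*(t_0)=\mathbf{x_0}^TP^*_i\mathbf{x_0}$.

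Your worry about $P^*_i\geq 0$ is justified, and it cannot be discharged from ``well-behaved'', which in the thesis only means consistent, balanced and zero-dimensional. Take $N=1$, $A=0$, $B=Q=R=1$. The equation reads $1-P^2=0$ and both pair conditions hold. But $P^*=-1$ gives $A^*_{cl}=1$, and the feedback $v=x$ has infinite cost, so items 1--3 all fail. The hypothesis must therefore be added. It is enough for a single index: $P^*_i\geq 0$ together with that index's stabilizability and detectability makes $A^*_{cl}$ Hurwitz. After that, your completion of squares gives items 1 and 2 for every player, with no sign condition on the other $P^*_j$.

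The step that does not go through is the last one, ``inequality \eqref{nash_equilibrium} is satisfied''. In your reduction the opponents keep playing $-R_{jj}^{-1}B_j^TP^*_j\mathbf{x}$ on the \emph{deviated} state. What you have proved is therefore a feedback Nash equilibrium. But \eqref{nash_equilibrium} is posed as an open-loop equilibrium, in which the opponents' signals $\mathbf{v^*_j}(t)$ are frozen, and in that sense item~1 is false in general.

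Take $\dot x=v_1+v_2$, $Q_i=R_{ii}=1$, $R_{12}=R_{21}=\rho\in(0,1)$. The symmetric solution of \eqref{eqn:GCARE} is $p=(3-\rho)^{-1/2}>0$. Every hypothesis, including $P^*_i\geq 0$, holds, and $A^*_{cl}=-2p$. Against the frozen signal $v_2(t)=-px_0e^{-2pt}$, player 1's unique best response is $v_1=-x+\tfrac{p}{1+2p}x_0e^{-2pt}$. Along $x^*(t)=x_0e^{-2pt}$ this equals $-\tfrac{1+p}{1+2p}x^*$, which coincides with $-px^*$ only when $2p^2=1$, i.e.\ $\rho=1$.

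This failure is structural. The term $-\sum_{j\neq i}P_jB_jR_{jj}^{-1}B_j^TP_i$ inside $A_{cl}^TP_i$, and the $R_{ij}$ terms with $j\neq i$ in \eqref{eqn:GCARE}, exist only because opponents react to the state. With frozen opponent signals, $R_{ij}$ cannot influence player $i$'s best response at all. So state item~1 explicitly as a feedback (closed-loop) Nash equilibrium, which is the notion these coupled equations actually encode. With that change and $P^*_i\geq 0$ added, your argument is complete.
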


This last discussion illustrates the claim presented at the introduction for the solution of coupled AREs, at~\ref{sect:ARE_intro}, as if there exists a series $(P^*_i)_{j=1}^N$ that solves the GCAREs, along with other conditions, then we are guaranteed the solution will yield a Nash equilibrium with a stable closed-loop dynamics.

\subsection{Continuous Finite Horizon LTI Game}
Similar to what was shown in the finite horizon case in Appendix \ref{subsect:finite_horizon_con_LQR}, a possible alternative formulation for the problem can be defined using the finite horizon case where the cost function is calculated up to a finite time $T_f < \infty$:

\begin{equation}\label{finite_objective_function}
\begin{aligned}
         J_i\Big(\mathbf{x}(t), (\mathbf{v_j}(t))_{j=1}^N, t\Big) = & \mathbf{\mathbf{x}}^T(T_f)Q_{f_i}\mathbf{\mathbf{x}}(T_f)  +\\  \int_{t}^{T_f} & \Big[\mathbf{x}(\tau)^TQ_i\mathbf{x}(\tau)  + \sum_{j=1}^N \mathbf{v}_j^T(\tau)R_{ij}\mathbf{v}_j(\tau)\Big]\mathrm{d}\tau
\end{aligned}
\end{equation}

with the same matrix sizes as described in equations~\eqref{inf_objective_function} and where $Q_{f_i} \in \mathbb{R}^{n \times n}$ are additional design parameters that account for the weights of the final state, defined for each $1 \leq i \leq N$, similar to the matrix $Q_f$ in equation \eqref{eqn:finite_value_integral}.

\begin{proposition}
Given a decomposed LTI system $\mathcal{S}_N$ with GLQR cost functions, let us consider the Nash Equilibrium optimal control problem given in~\eqref{nash_equilibrium} along with its induced optimal state trajectory $\mathbf{x^*}(t)$. Let us assume there exist a series of open-loop optimal solutions $\left(\mathbf{v_i^*}(t)\right)_{i=1}^N$ to the problem. Then $\left(\mathbf{v_i^*}(t)\right)_{i=1}^N$ are all in fact closed-loop feedback controllers and are all linear with regards to the optimal state, with corresponding time-varying coefficients, i.e. they are of the form:
\begin{equation}
    \mathbf{v_i^*}(t) = - K_i^*(t)\mathbf{x^*}(t),
\end{equation}
for some functions $\left(K_i^*(t)\right)_{i=1}^N$ where $K_i^* \ \colon \ \mathcal{I} \rightarrow \mathbb{R}^{n \times m}$.
\end{proposition}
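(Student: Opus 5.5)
The plan is to mirror the finite-horizon LQR argument of Subsection~\ref{subsect:finite_horizon_con_LQR} player by player, starting from the already-established GPMP stationarity result~\eqref{eqn:optimal_v_i} and the GHJB equations~\eqref{eqn:HJB_i}. These were derived for general $T_f$, so they hold unchanged for the finite-horizon costs~\eqref{finite_objective_function}. Invoking the quadratic-cost result of~\cite{Basar} quoted in the footnote (the analogue of Theorem~\ref{the:lqr_quad_cost}), I will posit
\begin{equation*}
J_i^*(t) \equiv \mathbf{x^*}^T(t) P_i(t) \mathbf{x^*}(t), \qquad 1 \leq i \leq N,
\end{equation*}
with time-varying symmetric matrices $P_i \colon \mathcal{I} \rightarrow \mathbb{R}^{n \times n}$, to be determined, subject to the terminal conditions $P_i(T_f) = Q_{f_i}$. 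These terminal conditions are forced by evaluating~\eqref{finite_objective_function} at $t = T_f$.

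First, differentiating the ansatz in $\mathbf{x}$ gives $\mathbf{\partial_x J_i}^*(t) = 2P_i(t)\mathbf{x^*}(t)$, exactly as in~\eqref{eqn:inf_horizon_optimal_cost_derivative}, because $P_i(t)$ does not depend on the state. Substituting into~\eqref{eqn:optimal_v_i} yields
\begin{equation*}
\mathbf{v_i^*}(t) = -R_{ii}^{-1}B_i^T P_i(t)\mathbf{x^*}(t).
\end{equation*}
Setting $K_i^*(t) \coloneqq R_{ii}^{-1}B_i^T P_i(t)$ establishes the claimed linear feedback form with time-varying gains. This step is essentially immediate.

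Second, to show that such $P_i(t)$ actually exist, I will compute $\partial_t J_i^*$ by the product rule as in~\eqref{eqn:J_time_derivative}. This now includes the extra term $\mathbf{x^*}^T\dot P_i\mathbf{x^*}$. I will insert it, together with the closed-loop dynamics $\dot{\mathbf{x}}^* = A_{cl}(t)\mathbf{x^*}$, where
\begin{equation*}
A_{cl}(t) = A - \sum_j B_jR_{jj}^{-1}B_j^TP_j(t),
\end{equation*}
into the Game Bellman equations~\eqref{eqn:bellman_N}. Symmetrizing the scalar $\mathbf{x}^TP_iA_{cl}\mathbf{x}$ as in~\eqref{eqn:2xPAx} and requiring the resulting identity to hold for all trajectories gives the coupled game differential Riccati equations
\begin{equation*}
\dot P_i(t) + P_i(t)A_{cl}(t) + A_{cl}^T(t)P_i(t) + Q_i + \sum_{j=1}^N P_j(t)B_jR_{jj}^{-1}R_{ij}R_{jj}^{-1}B_j^TP_j(t) = 0,
\end{equation*}
with $P_i(T_f)=Q_{f_i}$. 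These are to be integrated backwards in time. Their right-hand side is polynomial, hence locally Lipschitz, in $(P_1,\dots,P_N)$, so a unique solution exists at least locally near $T_f$. Symmetry of each $P_i(t)$ is preserved because the transposed system has the same form and the same terminal data.

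The main obstacle is ensuring the backward solution exists on all of $\mathcal{I}$, that is, that it does not escape in finite time. Unlike the single CDRE, there is no general guarantee of this for coupled equations. I would handle it the way the statement is phrased: the hypothesis that an open-loop Nash solution exists, together with the quadratic-value result of~\cite{Basar}, supplies the $P_i$ on the whole interval, and the derivation above then shows that these $P_i$ must satisfy the coupled DREs. The feedback form follows on all of $\mathcal{I}$.
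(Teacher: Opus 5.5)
Your proposal is correct and follows the paper's proof essentially step for step. You use the same time-varying quadratic ansatz $J_i^*(t)=\mathbf{x^*}^T(t)P_i(t)\mathbf{x^*}(t)$ and substitute $\mathbf{\partial_x J_i}^{*}(t)=2P_i(t)\mathbf{x^*}(t)$ into~\eqref{eqn:optimal_v_i} to obtain $K_i^*(t)=R_{ii}^{-1}B_i^TP_i(t)$; you then apply the product rule and the Game Bellman equations~\eqref{eqn:bellman_N} to reach the coupled GDAREs~\eqref{eqn:GDARE} with $P_i(T_f)=Q_{f_i}$, solved backwards in time. Your caveat that the coupled equations may escape in finite time, so that Theorem~\ref{the:picard_lind} gives only local existence and global existence has to come from the hypothesis, is more careful than the paper, which asserts a unique solution on the whole interval simply because the GDAREs form a Cauchy problem.
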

\begin{proof}

We propose the following expressions for the optimal costs:

\begin{equation}
    J_i^* \equiv \mathbf{x^*}^T(t) P_i(t) \mathbf{x^*}(t),
    \label{eqn:finite_horizon_optimal_cost_N}
\end{equation}

where in this case each positive definite matrix $P_i : \mathcal{I} \rightarrow \mathbb{R}^{n \times n}$ is time-dependant. In this case we would have: $\mathbf{\partial_x J_i}^{*} = 2P_i(t) \mathbf{x^*}(t)$ and thus:

\begin{equation}
    \mathbf{v}_i^*(t) =-  R_{ii}^{-1} B_i^T P_i(t) \mathbf{x^*}(t).
    \label{eqn:finite_horizon_optimal_inputs}
\end{equation}

Denoting in this case the optimal controller as:

\begin{equation}
    K^*_i(t) =  R_{ii}^{-1} B_i^T P_i(t).
\end{equation}

we have:

\begin{equation}
    \mathbf{v}_i^*(t) =-  K^*_i(t) \mathbf{x^*}(t).
\end{equation}

So we have that the optimal controller in this case is also linear, but time-dependant this time. Appendix \ref{subsect:finite_horizon_con_LQR} describes the finite horizon case for the regular LQR (i.e. when $N=1$) by considering the temporal partial derivative of the optimal cost function. Equivalently, differentiating~\eqref{eqn:finite_horizon_optimal_cost_N} with respect to time while using the chain rule, as was done in~\eqref{eqn:J_time_derivative}, we have:

\begin{equation}
\begin{aligned}
    \partial_t J^{*}_i & = \dot{\mathbf{x}}^{*^T}(t)P_i(t) \mathbf{x^*}(t) + \mathbf{x^*}^T(t)\dot{P}_i(t) \mathbf{x^*}(t) + \mathbf{x^*}^T(t)P_i(t)\dot{\mathbf{x}}^*(t).
    \label{eqn:J_time_derivative_N}
\end{aligned}
\end{equation}

Plugging the optimal costs temporal derivatives from~\eqref{eqn:J_time_derivative_N}, along with the optimal inputs from~\eqref{eqn:finite_horizon_optimal_inputs} into the Game Bellman equations at~\eqref{eqn:bellman_N} we have:

\begin{equation}
    \begin{aligned}
     & \dot{\mathbf{x}}^{*^T}(t)P_i(t) \mathbf{x^*}(t) + \mathbf{x^*}^T(t)\dot{P}_i(t) \mathbf{x^*}(t) + \mathbf{x^*}^T(t)P_i(t)\dot{\mathbf{x}}^*(t) + \mathbf{x^*}^T(t)Q_i\mathbf{x^*}(t) \\
     & + \sum_{j=1}^N \left[R_{jj}^{-1} B_j^T P_j(t) \mathbf{x^*}(t)\right]^TR_{ij}R_{jj}^{-1} B_j^T P_j(t) \mathbf{x^*}(t) = 0
    \end{aligned}
\end{equation}

Plugging in the expression for $\mathbf{\dot{x}^*}(t)$ from the system model at~\eqref{eqn:sys_with_B1___Bn}, we have:

\begin{equation}
    \begin{aligned}
     & \left[A\mathbf{x^*}(t) + \sum _{j=1}^N B_j \mathbf{v^*_j}(t)\right]^{T}P_i(t) \mathbf{x^*}(t) + \mathbf{x^*}^T(t)\dot{P}_i(t) \mathbf{x^*}(t) 
     \\ & + \mathbf{x^*}^T(t)P_i(t)\left[A\mathbf{x^*}(t) + \sum _{j=1}^N B_j \mathbf{v^*_j}(t)\right] + \mathbf{x^*}^T(t)Q_i\mathbf{x^*}(t) \\
     & + \mathbf{x^*}^T(t)\sum_{j=1}^N P_j(t) B_j R_{jj}^{-1} R_{ij}R_{jj}^{-1} B_j^T P_j(t) \mathbf{x^*}(t) = 0
    \end{aligned}
\end{equation}

Plugging again the values for the optimal inputs, we have:

\begin{equation}
    \begin{aligned}
     & \left[\mathbf{x^*}^T(t)A^T - \sum _{j=1}^N \left(R_{jj}^{-1} B_j^T P_j(t) \mathbf{x^*}(t)\right)^TB^T_j \right]P_i(t) \mathbf{x^*}(t) + \mathbf{x^*}^T(t)\dot{P}_i(t) \mathbf{x^*}(t) 
     \\ & + \mathbf{x^*}^T(t)P_i(t)\left[A\mathbf{x^*}(t) - \sum _{j=1}^N B_j R_{jj}^{-1} B_j^T P_j(t) \mathbf{x^*}(t)\right] + \mathbf{x^*}^T(t)Q_i\mathbf{x^*}(t) \\
     & + \mathbf{x^*}^T(t)\sum_{j=1}^N P_j(t) B_j R_{jj}^{-1} R_{ij}R_{jj}^{-1} B_j^T P_j(t) \mathbf{x^*}(t) = 0
    \end{aligned}
    \label{eqn:finite_GHJB_1}
\end{equation}

Let us now denote:

\begin{equation}
    \begin{aligned}
     A_{cl}(t)  \coloneqq A - \sum _{j=1}^N B_j R_{jj}^{-1} B_j^T P_j(t).
    \end{aligned}
\end{equation}

Plugging $A_{cl}(t)$ into \eqref{eqn:finite_GHJB_1} and rearranging, we have:

\begin{equation}
    \begin{aligned}
     \mathbf{x^*}^T(t)\Big[ & A^T_{cl}(t)P_i(t) + \dot{P}_i(t) + P_i(t)A_{cl}(t) + Q_i 
     \\ & + \sum_{j=1}^N P_j(t) B_j R_{jj}^{-1} R_{ij}R_{jj}^{-1} B_j^T P_j(t) \Big] \mathbf{x^*}(t) = 0,
    \end{aligned}
\end{equation}

which yields a set of \textbf{Game Continuous Differential Riccati Equations} (GDARE):

\begin{equation}
\begin{aligned}
    \dot{P}_i(t) + A^T_{cl}(t)P_i(t) + P_i(t)A_{cl}(t) + Q_i + \sum_{j=1}^N P_j(t) B_j R_{jj}^{-1} R_{ij}R_{jj}^{-1} B_j^T P_j(t) = 0,
     \end{aligned}
     \label{eqn:GDARE}
\end{equation}
that along with appropriate terminal conditions of the form:
\begin{equation}
    \left(P_i(T_f)\right)_{i=1}^N \equiv \left(Q_{f_i}\right)_{i=1}^N
\end{equation}
are again solvable backwards in time to yield the time-varying matrix functions $\left(P_i(t\right)_{i=1}^N$.
\end{proof}

\subsection{Solutions to a System of GDAREs}

Let us now discuss the GDAREs. It is easy to determine that they provide the same amount of equations and variables as did the GCAREs, and as a result we can extend Theorems~\ref{the:gcare_bal} and~\ref{the:gcare_exp} naturally to the GDAREs. In this case though, these are systems of coupled matrix differential equations, and their properties are of interest to us in a general sense, as well as when applying our algorithm to transform AREs to DREs. 

We saw at Corollary~\ref{CARE_Existence_Uniqueness} that a simple LQR-induced CARE provides assurances for existence and uniqueness of the solution under standard assumptions, but the set of GCAREs does not provide the same assurances, and the best we can come with is what Theorem~\ref{the:GCAREs_no_cond} describes.

The method we suggest though, to transform AREs to DREs, becomes even more interesting, given the following Theorem regarding a given system of GDAREs:

\begin{theorem}[\textbf{GDAREs Nash Equilibrium} (Corollary 6.5 in page 323 of~\cite{Basar}]
Let us consider the set of GDAREs at~\eqref{eqn:GDARE}, and assume it is well-behaved. Let us assume there exists a series $\left(P^*_i(t)\right)_{j=1}^N$ that solves the GDAREs. Then the optimal policies at~\eqref{eqn:finite_horizon_optimal_inputs} assigned with $(P^*_i(t))_{j=1}^N$, i.e.:
\begin{equation}
    \mathbf{v^*_i}(t)\Big|_{P_i(t) \equiv P^*_i(t)} = -  R_{ii}^{-1}B_i^T P^*_i(t) \mathbf{x^*}(t),
\end{equation}
provide a solution to the open-loop Nash Equilibrium~\eqref{nash_equilibrium} problem, thus in fact providing a closed-loop feedback.
\label{the:GCDREs_cond}
\end{theorem}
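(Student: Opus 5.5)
The plan is to verify the open-loop Nash inequality~\eqref{nash_equilibrium} directly, player by player, with the other players' inputs frozen as \emph{time signals}. Fix $i$ and let $\mathbf{x^*}(t)$ be the trajectory generated by the policies of the statement from $\mathbf{x}(t_0)=\mathbf{x_0}$. Set $\bar{\mathbf{v}}_j(t)\coloneqq -R_{jj}^{-1}B_j^TP^*_j(t)\mathbf{x^*}(t)$ for $j\neq i$; these are now fixed functions of time, not feedback laws. Player $i$ then faces a single-player finite-horizon LQ problem with an affine exogenous drift:
\[
\dot{\mathbf{x}}=A\mathbf{x}+B_i\mathbf{v_i}+\mathbf{w}(t),\qquad \mathbf{w}(t)\coloneqq\sum_{j\neq i}B_j\bar{\mathbf{v}}_j(t).
\]
The cost is $J_i$ of~\eqref{finite_objective_function}, in which the terms $\bar{\mathbf{v}}_j^TR_{ij}\bar{\mathbf{v}}_j$ for $j\neq i$ are constants independent of $\mathbf{v_i}$. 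This problem is strictly convex in $\mathbf{v_i}$, since $Q_i,Q_{f_i}\geq 0$, $R_{ii}>0$ and the dynamics are affine. Hence Theorem~\ref{the:PMP}, applied to player $i$ alone, is both necessary and sufficient. It therefore suffices to exhibit a costate $\boldsymbol{\lambda}_i(t)$ satisfying three conditions along $\mathbf{x^*}$:
\begin{itemize}
\item[(a)] stationarity, $\mathbf{\bar v}_i=-\tfrac12R_{ii}^{-1}B_i^T\boldsymbol{\lambda}_i$ with $\mathbf{\bar v}_i$ the candidate input of player $i$;
\item[(b)] the adjoint equation $\dot{\boldsymbol{\lambda}}_i=-2Q_i\mathbf{x^*}-A^T\boldsymbol{\lambda}_i$;
\item[(c)] the transversality condition $\boldsymbol{\lambda}_i(T_f)=2Q_{f_i}\mathbf{x^*}(T_f)$.
\end{itemize}

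The natural candidate is $\boldsymbol{\lambda}_i(t)=2P^*_i(t)\mathbf{x^*}(t)$. With this choice, (a) holds by construction of the policy, and (c) follows from the terminal condition $P^*_i(T_f)=Q_{f_i}$. The remaining step is (b). Differentiating gives $\tfrac12\dot{\boldsymbol{\lambda}}_i=\dot P^*_i\mathbf{x^*}+P^*_iA^*_{cl}\mathbf{x^*}$. I would then substitute $\dot P^*_i$ from~\eqref{eqn:GDARE} and compare the result with $-Q_i\mathbf{x^*}-A^TP^*_i\mathbf{x^*}$. After cancellation, (b) reduces to the matrix identity
\[
\sum_{j\neq i}P^*_jB_jR_{jj}^{-1}B_j^TP^*_i\;=\;\sum_{j\neq i}P^*_jB_jR_{jj}^{-1}R_{ij}R_{jj}^{-1}B_j^TP^*_j ,
\]
required on the range of $\mathbf{x^*}$. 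Here the $j=i$ terms cancel exactly. If this identity holds, the candidate costate satisfies all three conditions, and sufficiency of PMP gives $J_i(\mathbf{v^*})\le J_i(\mathbf{v_i},\bar{\mathbf{v}}_{-i})$ for every admissible $\mathbf{v_i}$. Doing this for each $i$ proves part~1 of the statement. As a cross-check I would also run a completion-of-squares argument with the value function $V_i=\mathbf{x}^TP^*_i\mathbf{x}+2\mathbf{s}_i^T\mathbf{x}+c_i$, in which an affine term $\mathbf{s}_i$ absorbs $\mathbf{w}(t)$; this should lead to the same condition.

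The main obstacle is exactly this displayed identity. The GDAREs encode the \emph{feedback} interaction of the players, since $A^*_{cl}$ contains the other players' gains, whereas the open-loop adjoint equation involves only $A$. The two coincide only when the cross-coupling terms match. So the crux is to show that the identity follows from the structure assumed in the statement, namely well-behavedness together with the input decomposition in which each $B_j$ mainly drives its own objective. I would first check the trivial case $N=1$, where the identity is vacuous and the argument is just the LQR proof of Subsection~\ref{subsect:finite_horizon_con_LQR}. I would then check the decoupled case $B_j^TP^*_i=0$ and $R_{ij}=0$ for $j\neq i$. If the identity cannot be derived in general, the honest fallback is to replace $P^*_i$ in the costate by the solution $\tilde P_i$ of the \emph{open-loop} coupled Riccati system, with $A$ in place of $A^*_{cl}$ and no $R_{ij}$ cross terms. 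Following Corollary~6.5 of~\cite{Basar}, I would then show that the GDARE solution in the statement must coincide with it under the stated hypotheses.
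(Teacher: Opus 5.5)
Your computation is correct, and it is exactly why this route cannot be finished. The identity you isolate,
\[
\sum_{j\neq i}P^*_jB_jR_{jj}^{-1}B_j^TP^*_i\,\mathbf{x^*}=\sum_{j\neq i}P^*_jB_jR_{jj}^{-1}R_{ij}R_{jj}^{-1}B_j^TP^*_j\,\mathbf{x^*},
\]
is false in general. Nothing in the hypotheses can force it: ``well-behaved'' is a finiteness condition on the solution set of an algebraic system, and ``$B_j$ mainly drives objective $j$'' is a design heuristic, not an assumption of the theorem. Take $n=1$, $N=2$, $B_1=B_2=1$, $\mathbf{x_0}\neq 0$, and write $p^*_i$ for the scalar $P^*_i$. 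For $i=1$ the identity reads $p^*_1=R_{12}p^*_2/R_{22}$, which at $t=T_f$ becomes $Q_{f_1}=R_{12}Q_{f_2}/R_{22}$. Choose $Q_{f_2}>0$ and weights that violate this; then the identity fails on an interval near $T_f$. Since $B_1\neq 0$, stationarity pins the costate to $2p^*_1\mathbf{x^*}$, so by the necessity half of PMP the feedback law is \emph{not} player $1$'s best response to player $2$'s frozen time signal. Thus, under your open-loop reading of~\eqref{nash_equilibrium}, the claim is false for nonzero-sum games. The identity holds only in special cases such as $N=1$ or two-player zero-sum games, which is the classical situation where open-loop and feedback equilibria coincide. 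Your fallback fails for the same reason. The open-loop coupled Riccati system does not contain $R_{ij}$, $j\neq i$, at all, while~\eqref{eqn:GDARE} does, so the two solutions cannot coincide in general. Corollary~6.5 of~\cite{Basar} is not a coincidence result of that kind; it characterizes \emph{feedback} Nash equilibria.

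The missing idea is what gets frozen when player $i$ deviates. The cited result, and the paper's own derivation of~\eqref{eqn:GDARE} (which substitutes every player's feedback law into $\dot{\mathbf{x}}$ inside the Game Bellman equation), freeze the other players' \emph{feedback maps} $\mathbf{x}\mapsto -K_j(t)\mathbf{x}$, with $K_j\coloneqq R_{jj}^{-1}B_j^TP^*_j$, not their time signals. The ``open-loop'' label in the statement is misleading. With that reading the proof is a one-player LQR argument rather than PMP. Put $A_{-i}\coloneqq A-\sum_{j\neq i}B_jK_j$ and $\tilde Q_i\coloneqq Q_i+\sum_{j\neq i}K_j^TR_{ij}K_j$. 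Writing $A^*_{cl}=A_{-i}-B_iK_i$ in~\eqref{eqn:GDARE}, the $j=i$ terms collapse and you get
\[
\dot P^*_i+A_{-i}^TP^*_i+P^*_iA_{-i}+\tilde Q_i-P^*_iB_iR_{ii}^{-1}B_i^TP^*_i=0,\qquad P^*_i(T_f)=Q_{f_i},
\]
which is the CDRE of Subsection~\ref{subsect:finite_horizon_con_LQR} for player $i$'s problem. Let $r_i$ be player $i$'s running cost with $\mathbf{v_j}=-K_j\mathbf{x}$ for $j\neq i$. Along any trajectory of $\dot{\mathbf{x}}=A_{-i}\mathbf{x}+B_i\mathbf{v_i}$, completing the square gives $\tfrac{\mathrm{d}}{\mathrm{d}t}\big(\mathbf{x}^TP^*_i\mathbf{x}\big)+r_i=(\mathbf{v_i}+K_i\mathbf{x})^TR_{ii}(\mathbf{v_i}+K_i\mathbf{x})$. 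Integrating from $t$ to $T_f$,
\[
J_i(t)=\mathbf{x}^T(t)P^*_i(t)\mathbf{x}(t)+\int_t^{T_f}(\mathbf{v_i}+K_i\mathbf{x})^TR_{ii}(\mathbf{v_i}+K_i\mathbf{x})\,\mathrm{d}\tau ,
\]
which is minimized exactly by $\mathbf{v_i}=-K_i\mathbf{x}$ from every starting time $t$. This is the precise sense in which the policies are ``in fact a closed-loop feedback''. Only existence of $P^*_i$ on the whole interval is used; well-behavedness plays no role.
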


Theorem~\ref{the:GCDREs_cond} then says, that if there is a solution to the GDAREs, it will render a closed-loop solution to the finite-horizon open-loop Nash Equilibrium problem at~\ref{eqn:optimal_nash_control_problem_formulation}. 

Coincidentally, the system of GDAREs is a system of differential equations, and matrix, coupled or whatever else it may be, along with corresponding terminal values, we are guaranteed a unique solution due to it being a Cauchy Initial Value problem~\cite{existence}. 

\begin{remark}
A subtle point to notice here is that we are promised a unique solution to the GDAREs, that will provide a corresponding Nash Equilibrium, as Theorem~\ref{the:GCDREs_cond} states, but that Nash Equilibrium may not be unique, and this may require additional assumptions. See Remark 6.16 at page 324 of~\cite{Basar} for more detail.
\end{remark}

\section{Discrete Multi-Objective Optimal Control}

The models presented thus far were designed for continuous-time systems, meant to be implemented as analogue components. That being said, due to the digital nature of the modern day computer, controllers of actual physical systems are implemented at discrete intervals at real-time. We would thus want to derive a discrete-time version of the model we described thus far, to provide a broader contribution and to handle cases in which this approach may be preferable. We will use direct design, as described in Appendix~\ref{sect:lti_discrete_appendix}.

\subsection{Direct Discrete Decomposed Controllers Design}

To perform direct design, let us consider a decomposed continuous-time LTI system $\mathcal{S}_N$ adhering equation~\eqref{eqn:sys_with_B1___Bn}, that we would now want to discretize, by extending the derivation of a regular LTI model discretization, as detailed in Appendix~\ref{app:lti_discretization}. Specifically, given we use the same Interval Discretization technique and then State Discretization by Sampling combined with Input Discretization by Sampling and Zero-Order-Hold, for each of the virtual objectives, to arrive at the discrete-time LTI equivalent system of the decomposed system $\mathcal{S}_N$.


\begin{definition}
Given a series of virtual inputs $(\mathbf{v_i}(t))_{i=1}^N$ and $L \in \mathbb{N}$, let $ \left(\mathbf{x}[k]\right)_{k=0}^{L-1}$ and $ \left(\left(\mathbf{v_i}[k]\right)_{i=0}^{N}\right)_{k=0}^{L-1}$ be $L$ corresponding state and virtual inputs samples measured from $\mathcal{S}_N$, where $\mathbf{x}[k] \in \mathcal{X}_D \subseteq \mathcal{X}$ and $\mathbf{v_i}[k] \in \mathcal{V}_{i_D} \subseteq \mathcal{V}_{i}$. With that, for all $0 \leq k < L -1$, the \textbf{discrete decomposed LTI model} of $\mathcal{S}_N$, that we shall refer to as $\mathcal{S}_{N_D}$:

\begin{equation}
    \mathbf{x}[k+1] = \tilde{A}\mathbf{x}[k] + \sum _{i=1}^N \tilde{B_i} \mathbf{v_i}[k]
    \label{eqn:discrete_decomposed_sys}
\end{equation}

with $\mathbf{x}[0]=\mathbf{x}_0$, and with:

\begin{equation}
    \begin{aligned}
\tilde{A} \coloneqq & \ e^{\delta A}\\
\tilde{B_i} \coloneqq & \int_{0}^\delta e^{t A} \mathrm{d}t \ B_i
\end{aligned}
\label{eqn:S_N_D_model}
\end{equation}
\end{definition}

\subsection{Discrete Decomposed Optimal Control}

The discretization of the virtual cost functions $\left( J_i(t) \right)_{i=1}^N$, defined at~\ref{objective_function} is no different than the discretization of a single LQR cost function.
\begin{definition}
Given a series of continuous virtual cost functions $\left( J_i(t) \right)_{i=1}^N$, let us define the corresponding \textbf{discrete virtual cost functions} $\left( J_{i_D} \right)_{i=1}^N$ for the system $\mathcal{S}_{N_D}$, where each $J_{i_D} \colon \mathcal{X}_D \times \mathcal{V_D} \times \bigcup_{i=0}^{L-1}\{i\} \rightarrow \mathbb{R}^+$ with $\mathcal{V_D} \coloneqq \bigtimes_{i=1}^N\mathcal{V}_{i_D}$ by the following form:

\begin{equation}
\begin{aligned}
       J_{i_D}\Big(\mathbf{x}[k], (\mathbf{v_j}[k])_{j=1}^N, k\Big) \coloneqq 
       \mathlarger{\sum}_{t=k}^{L-1} r_{i_D}\Big(\mathbf{x}[t], (\mathbf{v_j}[k])_{j=1}^N\Big) + r_{f_{i_D}}\Big(\mathbf{x}[L]\Big)
\end{aligned}
\label{discrete_objective_function}
\end{equation}
where $(r_{i_D})_{i=1}^N$ and $(r_{f_{i_D}})_{i=1}^N$ are the discrete-time equivalents of $(r_{i})_{i=1}^N$ and $(r_{f_{i}})_{i=1}^N$ from~\eqref{objective_function}.
\end{definition}
Once again, we will denote $J_{i_D}[k]$ for conciseness.
\begin{definition}
Given an discrete decomposed LTI system $\mathcal{S}_{N_D}$, that adheres the model at~\eqref{eqn:S_N_D_model}, we deal with cost function assignments that are quadratic with respect to the state and input vectors, i.e., $r_{i_D}[k]$ takes on the following form:

\begin{equation}
   r_{i_D}\Big(\mathbf{x}[k], (\mathbf{v_j}[k])_{j=1}^N\Big) \equiv \mathbf{x}^T[k]Q_i\mathbf{x}[k] + \sum_{j=1}^N \mathbf{v_j}^T[k]R_{ij}\mathbf{v_j}[k].
\end{equation}
with:

\begin{itemize}
    \item $Q_i \in \mathbb{R}^{n \times n}$ is a positive semi-definite matrix;
    \item $\forall 1 \leq j \leq N \ \colon \ R_{ij} \in \mathbb{R}^{m_j \times m_j}$ is a positive definite matrix.
\end{itemize}
In this scenario, the control model is called a \textbf{Discrete Game Linear Quadratic Regulator} (DGLQR), similar to what defined in the continuous case.
\end{definition}

\subsection{Optimality by Discrete Open-Loop Nash Equilibrium}
\begin{definition}
For a given discrete decomposed system $\mathcal{S}_{N_D}$ with state vector  measurements $\left(\mathbf{x}[k]\right)_{k=0}^{L-1}$,  a series of virtual policies measurements $\left((\mathbf{v^*_j}[k])_{k=0}^{L-1}\right)_{j=1}^{N}$ is said to constitute a \textbf{Discrete Open-Loop Nash Equilibrium} if for all $1 \leq i \leq N$, we have:

\begin{equation}
    J_{i_D}\bigg(\mathbf{x}[k], (\mathbf{v^*_j}[k])_{j=1}^N, k\bigg) \leq J_{i_D}\bigg(\mathbf{x}[k], (\mathbf{v^*_j}[k])_{\substack{j=1 \\ j \neq i}}^N, \mathbf{v_i}[k], k\bigg)
    \label{discrete_nash_equilibrium}
\end{equation}
for any other admissable virtual input measurement $\mathbf{v_i}[k] \in \mathcal{V}_{i_D}$ and for all $k \in \bigcup_{i=0}^{L-1}\{i\}$.
\end{definition}
 \begin{definition}

Given an discrete decomposed LTI system $\mathcal{S}_{N_D}$, that adheres the model at~\eqref{eqn:discrete_decomposed_sys} and with a series of DGLQR cost functions $\left(J_{i_D}[k]\right)_{i=1}^N$, we define the \textbf{discrete open-loop Nash Equilibrium optimal control problem}, formally stated as:

\begin{equation}
    \begin{aligned}
         \min_{\substack{\mathbf{v_i}[k] \in \mathcal{V}_{i_D}}} \quad & J_{i_D}\Big(\mathbf{x}[k], (\mathbf{v_j}[k])_{j=1}^N, 0\Big) \\
         = \min_{\substack{\mathbf{v_i}[k] \in \mathcal{V}_{i_D}}} \quad & \mathlarger{\sum}_{t=0}^{L-1} r_{i_D}\Big(\mathbf{x}[t], (\mathbf{v_j}[k])_{j=1}^N\Big) + r_{f_{D_i}}\Big(\mathbf{x}[L]\Big)\\
         = \min_{\substack{\mathbf{v_i}[k] \in \mathcal{V}_{i_D}}} \quad & \mathlarger{\sum}_{t=0}^{L-1} \left[\mathbf{x}^T[k]Q_i\mathbf{x}[k] + \sum_{j=1}^N \mathbf{v_j}^T[k]R_{ij}\mathbf{v_j}[k]\right] + r_{D_f}\Big(\mathbf{x}[L]\Big) \\
         \textrm{subject to} \quad & \mathbf{x}[0] = \mathbf{x_0};\\
         \quad & \mathbf{x}[k+1] = \tilde{A}\mathbf{x}[k] + \sum _{i=1}^N \tilde{B_i} \mathbf{v_i}[k]; \quad & \forall k \in \bigcup_{i=0}^{L-1}\{i\};\\
         & \forall 1 \leq i \leq N.
    \end{aligned}
    \label{eqn:discrete_game_optimal_control_problem_formulation}
\end{equation}
\end{definition}

\begin{definition}
Given a discrete decomposed LTI system $\mathcal{S}_{N_D}$, let us define a \textbf{game optimal state trajectory measurements} of $\mathcal{S}_{N_D}$, as a series $\left(\mathbf{x^*}[k]\right)_{k=0}^{L-1}$ which is the solution to the model of $\mathcal{S}_{N_D}$ at~\eqref{eqn:discrete_decomposed_sys}, assigned with the discrete optimal open-loop Nash Equilibrium policies measurements $\left((\mathbf{v^*_j}[k])_{k=0}^{L-1}\right)_{j=1}^{N}$, i.e., it satisfies:
\begin{equation}
    \mathbf{x^*}[k+1] = \tilde{A}\mathbf{x^*}[k] + \sum _{i=1}^N \tilde{B_i} \mathbf{v^*_i}[k],
\end{equation}
for all $k \in \bigcup_{i=0}^{L-2}\{i\}$.
\end{definition}

\begin{definition}
Let us denote $\Delta J_{i_D}$ to be the \textbf{$k$'th finite difference of the $i$'th discrete cost function}:
\begin{equation}
\begin{aligned}
    \Delta J_{i_D}[k] & \coloneqq J_{i_D}[k+1] - J_{i_D}[k]\\
    & = J_{i_D}\Big(\mathbf{x}[k+1], \mathbf{u}[k+1], k+1\Big) - J_{i_D}\Big(\mathbf{x}[k], \mathbf{u}[k], k\Big)
\end{aligned}
\end{equation}
\end{definition}

\begin{definition}

Note that by its definition, for all $k \in \bigcup_{i=0}^{L-2}\{i\}$, we can write Equation~\eqref{discrete_objective_function} in the form:

\begin{equation}
\begin{aligned}
        J_{i_D}\Big(\mathbf{x}[k], (\mathbf{v_j}[k])_{j=1}^N, k\Big) & =
        r_{i_D}\Big(\mathbf{x}[k], (\mathbf{v_j}[k])_{j=1}^N\Big) \\
        & +
        \mathlarger{\sum}_{t=k+1}^{L-1} r_{i_D}\Big(\mathbf{x}[t], (\mathbf{v_j}[k])_{j=1}^N\Big)  + r_{f_{i_D}}\Big(\mathbf{x}[L]\Big)\\
        & = r_{i_D}\Big(\mathbf{x}[k], (\mathbf{v_j}[k])_{j=1}^N\Big) \\
        & +
        J_{i_D}\Big(\mathbf{x}[k+1], (\mathbf{v_j}[k+1])_{j=1}^N, k+1\Big).
        \label{eqn:discrete_game_bellman_1}
\end{aligned}
\end{equation}

Moving sides and multiplying by $-1$, we have:

\begin{equation}
\begin{aligned}
        \Delta J_{i_D}[k] = - r_{i_D}\Big(\mathbf{x}[k], (\mathbf{v_j}[k])_{j=1}^N\Big).
\end{aligned}
\end{equation}

Using this, we symmetrically define the \textbf{discrete time virtual control Hamiltonian}:

\begin{equation}
    \mathcal{H}_{i_D} \Big( \mathbf{x}[k], \Delta J_{i_D}[k], (\mathbf{v_j}[k])_{j=1}^N \Big)  \coloneqq \Delta J_{i_D}[k] + r_{i_D}\Big(\mathbf{x}[k], (\mathbf{v_j}[k])_{j=1}^N\Big).
    \label{eqn:game_discrete_hamiltonian}
\end{equation}
\begin{definition}
Next, to define the \textbf{discrete-time Game Bellman Equation} for the $k$'th measurement:

\begin{equation}
    \mathcal{H}_{i_D} \Big( \mathbf{x}[k], \Delta J_{i_D}[k], (\mathbf{v_j}[k])_{j=1}^N \Big)  = 0.
    \label{eqn:discrete_game_bellman}
\end{equation}
\end{definition}

\begin{definition}
Let us denote the $k$'th measurement of \textbf{the optimal value in regards to state trajectory} of the $i$'th cost function $J_{i_D}[k]$ and its finite difference as:
\begin{equation}
\begin{aligned}
     J_{i_D}^*[k] \coloneqq & J_{i_D}\big( \mathbf{x}[k],(\mathbf{v_j}[k])_{j=1}^N \big) \ \big|_{\mathbf{x}[k] \equiv \mathbf{x^*}[k]};\\
    \Delta J_{i_D}^*[k] \coloneqq & \Delta J_{i_D}[k] \ \big|_{\mathbf{x}[k] \equiv \mathbf{x^*}[k]} = J_{i_D}^*[k+1] - J_{i_D}^*[k]\ .
\end{aligned}
\end{equation}
\end{definition}

\begin{definition}
Applying the stationarity condition at equation~\eqref{eqn:discrete_stationarity} by differentiating each $k$'th measurement of the discrete Hamiltonian $\mathcal{H}_{i_D}[k]$ by the $k$'th measurement of the corresponding $i$'th virtual input $\mathbf{v_i}[k]$, equating to zero and solving for $\mathbf{v_i}[k]$, we define the \textbf{Discrete Game PMP (DGPMP)}. In that we obtain a set of the $k$'th measurements of the optimal virtual inputs $\left(\mathbf{v^*_i}[k]\right)_{i=1}^N$. Thus the stationarity condition for multiplayer games corresponds to the following set of $N$ coupled equations:
 
 \begin{equation}
     \frac{\partial}{\partial \mathbf{v_i}[k]} \mathcal{H}_{i_D} \Big( \mathbf{x^*}[k], \Delta J^*_{i_D}[k], (\mathbf{v_j}[k])_{j=1}^N \Big)\Big|_{\mathbf{v_i}[k] \equiv \mathbf{v_i^*}[k]} = 0.
     \label{eqn:discrete_stationarity_N}
\end{equation}
\end{definition}


\end{definition}

\subsection{Discrete Infinite Horizon LTI Game}

In the infinite horizon case we consider the following form for the discrete cost function:

\begin{equation}
\begin{aligned}
        J_{i_D}\left(\mathbf{x}[k], (\mathbf{v_j}[k])_{j=1}^N, k\right) = 
      \mathlarger{\sum}_{t=k}^{\infty} \left[\mathbf{x}[t]^TQ_i\mathbf{x}[t] +  \sum_{j=1}^N \mathbf{v_j}^T[t]R_{ij}\mathbf{v_j}[t]\right]
\end{aligned}
\end{equation}

\begin{proposition}
Given a discrete decomposed LTI system $\mathcal{S}_{N_D}$ with DGLQR cost functions $(J_{i_D})_{i=1}^N$, let us consider the discrete open-loop Nash Equilibrium optimal control problem given in~\eqref{eqn:discrete_game_optimal_control_problem_formulation}, in the case where $T_f \rightarrow \infty$. Let us assume there exist a series of open-loop optimal solutions measurements $\left((\mathbf{v^*_j}[k])_{k=0}^{L-1}\right)_{j=1}^{N}$ to the problem. Then:
\begin{enumerate}
    \item The measurements of the optimal inputs in regards to the open-loop Nash Equilibrium  $\left((\mathbf{v^*_j}[k])_{k=0}^{L-1}\right)_{j=1}^{N}$ are all in fact closed-loop feedback controllers and are all linear with regards to the optimal state, with corresponding constant coefficients, i.e. they are of the form:
\begin{equation}
    \mathbf{v_i^*}[k] = - K_{i_D}^*\mathbf{x^*}[k],
\end{equation}
where each $K_{i_D}^* \in \mathbb{R}^{n \times m_i}$, for all $k \in \bigcup_{i=0}^{L-1}\{i\}$.
\item The optimal controllers $\left(K_{i_D}^*\right)_{i=1}^N$, each $K_{i_D}^* \in \mathbb{R}^{n \times m_i}$, satisfy the following set of coupled equations:

\begin{equation}
    K^*_{i_D}= \Big(R_{ii} + \tilde{B}^T_iP_{i_D}\tilde{B}_i \Big)^{-1}\tilde{B}^T_iP_{i_D}\Big(\tilde{A}- \sum _{\substack{j=1 \\ j \neq i}}^N \tilde{B}_j K^*_{j_D} \Big)
\end{equation}

and the matrices $\left(P_{i_D}\right)_{i=1}^N$, each $P_{i_D} \in \mathbb{R}^{n \times n}$ are obtained by solving the following coupled set of matrix Riccati equations simultaneously with the aforementioned relations for $\left(K_{i_D}^*\right)_{i=1}^N$:

\begin{equation}
    P_{i_D} = Q_i  + \sum_{j=1}^N K_{j_D}^{*^T}R_{ij}K_{j_D}^*
       +\Big(\tilde{A} - \sum _{j=1}^N \tilde{B}_j K_{j_D}^* \Big)^TP_{i_D}\Big(\tilde{A} - \sum _{j=1}^N \tilde{B}_j K_{j_D}^* \Big)
\end{equation}
\end{enumerate}
\label{prop:discrete_nash_solution}
\end{proposition}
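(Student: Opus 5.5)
I would mirror the proof of the single-player discrete infinite horizon case (the DARE derivation), applied player by player as in the continuous game derivation. First, following the convention adopted in the footnote citing~\cite{Basar}, I take each optimal cost to be quadratic in the game optimal state with a constant matrix:
\begin{equation}
    J_{i_D}^*[k] \equiv \mathbf{x^*}^T[k] P_{i_D} \mathbf{x^*}[k], \qquad P_{i_D} = P_{i_D}^T,
\end{equation}
and I make the ansatz that every other player uses a constant linear policy $\mathbf{v_j^*}[k] = -K_{j_D}^*\mathbf{x^*}[k]$, $j \neq i$. These ansätze will be confirmed a posteriori, since the resulting stationarity condition turns out to be exactly linear in $\mathbf{x^*}[k]$ with a constant gain.

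Next, I substitute these expressions into the discrete virtual Hamiltonian~\eqref{eqn:game_discrete_hamiltonian}. Using $\mathbf{x^*}[k+1] = \tilde{A}\mathbf{x^*}[k] + \sum_j \tilde{B}_j \mathbf{v_j}[k]$, and writing $\tilde{A}_{-i} \coloneqq \tilde{A} - \sum_{j\neq i}\tilde{B}_j K_{j_D}^*$, the Hamiltonian for player $i$ becomes
\begin{equation}
    \left(\tilde{A}_{-i}\mathbf{x^*}[k] + \tilde{B}_i\mathbf{v_i}[k]\right)^T P_{i_D}\left(\tilde{A}_{-i}\mathbf{x^*}[k] + \tilde{B}_i\mathbf{v_i}[k]\right) - \mathbf{x^*}^T[k]P_{i_D}\mathbf{x^*}[k] + \mathbf{x^*}^T[k]Q_i\mathbf{x^*}[k] + \sum_{j}\mathbf{v_j}^T[k]R_{ij}\mathbf{v_j}[k].
\end{equation}
Applying DGPMP~\eqref{eqn:discrete_stationarity_N}, i.e. differentiating with respect to $\mathbf{v_i}[k]$ exactly as in the DARE proof, gives
\begin{equation}
    2\tilde{B}_i^TP_{i_D}\tilde{A}_{-i}\mathbf{x^*}[k] + 2\left(R_{ii}+\tilde{B}_i^TP_{i_D}\tilde{B}_i\right)\mathbf{v_i}[k] = 0.
\end{equation}
The matrix $R_{ii}+\tilde{B}_i^TP_{i_D}\tilde{B}_i$ is positive definite, since $R_{ii}>0$ and $P_{i_D}\geq 0$, so it is invertible. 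Solving yields $\mathbf{v_i^*}[k] = -K_{i_D}^*\mathbf{x^*}[k]$ with the stated formula for $K_{i_D}^*$. This proves item 1 and the first relation of item 2, and it makes the ansatz for the other players self-consistent.

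For the Riccati relation, I substitute all optimal policies back into the discrete Game Bellman equation~\eqref{eqn:discrete_game_bellman}. This gives $\mathbf{x^*}[k+1] = A_{cl}\mathbf{x^*}[k]$ with $A_{cl} = \tilde{A}-\sum_j\tilde{B}_jK_{j_D}^*$, and hence
\begin{equation}
    \mathbf{x^*}^T[k]\Big[A_{cl}^TP_{i_D}A_{cl} - P_{i_D} + Q_i + \sum_j K_{j_D}^{*^T}R_{ij}K_{j_D}^*\Big]\mathbf{x^*}[k] = 0.
\end{equation}
The bracketed matrix is symmetric, so, as in the continuous case, requiring this identity for all trajectories forces the bracket itself to vanish. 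This gives the stated coupled equation for $P_{i_D}$.

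The main obstacle is justification rather than algebra. The stationary constant-$P_{i_D}$ quadratic form, and the treatment of the others' policies as state feedback inside an open-loop Nash problem, rest on the cited result of~\cite{Basar} together with the convergence of the infinite sums. Convergence requires $A_{cl}$ to be Schur stable, which I would simply assume as part of the admissibility hypothesis, in parallel with the conditions of Theorem~\ref{the:GCAREs_no_cond}.
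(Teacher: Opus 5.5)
You follow the paper's proof step for step on item~2: the constant quadratic ansatz $J^*_{i_D}[k]=\mathbf{x^*}^T[k]P_{i_D}\mathbf{x^*}[k]$, DGPMP applied to each player's discrete Hamiltonian, and back-substitution into the discrete Game Bellman equation followed by the ``for all $\mathbf{x}$'' argument. Your symmetry remark and your positive-definiteness argument for $R_{ii}+\tilde{B}_i^TP_{i_D}\tilde{B}_i$ are, if anything, more careful than the paper, which simply assumes that matrix is invertible.

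The gap is in item~1. You substitute $\mathbf{v_j}[k]=-K^*_{j_D}\mathbf{x^*}[k]$ for every $j\neq i$ before differentiating, and then conclude that $\mathbf{v_i^*}[k]$ is a constant linear feedback. Running this for every $i$ only shows that a profile of constant linear feedbacks satisfying the coupled gain equations is self-consistent. It does not show that the Nash solution whose existence is hypothesized must have this form, and that is exactly what item~1 asserts. As a proof of item~1 the argument is circular.

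The paper avoids this by leaving $\mathbf{v_j}[k]$, $j\neq i$, as arbitrary vectors in $\mathcal{H}_{i_D}$. Player $i$'s stationarity condition then reads $(R_{ii}+\tilde{B}_i^TP_{i_D}\tilde{B}_i)\mathbf{v_i}[k]+\tilde{B}_i^TP_{i_D}\sum_{j\neq i}\tilde{B}_j\mathbf{v_j}[k]=-\tilde{B}_i^TP_{i_D}\tilde{A}\mathbf{x^*}[k]$, and the paper solves all $N$ of these conditions simultaneously (its ``induction on $N$'').

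To repair your argument, stack these conditions as a block-linear system $\mathcal{M}\mathbf{v}[k]=-\mathcal{N}\mathbf{x^*}[k]$. Here $\mathcal{M}$ has diagonal blocks $R_{ii}+\tilde{B}_i^TP_{i_D}\tilde{B}_i$ and off-diagonal blocks $\tilde{B}_i^TP_{i_D}\tilde{B}_j$, and $\mathcal{N}$ has block rows $\tilde{B}_i^TP_{i_D}\tilde{A}$. If $\mathcal{M}$ is invertible, then $\mathbf{v}[k]=-\mathcal{M}^{-1}\mathcal{N}\mathbf{x^*}[k]$. This gives constant linear state feedback for all players at once without presupposing it, and rearranging the $i$-th block row yields exactly the stated formula for $K^*_{i_D}$.

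Positive definiteness of the diagonal blocks, which is what you proved, does not by itself guarantee that $\mathcal{M}$ is invertible. This joint invertibility is an additional hypothesis, tacit in the paper's argument as well, and you should state it explicitly.
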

\begin{proof}

we consider the optimal cost function to be quadratic with respect to the state:

\begin{equation}
    J_{i_D}^*[k] \equiv \mathbf{x^*}^T[k] P_{i_D} \mathbf{x^*}[k].
    \label{eqn:discrete_game_infinite_horizon_optimal_cost}
\end{equation}

for some constant positive-definite matrix $P_{i_D} \in \mathbb{R}^{n \times n}$. Plugging all optimal the expressions to the Hamiltonian at~\eqref{eqn:game_discrete_hamiltonian}:

\begin{equation}
\begin{aligned}
      \mathcal{H}_{i_D} & \left( \mathbf{x^*}[k], \Delta J^*_{i_D}[k], (\mathbf{v_j}[k])_{j=1}^N \right)  \\ & =J^*_{i_D}\left(\mathbf{x}[k+1], (\mathbf{v_j}[k+1])_{j=1}^N, k+1\right)
    - J^*_{i_D}\left(\mathbf{x}[k], (\mathbf{v_j}[k])_{j=1}^N, k\right) \\
    & + r_{i_D}\left(\mathbf{x}^*[k], (\mathbf{v_j}[k])_{j=1}^N\right)\\
    &= \mathbf{x^*}^T[k+1] P_{i_D} \mathbf{x^*}[k+1]
    - \mathbf{x^*}^T[k] P_{i_D} \mathbf{x^*}[k] + \mathbf{x}^{*^T}[k]Q_i\mathbf{x}^*[k] +  \sum_{j=1}^N \mathbf{v_j}^T[k]R_{ij}\mathbf{v_j}[k]\\
    &= \left(\tilde{A}\mathbf{x^*}[k] + \sum _{j=1}^N \tilde{B_j} \mathbf{v_j}[k]\right)^T P_{i_D} \left(\tilde{A}\mathbf{x^*}[k] + \sum _{j=1}^N \tilde{B_j} \mathbf{v_j}[k]\right)\\
    & + \mathbf{x}^{*^T}[k]\left(Q_i-P_{i_D}\right)\mathbf{x}^*[k] +  \sum_{j=1}^N \mathbf{v_j}^T[k]R_{ij}\mathbf{v_j}[k].
     \label{eqn:discrete_game_infinite_Hamiltonian}
\end{aligned}
\end{equation}

Applying DGPMP to Equation~\eqref{eqn:discrete_game_infinite_Hamiltonian} we have:

\begin{equation}
\begin{aligned}
      \frac{\partial}{\partial \mathbf{v_i}[k]} & \mathcal{H}_{i_D} \Big( \mathbf{x^*}[k], \Delta J^*_{i_D}[k], (\mathbf{v_j}[k])_{j=1}^N \Big)\Big|_{\mathbf{v_i}[k] \equiv \mathbf{v_i^*}[k]}\\ 
     & = 2\tilde{B}_i^TP_{i_D}\Big(\tilde{A}\mathbf{x^*}[k] + \tilde{B}_i \mathbf{v^*_i}[k] + \sum _{\substack{j=1 \\ j \neq i}}^N \tilde{B}_j \mathbf{v_j}[k]\Big) + 2R_{ii}\mathbf{v^*_i}[k] = 0.
\end{aligned}
\end{equation}

Notice we have the term $\mathbf{v^*_i}[k]$ both in the second element as well as in the sum. Moving sides and collecting terms we have:

\begin{equation}
    \Big(R_{ii} + \tilde{B}^T_iP_{i_D}\tilde{B}_i \Big) \mathbf{v^*_i}[k]= -\tilde{B}_i^T P_{i_D}\Big(\tilde{A}\mathbf{x^*}[k]+ \sum _{\substack{j=1 \\ j \neq i}}^N \tilde{B}_j \mathbf{v^*_j}[k] \Big)
\end{equation}

Assuming $R_{ii} + \tilde{B}^T_iP_i\tilde{B}_i$ is invertible, we arrive at the following relation:
\begin{equation}
    \mathbf{v^*_i}[k]= -\Big(R_{ii} + \tilde{B}^T_iP_{i_D}\tilde{B}_i \Big)^{-1}\tilde{B}^T_iP_{i_D}\Big(\tilde{A}\mathbf{x^*}[k]+ \sum _{\substack{j=1 \\ j \neq i}}^N \tilde{B}_j \mathbf{v_j}[k] \Big)
\end{equation}

With induction on $N$, performing the same procedure, plugging this expression simultaneously for all $1 \leq i \leq N$ in the rightmost sum and rearranging, claim 1 follows.

In order to obtain an expression for the matrices $\left(P_{i_D}\right)_{i=1}^N$, let us plug in $\mathbf{v^*_i}[k] = - K_{i_D}^*\mathbf{x}[k]$ to the discrete-time Game Bellman Equation at~\ref{eqn:discrete_game_bellman} and rearrange to obtain:

\begin{equation}
    \begin{aligned}
    \mathbf{x}[k]^T\Bigg[ & Q_i- P_{i_D} + \sum_{j=1}^N K_{j_D}^{*^T}R_{ij}K_{j_D}^*
       \\
       & \left(\tilde{A} - \sum _{j=1}^N \tilde{B}_j K_{j_D}^* \right)^TP_{i_D}\left(\tilde{A} - \sum _{j=1}^N \tilde{B}_j K_{j_D}^* \right)\Bigg]\mathbf{x}[k] = 0
    \end{aligned}
\end{equation}

This must hold for all $\mathbf{x}[k]$ and thus:

\begin{equation}
    P_{i_D} = Q_i  + \sum_{j=1}^N K_{j_D}^{*^T}R_{ij}K_{j_D}^*
       +\Big(\tilde{A} - \sum _{j=1}^N \tilde{B}_j K_{j_D}^* \Big)^TP_{i_D}\Big(\tilde{A} - \sum _{j=1}^N \tilde{B}_j K_{j_D}^* \Big)
       \label{eqn:nash_eq_1}
\end{equation}

Plugging in $\mathbf{v^*_i}[k] = - K_{i_D}^*\mathbf{x}[k]$ to the last equation obtained we get:

\begin{equation}
    - K_{i_D}^*\mathbf{x}[k]= -\Big(R_{ii} + \tilde{B}^T_iP_i\tilde{B}_i \Big)^{-1}\tilde{B}^T_iP_i\Big(\tilde{A}- \sum _{\substack{j=1 \\ j \neq i}}^N \tilde{B}_j K_{i_D}^* \Big)\mathbf{x}[k]
\end{equation}

This must hold for all $\mathbf{x}[k]$ and thus we get the following set of linear matrix equations for the optimal controllers $(K^*_{i_D})_{i=1}^N$:

\begin{equation}
    K^*_{i_D}= \Big(R_{ii} + \tilde{B}^T_iP_{i_D}\tilde{B}_i \Big)^{-1}\tilde{B}^T_iP_{i_D}\Big(\tilde{A}- \sum _{\substack{j=1 \\ j \neq i}}^N \tilde{B}_j K^*_{j_D} \Big)
    \label{eqn:nash_eq_2}
\end{equation}

which proves claim 2.
\end{proof}

One extremely interesting result is that Proposition~\ref{prop:discrete_nash_solution} that we have just proven is just a private case of a more general Theorem:
\begin{theorem}[\textbf{Discrete Infinite Horizon Nash Equilibrium Uniqueness} (Corollary 6.1 at page 280 of~\cite{Basar}]
    Given a discrete decomposed LTI system $\mathcal{S}_{N_D}$ with DGLQR cost functions $(J_{i_D})_{i=1}^N$, let us consider the discrete open-loop Nash Equilibrium optimal control problem given in~\eqref{eqn:discrete_game_optimal_control_problem_formulation}, in the case where $T_f \rightarrow \infty$. The control problem yields a unique Nash Equilibrium solution if and only if equations~\eqref{eqn:nash_eq_1} and~\eqref{eqn:nash_eq_2} yield a unique solution $(K^*_{i_D}, P_{i_D})_{i=1}^N$.
\end{theorem}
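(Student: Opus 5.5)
The plan is to prove the equivalence by setting up a bijection between Nash equilibria of the infinite-horizon discrete game, restricted to the class of stationary linear feedback strategies that the paper works with, and solution tuples $(K^*_{i_D},P_{i_D})_{i=1}^N$ of \eqref{eqn:nash_eq_1}--\eqref{eqn:nash_eq_2}. Once both directions of this correspondence are established, uniqueness transfers immediately: if the equilibrium set and the solution set of \eqref{eqn:nash_eq_1}--\eqref{eqn:nash_eq_2} are in one-to-one correspondence, one is a singleton exactly when the other is.

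\textbf{Equilibrium $\Rightarrow$ solution.} This direction is essentially Proposition~\ref{prop:discrete_nash_solution}, which I will reread as a statement about an arbitrary equilibrium rather than a specific one. Fix an equilibrium $(\mathbf{v}^*_j)_j$ with $\mathbf{v}^*_j[k]=-K^*_{j_D}\mathbf{x}[k]$. With the other players' gains frozen, player $i$ faces a standard single-player DLQR problem. Its dynamics are $\tilde A-\sum_{j\neq i}\tilde B_jK^*_{j_D}$, and its state weight is $Q_i+\sum_{j\neq i}K^{*T}_{j_D}R_{ij}K^*_{j_D}$. By the discrete LQR results of Chapter~\ref{chap:preliminaries}, the value function is $\mathbf{x}^TP_{i_D}\mathbf{x}$, where $P_{i_D}$ solves the corresponding DARE, and the best response gain is exactly \eqref{eqn:nash_eq_2}. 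Substituting the closed-loop gain into the Bellman identity gives the Lyapunov form \eqref{eqn:nash_eq_1}. So every equilibrium yields a solution tuple, and distinct equilibria give distinct $K$'s, hence distinct tuples.

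\textbf{Solution $\Rightarrow$ equilibrium.} Given a tuple solving \eqref{eqn:nash_eq_1}--\eqref{eqn:nash_eq_2}, I will show that $\mathbf{v}_i=-K^*_{i_D}\mathbf{x}$ is a best response for each $i$. The key step is a completion-of-squares identity. Fix the other gains and let player $i$ use an arbitrary admissible input $\mathbf{v}_i[k]$. Using \eqref{eqn:nash_eq_1} and \eqref{eqn:nash_eq_2}, I expect
\[
\mathbf{x}[k]^TP_{i_D}\mathbf{x}[k]-\mathbf{x}[k+1]^TP_{i_D}\mathbf{x}[k+1] = r_{i_D}[k]-\big(\mathbf{v}_i[k]+K^*_{i_D}\mathbf{x}[k]\big)^T\big(R_{ii}+\tilde B_i^TP_{i_D}\tilde B_i\big)\big(\mathbf{v}_i[k]+K^*_{i_D}\mathbf{x}[k]\big).
\]
Summing this telescopically over $k$ gives $J_{i_D}\ge \mathbf{x}_0^TP_{i_D}\mathbf{x}_0$, with equality at $\mathbf{v}_i=-K^*_{i_D}\mathbf{x}$. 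This is the Nash inequality \eqref{discrete_nash_equilibrium}. Because the equilibrium is read off directly from the $K$'s, this map is inverse to the one constructed in the previous paragraph.

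\textbf{Main obstacle.} The hard part is making the telescoping argument rigorous on an infinite horizon. I need $\mathbf{x}[k]^TP_{i_D}\mathbf{x}[k]\to0$ along every admissible deviation, together with $R_{ii}+\tilde B_i^TP_{i_D}\tilde B_i>0$. My first attempt will restrict admissibility to stabilizing inputs, in the sense of the admissibility definition in Chapter~\ref{chap:preliminaries}. I will then require $P_{i_D}\ge0$ and that the closed loop $\tilde A-\sum_j\tilde B_jK^*_{j_D}$ be Schur stable; these are the conditions implicitly assumed in Corollary 6.1 of \cite{Basar}. With admissibility defined this way, finiteness of the cost together with $Q_i\ge0$ forces the boundary term to vanish. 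Both directions of the bijection then hold within the same admissible class, and the uniqueness equivalence follows.
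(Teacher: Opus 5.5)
Your two directions are sound within the class you end up imposing, but that class is exactly where the uniqueness transfer breaks. What you actually construct is a bijection between two sets. The first is the set of stationary linear equilibria, with opponents' gains frozen and deviations required to be stabilizing. The second is the set of tuples solving \eqref{eqn:nash_eq_1}--\eqref{eqn:nash_eq_2} for which $\tilde A-\sum_j\tilde B_jK^*_{j_D}$ is Schur. The statement, however, asserts uniqueness of the solution of \eqref{eqn:nash_eq_1}--\eqref{eqn:nash_eq_2} with no side conditions.

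Your ``solution $\Rightarrow$ equilibrium'' step genuinely fails off that subset. The boundary term $\mathbf{x}[M]^TP_{i_D}\mathbf{x}[M]$ need not vanish on the candidate equilibrium path, and $R_{ii}+\tilde B_i^TP_{i_D}\tilde B_i$ need not be positive definite. This cannot be patched, because the statement as written is false.
\begin{itemize}
\item Take $N=1$, $n=m=1$, $x[k+1]=ax[k]+bv[k]$ with $a,b\neq0$ and $q,r>0$. Substituting $K=abp/(r+b^2p)$ from \eqref{eqn:nash_eq_2} into \eqref{eqn:nash_eq_1} gives $b^2p^2+(r-a^2r-qb^2)p-qr=0$. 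Its roots have product $-qr/b^2<0$. The negative root still has $r+b^2p\neq0$ (otherwise $a=0$), so it yields a second pair $(K,p)$. Yet the one-player game, i.e.\ the DLQR problem, has a unique optimal policy.
\item Conversely, with $b=0$ and $|a|>1$ the equations have the single solution $K=0$, $p=q/(1-a^2)$, yet no admissible input exists. So the ``if'' direction fails too.
\end{itemize}
What your argument proves is a corrected statement: stationary linear feedback equilibria are unique if and only if \eqref{eqn:nash_eq_1}--\eqref{eqn:nash_eq_2} have a unique solution with Schur closed loop. State it that way. Two small remarks on that version:
\begin{itemize}
\item $P_{i_D}\ge0$ is then automatic. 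With a Schur closed loop, \eqref{eqn:nash_eq_1} is a Stein equation whose unique solution is a convergent sum of positive semi-definite terms.
\item It is stabilizing admissibility, not finiteness of the cost with $Q_i\ge0$, that kills the boundary term.
\end{itemize}

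Two further changes should be made explicit rather than attributed to the source.

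First, by freezing the opponents' \emph{gains} you are using the feedback (Markov) Nash concept, not the open-loop one named in the statement. If the opponents' input \emph{sequences} are frozen, player $i$ faces dynamics $\tilde A$ with an exogenous input and weight $Q_i$ alone. The best response is then in general not \eqref{eqn:nash_eq_2}. Your reading is the one consistent with the equations, but it changes the equilibrium notion, and ``unique'' then means unique among stationary linear strategies.

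Second, the stability conditions are not implicit in the citation. The paper gives no argument of its own beyond pointing to Corollary 6.1 of \cite{Basar}. That is a finite-horizon feedback-Nash result: the stage-$(k+1)$ value matrices are already fixed by the backward recursion, so the stage-$k$ gain equations are linear and an exact iff is available. In the stationary equations, by contrast, $P_{i_D}$ and $K^*_{i_D}$ are coupled nonlinearly and extra non-stabilizing solutions appear. The stabilizing restriction is therefore a genuine new hypothesis.
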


The discrete finite horizon game theory is left for future work, as will be described in~\ref{chap:conclusions}.

\section{Handling Nonlinear Systems}
The approach presented thus far applies to linear time-invariant systems with static objectives. We propose to use it as an ingredient in a wider framework that can also deal with nonlinear and possibly time-varying objectives. The main principle is to change the game dynamically when needed. This method is to be implemented on complex, non-linear systems, and we will demonstrate at the second case study of the quadrotor at~\ref{sect:quadrotor}.

\begin{definition}
Let us consider a general nonlinear control system $\mathcal{S}$ operating on a closed or semi-open interval $\mathcal{I}$ as described in~\eqref{eqn:dynamical_system}. Considering a time $t^* \in \mathcal{I}$ at which the system is mostly operating in the small vicinity of given equilibrium state, i.e., it satisfies $f(\mathbf{x}(t^*), \mathbf{u}(t^*))=0$, then the system linearization of $\mathcal{S}$ can be performed in the \textbf{standard linearization form}~\cite{linearization} by the following:
\begin{equation}
\begin{aligned}
A := \quad & \left. \frac{\partial f\left(\mathbf{x}(t), \mathbf{u}(t)\right)}{\partial \mathbf{x}(t)}
\right|_{\begin{matrix}
\mathbf{x}(t) \equiv \mathbf{x}(t^*) \\ \mathbf{u}(t) \equiv \mathbf{u}(t^*) \end{matrix}} \in \mathbb{R} ^{n \times n} \\
  B := \quad & \left. \frac{\partial f\left(\mathbf{x}(t), \mathbf{u}(t)\right)}{\partial \mathbf{u}(t)} \right|_{\begin{matrix}
\mathbf{x}(t) \equiv \mathbf{x}(t^*) \\ \mathbf{u}(t) \equiv \mathbf{u}(t^*) \end{matrix}}
  \in \mathbb{R} ^{n \times m}
\end{aligned}
\end{equation}
\end{definition}

\begin{definition}
Alternatively, one uses the \textbf{Linear Parameter Varying (LPV)} approach~\cite{LPV}: Assume a nonlinear model can be formulated as a linear system, where the model matrices $A$ and $B$ are functions of time-varying parameters, represented by the two vectors $\mathbf{z_1}(t)$ and $\mathbf{z_2}(t)$, respectively. Note that these time-varying parameters may include state variables (or functions of state variables) if their value is known (e.g., measurable) in real-time for the fixed time interval $\mathcal{I}$. Hence, the model is now represented as,

\begin{equation}\label{linearized}
\dot{\mathbf{x}}(t) = A(\mathbf{z_1}(t)) \mathbf{x}(t) + B(\mathbf{z_2}(t))\mathbf{u}(t).
\end{equation}

Depending on the time scale of $\mathbf{z_1}$ and $\mathbf{z_2}$, one may select $k \in \mathbb{N}$ time intervals $\left([t_i,t_{i+1}]\right)_{i=0}^{k-1}$ which are small enough, for which the time-varying system can be frozen at $t_i$ and assumed to act as an LTI system until $t_{i+1}$, with an acceptable model error (or uncertainty). The result is a piecewise LTI system
defined in each of the intervals $\left([t_i,t_{i+1}]\right)_{i=0}^{k-1}$.

For design simplicity, the time intervals can be taken as fixed equidistant intervals. Then, a Nash equilibrium is solved at the beginning of each interval, based on the instantaneous approximated linear model and the most relevant objectives. Note that this arraignment allows changing the game dynamically when ever needed, and the considered time intervals, where the game is updated, are case-dependent and not necessarily equidistant. 
\end{definition}

\section{Python Implementation}

Along with the formal foundations, we put in effort in implementing the  approach we suggested as a Python package. The implementation of the case studies that will follow in Chapter~\ref{chap:case_studies} uses a Python package we designed specifically for this purpose - called \textit{PyDiffGame}. The package allows the user to set up a multi-objective control problem and solve it for several scenarios. The main class of the package is given at Appendix~\ref{pyth:main}.

\subsection{Differential Game Formulation}
To solve a differential game using the package, the user must specify the required parameters to define the problem in a concise manner. After the problem is formulated with all the required parameters, a corresponding Python class\footnote{More on Object Oriented Programming in Python at~\cite{oop_python}.} is constructed called \pyth{PyDiffGame}. The class keeps all the system parameters of the differential game described in~\eqref{eqn:optimal_nash_control_problem_formulation} and allows the user to '\textit{solve the game}', i.e., solving the required equations to simulate the controller for the system based on the prescribed objectives and corresponding differential game. Then the user can simulate the dynamic behaviour of the described system under the control objectives detailed in the input and finally the temporal behaviour of the system can be plotted with respect to time by '\textit{solving the state space}'.

Thus far we have implemented the \pyth{PyDiffGame} base class that keeps all the required information for the formulation of a differential game according to the theory described. This class is defined to be \textbf{abstract}\footnote{An abstract class is a class that one cannot instantiate an object of, due to it not containing all the required methods or attributes to fully construct an object. One must implement a sub-class that provides the missing assets to the abstract class to then instantiate an object which is \textbf{inherited} from the abstract class.} as it is required to define if the system to deal with is modeled by a continuous or discrete model.

Consequentially, we have written two subclasses, one for the modeling of continuous systems, called \pyth{ContinuousPyDiffGame} and one for direct discrete control systems, called \pyth{DiscretePyDiffGame}, which include the design specifications described earlier per situation.

\subsection{Package Input}
A \pyth{PyDiffGame} class prescribed input is the following matrices that were described in the formulation of the Nash Differential Game at~\eqref{eqn:optimal_nash_control_problem_formulation}: 
\begin{itemize}
    \item \pyth{A}: \pyth{numpy} 2-d \pyth{array} of shape $(n, n)$ - the system dynamics matrix;
    
    \item \pyth{B}: \pyth{list} of \pyth{numpy} 2-d \pyth{array}s of length $N$, each matrix \pyth{B_i} of shape $(n, m_i)$ - virtual input coefficients matrices for each control objective;
    
    \item \pyth{Q}: \pyth{list} of \pyth{numpy} 2-d \pyth{array}s of length $N$, each matrix \pyth{Q_i} of shape $(n, n)$ - cost function state weights for each control objective;
    
    \item \pyth{R}: \pyth{list} of \pyth{numpy} 2-d \pyth{array}s of length $N$, each matrix \pyth{R_i} of shape $(m_i, m_i)$ - cost function input weights for each control objective\footnote{we assumed $R_{ij} \equiv 0_{m_j \times m_j}$ for all $i \neq j$ for simplicity.};
    
    \item \pyth{x_0}: \pyth{numpy} 1-d \pyth{array} of shape $n$, optional - initial state vector;
    
    \item \pyth{x_T}: \pyth{numpy} 1-d \pyth{array} of shape $n$, optional - final state vector, in case of signal tracking;
    
    \item \pyth{T_f}: positive \pyth{float}, optional, default = \pyth{2} - system dynamics horizon, should be given in the case of finite horizon;
    
    \item \pyth{P_f}: \pyth{list} of \pyth{numpy} 2-d \pyth{array}s of length $N$, each matrix \pyth{P_f_i} of shape $(n, n)$, optional, default = uncoupled solution of \pyth{scipy}'s \pyth{solve_are} - final condition for the Riccati equation matrix. Should be given in the case of finite horizon;
    
    \item \pyth{show_legend}: \pyth{bool}, optional, default = \pyth{True} - indicates whether to display a legend in the plots (\pyth{True}) or not (\pyth{False});
    
    \item \pyth{epsilon}: \pyth{float}, optional, default = \pyth{1 / (10 ** 7)} - the convergence threshold for numerical convergence;
    
    \item \pyth{L}: \pyth{int}, optional, default = \pyth{1000} - number of data points;
    
    \item \pyth{eta}: \pyth{int}, optional, default = \pyth{5} - number of last matrix norms to consider for convergence;
    
    \item \pyth{last_norms_number}: \pyth{int}, optional, default = \pyth{5} - the number of last matrix norms to consider for convergence;
    
    \item \pyth{debug}: \pyth{bool}, optional, default = \pyth{False} - indicates whether to display debug information or not;
\end{itemize}

\chapter{Novel AREs Solution Method}
\label{chap:riccati}
\epigraph{\textit{Divide your movements into easy-to-do sections. If you fail, divide again.}}{\textbf{Peter Nivio Zarlenga}}

The solution we propose for solving Algebraic Riccati Equations (AREs) that arise in infinite-horizon optimal control problems case takes advantage of the mechanism utilized to solve differential equations, that arise in finite-horizon optimal control problems. We will discuss the formalism under a continuous time system, and leave the work on discrete time for future work.
 
 In some cases, attempting to solve a set of algebraic Riccati equations can result in multiple solutions, as described in detail at~\ref{sect:single_CARE_solution_num}. Moreover, only one of those is stabilizing for the underlying control problem, as described in detail at~\ref{sect:stabalizing_care}.
 
 We propose to convert a set of CAREs, such as described in~\eqref{eqn:riccati}, in into their corresponding differential form, namely DREs, such as described in~\eqref{eqn:CDRE}. By doing so, we wish to consider cases where it is reasonable to invoke an assumption that the solution of differential Riccati equations tends to converge to the solution of the algebraic Riccati equations when going backward in time, which is not trivial, and requires several conditions to be satisfied.
 
  By considering the problem for a fixed finite time interval, for which we assume the temporal dynamics of the system will stabilize and solving with a pre-defined terminal condition, we end up with a Cauchy Initial Value Problem problem\footnote{See more on Cauchy Initial Value Problem at Definition~\ref{def:civp}.}, which has one unique solution under certain assumptions\footnote{This Theorem is given at Theorem~\ref{the:picard_lind}. For more information, see~\cite{existence}.}. Let us then discuss the procedure of solving the corresponding CDRE.
 
 
\section{CDRE Solution Properties}
\label{app:solving_the_CDRE}
Let us consider a single CDRE, as given at~\eqref{eqn:CDRE}. Since it is a matrix equation, in order to solve one CDRE is to actually solve a set of differential equations. Theorem~\ref{the:picard_lind} guarantees only local existence
and uniqueness for the solution of a given initial value problem, so we need further justification to conclude existence and uniqueness in a given closed time interval $\mathcal{I} \equiv [t_0, T_f]$. Luckily, we have the following already proven theorem:

\begin{theorem}[\textbf{CDRE Global Existence and Uniqueness}, Theorem 8 from~\cite{riccati_review}]
Let us consider the CDRE at~\eqref{eqn:CDRE} on a given closed interval $\mathcal{I}$. Then given any terminal condition $Q_f$, there exists a unique solution on the interval, denoted $P_{\mathcal{I},Q_f} (t)$.
\label{the:global_CDRE_existence}
\end{theorem}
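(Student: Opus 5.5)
The plan is the standard three-stage argument for Riccati equations: local well-posedness, a priori bounds on the local solution, and continuation. Throughout I take the terminal matrix $Q_f$ to be symmetric positive semi-definite, as in the finite horizon LQR of Section~\ref{subsect:finite_horizon_con_LQR}. For an arbitrary indefinite $Q_f$ finite escape times can genuinely occur, so this is the setting in which the statement should be read.

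\textbf{Local existence, uniqueness and symmetry.} Reverse time with $s = T_f - t$ and write the CDRE~\eqref{eqn:CDRE} as $P'(s) = F(P(s))$, $P(0)=Q_f$, where
\[
F(P) = PA + A^TP + Q - PBR^{-1}B^TP .
\]
The map $F$ is a quadratic polynomial in the entries of $P$, so it is locally Lipschitz on $\mathbb{R}^{n\times n}$. Theorem~\ref{the:picard_lind} then gives a unique solution on a maximal interval $[0,s^*)$. If $s^*<\infty$, the standard blow-up alternative gives $\|P(s)\|\to\infty$ as $s\to s^*$.

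Since $F(P)^T = F(P^T)$ and $Q_f^T=Q_f$, the transpose $P^T(s)$ solves the same initial value problem. By uniqueness $P(s)=P^T(s)$, so the solution is symmetric.

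\textbf{A priori bounds via the value function.} Fix any $t$ in the existence interval and any $\mathbf{x}$. Run the dynamics $\dot{\mathbf{x}} = A\mathbf{x} + B\mathbf{u}$ from $\mathbf{x}(t)=\mathbf{x}$ under an arbitrary admissible input $\mathbf{u}$. Differentiate $\mathbf{x}^T(\tau)P(\tau)\mathbf{x}(\tau)$, substitute the CDRE, and complete the square. Integrating from $t$ to $T_f$ yields
\[
J(\mathbf{x},\mathbf{u},t) = \mathbf{x}^TP(t)\mathbf{x} + \int_t^{T_f}\big\|R^{1/2}\big(\mathbf{u}+R^{-1}B^TP(\tau)\mathbf{x}(\tau)\big)\big\|^2\mathrm{d}\tau .
\]
This identity has two consequences.
\begin{itemize}
\item Choosing $\mathbf{u}=-R^{-1}B^TP\mathbf{x}$ shows $\mathbf{x}^TP(t)\mathbf{x}$ equals the cost of a particular input. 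Since every cost is nonnegative, $P(t)\geq 0$.
\item Choosing $\mathbf{u}\equiv 0$ shows $\mathbf{x}^TP(t)\mathbf{x}\le \mathbf{x}^T\Phi(t)\mathbf{x}$, where
\[
\Phi(t)=\int_t^{T_f}e^{A^T(\tau-t)}Qe^{A(\tau-t)}\mathrm{d}\tau + e^{A^T(T_f-t)}Q_fe^{A(T_f-t)} .
\]
The matrix $\Phi$ is continuous on the compact interval $\mathcal{I}$, hence uniformly bounded there.
\end{itemize}
Together these give $0\le P(t)\le \sup_{\mathcal{I}}\|\Phi\|\,I$. Because $P$ is symmetric, this bounds $\|P(t)\|$ uniformly, independently of how close $t$ is to the left end of the existence interval.

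\textbf{Continuation.} The uniform bound is incompatible with the blow-up alternative. Hence the maximal solution cannot escape in finite time, it extends to all of $\mathcal{I}$, and its uniqueness is inherited from the local uniqueness. This solution is the $P_{\mathcal{I},Q_f}(t)$ of Theorem~\ref{the:global_CDRE_existence}.

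The main obstacle is the bounding step. The completing-the-square identity has to be justified only on the maximal interval where $P$ is known to exist, and the resulting bound must be uniform in the starting time $t$. Once this is done carefully, the remaining steps are routine.
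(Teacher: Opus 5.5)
Your proof is correct, and it does more than the paper does. The paper gives no argument for this statement: it imports it as Theorem~8 of the cited review and adds only a remark that the result amounts to a Lipschitz constant for Theorem~\ref{the:picard_lind} being available on all of $\mathcal{I}$. Since $F(P)=PA+A^TP+Q-PBR^{-1}B^TP$ is quadratic, it is only locally Lipschitz. Such a constant can therefore only come from an a priori bound that confines $P(t)$ to a compact set. Your completing-the-square sandwich $0\le P(t)\le \Phi(t)$ is precisely that ingredient, so your argument can be read as the justification the paper's remark leaves out.

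Your restriction to $Q_f=Q_f^T\ge 0$ is not a convenience but a needed correction of the statement. Take $n=m=1$, $A=0$, $B=1$, $Q=0$, $R=1$ and $Q_f=-1$. The CDRE~\eqref{eqn:CDRE} then reads $\dot p=p^2$, and its solution $p(t)=-1/(t-T_f+1)$ escapes at $t=T_f-1$. So ``any terminal condition'' fails on every $\mathcal{I}$ of length at least one.

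Beyond existence, your route buys two things. First, it gives $P(t)\ge 0$ for free, which is the paper's next theorem, also cited without proof. Second, it is a template for the later asymptotic results. Replacing $\mathbf{u}\equiv 0$ by $\mathbf{u}=-K\mathbf{x}$ with $A-BK$ Hurwitz in the same identity bounds $P(t)$ independently of the horizon length. That is the mechanism behind the paper's finiteness theorem under stabilizability of $(A,B)$. The same counterexample, where $(A,B)$ is controllable, shows that theorem also needs $Q_f\ge 0$.
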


\begin{remark}
As stated in~\cite{riccati_review}, what Theorem~\ref{the:global_CDRE_existence} actually does is guarantee existence of a Lipschitz constant $\varepsilon$ as required in Theorem~\ref{the:picard_lind}, on any closed interval $\mathcal{I}$, no matter how large. Thus we have global existence and uniqueness for $P(t)$, regardless of the given terminal condition and interval, which we call $P_{\mathcal{I},Q_f} (t)$.
\end{remark}

So this last result guarantees we have a unique solution to the CDRE. This is already a very exciting and perhaps surprising result, as the presumably more complex set of differential equations is guaranteed to yield a result, and not only that, but also a unique one, while neither promise is granted in the algebraic case.

Let us now discuss the properties of this unique solution. We mentioned we are interested only in the non-negative solutions that satisfy the Riccati equation. One of the most important properties of the solutions to the CDRE is a result of the following theorem:

\begin{theorem}[\textbf{CDRE Positive Semi-Definiteness}, Theorem 9 from~\cite{riccati_review}]
Let us consider the CDRE at~\eqref{eqn:CDRE} on a given closed interval $\mathcal{I}$ along with a terminal condition $Q_f \geq 0$. Let $P_{\mathcal{I},Q_f} (t)$ be the corresponding unique solution to the CARE. Then $P_{\mathcal{I},Q_f} (t) \geq 0$.
\end{theorem}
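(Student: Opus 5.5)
Our plan is to prove the statement through the optimal-control interpretation of the CDRE rather than by manipulating the matrix equation directly. Write $\mathcal{I} = [t_0, T_f]$ and let $P(t) \coloneqq P_{\mathcal{I},Q_f}(t)$ be the unique solution on $\mathcal{I}$ with $P(T_f) = Q_f$; Theorem~\ref{the:global_CDRE_existence} guarantees it exists. First, we note that $P(t)$ is symmetric. The transpose $P^T(t)$ solves the same CDRE, because $Q$ and $BR^{-1}B^T$ are symmetric and transposition swaps the terms $PA$ and $A^TP$. It also has the same terminal value $Q_f^T = Q_f$, so $P^T = P$ by uniqueness. This makes the quadratic forms $\mathbf{x}^T P(t)\mathbf{x}$ meaningful.

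Next, we fix any $s \in \mathcal{I}$ and any $\mathbf{x_s} \in \mathbb{R}^n$. We consider an arbitrary continuous input $\mathbf{u}(\cdot)$ on $[s,T_f]$ with the resulting trajectory $\dot{\mathbf{x}} = A\mathbf{x} + B\mathbf{u}$, $\mathbf{x}(s) = \mathbf{x_s}$. Differentiating $\mathbf{x}^T(t)P(t)\mathbf{x}(t)$ and substituting $\dot{P}$ from the CDRE~\eqref{eqn:CDRE}, we complete the square to obtain
\begin{equation*}
\frac{\mathrm{d}}{\mathrm{d}t}\left(\mathbf{x}^T P \mathbf{x}\right) = -\mathbf{x}^T Q \mathbf{x} - \mathbf{u}^T R \mathbf{u} + \left(\mathbf{u} + R^{-1}B^T P \mathbf{x}\right)^T R \left(\mathbf{u} + R^{-1}B^T P \mathbf{x}\right).
\end{equation*}
Integrating from $s$ to $T_f$ then gives the identity
\begin{equation*}
J(\mathbf{x_s},\mathbf{u},s) = \mathbf{x_s}^T P(s)\mathbf{x_s} + \int_s^{T_f}\left\|R^{1/2}\left(\mathbf{u} + R^{-1}B^T P\mathbf{x}\right)\right\|^2 \mathrm{d}\tau,
\end{equation*}
where $J$ is the finite-horizon cost~\eqref{eqn:finite_value_integral}.

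Finally, we choose the feedback $\mathbf{u} = -R^{-1}B^T P(t)\mathbf{x}$. This is admissible because $P$ is continuous on the compact interval $\mathcal{I}$, so the closed-loop linear ODE has a continuous solution. With this choice the integral term vanishes, and
\begin{equation*}
\mathbf{x_s}^T P(s)\mathbf{x_s} = \int_s^{T_f}\left(\mathbf{x}^TQ\mathbf{x} + \mathbf{u}^TR\mathbf{u}\right)\mathrm{d}\tau + \mathbf{x}^T(T_f)Q_f\mathbf{x}(T_f) \geq 0.
\end{equation*}
The right-hand side is nonnegative since $Q \geq 0$, $R > 0$ and $Q_f \geq 0$. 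Because $s$ and $\mathbf{x_s}$ were arbitrary, $P(s) \geq 0$ for every $s \in \mathcal{I}$.

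The main obstacle is the completing-the-square computation. Its cross terms $\mathbf{x}^TPB\mathbf{u}$ must combine exactly with the quadratic term $-PBR^{-1}B^TP$ from the CDRE, and this works only because $P$ is symmetric, which is why the first step is needed.
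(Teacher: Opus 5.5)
Your proof is correct. The paper gives no proof of its own for this statement: it only imports Theorem~9 from the cited review. (The statement's ``unique solution to the CARE'' is a typo for the CDRE solution, which is how you rightly read it.) So rather than matching an argument, you supply one, and each step checks out:
\begin{itemize}
\item $P^T$ solves the same CDRE with the same terminal value, because $Q$, $R^{-1}$ and $Q_f$ are symmetric.
\item With $P=P^T$, the square $\|R^{1/2}(\mathbf{u}+R^{-1}B^TP\mathbf{x})\|^2$ expands to $\mathbf{u}^TR\mathbf{u}+2\mathbf{u}^TB^TP\mathbf{x}+\mathbf{x}^TPBR^{-1}B^TP\mathbf{x}$. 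This matches $\frac{\mathrm{d}}{\mathrm{d}t}(\mathbf{x}^TP\mathbf{x})+\mathbf{x}^TQ\mathbf{x}+\mathbf{u}^TR\mathbf{u}$ term by term.
\item The feedback choice of $\mathbf{u}$ turns the identity into a sum of nonnegative terms.
\end{itemize}

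Compared with the bare citation, your verification identity gives two further results.
\begin{itemize}
\item Since the square is nonnegative for every $\mathbf{u}$, you get $\mathbf{x_s}^TP(s)\mathbf{x_s}=\min_{\mathbf{u}}J(\mathbf{x_s},\mathbf{u},s)$. That is a proof of the finite-horizon case of Theorem~\ref{the:lqr_quad_cost}, which the paper takes as given.
\item Taking $\mathbf{u}\equiv\mathbf{0}$ gives $\mathbf{x_s}^TP(s)\mathbf{x_s}\le J(\mathbf{x_s},\mathbf{0},s)$. The right-hand side is a quadratic form in $\mathbf{x_s}$ whose matrix is continuous in $s$. Applied on the maximal backward interval of existence, this two-sided bound keeps the symmetric $P$ bounded, which rules out finite escape. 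So the identity alone proves existence on all of $\mathcal{I}$.
\end{itemize}

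The second point removes your only reliance on Theorem~\ref{the:global_CDRE_existence}. That helps, because as the paper states it (for an arbitrary terminal condition) that theorem is false. Take the scalar CDRE with $A=Q=0$ and $B=R=1$, i.e.\ $\dot p=p^2$, with $p(T_f)=-1$. Its solution is $1/(T_f-t-1)$, which blows up at $t=T_f-1$. The theorem does hold under your hypothesis $Q_f\ge 0$.

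In matrix terms, your computation with $\mathbf{u}=-R^{-1}B^TP\mathbf{x}$ is the variation-of-constants formula for the closed-loop Lyapunov form $\dot P+PA_{cl}+A_{cl}^TP+Q+PBR^{-1}B^TP=0$, where $A_{cl}=A-BR^{-1}B^TP$, evaluated on $\mathbf{x_s}$. This form is the $N=1$ case of \eqref{eqn:GDARE}.
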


So not only do we have global existence and uniqueness, but given a positive semi-definite terminal condition, we are also guaranteed positive semi-definiteness of the solution. It's almost like a dream come true. The thing is, we cannot guarantee that this unique semi-definite solution is stabilizing without additional requirements, though let us recall that we are operating on a \textbf{finite} interval and thus the asymptotic stability of the solution does not matter to us, as we assume the system to never reach infinity.

To conclude, while solving the CDRE on a finite time interval, we are guaranteed a unique, non-negative solution. That is fine, but now we wish to see how does this solution correlate to the solution of the CARE we started out with, i.e. when we let $T_f \rightarrow \infty$.

\section{Asymptotic Behaviour Of The CDRE}
Let us now consider the asymptotic behaviour of the solutions to the CDRE, as in when letting $T_f \rightarrow \infty$\footnote{See page 52 of~\cite{riccati_review} under 'Asymptotic Behaviour Of The Solution' to be convinced that letting $T_f \rightarrow \infty$ is equivalent, by a change of variables, to letting $t_0 \rightarrow - \infty$ and then consider the interval $(-\infty, T_f] $ instead.}. One interesting thing to ask is whether the solutions can diverge in the limit. The following theorem provides us a cheerful answer:


\begin{theorem}[\textbf{Global CDRE Solution Finiteness}, Theorem 11 from~\cite{riccati_review}]
Let us consider the CDRE at~\eqref{eqn:CDRE} on either a given closed interval $\mathcal{I} \equiv [t_0, T_f]$ or a semi-open interval $\mathcal{I} \equiv [t_0, T_f)$ with $T_f \rightarrow \infty$. If the pair $(A,B)$ is stabilizable, then for any given terminal condition $Q_f \in \mathbb{R}^{n \times n}$, any solution to the CDRE is finite on the entire interval,  i.e. $\forall t \in \mathcal{I} \ \colon \ P_{\mathcal{I},Q_f} (t) < \infty$.
\end{theorem}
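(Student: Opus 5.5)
The natural route is to read the CDRE through its optimal-control meaning and bound $P_{\mathcal{I},Q_f}(t)$ from above and below by quadratic costs. By Theorem~\ref{the:global_CDRE_existence} a unique solution exists on every closed interval, so finiteness on a closed interval $[t_0,T_f]$ is immediate. The content of the statement is therefore a bound that is \emph{uniform} in the horizon length, which is what matters when $T_f \rightarrow \infty$. I will treat $Q_f \geq 0$ first, since this is the case used in the paper. An arbitrary symmetric $Q_f$ only needs a lower bound, which I discuss at the end.

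For the upper bound, use the finite-horizon LQR derivation of Subsection~\ref{subsect:finite_horizon_con_LQR} together with Theorem~\ref{the:lqr_quad_cost}. For every $t \in \mathcal{I}$ and every $\mathbf{x}$, the quantity $\mathbf{x}^T P_{\mathcal{I},Q_f}(t)\mathbf{x}$ is the minimal value of $J$ in~\eqref{eqn:finite_value_integral} started at $\mathbf{x}(t)=\mathbf{x}$. I would verify this rigorously by the completion-of-squares identity. Differentiate $\mathbf{x}^T(\tau)P(\tau)\mathbf{x}(\tau)$ along any trajectory and use the CDRE~\eqref{eqn:CDRE}. This gives
\[
J = \mathbf{x}^TP(t)\mathbf{x} + \int_t^{T_f}\left\| R^{1/2}\left(\mathbf{u}+R^{-1}B^TP\mathbf{x}\right)\right\|^2 \mathrm{d}\tau ,
\]
so $\mathbf{x}^TP(t)\mathbf{x} \leq J(\mathbf{u})$ for every admissible input $\mathbf{u}$. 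Since $(A,B)$ is stabilizable, fix $K$ with $A-BK$ Hurwitz and choose $\mathbf{u}=-K\mathbf{x}$. The closed-loop trajectory satisfies $\|\mathbf{x}(\tau)\| \leq c\,e^{-\alpha(\tau-t)}\|\mathbf{x}\|$ for some constants $c,\alpha>0$. Hence
\[
J \leq \mathbf{x}^T\left(\int_0^\infty e^{(A-BK)^Ts}(Q+K^TRK)e^{(A-BK)s}\mathrm{d}s\right)\mathbf{x} + c^2\|Q_f\|\,\|\mathbf{x}\|^2 ,
\]
and this bound depends neither on $t$ nor on $T_f$.

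For the lower bound, the identity above shows that $\mathbf{x}^TP(t)\mathbf{x}$ is a minimum of a nonnegative cost when $Q_f \geq 0$, so $P(t) \geq 0$. This is the positive semi-definiteness theorem already stated. A symmetric matrix $0 \leq P \leq M$ satisfies $\|P\| \leq \|M\|$, which gives the uniform bound on the whole interval, including the semi-open case. For indefinite $Q_f$, split $Q_f = Q_f^+ - Q_f^-$. The upper bound is unchanged. For the lower bound I would use $J \geq -\mathbf{x}^T(T_f)Q_f^-\mathbf{x}(T_f)$, but this requires controlling $\|\mathbf{x}(T_f)\|$ along minimizing trajectories.

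I expect this indefinite case to be the main obstacle. The minimizing trajectory can grow, so the plan is to use a comparison argument instead. Monotonicity of Riccati solutions in the terminal condition gives $P_{\mathcal{I},Q_f} \geq P_{\mathcal{I},-cI}$ for $c \geq \|Q_f^-\|$. I would then bound $P_{\mathcal{I},-cI}$ from below using the identity above with a concrete input that drives the state to zero, hoping for a horizon-independent lower bound. I am least certain this part closes.
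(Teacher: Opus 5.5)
Your proof of the case $Q_f\geq 0$ is complete. It is also the standard argument behind the cited result; the paper gives no argument of its own and only quotes Theorem~11 of the review. The completion-of-squares identity gives $\mathbf{x}^TP(t)\mathbf{x}\le J(\mathbf{u})$ for every input, and a fixed stabilizing gain bounds the right-hand side independently of $t$ and $T_f$. Nonnegativity of the cost gives $P(t)\geq 0$, so $\|P(t)\|$ is bounded uniformly in the horizon. You do not even need Theorem~\ref{the:global_CDRE_existence}: the same two-sided bound, applied on the maximal backward interval of existence, shows that $P$ cannot blow up. Existence on every interval then comes for free.

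The remaining part cannot be closed, because for indefinite $Q_f$ the statement is false. Take $n=m=1$, $A=0$, $B=R=1$, $Q=0$, $Q_f=-1$; the pair $(A,B)$ is controllable. The CDRE becomes $\dot P=P^2$, and the solution with $P(T_f)=-1$ is
\[
P(t)=\frac{1}{T_f-1-t}.
\]
It tends to $-\infty$ as $t\downarrow T_f-1$, so on any interval of length at least $1$ there is no finite solution. In control terms, take $\dot x=u$ and $h=T_f-t>1$. The constant input $u\equiv v$ gives $J=v^2h-(x+vh)^2\to-\infty$ as $|v|\to\infty$, so the value function is $-\infty$.

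Your proposed lower bound also runs in the wrong direction. Inserting a concrete input into the identity only gives $\mathbf{x}^TP\mathbf{x}\le J(\mathbf{u})$, which is an upper bound. A lower bound requires $J$ to be bounded below over \emph{all} inputs, and that is exactly what fails here. The comparison solution $P_{\mathcal{I},-cI}$ escapes as well; in the example it does so after backward time $1/c$.

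The same example shows that Theorem~\ref{the:global_CDRE_existence} as quoted (``any terminal condition'') also needs $Q_f\geq 0$. So your opening remark that finiteness on closed intervals is immediate holds only in that case. The statement should be restricted to symmetric $Q_f\geq 0$, which is the hypothesis the paper itself imposes in the positive semi-definiteness and limiting-solution theorems. With that restriction your proof is finished, and the last paragraph should simply be dropped.
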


We are then guaranteed that, given the standard condition of the stabilizability of $(A,B)$, any limiting solution will not diverge, but do we still have uniqueness? To make a long story short (if one may, Theorems 12-16 of~\cite{riccati_review} are fascinating in demonstrating this result), let's jump to the main event of our conclusions:

\begin{theorem}[\textbf{CDRE Limiting Solution Existence, Uniqueness and Stability}, Theorem 17 from~\cite{riccati_review}]
Let us consider the CDRE at~\eqref{eqn:CDRE} on a given semi-open interval $\mathcal{I} \equiv [t_0, T_f)$ with $T_f \rightarrow \infty$. If the pair $(A, B)$ is stabilizable and the pair $(A, \sqrt{Q})$ is detectable then for any terminal condition $Q_f \geq 0$ we have:
\begin{equation}
    \lim_{T_f \rightarrow \infty} P_{\mathcal{I},Q_f} (t)= P_{\infty}
\end{equation}
where $P_{\infty} \in \mathbb{R}^{n \times n}$ is a constant, finite, non-negative matrix that is the unique stabilizing solution to the corresponding CARE at~\eqref{eqn:riccati} resulting from dropping the derivative term from the original CDRE.
\end{theorem}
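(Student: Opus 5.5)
We plan to prove convergence by the standard monotonicity-plus-boundedness argument, working first with the terminal condition $Q_f = 0$ and then extending to general $Q_f \geq 0$. By time invariance we can fix $t$ and write $\Pi(s) \coloneqq P_{[t,t+s],0}(t)$, i.e.\ the CDRE solution integrated backwards for a horizon of length $s$. The optimal-control interpretation gives
\begin{equation*}
\mathbf{x}^T \Pi(s) \mathbf{x} = \min_{\mathbf{u}} \int_t^{t+s} \left[ \mathbf{x}^T(\tau) Q \mathbf{x}(\tau) + \mathbf{u}^T(\tau) R \mathbf{u}(\tau) \right] \mathrm{d}\tau ,
\end{equation*}
with $\mathbf{x}(t)=\mathbf{x}$. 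This representation follows from the finite-horizon derivation of~\eqref{eqn:CDRE} via a completion-of-squares argument, which we will carry out explicitly rather than relying only on the PMP heuristic. Since the integrand is nonnegative, $s \mapsto \mathbf{x}^T\Pi(s)\mathbf{x}$ is nondecreasing.

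Next we bound this quantity using stabilizability of $(A,B)$. Choose a gain $K$ with $A-BK$ Hurwitz and apply $\mathbf{u}=-K\mathbf{x}$. This gives $\mathbf{x}^T\Pi(s)\mathbf{x} \leq \mathbf{x}^T \bar P \mathbf{x}$ for all $s$, where $\bar P$ solves the Lyapunov equation $(A-BK)^T\bar P+\bar P(A-BK)+Q+K^TRK=0$. A bounded monotone family of symmetric matrices converges, as one sees by polarization applied to the quadratic forms. Hence $\Pi(s)\to P_\infty \geq 0$. Because the CDRE is autonomous, $\dot\Pi$ equals the Riccati right-hand side evaluated at $\Pi$, which converges. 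Convergence of both $\Pi$ and $\dot\Pi$ forces $\dot\Pi\to 0$, so $P_\infty$ satisfies~\eqref{eqn:riccati}. By Theorem~\ref{the:unique_stab_detectable} and Corollary~\ref{CARE_Existence_Uniqueness}, detectability of $(A,\sqrt{Q})$ implies that the nonnegative solution is unique and stabilizing, so $P_\infty$ is \emph{the} stabilizing solution.

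For a general terminal condition $Q_f\geq 0$ we use a sandwich argument. Monotonicity of the optimal cost in the terminal weight gives $\Pi_0(s)\leq \Pi_{Q_f}(s)$. For the upper bound, we evaluate the cost functional with terminal weight $Q_f$ along the stabilizing feedback $\mathbf{u}=-R^{-1}B^TP_\infty\mathbf{x}$. This yields
\begin{equation*}
\mathbf{x}^T\Pi_{Q_f}(s)\mathbf{x} \leq \mathbf{x}^T P_\infty \mathbf{x} + \mathbf{x}(s)^T (Q_f - P_\infty)\mathbf{x}(s),
\end{equation*}
where the identity behind it comes from differentiating $\mathbf{x}^TP_\infty\mathbf{x}$ along the closed loop and using the CARE. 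Since the closed loop is Hurwitz, $\mathbf{x}(s)\to 0$ exponentially. Therefore $\limsup_s \Pi_{Q_f}(s)\leq P_\infty$, and combined with the lower bound this gives the limit.

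The main obstacle is this last step, together with rigorously justifying the cost representation of $\Pi$ for arbitrary $Q_f\geq 0$. Everything hinges on first establishing that the limit of the zero-terminal solution is the stabilizing solution; this is exactly where detectability enters, through Theorem~\ref{the:unique_stab_detectable}. Without detectability the monotone limit can be a nonstabilizing nonnegative solution.
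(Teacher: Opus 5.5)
Your argument is correct. The thesis itself gives no proof of this statement: it imports it as Theorem 17 of~\cite{riccati_review} and only gestures at Theorems 12--16 there. So your proposal provides a self-contained proof where the paper has only a citation.

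The citation is short. Your route, the standard value-function argument, makes explicit where each hypothesis enters:
\begin{itemize}
\item stabilizability is used only for the uniform Lyapunov bound $\bar P$ on the monotone zero-terminal family;
\item detectability is used only to identify its limit $P_\infty \geq 0$ with the stabilizing CARE solution.
\end{itemize}
The sandwich $\Pi_0(s) \leq \Pi_{Q_f}(s) \leq P_\infty + e^{A_c^T s}(Q_f - P_\infty)e^{A_c s}$, with $A_c \coloneqq A - BR^{-1}B^TP_\infty$, then covers general $Q_f \geq 0$. This avoids any monotonicity in the horizon, which indeed need not hold when $Q_f \neq 0$.

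The two steps you flag as the main obstacle are already closed by your own tools. The completion-of-squares identity
\[
\int_t^{T}\left[\mathbf{x}^TQ\mathbf{x}+\mathbf{u}^TR\mathbf{u}\right]\mathrm{d}\tau + \mathbf{x}^T(T)Q_f\mathbf{x}(T) = \mathbf{x}^T(t)P(t)\mathbf{x}(t) + \int_t^{T}\left\|R^{1/2}\left(\mathbf{u}+R^{-1}B^TP\mathbf{x}\right)\right\|^2\mathrm{d}\tau
\]
holds on any interval where the CDRE solution exists.
\begin{itemize}
\item Evaluating it at the optimal feedback and at $\mathbf{u}\equiv 0$ sandwiches $P(\tau)$ between $0$ and the zero-input cost matrix on the maximal existence interval.
\item That bound rules out finite escape for $Q_f \geq 0$, so you do not need Theorem~\ref{the:global_CDRE_existence}. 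Its claim for arbitrary terminal conditions is too strong anyway, since indefinite $Q_f$ can blow up in finite time.
\end{itemize}

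You can also drop the appeal to Theorem~\ref{the:unique_stab_detectable} with a direct argument. Suppose $A_c\mathbf{v} = \lambda\mathbf{v}$ with $\operatorname{Re}\lambda \geq 0$. The closed-loop Lyapunov identity gives $2\operatorname{Re}(\lambda)\,\mathbf{v}^*P_\infty\mathbf{v} = -\mathbf{v}^*Q\mathbf{v} - \|R^{-1/2}B^TP_\infty\mathbf{v}\|^2$. This forces $\sqrt{Q}\mathbf{v}=0$ and $B^TP_\infty\mathbf{v}=0$, hence $A\mathbf{v}=\lambda\mathbf{v}$, contradicting detectability. This makes the whole proof independent of the cited results.
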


This is very exciting indeed, as we have just discovered that when considering the CARE for large systems ($n \ll 100$), even with the pair $(A, B)$ stabilizable and the pair $(A, \sqrt{Q})$ detectable, we may face serious difficulties when trying to converge on that unique solution. So in that case, the corresponding CDRE comes to our aid, and with any terminal condition, given enough time to converge will yield the desired unique stabilizing solution.
 
 \begin{definition}
 Let us consider the CARE at~\eqref{eqn:riccati}. Assuming they exist, let $P_o \in \mathbb{R}^{n \times n}$ be the optimizing solution and $P_s \in \mathbb{R}^{n \times n}$ be the stabilizing solution to the CARE.
 \end{definition}
  With that, let us consider the following two tables from the last page of~\cite{riccati_review} that summarize the behaviour of the solutions of the CARE and the CDRE:
  
 \begin{center}
 \begin{table}[H]
  \label{tbl:are}
  \centering
  \includegraphics[width=0.65\linewidth]{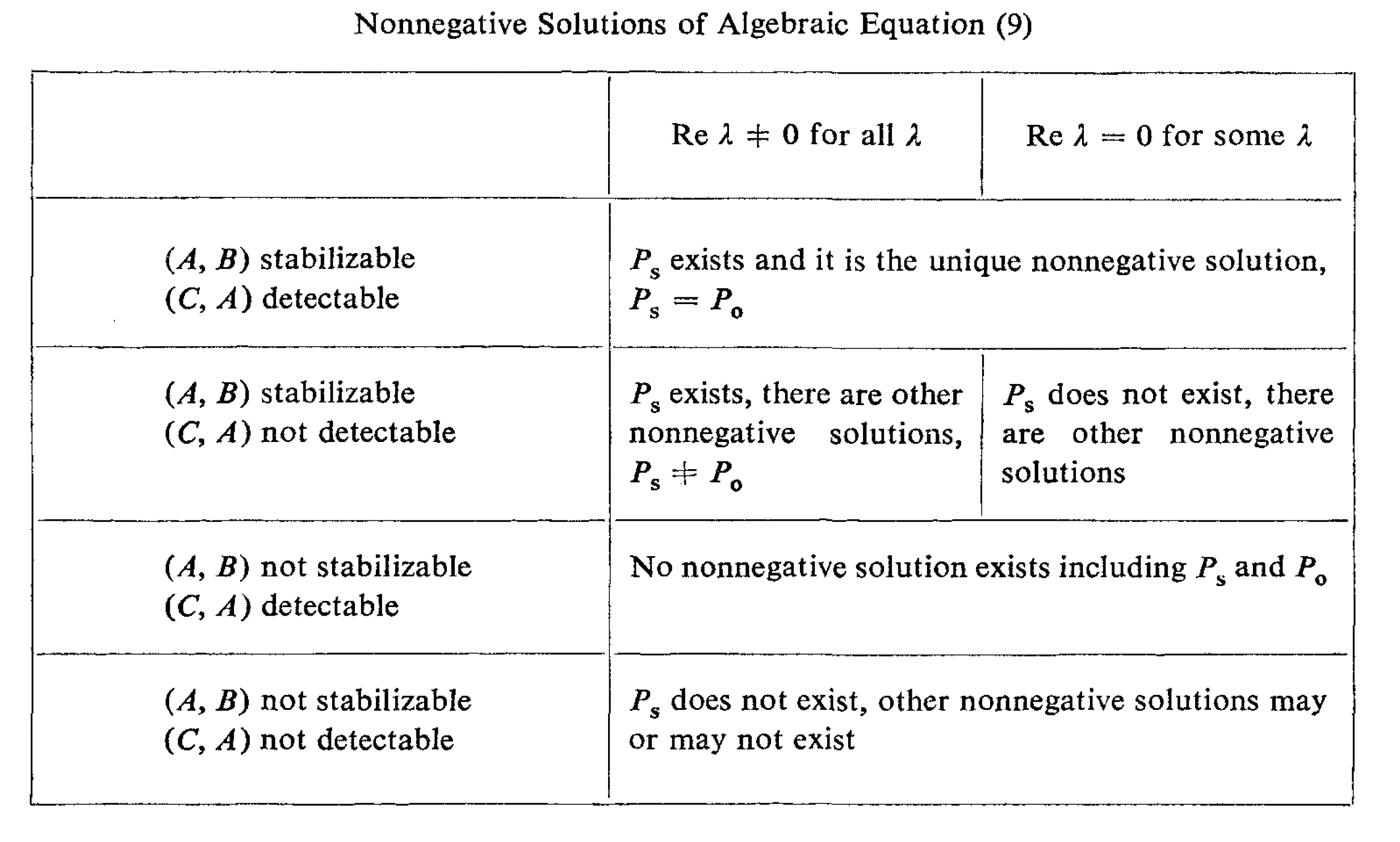}
  \caption{Properties of the CARE under different situations for $(A,B)$ and $(A,C)$}
\end{table}
\end{center}

\begin{center}
\begin{table}[H]
  \label{tbl:dre}
  \centering
  \includegraphics[width=0.65\linewidth]{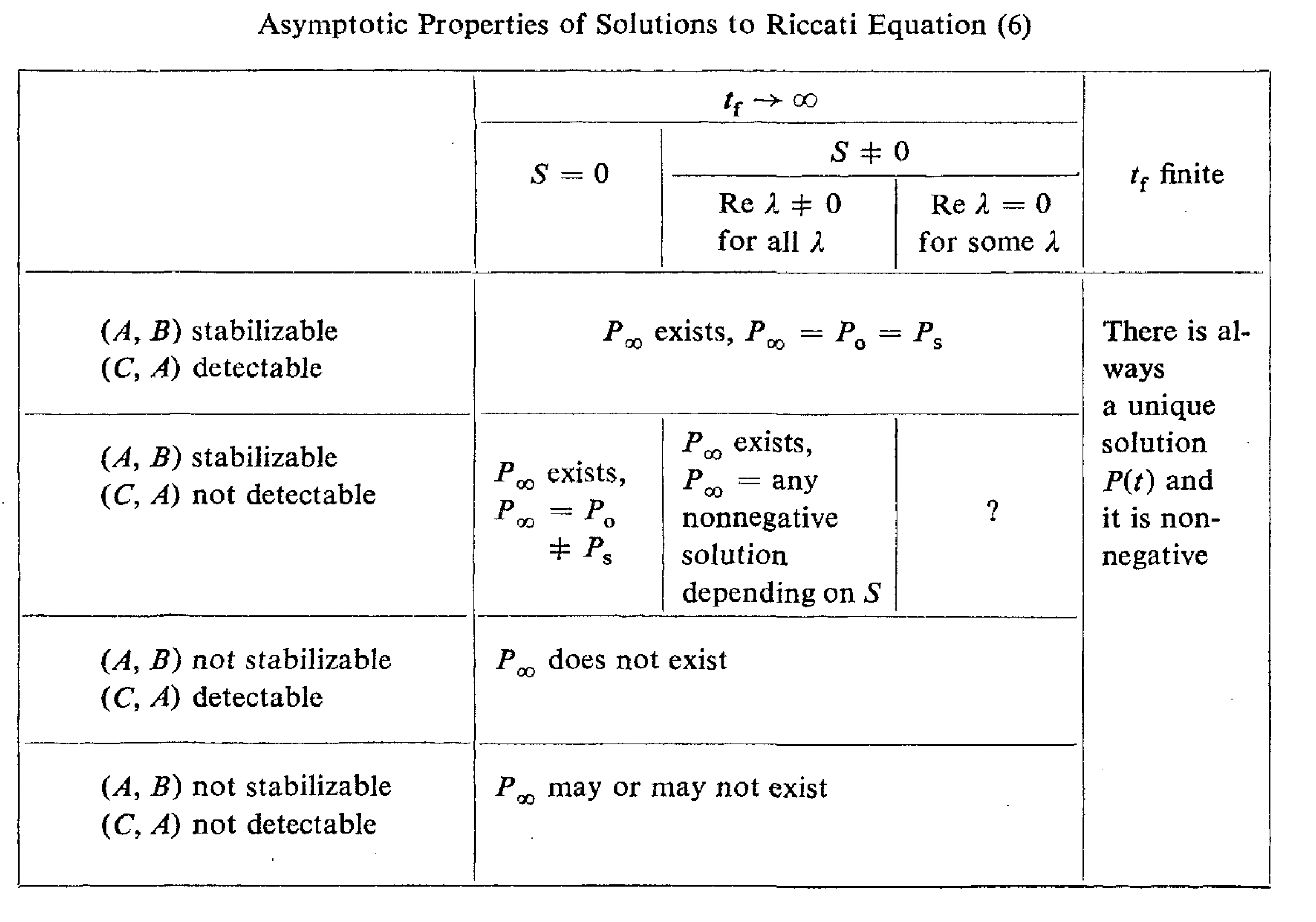}
  \caption{Properties of the CDRE under different situations for $(A,B)$ and $(A,C)$}
\end{table}
\end{center}
 \section{\texttt{PyDiffARE} Algorithm}

As we already laid out, in the case of infinite horizon, the number of solutions to the CARE can grow exponentially with the length of the state vector $n$, the set of GCAREs at \eqref{eqn:GCARE} can grow exponentially with $n$ multiplied by the number of objectives $N$, and in both cases there can be infinite and even uncountable many solutions. This can cause serious difficulties when trying to solve for the unique stabilizing solution, especially in large scale system and with multiple objectives to consider. 

We now suggest a novel approach to help with this matter; instead of solving the set of algebraic equations while hoping to arrive at the desired solution, we propose to transform it to a set of sets of differential equations, each set for each objective, assign appropriate terminal values and iteratively solve them backwards until convergence is reached. 

In this section we will provide the full algorithm, written in psuedocode and the python implementation appears in 

Let us consider the following:
\begin{enumerate}
    \item Our algorithm considers the following parameters:
\begin{itemize}
        \item $A \in \mathbb{R}^{n \times n}$ - the dynamics matrix;
        \item $B \in \mathbb{R}^{m \times n}$ - the input matrix;
        \item $Q \in \mathbb{R}^{n \times n}$ - the state weights, such that  $Q \geq 0$;
        \item $R \in \mathbb{R}^{m \times m}$ - the input weights, such that  $R > 0$.
    \item $T_f \in \mathcal{I}$ - the cut-off time;
    \item $L \in \mathbb{N}$ - the number of sampling points.
    \item $\varepsilon \in (0,1)$ - the convergence threshold;
    \item $\eta \in \mathbb{N}$ - the number of elements to consider for convergence;
    \item $\mathbf{x_0} \in \mathbb{R}^n$ - the initial state vector (optional)
    \item $\mathbf{x_T} \in \mathbb{R}^n$ - the terminal state vector (optional)
\end{itemize}
\item With these, let us define the \textbf{sampling time} $\delta \in \mathbb{R}^{+}$ and \textbf{sampling points} $T \in \mathbb{R}^{{+}^L}$ respectively as:
\begin{equation}
    \begin{aligned}
    \delta \coloneqq \frac{T_f}{L} \ ; \ T \coloneqq (j \delta)_{j=1}^L
    \end{aligned}
\end{equation}
\item Let \pyth{CDRE(A, B, Q, R, P_f, epsilon)} be a function such that:
\begin{itemize}
    \item \pyth{CDRE} receives the matrices $A, B, Q, R$, a terminal condition $P_f \in \mathbb{R}^{n \times n}$, such that $P_f \geq 0$, and a series of strictly decreasing time points $T \coloneqq (t_j)_{j=1}^L$ in the the interval $[t_0, T_f]$, i.e. they satisfy $\forall 1 \leq j \leq L -1  \ \colon \ t_0 \leq t_{j+1} < t_j \leq T_f$;
    \item \pyth{CDRE} returns the solution to the corresponding CDRE at~\ref{eqn:CDRE} at the time points of $T$, denoted $P_T \coloneqq \left(P(t)\right)_{t \in T}$ with a tolerance of $\varepsilon$;
    \item Let \pyth{simulate_x_T(T)} be a functions that:
    \begin{itemize}
        \item \pyth{simulate_x_T(T)} receives the backwards time $T$, reverses it and solves the state model to obtain $X_T \coloneqq \left(\mathbf{x}(t)\right)_{t \in T}$;
        \item \pyth{simulate_x_T(T)} returns $X_T[-1]$, which is the simulated last state variable, denoted $\mathbf{\tilde{x}_T}$.
    \end{itemize}
    In considering the function for the CDRE, there is a huge amount of known solvers, all trying to compete with regards to runtime. Here is an example for such an algorithm that solves CDREs in $O(n^3)$~\cite{BennerPeter2013RMfS} and even proclaims to do so in the future at $O(n^2)$.
\end{itemize}
\end{enumerate}

Using these, let us then define the following \pyth{PyDiffARE} algorithm:

\RestyleAlgo{ruled}
\SetKwComment{Comment}{/* }{ */}
\begin{algorithm}[H]
\caption{PyDiffARE}\label{alg:CARE_to_CDRE}
    \KwIn{$A, B, Q, R, T_f, L, 
\varepsilon, \eta, \mathbf{x_0} , \mathbf{x_T} \ \textrm{(optional, default } \mathbf{0_n})$}
    x\_converged $\gets$ False, P\_converged $\gets$ False\; 
    norms $\gets [ \ ]$\;
    $T_{f_i} \gets T_f, P_f \gets None$\;
    \While{x\_converged is False}{
        \While{P\_converged is False}{
        $\delta \gets T_{f} / L$\;
        $T$ $\gets [ \ j \delta \ ]_{j=L}^1 = [ \ L \delta, \ (L-1)\delta , \ (L-2)\delta, \ \cdots , \ \delta \ ]$\;  
        $P_T \gets $ CDRE $(A,B,Q,R, P_f, T)$\;
        $P_f \gets P_T[-1]$\;
        norms.insert($\|P_0\|$)\;
        \If{length of norms $> \eta$}{
                norms.pop(0)\;
                }
        \If{length of norms = $\eta$}{
        P\_converged $\gets$ True\;
        \For{$i \leftarrow 0$ \KwTo length of norms - 2 }{
        \If{$\| $norms$[i+1]$ $-$ norms$[i]$$\| \geq \varepsilon$}{converged $\gets$ False \;}
        }
        }
        $T_f \gets T_f - \delta$\;
        }
        $\mathbf{\tilde{x}_T} \gets$ simulate\_x\_T $(T)$\;
        \If{$\|\mathbf{\tilde{x}_T} - \mathbf{x_T} \| < \varepsilon$}{x\_converged $\gets$ True\;}
        $T_{f_i} \gets T_{f_i} + 1$\;
        $T_f \gets T_{f_i}$\;
        $\delta \gets T_{f} / L$\;
        $T$ $\gets [ \ j \delta \ ]_{j=L}^1 = [ \ L \delta, \ (L-1)\delta , \ (L-2)\delta, \ \cdots , \ \delta \ ]$\;
    }
    \KwRet{$P_f$}\;
    \label{alg:pydiffare}
\end{algorithm}

\newpage

\chapter{Case Studies}
\label{chap:case_studies}
\epigraph{\textit{Nothing is particularly hard if you divide it into small jobs.}}{\textbf{Henry Ford}}

To illustrate the methods we proposed, we implemented it on the two following, systems which we found to be well suited for showcasing the suggested D\&C approach and infinite horizon solution technique:

\begin{enumerate}
    \item An \textbf{inverted pendulum on a moving cart} - a well known, simple model, used as a '\textit{hello world}' tutorial-like introductory use case. This use case will be fully examined from top to bottom, according to the following steps:
    \begin{itemize}
        \item First, we will use full rigor to define the non-linear system model;
        \item Then we will perform linearization to define the LTI state space model;
        \item Afterwards the described decomposition technique will be implemented upon the system according to pre-defined virtual objectives and corresponding inputs, relevant to this use case; 
        \item Then the case of continuous infinite-horizon cost functions will be considered, and the performance of a standard LQR will be compared with that of a multiplayer game according to the defined virtual objectives;
    \end{itemize}
    \item A non-linear \textbf{quadrotor system} - an aerial vehicle with four engines that is widely used and studied. We will use this model to demonstrate how dynamically changing objectives can be taken into account using our method. This use case is the same one we used in the article we authored, and so it has a full description which we already covered, some of it in the article and some on previous works we mentioned, but will still display in this work in a concise manner.
\end{enumerate}
In this section we will lay out the specific models implemented for both systems and the results acquired.

\section{Inverted Pendulum on a Moving Cart}
An \textbf{inverted pendulum} is a well-known two-dimensional system in classical mechanics that has been solved by various means. It it comprised of a cart moving along a horizontal axis, attached to a rod pendulum, free to rotate about the point of attachment to the cart. The main objective associated with this system is to stabilize the pendulum perpendicular to the cart, so that even if the cart moves, the pendulum is still situated at a right angle with respect to it. 

The input to this system is usually a \textbf{linear force} along the movement axis of the cart, so the control is required to alter this force direction and magnitude to stabilize the pendulum.
To illustrate our method, we will also add an input at the form of a \textbf{pure torque} applied directly to the pendulum. In doing so, we will assign the task of stabilizing the pendulum to that torque. 

In order to consider a balance between multiple objectives, we will add an additional objective in the form of moving the cart to some designated point along its axis, and assign the linear force to doing so.

\subsection{System Modeling}
Let us denote the system as $\mathcal{S}_{IP}$ where $IP$ stands for inverted pendulum. Suppose the cart has a mass of $m_c \ [kg] \in \mathbb{R}^+$ and the pendulum attached to it has a mass of $m_p \ [kg] \in \mathbb{R}^+$. Also suppose the pendulum has uniform mass, a length of $L \ [m] \in \mathbb{R}^+$ and moment of inertia of $I \ [kg \cdot m^2] \in \mathbb{R}^+$.

\subsubsection{State Vector}
Let us define the state vector of the system. Let $t_0=0 \ [s]$ and let $x(t) \ [m] \in \mathcal{X}_x$ be the position of the cart, (where $\mathcal{X}_x \subseteq \mathbb{R}$ is the interval along the $x$-axis which the cart can travel by) defined and continuously differentiable for all time $t \ [s] \in \mathcal{I} = \mathbb{R}^{\geq t_0}= \mathbb{R}^+ \cup \{ 0 \}$.  With that, the time-derivative of $x(t)$, namely $\dot{x}(t) \ [\frac{m}{s}] \in \mathcal{X}_v$ is the velocity of the cart at time $t$ (where $\mathcal{X}_v \subseteq \mathbb{R}$ is the set of all the possible velocities of the cart).
Let $\theta(t) \ [rad] \in \mathcal{X}_\theta$ be the offset angle of the pendulum from the vertical axis, (where $\mathcal{X}_\theta = [0, 2 \pi]$ is all the possible values for the angle) and so $\dot{\theta}(t) \ [ \frac{rad}{s} ] \in \mathcal{X}_\omega$ is the angular velocity at time $t$ (where $\mathcal{X}_\omega$ is the set of all possible angular velocities). Let us denote the state vector of $\mathcal{S}_{IP}$ as $\mathbf{x}(t) \colon \mathbb{R}^+ \cup \{0\} \rightarrow \mathcal{X}$, where $\mathcal{X} \coloneqq \bigtimes_{q \in \{ x, \theta, v, \omega \}} \mathcal{X}_q$, with $\mathcal{X} \subseteq \mathbb{R}^n = \mathbb{R}^4$, as $n \equiv 4$ in this case. Let then define $\mathbf{x}(t)$ as such:

\begin{equation}
    \mathbf{x}(t) \coloneqq \begin{bmatrix}
        x(t) \\
        \theta(t) \\
        \dot{x}(t) \\
        \dot{\theta}(t)
    \end{bmatrix}.
\end{equation}

Let $\mathbf{x_0} \coloneqq \begin{bmatrix}
        x_0 &
        \theta_0 &
        \dot{x}_0 &
        \dot{\theta}_0
    \end{bmatrix}^T$ be the initial value for the state vector.

\subsubsection{System Input}
As for the system input, let us assume there are two distinct inputs applied to $\mathcal{S}_{IP}$:
\begin{itemize}
    \item A linear force $F(t) \ [N] \in \mathcal{U}_F$ applied to the cart along the direction of the $x$ coordinate (where $\mathcal{U}_F \subseteq \mathbb{R}$ is the set of all possible values for $F$);
    \item A pure torque $M(t) \ [N \cdot m] \in \mathcal{U}_M$ applied to the pendulum along the direction of the $\theta$ coordinate (where $\mathcal{U}_M \subseteq \mathbb{R}$ is the set of all possible values for $M$).
\end{itemize}

Let us define the input vector $\mathbf{u}(t) \in \bigtimes_{q \in \{F, M\}} \mathcal{U}_q$ of $\mathcal{S}_{IP}$ as:

\begin{equation}
    \mathbf{u}(t) \coloneqq \begin{bmatrix}
        F(t) \\
        M(t)
    \end{bmatrix}
\end{equation}

\subsubsection{System Diagram}
Let us consider the following diagram to visualize $\mathcal{S}_{IP}$:

\begin{figure}[H]

    \centering
    \begin{tikzpicture} [thick]
\newcommand{\ang}{30}
\draw [brown!80!red] (-2,0) -- (2,0);
\fill [pattern = crosshatch dots,
    pattern color = brown!80!red] (-2,0) rectangle (2,-.2);
\begin{scope} [draw = black,
    fill = blue!20, 
    dot/.style = {black, radius = .025}]
\filldraw [rotate around = {-\ang:(0,1.5)}] (.09,1.5) -- 
    node [very near end, right] {$m_p,L, I$}
    +(0,2) arc (0:180:.09) 
    coordinate [pos = .5] (T) -- (-.09,1.5);
\filldraw (-.65,.15) circle (.15);
\fill [dot] (-.65,.15) circle;
\filldraw (.65,.15) circle (.15);
\fill [dot] (.65,.15) circle;
\filldraw (-1,1.5) -- coordinate [pos = .5] (F)
    (-1,.3) -- node [above = .3cm] {$m_c$}
    (1,.3) -- (1,1.5) 
    coordinate (X) -- (.1,1.5)
    arc (0:180:.1) -- (-1.014,1.5);
\fill [dot] (0,1.52) circle;
\end{scope}
    \draw (T) -- (0,1.52) coordinate (P);
    \draw [-stealth] (P) + (0,-0.5) arc (-90:50:0.8) node [black, near end, right] {$\theta(t)$};
    \draw [stealth-] (P) + (-1.1,1.3) arc (90:90-\ang:3) node [black, near start, left, above] {$M(t)$} ;
    \draw [stealth-] (F) -- node [above] {$F(t)$} + (-1,0);
    \draw [dashed,->] (P) + (1,-0.02) -- node [near end, above right] {$x_c$} + (2.5,-0.02);
    \draw [-stealth] (X) + (0,-.5) -- node [near end, below right] {$x(t)$} + (1,-.5);
    \draw [dashed,->] (P) + (0,-0.5) |- node [near end, above right] {$y$} +(0,2.2);
\end{tikzpicture}
    \caption{A description of the system $\mathcal{S}_{IP}$.}
    \label{fig:S_IP}
\end{figure}

where the origin of the dynamic coordinate system $\begin{bmatrix}
        x_c(t) &
        y(t) &
        z(t)
    \end{bmatrix}^T \in \mathbb{R}^3$ is situated at the point of attachment of the cart and pendulum, with an offset of $x(t)$ from a fixed point $\mathbf{r}_0 \coloneqq \begin{bmatrix}
        x_0 &
        y_0 &
        z_0
    \end{bmatrix}^T \in \mathbb{R}^3$ situated at the cart's initial horizontal position, so that $x_c(t) = x_0 + x(t)$, all respect to the three dimensional origin $\mathbf{0_3} \in \mathbb{R}^3$.

\subsubsection{System Physical Model}

To derive the physical model of the pendulum, we will employ principles from analytical mechanics, specifically the Lagrangian mechanics formalism.~\cite{lagrangian} Let us define the Lagrangian of $\mathcal{S}_{IP}$:

\begin{equation}
    \mathcal{L}\big(\mathbf{x}(t)\big) = \mathcal{L}\Big(x(t), \theta (t), \dot{x}(t), \dot{\theta} (t) \Big) = T\big(\mathbf{x}(t)\big) - V\big(\mathbf{x}(t)\big)
    \label{eqn:L}
\end{equation}



where $\mathcal{L}$ is the Lagrangian and $T$ and $V$ are the kinetic and potential energies of $\mathcal{S}_{IP}$, respectively, all of which are functions from $\mathcal{X}$ to $\mathbb{R}$. In our case, $\mathcal{S}_{IP}$ can be considered a system of two bodies - a point mass $m_c$ which is the cart and a rigid body which is the pendulum, of mass $m_p$ and moment of inertia $I$ relative to its center of mass. Let us assume the coordinate system situated at the initial position of the cart $\mathbf{r}_0$ has unit vectors $\hat{\mathbf{i}} \coloneqq \begin{bmatrix}
        1 &
        0 &
        0
    \end{bmatrix}^T$ and $\hat{\mathbf{j}} \coloneqq \begin{bmatrix}
        0 &
        1 &
        0
    \end{bmatrix}^T$. Note that since that coordinate system is fixed, the derivative of these unit vectors with respect to time are zero. With respect to the fixed coordinate system at $\mathbf{r}_0$, let us consider the kinetic energy of $\mathcal{S}_{IP}$:

\begin{equation}
\begin{aligned}
    T\big(\mathbf{x}(t)\big)
    & = \frac{1}{2} m_c || \dot{\mathbf{r}}_c\big(\mathbf{x}(t)\big)||^2 + \frac{1}{2} m_p  ||\dot{\mathbf{r}}_p\big(\mathbf{x}(t)\big)||^2 \\
    & + \frac{1}{2} \boldsymbol{\omega}\big(\mathbf{x}(t)\big) \cdot \mathbf{H}\big(\mathbf{x}(t)\big)
\end{aligned}
\label{eqn:kinetic_energy}
\end{equation}

where $\mathbf{r}_c\big(\mathbf{x}(t)\big) \in \mathbb{R}^3$ is the positional vector of the cart, $\mathbf{r}_p\big(\mathbf{x}(t)\big) \in \mathbb{R}^3$ is the positional vector to the center of mass of the pendulum, $\boldsymbol{\omega}\big(\mathbf{x}(t)\big) \in \mathbb{R}^3$ is the angular velocity vector of the pendulum and $\mathbf{H}\big(\mathbf{x}(t)\big) \in \mathbb{R}^3$ is the angular momentum vector of the pendulum. Denoting $l \coloneqq \frac{L}{2}$, let us write expressions for the positional vectors:

\begin{equation}
\begin{aligned}
    \mathbf{r}_c\big(\mathbf{x}(t)\big) & = x(t) \ \hat{\mathbf{i}};\\
    \mathbf{r}_p\big(\mathbf{x}(t)\big) & = \Big( x(t) + l \sin \big( \theta (t) \big) \Big) \ \hat{\mathbf{i}} + l \cos \big( \theta (t) \big) \ \hat{\mathbf{j}}.
\end{aligned}
\label{eqn:positions}
\end{equation}

Differentiating, we have the linear velocity vectors, while using the fact that the unit vectors are time-independent:

\begin{equation}
\begin{aligned}
    \dot{\mathbf{r}}_c\big(\mathbf{x}(t)\big) & = \dot{x}(t) \ \hat{\mathbf{i}};\\
    \dot{\mathbf{r}}_p\big(\mathbf{x}(t)\big) & = \Big( \dot{x}(t) + l \cos \big( \theta (t) \big)\dot{\theta} (t) \Big) \ \hat{\mathbf{i}} - l \sin \big( \theta (t) \big)\dot{\theta} (t) \ \hat{\mathbf{j}}.
\end{aligned}
\end{equation}

Taking the the squared norms of these vectors:

\begin{equation}
\begin{aligned}
    ||\dot{\mathbf{r}}_c\big(\mathbf{x}(t)\big)||^2 & = \dot{x}^2(t) \ ;\\
    ||\dot{\mathbf{r}}_p\big(\mathbf{x}(t)\big)||^2 & = \Big( \dot{x}(t) + l \cos \big( \theta (t) \big)\dot{\theta} (t) \Big)^2 + \Big(-l \sin \big( \theta (t) \big)\dot{\theta}(t) \Big)^2\\
    & = \dot{x}^2(t) + 2 l \dot{x}(t)\cos \big( \theta (t) \big)\dot{\theta} (t) + l^2 \cos^2 \big( \theta (t) \big)\dot{\theta}^2 (t)\\
    & + l^2 \sin^2 \big( \theta (t) \big)\dot{\theta}^2 (t)\\
    & = \dot{x}^2(t) + 2 l \dot{x}(t)\cos \big( \theta (t) \big)\dot{\theta} (t) + l^2 \dot{\theta}^2 (t).
\end{aligned}
\label{eqn:velocities_squared_norms}
\end{equation}

Let us assume the $z$ axis perpendicular to the  plane has a unit vector $\hat{\mathbf{k}} \coloneqq \begin{bmatrix}
        0 &
        0 &
        1
    \end{bmatrix}^T$. Thus the angular velocity and momentum vectors can be written as:

\begin{equation}
\begin{aligned}
    \boldsymbol{\omega}\big(\mathbf{x}(t)\big) & = \dot{\theta} (t) \hat{\mathbf{k}};\\
    \mathbf{H}\big(\mathbf{x}(t)\big) & = I \dot{\theta} (t) \hat{\mathbf{k}}.
\end{aligned}
\end{equation}

So the dot product of these two is:

\begin{equation}
\begin{aligned}
    \boldsymbol{\omega}\big(\mathbf{x}(t)\big) \cdot \mathbf{H}\big(\mathbf{x}(t)\big) = I \dot{\theta}^2 (t).
\end{aligned}
\label{eqn:angular_dot}
\end{equation}

Plugging \eqref{eqn:velocities_squared_norms} and \eqref{eqn:angular_dot} to \eqref{eqn:kinetic_energy} we have:

\begin{equation}
\begin{aligned}
     T\big(\mathbf{x}(t)\big)   & = \frac{1}{2} m_c \dot{x}^2(t) \\
    & + \frac{1}{2} m_p \Big(\dot{x}^2(t)  + 2 l \dot{x}(t)\cos \big( \theta (t) \big)\dot{\theta} (t) + l^2 \dot{\theta}^2 (t)\Big) \\
    & + \frac{1}{2} I \dot{\theta}^2 (t).
    \label{eqn:T}
\end{aligned}
\end{equation}

Now let us derive the potential energy of $\mathcal{S}_{IP}$. To do so, we'll use as reference the horizontal line that goes through the point where the cart and pendulum are attached at. So the point of reference $\mathbf{r}_{ref}$ satisfies:
\begin{equation}
    \mathbf{r}_{ref} = \mathbf{r}_c.
\end{equation}
With respect to it, the only conservative force acting upon $\mathcal{S}_{IP}$ is the gravitational force acting on the pendulum, and thus:

\begin{equation}
    \mathbf{F}_{C} =  \mathbf{F}_{{mg}_p} = - m_p g \  \hat{\mathbf{j}}
\end{equation}

where $\mathbf{F}_{C}$ is the total conservative force acting upon the system and  $\mathbf{F}_{{mg}_p}$ is the gravitational force acting on the pendulum. The potential energy is by definition:

\begin{equation}
\begin{aligned}
    V\big(\mathbf{x}(t)\big) & =   \int_{\mathbf{r}_{ref} }^{\mathbf{r}_p} \mathbf{F}_{C} \cdot \mathrm{d}\mathbf{r} =  \int_{\mathbf{r}_{c}}^{\mathbf{r}_p} \left(- m_p g\right) \  \hat{\mathbf{j}} \cdot \left(\mathrm{d}x \hat{\mathbf{i}} + \mathrm{d}y \hat{\mathbf{j}} + \mathrm{d}z \hat{\mathbf{k}}\right) \\
    & =  \int_{\mathbf{r}_c \cdot \hat{\mathbf{i}}}^{\mathbf{r}_p \cdot \hat{\mathbf{i}}} \ 0\mathrm{d}x + \int_{\mathbf{r}_c \cdot \hat{\mathbf{j}}}^{\mathbf{r}_p \cdot \hat{\mathbf{j}}} \left(- m_p g\right) \mathrm{d}y  + \int_{\mathbf{r}_c \cdot \hat{\mathbf{k}}}^{\mathbf{r}_p \cdot \hat{\mathbf{k}}} 0 \mathrm{d}z\\
    &  = - m_p g\int_{0}^{l \cos \left( \theta (t) \right)} \ \mathrm{d}y = - m_pgl \cos \big( \theta (t) \big).
\end{aligned}
\label{eqn:V}
\end{equation}

Plugging \eqref{eqn:T} and \eqref{eqn:V} to \eqref{eqn:L} we have the following expression for the Lagrangian:

\begin{equation}
\begin{aligned}
    \mathcal{L}\big(\mathbf{x}(t)\big) & = \frac{1}{2} m_c \dot{x}^2(t) + \frac{1}{2} m_p \Big(\dot{x}^2(t)  + 2 l \dot{x}(t)\cos \big( \theta (t) \big)\dot{\theta} (t) + l^2 \dot{\theta}^2 (t)\Big) \\
    & + \frac{1}{2} I \dot{\theta}^2 (t) + m_pgl \cos \big( \theta (t) \big).
    \label{eqn:L_final}
\end{aligned}
\end{equation}

To consider the Euler-Lagrange equation, let us define our virtual coordinates as $x(t)$ and $\theta(t)$. For each of these, let us express its corresponding Euler-Lagrange equation:

\begin{equation}
    \begin{aligned}
        \frac{\mathrm{d}}{\mathrm{d}t}\left(\frac{\partial \mathcal{L}\big(\mathbf{x}(t)\big)}{\partial \dot{x}(t)}\right) - \frac{\partial \mathcal{L}\big(\mathbf{x}(t)\big)}{\partial x(t)} = F_{{n.c.}_x}(t); \\
        \frac{\mathrm{d}}{\mathrm{d}t}\left(\frac{\partial \mathcal{L}\big(\mathbf{x}(t)\big)}{\partial \dot{\theta}(t)}\right) - \frac{\partial \mathcal{L}\big(\mathbf{x}(t)\big)}{\partial \theta(t)} = F_{{n.c.}_\theta}(t);
    \end{aligned}
    \label{eqn:E-L}
\end{equation}

where $F_{{n.c.}_x}(t)$ and $F_{{n.c.}_\theta}(t)$ are the non conservative forces acting along the trajectories of the virtual coordinates $x(t)$ and $\theta(t)$, respectively. Plugging in the Lagrangian from \eqref{eqn:L_final}, let us now evaluate the derivatives of the Lagrangian with respect to each virtual coordinate and its time-derivative:

\begin{equation}
    \begin{aligned}
        \frac{\partial \mathcal{L}\big(\mathbf{x}(t)\big)}{\partial \dot{x}(t)} & =  \left(m_c + m_p\right)  \dot{x}(t) + m_p l \cos \big( \theta (t) \big)\dot{\theta} (t); \\
        \frac{\partial \mathcal{L}\big(\mathbf{x}(t)\big)}{\partial x(t)} & =  0; \\
        \frac{\partial \mathcal{L}\big(\mathbf{x}(t)\big)}{\partial \dot{\theta}(t)} & =  m_p \left(  l \dot{x}(t)\cos \left(\theta(t)\right) + l^2 \dot{\theta}(t)\right) + I\dot{\theta}(t);\\
        \frac{\partial \mathcal{L}\big(\mathbf{x}(t)\big)}{\partial \theta(t)} & = -m_p l \sin \left(\theta(t)\right) \left( g + \dot{x}(t) \dot{\theta}(t)\right).
    \end{aligned}
\end{equation}

Now let us calculate the corresponding time derivatives:

\begin{equation}
    \begin{aligned}
        \frac{\mathrm{d}}{\mathrm{d}t}\left(\frac{\partial \mathcal{L}\big(\mathbf{x}(t)\big)}{\partial \dot{x}(t)}\right) & =  \left(m_c + m_p\right)  \ddot{x}(t)  + m_p l\Big(  \cos \big( \theta (t) \big)\ddot{\theta} (t) - \sin \big( \theta (t) \big)\dot{\theta}^2 (t)\Big); \\
        \frac{\mathrm{d}}{\mathrm{d}t}\left(\frac{\partial \mathcal{L}\big(\mathbf{x}(t)\big)}{\partial \dot{\theta}(t)}\right) & =  \left( m_pl^2 + I \right) \ddot{\theta}(t) + m_pl \left( \ddot{x}(t)\cos \left(\theta(t)\right) - \dot{x}(t)\sin \left(\theta(t)\right)\dot{\theta}(t)\right).
    \end{aligned}
\end{equation}

Plugging all to \eqref{eqn:E-L} with $ F_{{n.c.}_x}(t) = F(t)$ and $F_{{n.c.}_\theta}(t) = M(t)$ we finally have the model for $\mathcal{S}_{IP}$:

\begin{equation}
    \begin{aligned}
        \left(m_c + m_p\right)  \ddot{x}(t) & + m_p l\Big(  \cos \big( \theta (t) \big)\ddot{\theta} (t) - \sin \big( \theta (t) \big)\dot{\theta}^2 (t)\Big) = F(t); \\
        \left( m_pl^2 + I \right) \ddot{\theta}(t) & + m_pl  \Big(\cos \left(\theta(t)\right)\ddot{x}(t) +  g  \sin \left(\theta(t)\right)\Big) = M(t).
    \end{aligned}
\end{equation}

Solving for $\ddot{x}(t)$ from the first equation we have:

\begin{equation}
    \begin{aligned}
        \ddot{x}(t) = \frac{1}{m_c + m_p} \left(F(t) + m_p l\Big(  \sin \big( \theta (t) \big)\dot{\theta}^2 (t) - \cos \big( \theta (t) \big)\ddot{\theta} (t) \Big)\right).
    \end{aligned}
    \label{eqn:x_ddot_first}
\end{equation}

Plugging this in the second equation:

\begin{equation}
    \begin{aligned}
        \left( m_pl^2 + I \right) \ddot{\theta}(t) & + m_pl  \Bigg[\frac{\cos \left(\theta(t)\right)}{m_c + m_p} \Big[F(t) + m_p l\Big(   \sin \big( \theta (t) \big)\dot{\theta}^2 (t) \\ & - \cos \big( \theta (t) \big)\ddot{\theta} (t)\Big)\Big] 
         +  g  \sin \left(\theta(t)\right)\Bigg] = M(t).
    \end{aligned}
\end{equation}

Solving for $\ddot{\theta}(t)$:

\begin{equation}
    \begin{aligned}
        \ddot{\theta}(t) = & \frac{1}{m_pl^2 + I - \frac{m^2_pl^2}{m_c + m_p}\cos^2 \big( \theta (t) \big)}  \Bigg[M(t) \\ & - m_pl  \bigg[\frac{\cos \left(\theta(t)\right)}{m_c + m_p} \Big[F(t) + m_p l   \sin \big( \theta (t) \big)\dot{\theta}^2 (t) \Big] 
         +  g  \sin \left(\theta(t)\right)\bigg] \Bigg]
    \end{aligned}
\end{equation}

Plugging in \eqref{eqn:x_ddot_first} we have:

\begin{equation}
    \begin{aligned}
        \ddot{x}(t) =& \frac{1}{m_c + m_p} \Bigg[F(t) + m_p l\Bigg(  \sin \big( \theta (t) \big)\dot{\theta}^2 (t) \\  -& \cos \big( \theta (t) \big)\frac{1}{m_pl^2 + I - \frac{m^2_pl^2}{m_c + m_p}\cos^2 \big( \theta (t) \big)}  \Bigg[M(t) \\  -& m_pl  \bigg[\frac{\cos \left(\theta(t)\right)}{m_c + m_p} \Big[F(t) + m_p l   \sin \big( \theta (t) \big)\dot{\theta}^2 (t) \Big] 
         +  g  \sin \left(\theta(t)\right)\bigg] \Bigg] \Bigg)\Bigg].
    \end{aligned}
\end{equation}

\subsubsection{Linearization}

We now would like to perform linearization in order to use our method. To do so, we'll need to choose an equilibrium point to linearize $\mathcal{S}_{IP}$ around. Considering we would like to stabalize the system around the point where $\theta(t) \approx \pi$, which is a non-stable equilibrium point, we'll linearize the system around it. Let us denote $\theta(t) = \pi + \delta(t)$. We have:

\begin{equation}
\begin{aligned}
    \sin \left(\theta(t)\right) = & \sin \left(\pi + \delta(t)\right) = \sin \left(\pi - (-\delta(t))\right) \\ = & \sin \left( (-\delta(t))\right) = - \sin \left( \delta(t)\right);\\
    \cos \left(\theta(t)\right) = &  \cos \left(\pi - (-\delta(t))\right) \\ = & -\cos \left( (-\delta(t))\right) = - \cos \left( \delta(t)\right).
    \label{eqn:theta_to_delta}
\end{aligned}
\end{equation}
for all $t \in \mathbb{R}^+$. Let us now consider a small angle approximation for $\delta(t)$, as in we assume all powers of $\delta (t)$ equal or greater than $2$ is are negligible, as in $\forall k \geq 2 \ ; \ \delta ^k(t) \approx 0$. Using the Maclaurin expansion for the trigonometric functions we have:

\begin{equation}
    \begin{aligned}
    \sin \left(\delta(t)\right) & = \sum^{\infty}_{n=0} \frac{(-1)^n}{(2n+1)!} \left(\delta(t)\right)^{2n+1} = \delta(t) - \frac{\delta^3(t)}{3!} + \frac{\delta^5(t)}{5!} - \cdots  \\ & \approx \delta(t) = \theta(t) - \pi;\\
        \cos \left(\delta(t)\right) &= \sum^{\infty}_{n=0} \frac{(-1)^n}{(2n)!} \left(\delta(t)\right)^{2n} = 1 - \frac{\delta^2(t)}{2!} + \frac{\delta^4(t)}{4!} - \cdots \approx 1;
    \end{aligned}
\end{equation}
for all $t \in \mathbb{R}^+$. Plugging this approximation in \eqref{eqn:theta_to_delta} we have:
\begin{equation}
\begin{aligned}
    \sin \left(\theta(t)\right) & \approx \pi - \theta(t);\\
    \cos \left(\theta(t)\right) & \approx - 1.
\end{aligned}
\end{equation}
Moreover, we assume products of $x(t)$, $\theta(t)$ and their time derivatives are relatively small and close to zero. Thus the model reduces to:

\begin{equation}
    \begin{aligned}
        \ddot{\theta}(t) = & \frac{m_c + m_p}{m_cm_pl^2 + I (m_c + m_p)}  \Bigg[M(t) + m_pl  \bigg[\frac{1}{m_c + m_p} F(t) 
         +  g  \theta(t)\bigg] \Bigg]\\
         = &  \frac{m_pl}{m_cm_pl^2 + I (m_c + m_p)}F(t) + \frac{m_c + m_p}{m_cm_pl^2 + I (m_c + m_p)} M(t)\\
         + & \frac{m_pgl(m_c + m_p)}{m_cm_pl^2 + I (m_c + m_p)}  \theta(t)\\
         \ddot{x}(t) =& \frac{1}{m_c + m_p} \Bigg[F(t) + m_p l\Bigg(  \frac{m_c + m_p}{m_cm_pl^2 + I (m_c + m_p)}  \Bigg[M(t) \\  +& m_pl  \bigg[\frac{1}{m_c + m_p} F(t)
         +  g  \theta(t)\bigg] \Bigg] \Bigg)\Bigg]\\
         = & \frac{1}{m_c + m_p}\left( 1 + \frac{m^2_pl^2}{m_cm_pl^2 + I (m_c + m_p)}\right)F(t) \\
         + & \frac{m_pl}{m_cm_pl^2 + I (m_c + m_p)} M(t) + \frac{m^2_pgl^2}{m_cm_pl^2 + I (m_c + m_p)}   \theta(t)\\
         = & \frac{m_pl^2 + I}{m_cm_pl^2 + I (m_c + m_p)}F(t) + \frac{m_pl}{m_cm_pl^2 + I (m_c + m_p)} M(t) \\
         + & \frac{m^2_pgl^2}{m_cm_pl^2 + I (m_c + m_p)}   \theta(t).
    \end{aligned}
    \label{eqn:linear_model}
\end{equation}

\subsubsection{State Space}
Let us define:

\begin{equation}
    \begin{aligned}
        D \coloneqq & m_cm_pl^2 + I (m_c + m_p);\\
        a_{21} \coloneqq & \frac{m^2_p g l^2}{D}; \\
        a_{31} \coloneqq & \frac{m_p g l (m_c + m_p)}{D}; \\
        b_{21} \coloneqq & \frac{m_pl^2 + I}{D};\\
        b_{31} \coloneqq&  \frac{m_p l}{D}; \\
        b_{22} \coloneqq& b_{31};\\
        b_{32} \coloneqq& \frac{m_c + m_p}{D}.
    \end{aligned}
\end{equation}

Plugging this into \eqref{eqn:linear_model} we have:

\begin{equation}
    \begin{aligned}
        \ddot{\theta}(t) = &  b_{31}F(t) + b_{32} M(t) + a_{31}  \theta(t)\\
         \ddot{x}(t) =& b_{21}F(t) + b_{22} M(t)
         + a_{21} \theta(t).
    \end{aligned}
\end{equation}

With this, we finally arrive at the following state space for $\mathcal{S}_{IP}$:

\begin{equation}
    \begin{bmatrix}
        \dot{x}(t) \\
        \dot{\theta}(t) \\
        \ddot{x}(t) \\
        \ddot{\theta}(t)
    \end{bmatrix} = \underbrace{\begin{bmatrix}
        0 & 0 & 1 & 0 \\
        0 & 0 & 0 & 1 \\
        0 & a_{21} & 0 & 0 \\
        0 & a_{31} & 0 & 0
    \end{bmatrix}}_{A} \begin{bmatrix}
        x(t) \\
        \theta(t) \\
        \dot{x}(t) \\
        \dot{\theta}(t)
    \end{bmatrix} + \underbrace{\begin{bmatrix}
        0 & 0\\
        0 & 0\\
        b_{21} & b_{22}\\
        b_{31} & b_{32}
    \end{bmatrix}}_{B} \begin{bmatrix}
        F(t) \\
        M(t) 
    \end{bmatrix} 
\end{equation}

which of course, we will denote as:

\begin{equation}
    \dot{\mathbf{x}}(t) = A \mathbf{x}(t) + B \mathbf{u}(t)
\end{equation}

\subsection{Infinite Horizon LQR Control}

Let us now consider a standard LQR controller for $\mathcal{S}_{IP}$, i.e. consider our approach for the case where $N \equiv 1$. Let us set $t_0 \equiv 0$, and let us define the following objective:

\begin{equation}
\begin{aligned}
    O \coloneqq & \left( Q, R\right),
\end{aligned}
\end{equation}

where:

\begin{itemize}
    \item Let us define:
    \begin{equation}
        \begin{aligned}
            Q_x \coloneqq & \begin{bmatrix}
        q_l & 0 & 0 & 0 \\
        0 & q_s & 0 & 0 \\
        0 & 0 & q_m & 0 \\
        0 & 0 & 0 & q_s
    \end{bmatrix};\\
    Q_\theta \coloneqq & \begin{bmatrix}
        q_s & 0 & 0 & 0 \\
        0 & q_l & 0 & 0 \\
        0 & 0 & q_s & 0 \\
        0 & 0 & 0 & q_m
        \end{bmatrix},
        \end{aligned}
    \end{equation}
    for some $q_s, q_m, q_l \in \mathbb{R}^+$ that satisfy $q_l \gg q_m \gg q_s$. Notice $Q_x$ affects $x(t)$ the most, $\dot{x}(t)$ a bit less, and almost does not affect $\theta (t)$ and $\dot{\theta} (t)$ at all, while $Q_\theta$ works vice versa - it affects $\theta (t) $ the most, $\dot{\theta}(t)$ a bit less, and almost does not affect $x (t)$ and $\dot{x} (t)$. Let us set $Q$ to average these two, i.e.:
    \begin{equation}
        Q \coloneqq \frac{Q_x + Q_\theta}{2} = \frac{1}{2} \begin{bmatrix}
        q_l + q_s & 0 & 0 & 0 \\
        0 & q_l + q_s & 0 & 0 \\
        0 & 0 & q_m + q_s & 0 \\
        0 & 0 & 0 & q_m + q_s
    \end{bmatrix}.
    \end{equation}
    Since $q_l \gg q_m \gg q_s$, we can approximate:
    \begin{equation}
        Q \approx \frac{1}{2} \begin{bmatrix}
        q_l & 0 & 0 & 0 \\
        0 & q_l & 0 & 0 \\
        0 & 0 & q_m & 0 \\
        0 & 0 & 0 & q_m
    \end{bmatrix}.
    \end{equation}
    \item Let us define:
    \begin{equation}
        \begin{aligned}
            R \coloneqq & \begin{bmatrix}
        r & 0 \\
        0 & r \\
    \end{bmatrix},
        \end{aligned}
    \end{equation}
    for some $r \in \mathbb{R}^+$.
\end{itemize}

Using the approximation for $Q$, the associated cost function will be:

\begin{equation}
\begin{aligned}
        J\Big(\mathbf{x}(t), \mathbf{v}(t), t_0\Big) = &
       \int_{t_0}^{\infty} \Big[\mathbf{x}(t)^TQ\mathbf{x}(t) + \mathbf{v}^T(t)R\mathbf{v}(t)\Big]\mathrm{d}t\\
       = & \int_{0}^{\infty} \Big[\mathbf{x}(t)^TQ\mathbf{x}(t) + \mathbf{u}^T(t)R\mathbf{u}(t)\Big]\mathrm{d}t \\
       = & \frac{1}{2}\bigint_{0}^{\infty} \begin{bmatrix}
        x(t) \\
        \theta(t) \\
        \dot{x}(t) \\
        \dot{\theta}(t)
    \end{bmatrix}^T \begin{bmatrix}
        q_l & 0 & 0 & 0 \\
        0 & q_l & 0 & 0 \\
        0 & 0 & q_m & 0 \\
        0 & 0 & 0 & q_m
    \end{bmatrix}\begin{bmatrix}
        x(t) \\
        \theta(t) \\
        \dot{x}(t) \\
        \dot{\theta}(t)
    \end{bmatrix}\mathrm{d}t\\
    & + \bigintsss_{0}^{\infty}\begin{bmatrix}
        F(t) \\
        M(t) 
    \end{bmatrix} ^T\begin{bmatrix}
        r & 0 \\
        0 & r \\
    \end{bmatrix}\begin{bmatrix}
        F(t) \\
        M(t) 
    \end{bmatrix} \mathrm{d}t\\
    = & \frac{1}{2}\int_{0}^{\infty} \left[q_l \left( x^2(t) + \theta^2(t)\right) + q_m \left( \dot{x}^2(t) + \dot{\theta}^2(t)\right) \right]\mathrm{d}t\\
    + & r \int_{0}^{\infty} \left[F^2(t) + M^2(t)\right]\mathrm{d}t.
\end{aligned}
\end{equation}

Now let us consider the optimal inputs values. Using the theory laid out in appendix \ref{subsection:cont_inf_lqr} we have that the optimal input adheres the expression in equation \eqref{eqn:optimal_proposed_value}: $\mathbf{v^*}(t) = - R^{-1}B^T P \mathbf{x^*}(t)$, where the positive-definite matrix $P \in \mathbb{R}^{n \times n}$ is obtained by solving the appropriate CARE as described in equation \eqref{eqn:riccati}. 



\subsection{System Decomposition}
So, let us consider two virtual objectives, one for the $x$ coordinate and the other for the $\theta$ coordinate, while taking changing values for the weights of one of them, say for the $x$ coordinate. So let us define the following virtual vector:

\begin{equation}
    \mathbf{v}(t) \coloneqq \begin{bmatrix}
        v_x(t)\\
        v_\theta(t)
    \end{bmatrix}
\end{equation}
with corresponding lengths $(m_x, m_\theta) \coloneqq (1, 1)$ such that:
\begin{equation}
\begin{aligned}
    v_x(t) \coloneqq b_{21}F(t) + b_{22} M(t)\\
    v_\theta(t) \coloneqq b_{31}F(t) + b_{32} M(t).
\end{aligned}
\end{equation}
Let us define:
\begin{equation}
    M_x \coloneqq \begin{bmatrix}
        b_{21} & b_{22}
    \end{bmatrix} \ ; \ M_\theta \coloneqq \begin{bmatrix}
        b_{31} & b_{32}
    \end{bmatrix}.
\end{equation}
With that, let the augmented division matrix of $\mathcal{S}_{IP}$ to the corresponding decomposed system $\mathcal{S}_{{IP}_{x, \theta}}$ be:
\begin{equation}
    M \coloneqq \begin{bmatrix}
        M_x \\
        M_\theta
    \end{bmatrix} = \begin{bmatrix}
        b_{21} & b_{22} \\
        b_{31} & b_{32}
    \end{bmatrix}
\end{equation}
Thus we have:
\begin{equation}
    \begin{bmatrix}
        v_x(t) \\
        v_\theta(t) 
    \end{bmatrix} = \begin{bmatrix}
          b_{21} & b_{22} \\
          b_{31} &  b_{32}
    \end{bmatrix} \begin{bmatrix}
        F(t) \\
        M(t)
    \end{bmatrix}.
\end{equation}
And more concisely:
\begin{equation}
    \mathbf{v}(t) = M \mathbf{u}(t).
\end{equation}
Inverting $M$ we have:
\begin{equation}
    \mathbf{u}(t) = M^{-1} \mathbf{v}(t).
\end{equation}
Where:
\begin{equation}
    M^{-1} = 
        \frac{1}{b_{21}b_{32}-b_{22}b_{31}}\begin{bmatrix} b_{32} & -b_{22}\\
        -b_{31} & b_{21}
    \end{bmatrix}
\end{equation}
Correspondingly, using the lengths $(m_x, m_\theta)$, let us denote:
\begin{equation}
    \Tilde{M}_x \coloneqq \frac{1}{b_{21}b_{32}-b_{22}b_{31}} \begin{bmatrix}
        b_{32}\\
        -b_{31}
    \end{bmatrix} \ ; \ \Tilde{M}_\theta \coloneqq \frac{1}{b_{21}b_{32}-b_{22}b_{31}} \begin{bmatrix}
        -b_{22}\\
        b_{21}
    \end{bmatrix}.
\end{equation}
So we have:
\begin{equation}
    M^{-1} = \begin{bmatrix}
        \Tilde{M}_x & \Tilde{M}_\theta
    \end{bmatrix}
\end{equation}
Plugging to the model at~\eqref{eqn:basic_sys} we have:
\begin{equation}
\begin{aligned}
    \dot{\mathbf{x}}(t) & = A\mathbf{x}(t) + B \mathbf{u}(t) = A\mathbf{x}(t) + B M^{-1} \mathbf{v}(t) \\
    & = A\mathbf{x}(t) + B\begin{bmatrix}
        \Tilde{M}_x & \Tilde{M}_\theta
    \end{bmatrix} \mathbf{v}(t)\\
    & = A\mathbf{x}(t) +\begin{bmatrix}
        B\Tilde{M}_x & B\Tilde{M}_\theta
    \end{bmatrix}\mathbf{v}(t)\\
    & = A\mathbf{x}(t) +
    \frac{1}{b_{21}b_{32}-b_{22}b_{31}} \begin{bmatrix}
        \begin{bmatrix}
        0 & 0\\
        0 & 0\\
        b_{21} & b_{22}\\
        b_{31} & b_{32}
    \end{bmatrix}   \begin{bmatrix}
        b_{32}\\
        -b_{31}
    \end{bmatrix} & \begin{bmatrix}
        0 & 0\\
        0 & 0\\
        b_{21} & b_{22}\\
        b_{31} & b_{32}
    \end{bmatrix} \begin{bmatrix}
        -b_{22}\\
        b_{21}
    \end{bmatrix}
    \end{bmatrix}\mathbf{v}(t)\\
    & = A\mathbf{x}(t) +  \frac{1}{b_{21}b_{32}-b_{22}b_{31}} \begin{bmatrix}
    0 & 0 \\
    0 & 0\\
    b_{21}b_{32}-b_{22}b_{31} & 0\\
    0 & b_{21}b_{32}-b_{22}b_{31}
    \end{bmatrix}\mathbf{v}(t)\\
    & = A\mathbf{x}(t) + \begin{bmatrix}
        0 & 0\\
        0 & 0\\
        1 & 0\\
        0 & 1
    \end{bmatrix}\begin{bmatrix}
        v_x(t)\\
        v_\theta(t)
    \end{bmatrix} = A\mathbf{x}(t) + \begin{bmatrix}
        0\\
        0\\
        1\\
        0
    \end{bmatrix} v_x(t) + \begin{bmatrix}
        0\\
        0\\
        0\\
        1
    \end{bmatrix} v_\theta(t)
    \end{aligned}
\end{equation}

Denoting:

\begin{equation}
    B_x \coloneqq \begin{bmatrix}
        0\\
        0\\
        1\\
        0
    \end{bmatrix} \ ; \ B_\theta \coloneqq \begin{bmatrix}
        0\\
        0\\
        0\\
        1
    \end{bmatrix},
\end{equation}

we have the following model for the decomposed system $\mathcal{S}_{{IP}_{x, \theta}}$:

\begin{equation}
    \dot{\mathbf{x}}(t) = A\mathbf{x}(t) + \sum_{i \in \{ x, \theta \}} B_i v_i(t)
\end{equation}

Correspondingly, let us define the following virtual objectives for $\mathcal{S}_{{IP}_{x, \theta}}$:

\begin{equation}
\begin{aligned}
    O_{x} \coloneqq & \left( Q_x, \left( R_{xx}, 0_{n\times n} \right), B_{x}\right);\\
    O_{\theta} \coloneqq & \left( Q_\theta, \left( 0_{n\times n}, R_{\theta \theta} \right), B_\theta \right).
\end{aligned}
\end{equation}

The associated virtual cost functions of $\mathcal{S}_{{IP}_{x, \theta}}$ will be:

\begin{equation}
\begin{aligned}
        J_{x}\Big(\mathbf{x}(t), v_{x}(t) \Big) = &
       \int_{0}^{\infty} \Big[\mathbf{x}(t)^TQ_x\mathbf{x}(t) + v^T_{x}(t)R_{xx}v_{x}(t) \Big]\mathrm{d}t;\\
       J_{\theta}\Big(\mathbf{x}(t), v_{\theta}(t)\Big) = &
       \int_{0}^{\infty} \Big[\mathbf{x}(t)^TQ_\theta\mathbf{x}(t)  + v_{\theta}^T(t)R_{\theta\theta}v_{\theta}(t)\Big]\mathrm{d}t.
\end{aligned}
\end{equation}

\subsection{Python Implementation Details}

We have written an appropriate subclass for $\mathcal{S}_{IP}$ and the detailed decomposed system $\mathcal{S}_{{IP}_{x, \theta}}$. The full code is detailed in the package repository~\cite{package}. To enable quick simulation, we implemented the aforementioned mathematical derivation hard-coded in the class, so that the constructor of the class is of the following form:

\begin{minted}[linenos,tabsize=2,breaklines]{python}
import numpy as np
from abc import ABC
from PyDiffGame.PyDiffGame import PyDiffGame
from PyDiffGame.ContinuousPyDiffGame import ContinuousPyDiffGame

class InvertedPendulum(PyDiffGame, ABC):
    def __init__(self,
                 m_c: float,
                 m_p: float,
                 p_L: float,
                 q_s: float = 1,
                 q_m: float = 100,
                 q_l: float = 10000,
                 r: float = 0.001,
                 x_0: np.array = None,
                 x_T: np.array = None,
                 T_f: float = None,
                 L: int = PyDiffGame._L_default,
                 multiplayer: bool = True,
                 regular_LQR: bool = False,
                 show_animation: bool = True
                 ):
\end{minted}

A few takeaways here to mention:

\begin{itemize}
    \item In simpler terms, to simulate the closed loop dynamics we just need to specify the parameters $m_c, m_p$ and $L$ (noted as \pyth{p_L} in the class constructor to not be confused with the number or data points $L$);
    \item We proposed default values for $q_s, q_m, q_l$ and $r$ as $1, 100, 1000$ and $0.001$, and number of data points which is the parent class default of $1000$;
    \item In order for us to simulate the state space model under the acquired dynamics, we would also define $\mathbf{x_0}$;
    \item If we do not define $\mathbf{x_T}$, we then would like to consider the regulation case, i.e., the case where the state variable $\mathbf{x}(t)$ is desired to converge to the zero vector;
    \item If we do define $\mathbf{x_T}$, then we would consider the situation of signal tracking, where we would like the state variable $\mathbf{x}(t)$ to converge to the given fixed vector $\mathbf{x_T}$;
    \item The \pyth{regular_LQR} flag allows to run a regular LQR, which is equivalent to a 1-person differential game;
    \item The \pyth{show_animation} flag allows to simulate a running simulation of the pendulum as a visual rod on a cart using Python's \pyth{matplotlib} package.
\end{itemize}
  
\subsection{Python Simulation }
To demonstrate the implementation of our D\&C method, we used several values for the aforementioned variables, and compared a regular LQR with a 2-person non-zero-sum game, from which we used the described theory to obtain a Nash Equilibrium. 

We computed the following metric both for $\mathcal{S}_{IP}$ and $\mathcal{S}_{{IP}_{x,\theta}}$:
    \underline{$T_c \in \mathcal{I}$}: the required time to attain the condition for convergence of the state variable $\mathbf{x}(t)$ as described in the \pyth{PyDiffARE} algorithm, formally described as the time at which the following conditions holds:
    \begin{equation}
        \| \mathbf{x}(t) - \mathbf{x_T} \| < \varepsilon
    \end{equation}

\subsubsection{First Signal Tracking Simulation}
Let us define for the first simulation:

\begin{equation}
    \mathbf{x_0} \coloneqq \begin{bmatrix}
        x_0 \\
        \theta_0 \\
        \dot{x}_0 \\
        \dot{\theta}_0
    \end{bmatrix} = \begin{bmatrix}
        20 \ [m] \\
        \pi / 3 \ [rad] \\
        0 \\
        0
    \end{bmatrix} \ ; \ \mathbf{x_T} \coloneqq  \begin{bmatrix}
        0 \\
        \pi \ [rad] \\
        0 \\
        0
    \end{bmatrix}
\end{equation}


\begin{table}[h!]
\begin{center}
  \begin{tabular}{ | c | c | c |}
  \hline
      \backslashbox{\textbf{$m_c\ [kg], m_p\ [kg], L\ [m]$}}{\textbf{$T_c \ [sec]$}} & \textbf{$\mathcal{S}_{IP}$} &  \textbf{$\mathcal{S}_{{IP}_{x,\theta}}$}  \\ 
      \hline
    \textbf{10, 1, 1} & 3 & 3\\ \hline
    \textbf{20, 5, 2} & 4 & 3\\ \hline
    \textbf{50, 8, 3} & 5 & 3\\ \hline
    \textbf{100, 10, 4} & 7 & 3 \\\hline
    \textbf{200, 15, 1} & 8 & 3\\
    \hline
  \end{tabular}
\end{center}
\caption{Convergence times of $\mathcal{S}_{IP}$ and $\mathcal{S}_{{IP}_{x,\theta}}$ for different $m_c, m_p, L$ values at the first simulation.}
\end{table}
\clearpage
  
\begin{center}
\begin{figure}[h!t!]
\centering
\includegraphics[width=0.82\columnwidth]{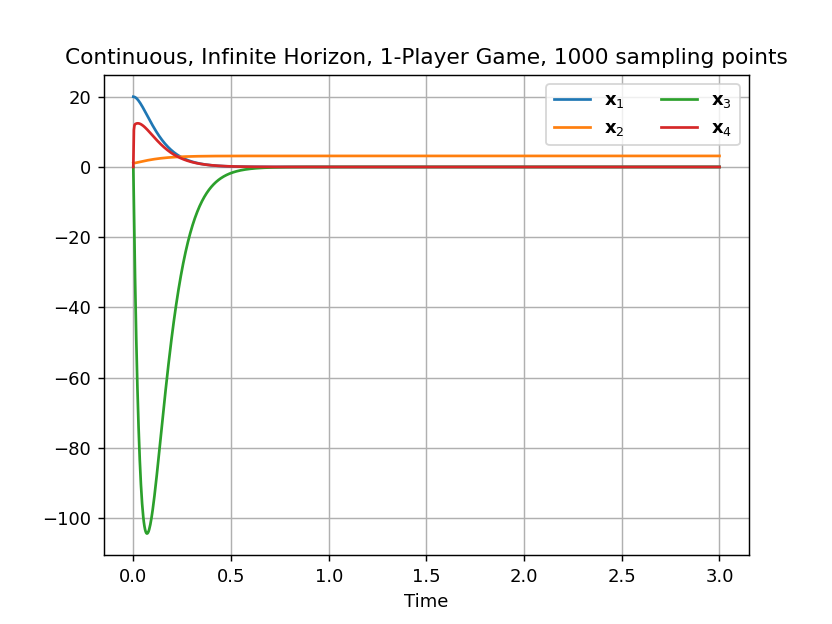}
\caption{LQR with $(m_c, m_p, L) \equiv (10, 1, 1)$}
\end{figure}
\end{center}

\begin{center}
\begin{figure}[h!t!]
\centering
\includegraphics[width=0.82\columnwidth]{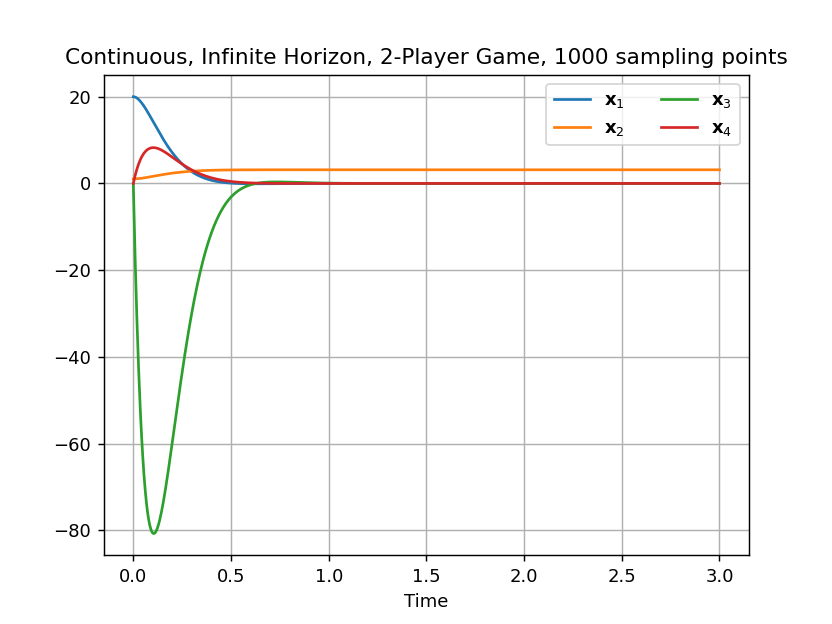}
\caption{Nash Equilibrium with $(m_c, m_p, L) \equiv (10, 1, 1)$}
\end{figure}
\end{center}


\begin{center}
\begin{figure}[h!t!]
\centering
\includegraphics[width=0.82\columnwidth]{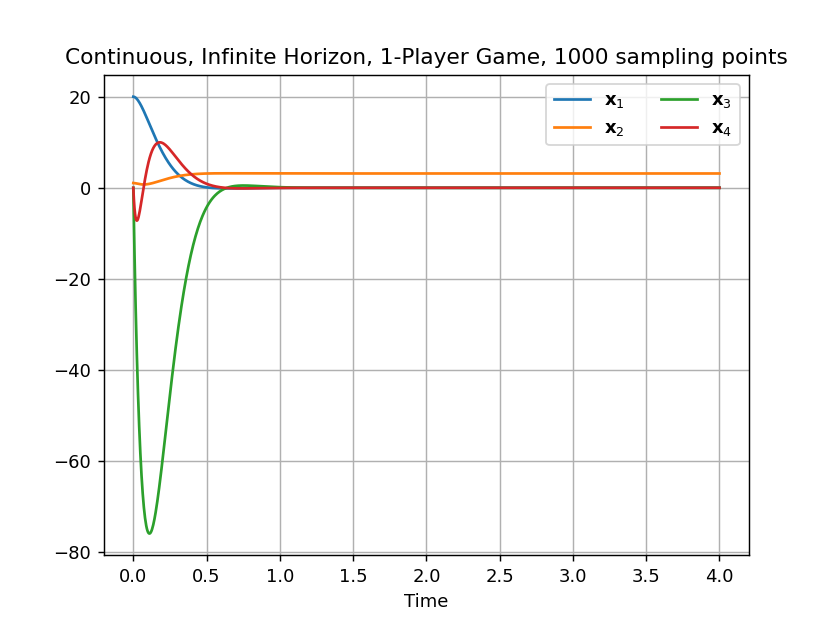}
\caption{LQR with $(m_c, m_p, L) \equiv (20, 5, 2)$}
\end{figure}
\begin{figure}[h!t!]
\centering
\includegraphics[width=0.82\columnwidth]{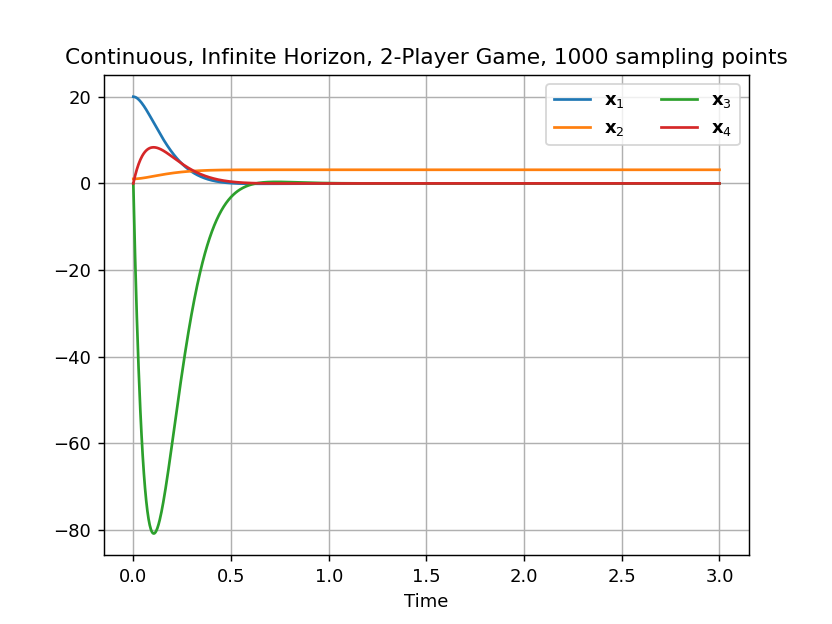}
\caption{Nash Equilibrium with $(m_c, m_p, L) \equiv (20, 5, 2)$}
\end{figure}
\end{center}


\begin{center}
\begin{figure}[h!t!]
\centering
\includegraphics[width=0.82\columnwidth]{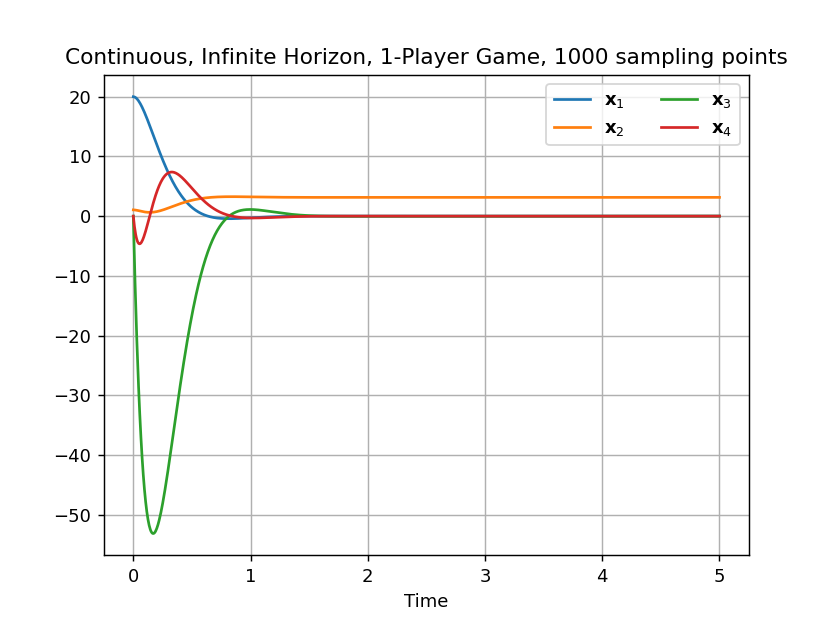}
\caption{LQR with $(m_c, m_p, L) \equiv (50, 8, 3)$}
\end{figure}
\begin{figure}[h!t!]
\centering
\includegraphics[width=0.82\columnwidth]{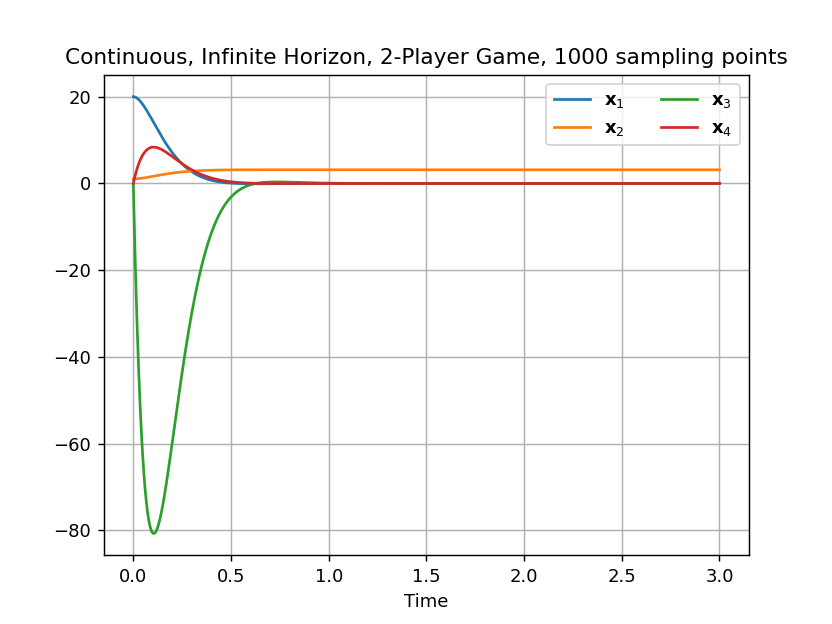}
\caption{Nash Equilibrium with $(m_c, m_p, L) \equiv (50, 8, 3)$}
\end{figure}
\end{center}


\begin{center}
\begin{figure}[h!t!]
\centering
\includegraphics[width=0.82\columnwidth]{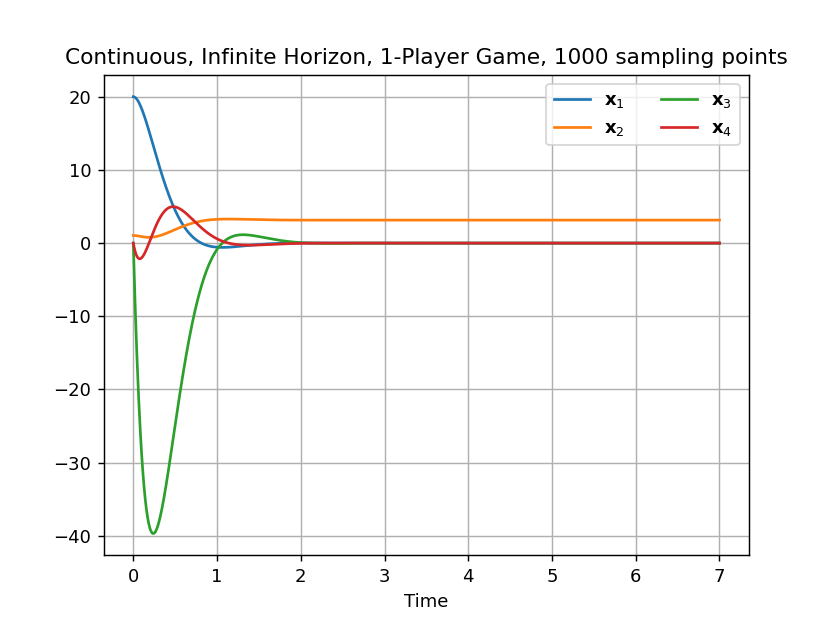}
\caption{LQR with $(m_c, m_p, L) \equiv (100, 10, 4)$}
\end{figure}
\begin{figure}[h!t!]
\centering
\includegraphics[width=0.82\columnwidth]{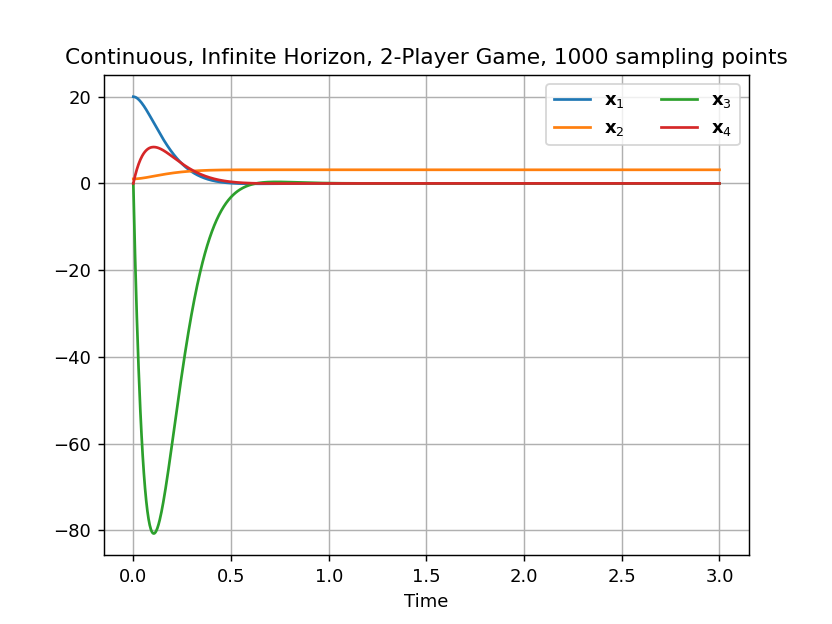}
\caption{Nash Equilibrium with $(m_c, m_p, L) \equiv (100, 10, 4)$}
\end{figure}
\end{center}

\begin{center}
\begin{figure}[h!t!]
\centering
\includegraphics[width=0.82\columnwidth]{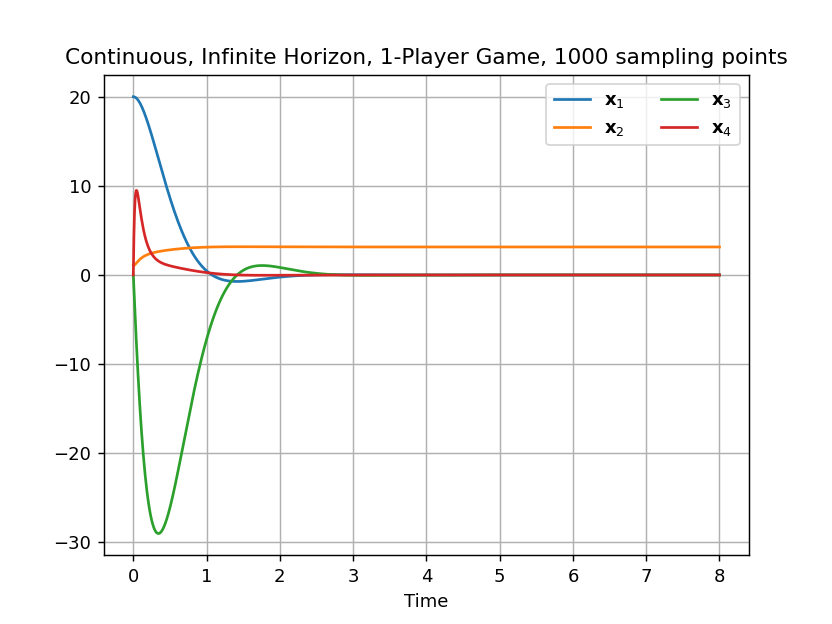}
\caption{LQR with $(m_c, m_p, L) \equiv (200, 15, 1)$}
\end{figure}
\begin{figure}[h!t!]
\centering
\includegraphics[width=0.82\columnwidth]{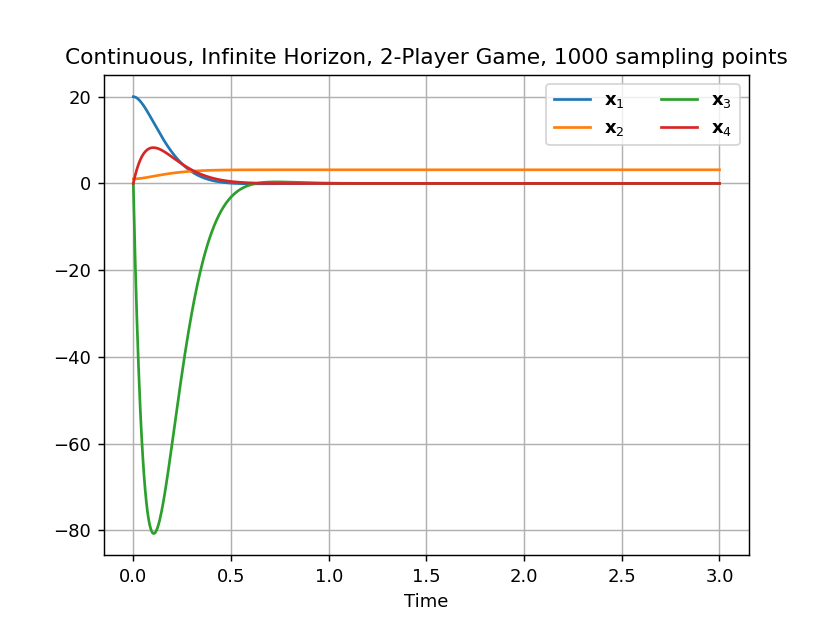}
\caption{Nash Equilibrium with $(m_c, m_p, L) \equiv (200, 15, 1)$}
\end{figure}
\end{center}

\subsubsection{Second Signal Tracking Simulation}

Let us define for the second simulation:

\begin{equation}
    \mathbf{x_0} \coloneqq \begin{bmatrix}
        x_0 \\
        \theta_0 \\
        \dot{x}_0 \\
        \dot{\theta}_0
    \end{bmatrix} = \begin{bmatrix}
        20 \ [m] \\
        \pi / 3 \ [rad] \\
        0 \\
        0
    \end{bmatrix} \ ; \ \mathbf{x_T} \coloneqq  \begin{bmatrix}
        -3 \ [m]\\
        \pi / 4 \ [rad] \\
        10 \ [m / s]\\
        5 \ [rad / s]
    \end{bmatrix}
\end{equation}

\begin{table}[h!]
\begin{center}
  \begin{tabular}{ | c | c | c |}
  \hline
      \backslashbox{\textbf{$m_c\ [kg], m_p\ [kg], L\ [m]$}}{\textbf{$T_c \ [sec]$}} & \textbf{$\mathcal{S}_{IP}$} &  \textbf{$\mathcal{S}_{{IP}_{x,\theta}}$}  \\ 
      \hline
    \textbf{10, 1, 1} & 12.5 & 16\\ \hline
    \textbf{20, 5, 2} & 17 & 16\\ \hline
    \textbf{50, 8, 3} & 25 & 16\\ \hline
    \textbf{100, 10, 4} & 33 & 16 \\\hline
    \textbf{200, 15, 1} & 45 & 16\\
    \hline
  \end{tabular}
\end{center}
\caption{Convergence Times of $\mathcal{S}_{IP}$ and $\mathcal{S}_{{IP}_{x,\theta}}$ for different $m_c, m_p, L$ values at the second simulation.}
\end{table}

  
\begin{center}
\begin{figure}[h!t!]
\centering
\includegraphics[width=0.85\columnwidth]{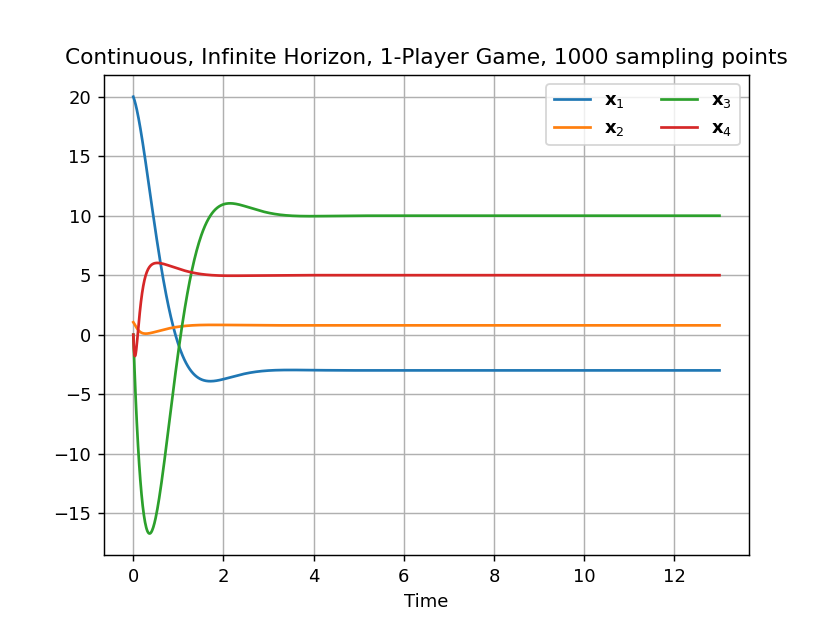}
\caption{LQR with $(m_c, m_p, L) \equiv (10, 1, 1)$}
\end{figure}
\end{center}

\begin{center}
\begin{figure}[h!t!]
\centering
\includegraphics[width=0.85\columnwidth]{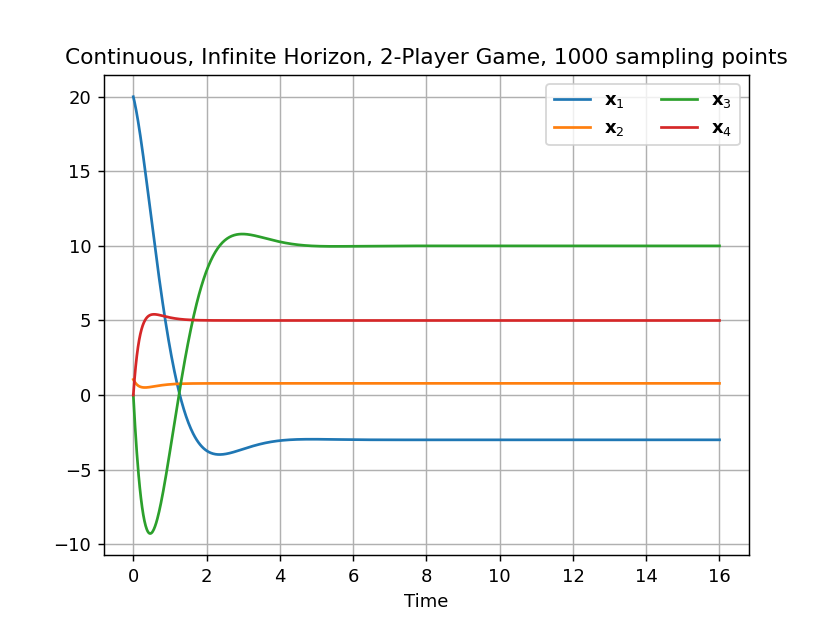}
\caption{Nash Equilibrium with $(m_c, m_p, L) \equiv (10, 1, 1)$}
\end{figure}
\end{center}


\begin{center}
\begin{figure}[h!t!]
\centering
\includegraphics[width=0.85\columnwidth]{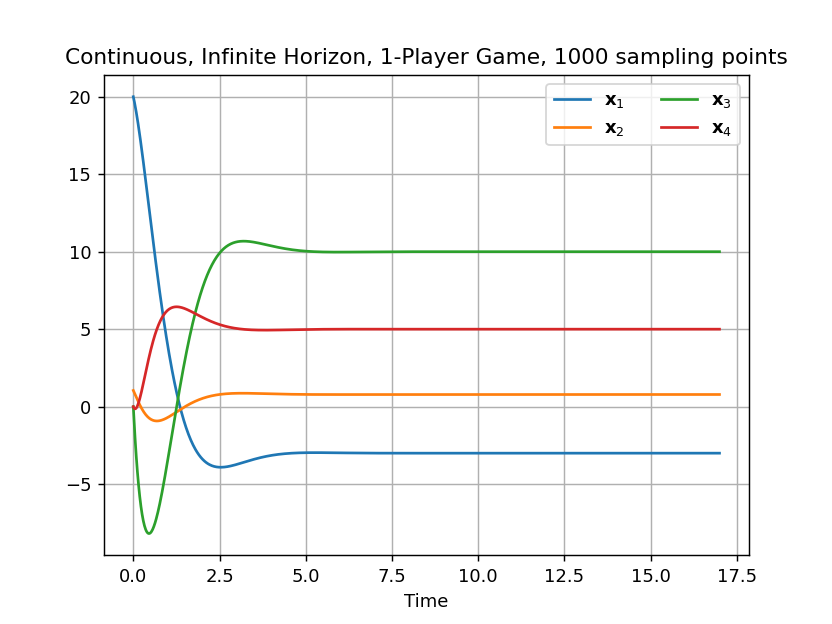}
\caption{LQR with $(m_c, m_p, L) \equiv (20, 5, 2)$}
\end{figure}
\end{center}

\begin{center}
\begin{figure}[h!t!]
\centering
\includegraphics[width=0.85\columnwidth]{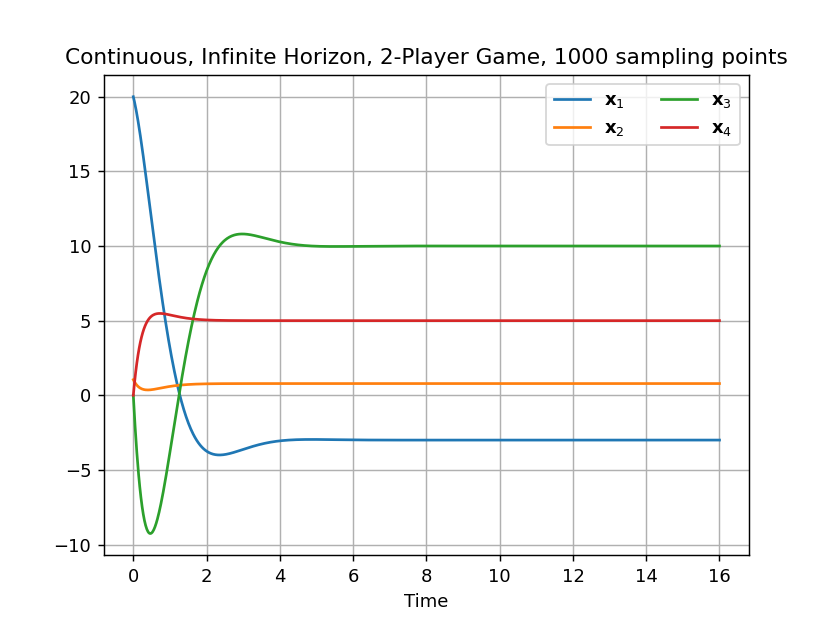}
\caption{Nash Equilibrium with $(m_c, m_p, L) \equiv (20, 5, 2)$}
\end{figure}
\end{center}


\begin{center}
\begin{figure}[h!t!]
\centering
\includegraphics[width=0.85\columnwidth]{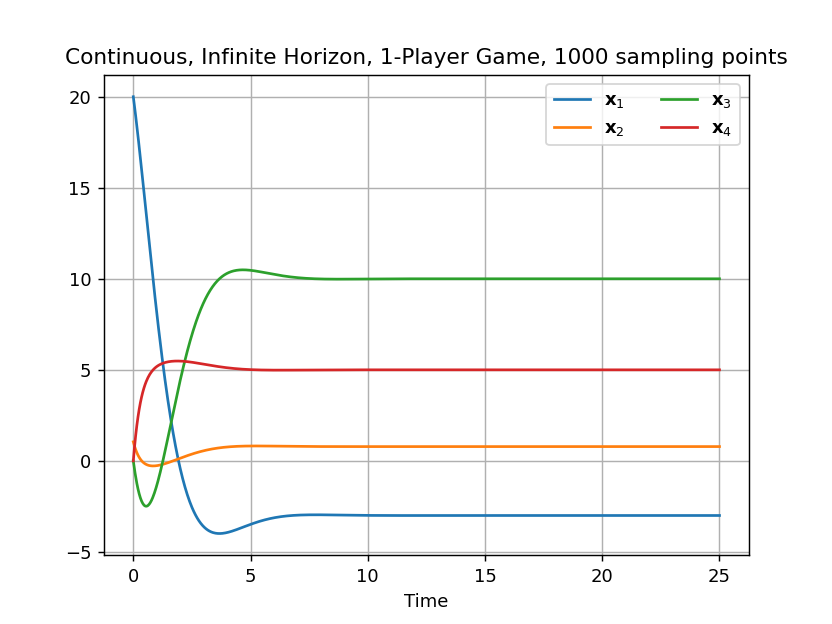}
\caption{LQR with $(m_c, m_p, L) \equiv (50, 8, 3)$}
\end{figure}
\end{center}

\begin{center}
\begin{figure}[h!t!]
\centering
\includegraphics[width=0.85\columnwidth]{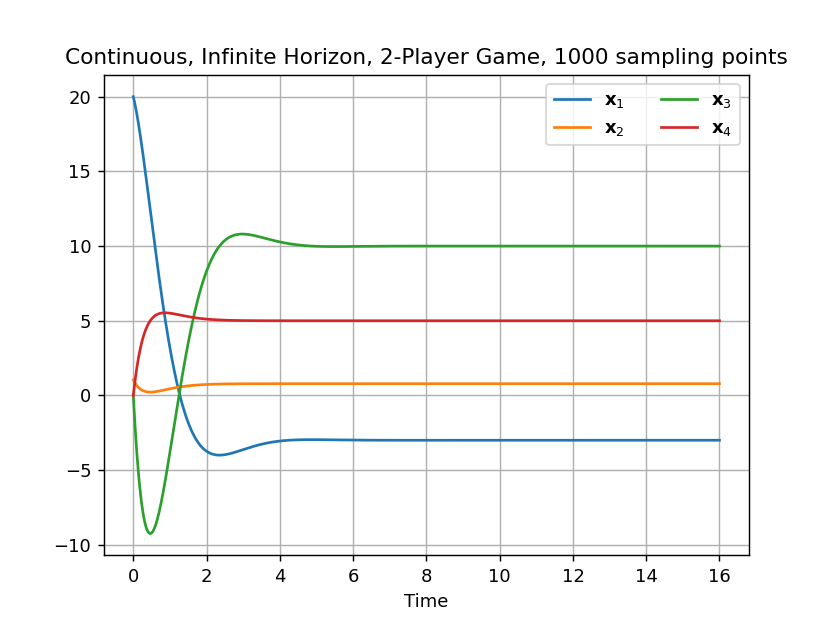}
\caption{Nash Equilibrium with $(m_c, m_p, L) \equiv (50, 8, 3)$}
\end{figure}
\end{center}


\begin{center}
\begin{figure}[h!t!]
\centering
\includegraphics[width=0.85\columnwidth]{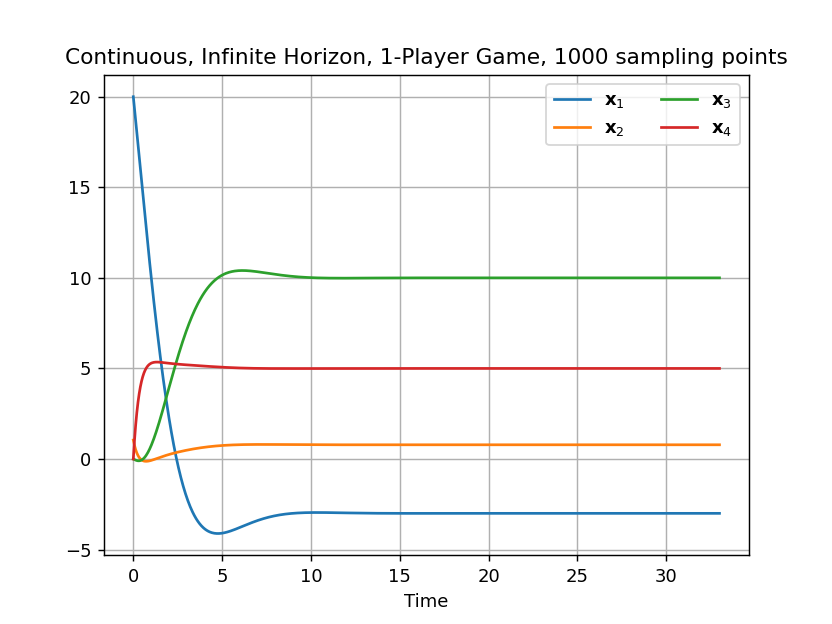}
\caption{LQR with $(m_c, m_p, L) \equiv (100, 10, 4)$}
\end{figure}
\end{center}

\begin{center}
\begin{figure}[h!t!]
\centering
\includegraphics[width=0.85\columnwidth]{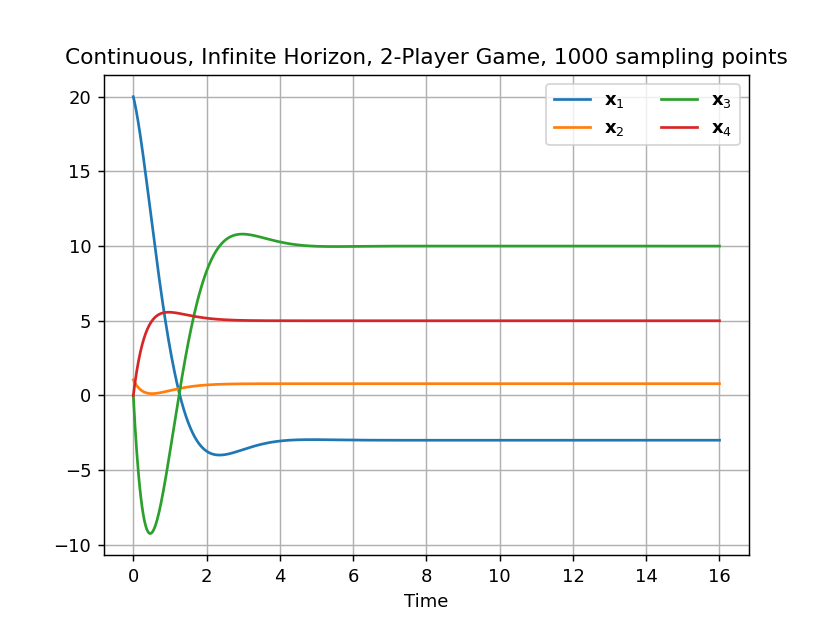}
\caption{Nash Equilibrium with $(m_c, m_p, L) \equiv (100, 10, 4)$}
\end{figure}
\end{center}


\begin{center}
\begin{figure}[h!t!]
\centering
\includegraphics[width=0.85\columnwidth]{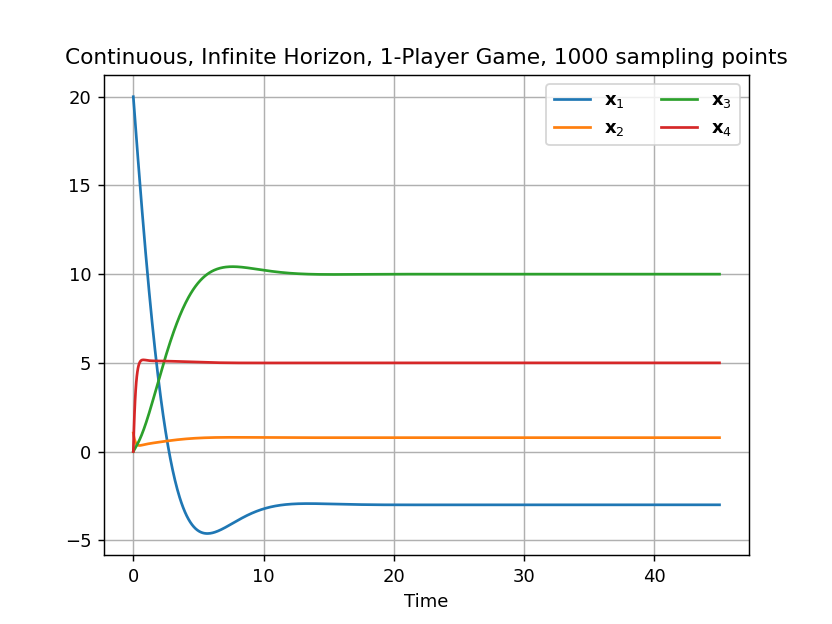}
\caption{LQR with $(m_c, m_p, L) \equiv (200, 15, 1)$}
\end{figure}
\begin{figure}[h!t!]
\centering
\includegraphics[width=0.85\columnwidth]{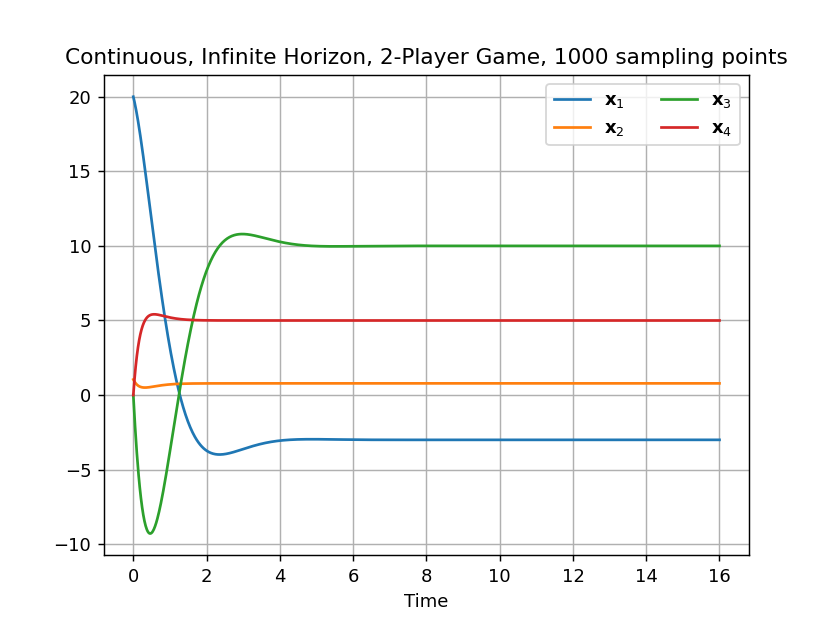}
\caption{Nash Equilibrium with $(m_c, m_p, L) \equiv (200, 15, 1)$}
\end{figure}
\end{center}

\section{Quadrotor}
\label{sect:quadrotor}
 A quadrotor has four rotors and six degrees of freedom, making it an underactuated system~\cite{underactuated}, meaning the number of its independent controlled inputs is less than the number of its configuration variables (spatial and angular variables). 
 
 The use of our suggested approach for the quadrotor allows us to apply control laws to selected subsets of the state variables separately and sequentially. This will allow us to achieve hierarchical control clusters, that operate at different time scales such that the lower ones help to control those higher. 
 
 In this section we will use this hierarchical control approach to showcase our methodology first in the context of a simpler low level control problem and then in the context of the more complex and dynamic high level control. 
\subsection{System Modeling}
Let us denote our system as $\mathcal{S}_Q$ where $Q$ is for quadrotor. Let us then define the state vector for $\mathcal{S}_Q$:
\begin{equation}
    \mathbf{x}(t) \coloneqq \begin{bmatrix}
    		\phi(t)\\
    		\dot{\phi}(t)\\
    		\theta (t)\\
    		\dot{\theta}(t)\\
    		\psi(t)\\
    		\dot{\psi}(t)\\
    		z(t)\\
    		\dot{z}(t)\\
    		x(t)\\
    		\dot{x}(t)\\
    		y(t)\\
    		\dot{y}(t)
    \end{bmatrix},
    \label{non-linear_state}
\end{equation}
where $x(t),y(t),z(t) \in \mathbb{R}$ are the positional variables of the system relative to a fixed three dimensional coordinate system in $\mathbb{R}^3$ and $\phi(t), \theta(t), \psi(t)$ represent the roll, pitch and yaw angles (respectively). 

So in this case we have that $n$ is $12$.

To specify the input vector of $\mathcal{S}_Q$, let us assume the following inputs:
\begin{itemize}
    \item Thrust, denoted $T(t) \in \mathbb{R}$ acting upon the quadrotor;
    \item Three torques along the axes $x,y,z$ as $\tau_1(t), \tau_2(t), \tau_3(t) \in \mathbb{R}$ respectively.
\end{itemize}
 Using that, let us define the input vector of $\mathcal{S}_Q$ as:

\begin{equation}\label{quad_input}
    \mathbf{u}(t) \coloneqq \begin{bmatrix}
    T(t)\\
    \tau_1(t)\\
    \tau_2(t)\\
    \tau_3(t)
    \end{bmatrix}.
\end{equation}
We will use a non-linear model for the quadrotor as suggested in~\cite{bouabdallah_thesis}. The model is described by equation~\eqref{eqn:dynamical_system} with the following expression for the non-linear function $f$ that will result in:
\begin{equation}\label{quad_model}
    \dot{\mathbf{x}}(t) = f\left(\mathbf{x}(t),\mathbf{u}(t))\right)\coloneqq\left[\begin{matrix}\begin{matrix}\begin{matrix}
    \dot{\phi}(t)\\
    a_1\dot{\theta}(t)\dot{\psi}(t)+a_2\dot{\theta}(t)\Omega_r(t)+b_1\tau_1(t)\\
    \dot{\theta}(t)\\
    \end{matrix}\\
    \dot{\phi}(t)
    a_3\dot{\psi}(t)- a_4\dot{\phi}(t)\Omega_r(t)+b_2\tau_2(t)\\
    \dot{\psi}(t)\\
    \end{matrix}
    \\
    \begin{matrix}
    a_5\dot{\theta}(t)\dot{\phi}(t) +b_3\tau_3(t)\\
    \dot{z}(t)\\
    g-\frac{1}{m}\cos\left({\phi(t)}\right)\cos\left({\theta (t)}\right)T(t)\\
    \end{matrix}\\
    \begin{matrix}\dot{x}\\
    \frac{1}{m}u_x(t) {T(t)} \\
    \begin{matrix}
    \dot{y}\\
    \frac{1}{m}u_y(t){T(t)} \\
    \end{matrix}\\
    \end{matrix}\\
    \end{matrix}\right],
\end{equation}
where:
\begin{itemize}
    \item $m$ is the overall mass of the quadrotor;
    \item $u_x(t) \coloneqq \cos\left({\phi(t)}\right)\sin\left({\theta(t)}\right)\cos\left({\psi(t)}\right)+\sin\left({\phi(t)}\right)\sin\left({\psi(t)}\right)$;\\
    $u_y(t) \coloneqq \cos\left({\phi(t)}\right)\sin\left({\theta(t)}\right)\sin\left({\psi(t)}\right)-\sin\left({\phi(t)}\right)\cos\left({\psi(t)}\right)$;
    \item $a_1 \coloneqq (I_{yy}-I_{zz})/I_{xx} \ ; \ a_2 \coloneqq J_r/I_{xx};\\
    a_3 \coloneqq  (I_{zz}-I_{xx})/I_{yy} \ ; \ a_4 \coloneqq J_r/I_{yy};\\
    a_5 \coloneqq  (I_{xx}-I_{yy})/I_{zz}$;
    \item $b_1 \coloneqq l/I_{xx} \ ; \ b_2 \coloneqq l/I_{yy} \ ; \ b_3 \coloneqq 1/I_{zz}$;
    \item $I_{xx}, I_{yy}, I_{zz}$ are the moments of inertia of the quadrotor relative to the corresponding axes;
    \item $J_r$ is the polar moment of inertia of the quadrotor;
    \item $l$ is the horizontal distance from one of the propellers center to the center of gravity of the quadrotor;
    \item $\Omega_r(t)$ is the overall residual propeller angular speed.
\end{itemize}
\subsection{Quadrotor Control Strategy}
We show the performance of the suggested methodology in the context of the quadrotor \textbf{hierarchical control strategy}~\cite{Hierarchical_control_1, Hierarchical_control_2} proposed in~\cite{hanoch}, that uses an \textbf{image-based visual servoing} to control micro aerial vehicles (MAVs) such as a quadrotor in indoor environments. The MAV in~\cite{hanoch} is equipped with an angular velocities sensors and a front facing camera that allows it to visualize the environment it moves in. The method proposed in~\cite{hanoch} shows a way to stabilize the MAV in the middle of a corridor using only the corridor lines and angular velocities measurements.

To simulate these circumstances, we will first assume the state of the system is fully known.
Then we assume the corresponding image that the quadrotor is witnessing is simulated according to the current state at hand. The block diagram, in Fig.~\ref{control_diagram}, provides a high-level representation of the simulated system.
\vspace{-5mm}
\begin{center}
\begin{figure}[h!t!]
\centering
\includegraphics[width=1\columnwidth]{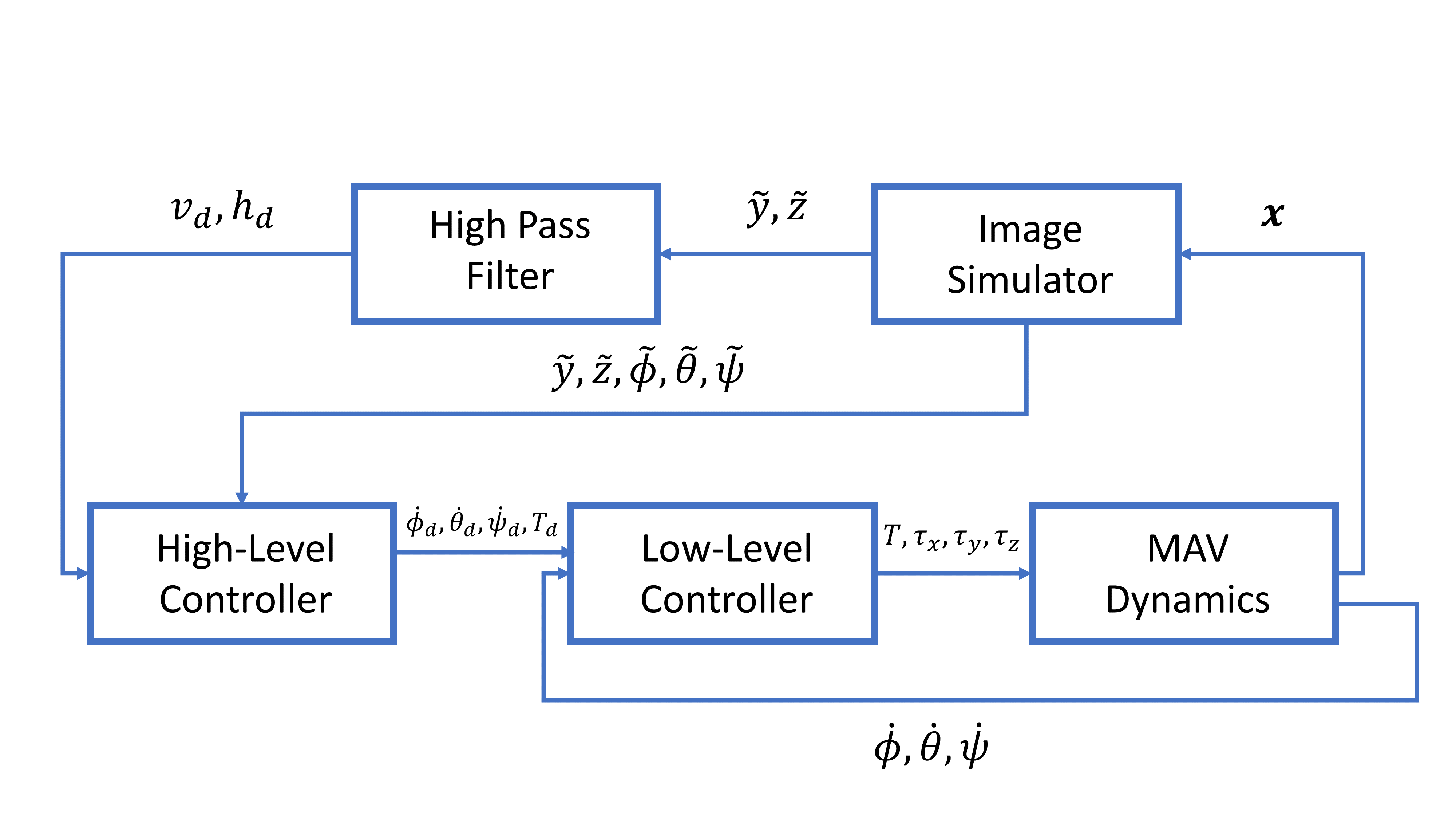}
\caption{Overview of the Control Diagram of $\mathcal{S}_Q$}
\label{control_diagram}
\end{figure}
\end{center}
Let us follow through the actual steps of this model:
\begin{enumerate}
    \item The MAV dynamics block is assigned with the model~\eqref{quad_model}. The full state of the system $\mathbf{x}(t)$ is obtained by solving the differential equation at~\eqref{quad_model};
    \item This state is fed into the image simulator that produces all the necessary variables for the simulation of the image the quadrotor observes;
    \item As described in~\cite{hanoch}, a High-Pass Filter (HPF)~\cite{HPF} is then applied to simulate differentiation of some of the variables for simplification;
    \item The parameters from the HPF and the image simulator are then passed to the high-level controller that produces the quadrotor's desired thrust and angular rates;
    \item Finally, the low-level controller receives the output from the high-level controller with the current angular velocities to produce control thrust and torques, $u$, to be applied to the quadrotor.
\end{enumerate}
    
\subsection{Image Output Parameters}
As described by Fig.~\ref{control_diagram}, the image generates approximations for the positions in the $y$ and $z$ axes and the angles $\phi(t)$, $\theta(t)$, $\psi(t)$ denoted as $\Tilde{y}(t)$, $\Tilde{z}(t)$, $\Tilde{\phi}(t)$, $\Tilde{\theta}(t)$, $\Tilde{\psi}(t)$ respectively. Then a High-Pass Filter is applied as a differentiator that acts upon $\Tilde{y}(t)$ and $\Tilde{z}(t)$ to generate their corresponding virtual derivatives $h_d(t)$ and $v_d(t)$ respectively. All of these are then passed into the high-level controller.

\subsection{Low-Level Control - Angular Rate Controller}
As illustrated by Fig.~\ref{control_diagram}, the low-level controller, denoted $\mathcal{S}_{LL}$, (which is the angular rate controller) inputs are the desired angular velocities and thrust, $\dot{\phi}_d(t), \dot{\theta}_d(t), \dot{\psi}_d(t), T_d(t)$. The expected outputs are the three torques along the axes, $\tau_1(t), \tau_2(t), \tau_3(t)$, and the desired thrust (pass-through).

\subsubsection{Low-Level Controller Model}
Let us define the following reduced state vector to define the lower-level system $\mathcal{S}_{LL}$:

\begin{equation}
    \mathbf{x}_{LL}(t) \coloneqq \begin{bmatrix}
    \dot{\phi}(t)\\
    \dot{\theta}(t)\\
    \dot{\psi}(t)
    \end{bmatrix}.
\end{equation}

So we have $n_{LL} \equiv 2$. We assume the following LTI system model for $\mathcal{S}_{LL}$:

\begin{equation}
    \dot{\mathbf{x}}_{LL}(t) = A_{LL}\mathbf{x}_{LL}(t) + B_{LL}\mathbf{u}_{LL}(t),
\end{equation}

where the non-linear system in~\eqref{quad_model} is dynamically linearized using an LPV modeling approach as in~\cite{bouabdallah_thesis}. The result is, as described in~\cite{bouabdallah_thesis}:

\begin{equation}
    A_{LL} \coloneqq \begin{bmatrix}
        0 & \frac{1}{2}a_1\dot{\psi}(t)  & \frac{1}{2}a_1\dot{\theta}(t) \\
        \frac{1}{2}a_3\dot{\psi}(t)  & 0 & \frac{1}{2}a_3\dot{\phi}(t) \\
        \frac{1}{2}a_5\dot{\theta}(t)  & \frac{1}{2}a_5\dot{\phi}(t) & 0 
    \end{bmatrix} \ ; \
     B_{LL} \coloneqq \begin{bmatrix}
        b_1 & 0  & 0 \\
        0  & b_2 & 0 \\
        0  & 0 & b_3 
    \end{bmatrix}.
\end{equation}

\subsubsection{Low-Level Controller Decomposition}
We assume three virtual control objectives, where $\mathbf{u}_{LL}(t)$ is defined as a linear combination of the scalar objective-matched control inputs, $u_1(t)$,$u_2(t)$,$u_3(t)$, as follows:

\begin{equation}
\mathbf{u}_{LL}(t) \coloneqq \begin{bmatrix}
\tau_1(t)\\
\tau_2(t)\\
\tau_3(t)
\end{bmatrix}=\begin{bmatrix}
u_1(t)\\
u_2(t)\\
u_3(t)
\end{bmatrix},
\end{equation}
so the augmented division matrix of $\mathcal{S}_{LL}$, namely $M_{LL}$, is simply the identity matrix:
\begin{equation}
\begin{aligned}
    M_{LL} \coloneqq \begin{bmatrix}
        1 & 0  & 0 \\
        0  & 1 & 0 \\
        0  & 0 & 1 \\
    \end{bmatrix} = I_{3}.
\end{aligned}
\end{equation}

Using $M_{LL}$ we calculate the objective-matched $(B_j)_{j=1}^3$:

\begin{equation}
\begin{aligned}
    B_1 \coloneqq \begin{bmatrix}
        b_1 \\
        0 \\
        0 
    \end{bmatrix}
    B_2 \coloneqq \begin{bmatrix}
        0 \\
        b_2 \\
        0 
    \end{bmatrix}
    B_3 \coloneqq \begin{bmatrix}
        0 \\
        0 \\
        b_3 
    \end{bmatrix}
\end{aligned}
\end{equation}

For each objective, the user-defined weight matrices $(Q_i)_{i=1}^3, (R_i)_{i=1}^3$ are:
\begin{equation}
\begin{aligned}
    Q_1\coloneqq & \operatorname{diag}[10^3,10^2,10^2]\\
    Q_2\coloneqq & \operatorname{diag}[10^2,10^3,10^2]\\
    Q_3\coloneqq & \operatorname{diag}[10^2,10^2,10^3]\\
    R_1\coloneqq & R_2\coloneqq R_3\coloneqq 0.1
\end{aligned}
\end{equation}
One of the motivations for the decomposition to several objectives is the ability to choose different weight matrices $Q_i$ and $R_i$ that represent different trade-offs by different objectives. 

To compute the control signals we used the method suggested in Chapter~\ref{chap:d&c} with the above objective matched parameters. Fig.~\ref{fig:Angular_rate_control} presents the results of the angular rate controller used to control $\dot{\theta}(t)$.  

\begin{figure}[ht]
    \centering
    \includegraphics[width=0.82\columnwidth]{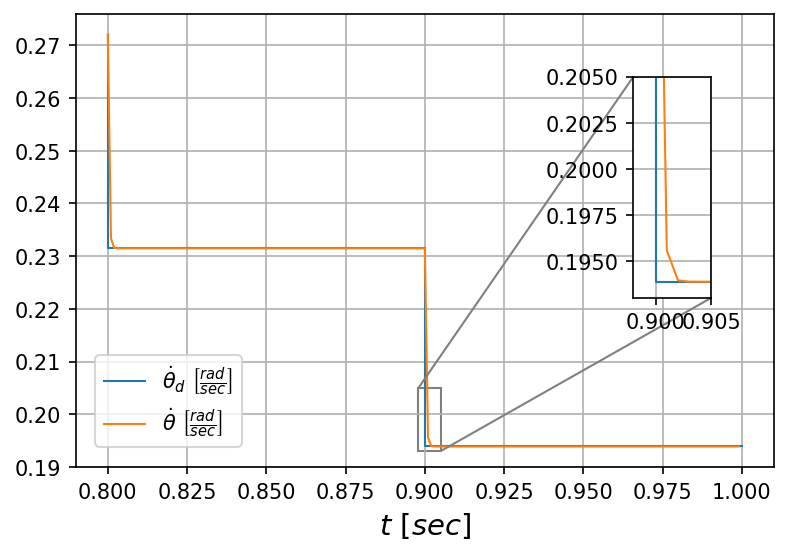}
    \caption{Performance of the angular rate controller of $\mathcal{S}_Q$, where $\dot{\theta}(t)$, the orange line, is tracking the desired angular rate $\dot{\theta}_d(t)$, the blue line.}
    \label{fig:Angular_rate_control}
\end{figure}

\subsection{High-Level Control}
Let $\mathcal{S}_{HL}$ be the high-level controller system, with a state variable $\mathbf{x}_{HL}(t)$ of the form:
\begin{equation}
    \mathbf{x}_{HL}(t) \coloneqq \begin{bmatrix}
    \phi (t)\\
    \theta (t)\\
    \psi(t)\\ 
    y(t)\\
    \dot{y}(t)\\
    z(t)\\
    \dot{z}(t)\\
    {\Theta}_1(t)\\
    {\Theta}_2(t)
    \end{bmatrix},
\end{equation}
where ${\Theta}_i(t) :=\int_{\tau=0}^{t}{\theta(\tau)} d\tau$, but ${\Theta}_1$ differs from ${\Theta}_2$ by the desired converging value, as will be elaborated on. We have $n_{HL} \equiv 9$.

\subsubsection{High-Level Controller Model}
The following model is obtained from a linearization of the process around $\bar{\mathbf{x}}_{HL}(t) \equiv \mathbf{0_9}$:  

\begin{equation}
    \dot{\mathbf{x}}_{HL}(t) = A_{HL}\mathbf{x}_{HL}(t) + B_{HL}\mathbf{u}_{HL}(t),
\end{equation}
where:

\begin{equation}
    A_{HL} \coloneqq \begin{bmatrix}
        0 & 0 & 0 &  0 & 0 & 0 &  0 & 0 & 0 \\
        0 & 0 & 0 &  0 & 0 & 0 &  0 & 0 & 0 \\
        0 & 0 & 0 &  0 & 0 & 0 &  0 & 0 & 0 \\
        0 & 0 & 0 &  0 & 1 & 0 &  0 & 0 & 0 \\
        g & 0 & 0 &  0 & 0 & 0 &  0 & 0 & 0 \\
        0 & 0 & 0 &  0 & 0 & 0 &  1 & 0 & 0 \\
        0 & 0 & 0 &  0 & 0 & 0 &  0 & 0 & 0 \\
        0 & 1 & 0 &  0 & 0 & 0 &  0 & 0 & 0 \\
        0 & 1 & 0 &  0 & 0 & 0 &  0 & 0 & 0
    \end{bmatrix} \ ; \
    B_{HL} \coloneqq \begin{bmatrix}
        1 & 0 & 0 & 0\\
        0 & 1 & 0 & 0\\
        0 & 0 & 1 & 0\\
        0 & 0 & 0 & 0\\
        0 & 0 & 0 & 0\\
        0 & 0 & 0 & 0\\
        0 & 0 & 0 & -\frac{1}{m}\\
        0 & 0 & 0 & 0\\
        0 & 0 & 0 & 0
    \end{bmatrix}.
\end{equation}

\subsubsection{Low-Level Controller Decomposition}

We define two control objectives:
\begin{enumerate}
    \item The first objective is similar to the one described in~\cite{hanoch}, stabilizing the quadrotor at the center of the corridor, facing forward;
    \item The second one is a dynamic objective on the forward velocity,  $\dot{x}$. The objective changes as a function of an approximation of the relative distance of the quadrotor to the wall $c(t):=\left|\frac{\Tilde{y}(t)}{a_y}\right|$, where $a_y$ is half of the corridor width. If $c(t)<0.5$, which means the quadrotor is closer to the center of the corridor than it is to the wall, the objective is to reach a given constant forward velocity and if $c(t) \geq 0.5$ then the objective is to reduce forward velocity to zero.
\end{enumerate}
 
We then define $\mathbf{u}_{HL}(t)$ as a linear combination of two objective-matched control inputs, of the form:
\begin{equation}
    \mathbf{u}_1(t) \coloneqq \begin{bmatrix}
    u_{1_1}(t)\\
    u_{1_2}(t)\\
    u_{1_3}(t)\\
    u_{1_4}(t)
    \end{bmatrix} \in \mathbb{R}^4 ; \ u_2(t) \in \mathbb{R}.
\end{equation}
Then we set:
\begin{equation}
\mathbf{u} \coloneqq \begin{bmatrix} 
\dot{\phi}_d(t)\\  
\dot{\theta}_d(t) \\ 
\dot{\psi}_d(t)\\
T_d (t)
\end{bmatrix}
=\begin{bmatrix} 
u_{1_1}(t)\\   
u_{1_2}(t) + u_2(t) \\  
u_{1_3}(t)\\ 
u_{1_4} (t)
\end{bmatrix},
\end{equation}

so the matrix $M_{HL}$ is:

\begin{equation}
\begin{aligned}
    M = \begin{bmatrix}
        1  & 0 & 0 & 0 & 0  \\
        0  & 1 & 0 & 0 & 1  \\
        0  & 0 & 1 & 0 & 0  \\
        0  & 0 & 0 & 1 & 0  \\
    \end{bmatrix}.
\end{aligned}
\end{equation}

Using $M$ we calculate the objective-matched $(B_j)_{j=1}^2$:

\begin{equation}
    B_1 \coloneqq B \ ; \  B_2 \coloneqq \begin{bmatrix}
    0\\
    1\\
    0\\
    0\\
    0\\
    0\\
    0\\
    0\\
    0
    \end{bmatrix}.
\end{equation}
The first objective weight matrices $Q_1$ and $R_1$ are constant, with the following form:
\begin{equation}
    \begin{aligned}
        Q_1& = \operatorname{diag}[10,10,10,0.001,0.001,0.1,0.0005,0,0]\\
    R_1& = \operatorname{diag}[10,10,10,0.01]
    \end{aligned}
\end{equation}

\subsubsection{Dynamically Changing Objective}

Now we introduce a dynamically changing objective to illustrate our method fully. The second objective state weight matrix $Q_2(t)$ dynamically changes as a function of $c(t)$:

\begin{itemize}
    \item If $c(t)<0.5$, the weight related to the 8'th state-parameter, ${\Theta}_1$, which is proportional to the forward velocity, gets a high value. ${\Theta}_1$ is initialized to ${\Theta}_1(0)=-\frac{3}{9.81}$;
    \item If $c(t) \geq 0.5$, where the objective is to reduce forward velocity to zero, the weight related to the 9'th state parameter, ${\Theta}_2$, which is also proportional to the forward velocity, gets a high value. ${\Theta}_2$ is initialized to ${\Theta}_2(0)=0$.
\end{itemize}

 and:

\begin{equation}
\begin{aligned}
    Q_2(t) &\coloneqq \begin{cases} \operatorname{diag}[0,0,0,0,0,0,0,1000,0] & ; c(t)<0.5\\
    \operatorname{diag}[0,0,0,0,0,0,0,0,wall\_prox]& ; c(t) \geq 0.5
    \end{cases}\\
    R_2 &\coloneqq 10.
\end{aligned}
\end{equation}
\subsection{Simulation Results}
Figures~\ref{fig:xdot_control} and~\ref{fig:yz_control} present the simulation results of the high level controller, applied as a part of the entire system (as described at Fig.~\ref{control_diagram}). The initial non-linear system state vector~\ref{non-linear_state} was chosen to be,
\begin{equation}
     \mathbf{x_0} \coloneqq [
    		0.1, \
    		0, \
    		0, \
    		0, \
    		0.1, \
    		0, \
    		-1, \
    		0, \
    		0, \
    		0, \ 
    		0.3, \
    		0
    ]^T
\end{equation}
where:
\begin{itemize}
    \item The values $z=-1.25[m]$ and $y=0[m]$ represent the center of the corridor;
    \item The ceiling height is $a_z=-2.5 [m]$;
    \item The corridor width is $a_y=0.55[m]$;
    \item The $z$ axis is pointing down.
\end{itemize}

Fig.~\ref{fig:xdot_control} shows the effects of the dynamic objective has on the forward velocity. Figures~\ref{fig:attitude_control} and~\ref{fig:yz_control} show the performance of both the low level and high level controllers in archiving the two control objectives, the quadrotor converge to the center of the corridor while tracking the forward velocity according to $c(t)$.
\begin{figure}[H]
    \centering
    \includegraphics[width=0.7\columnwidth]{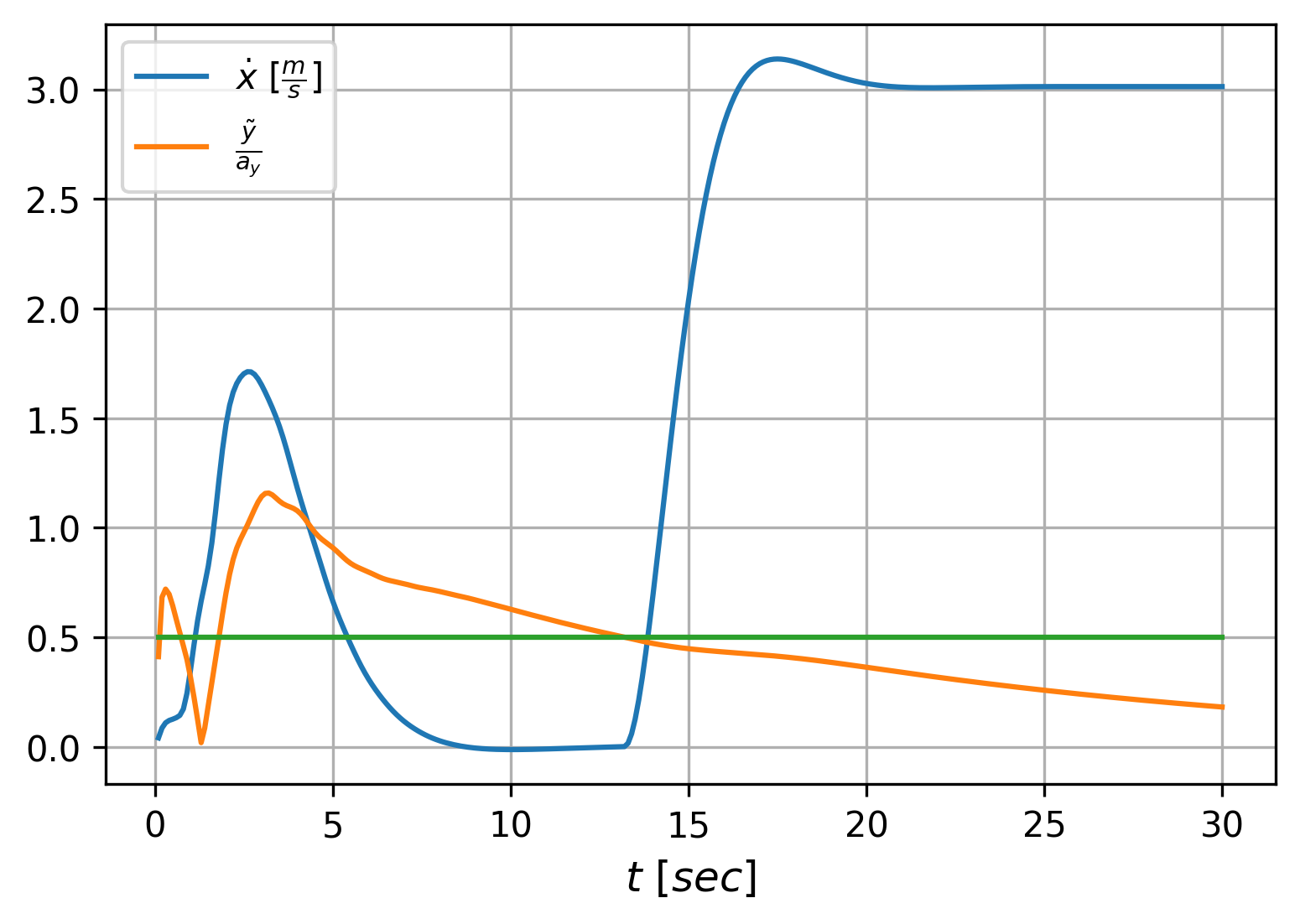}
    \caption{Performance of the high level controller of $\mathcal{S}_Q$; the affects of the dynamic objective has on the forward velocity.}
    \label{fig:xdot_control}
\end{figure}
\begin{figure}[ht]
    \centering
    \includegraphics[width=0.8\columnwidth]{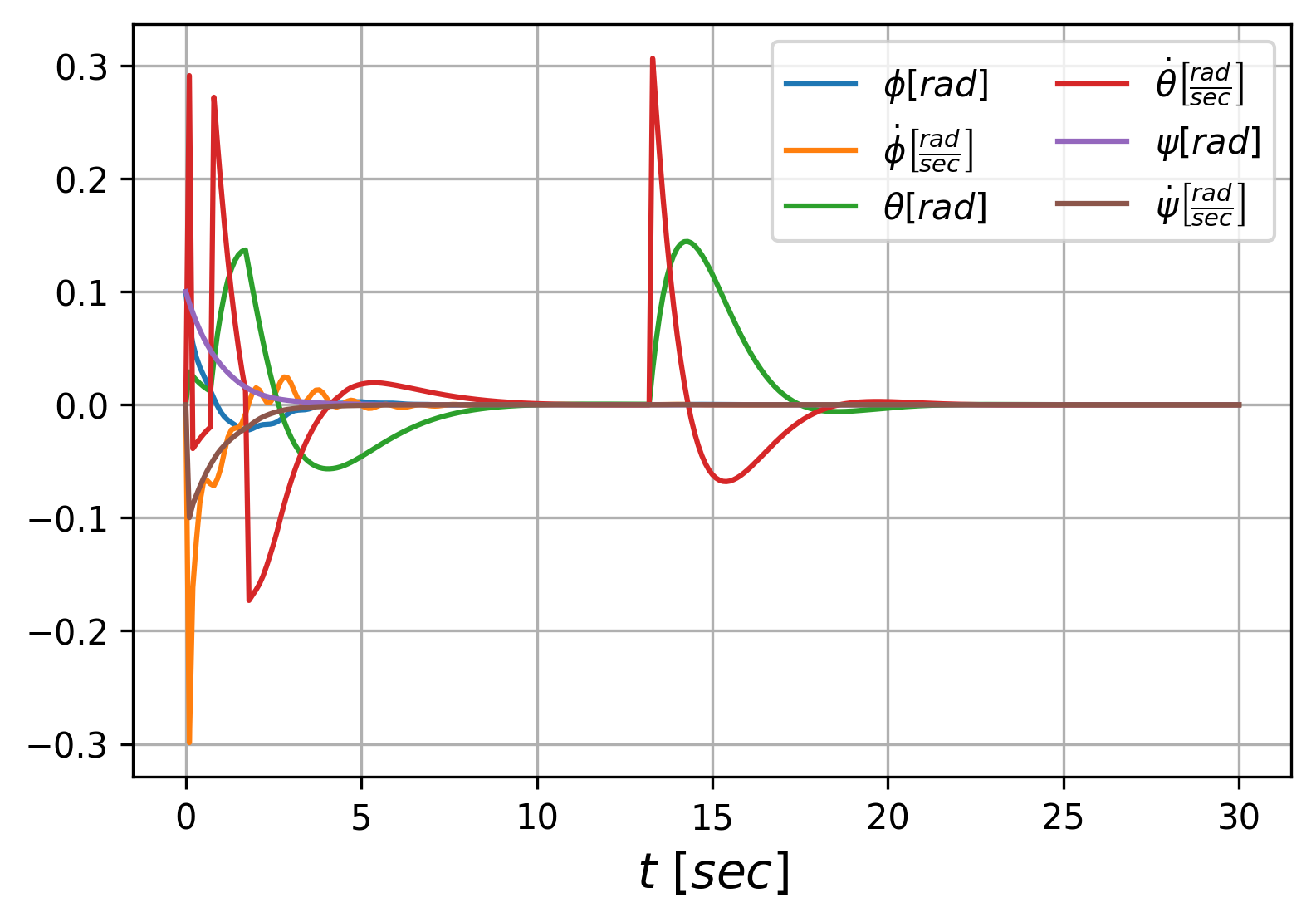}
    \caption{Performance of the high level controller of $\mathcal{S}_Q$; attitude regulation}
    \label{fig:attitude_control}
\end{figure}
\begin{figure}[ht]
    \centering
    \includegraphics[width=0.8\columnwidth]{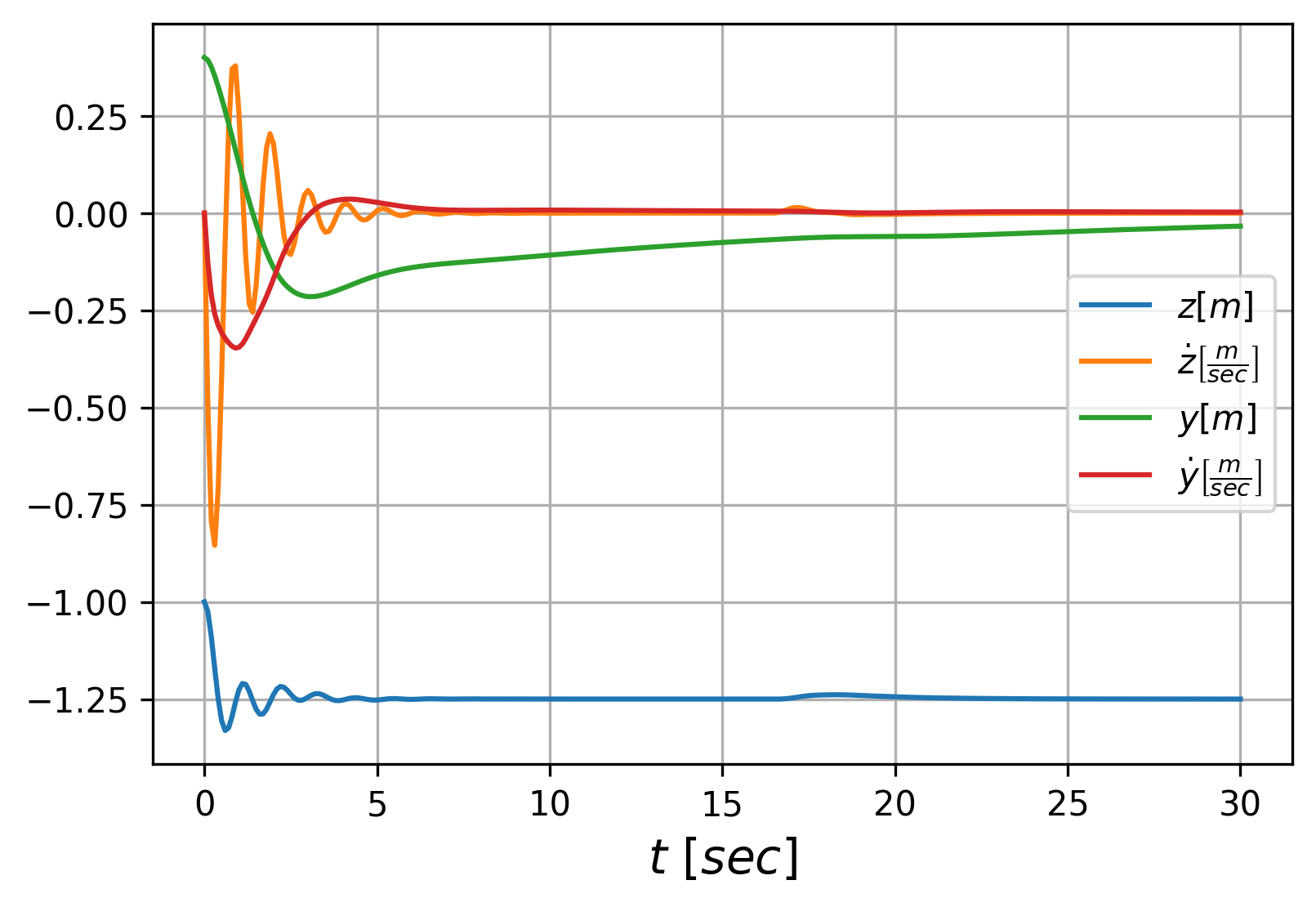}
    \caption{Performance of the high level controller of $\mathcal{S}_Q$; $y$ and $z$ regulation that represent convergence to the center of the corridor.}
    \label{fig:yz_control}
\end{figure}

\chapter{Discussion and Future Work}
\label{chap:conclusions}
\epigraph{\textit{In the 16th century, [Niccolò] Machiavelli - in an attempt to get back in the good graces of the powerful - wrote a slim volume called The Prince. In that book he showed the powers that be how to control the people. That book is a statement: separate and rule, divide and conquer. That's five hundred years ago and it still works, because we allow ourselves to be lead around with holes through our noses.}}{\textbf{Maya Angelou}}

In this work we explored different methodologies for dealing with the composition of complex controllers for single-agent multi-objective control systems. We reviewed previous work that was done in the field and then went on to extend some of the techniques to construct a novel methodology.

We have applied differential games to single-agent multi-objective control systems. While the concept of Divide and Conquer is well known in Computer Science and was used in control problems for many decades now, we have shown that the anecdotal paradigm is applicable to single-agent problems in control theory as well. 

We have laid out the mathematical foundations for the formulation of a non-zero-sum, non-cooperative differential game for players that correlate to virtual objectives, defined by decomposing a dynamical system as a control engineer sees fit, based on a preconceived variation of its actual inputs.

That differential game, under certain circumstances we explored, can yield a Nash Equilibrium. We saw in this case that the obtained Nash Equilibrium yields a dynamic balance that takes the virtual cost function into aggregated account. The balance that equilibrium provides was then used to construct corresponding control policies for each virtual actuator and then create a closed loop control policy framework under which the system can operate upon.

We developed an open-source Python package that implemented the described method, both for the finite horizon and infinite horizon case, and both for continuous and discrete time systems.

That same package was used to depict the state evolution of the systems we considered as case studies: First an inverted pendulum on a moving cart, with a force acting on the cart and pure torque acting on the pendulum. We carefully designed the weights to attribute for our desired virtual inputs, and use those same weights to simulate the system with an LQR controller. We set a convergence criterion for the state variable with respect to a predefined terminal desired state and saw across the board that for different parameter values we get faster convergence and better transient response using our carefully designed differential game.

Then we went on and illustrated with the quadrotor how dynamically changing objectives that can interfere with one another, such as keeping wall distance and moving forward, are handled using our methodology. We saw that a dynamically calculated Nash Equilibrium operates well in these scenarios, and can yield the desired system response with relative ease of design.

This work also introduced a novel algorithm for solving Algebraic Riccati Equations, which was given in psuedocode and implemented in the package we worked with. We laid the mathematical foundations for the convergence conditions of the solution of our algorithm to the unique optimizing, stabilizing solution of the optimal control problem.

In future work we intend on:

\begin{itemize}
    \item Authoring an article showcasing the novel ARE solution approach;
    \item Consider expanding the Python package to deal with more complex scenarios, such as linear time-varying systems;
    \item Consider the discrete time finite horizon Nash Equilibrium case.
\end{itemize}

\chapter{Appendices}
\label{chap:appendices}
\epigraph{\textit{But it is impossible to divide a cube into two cubes, or a fourth power into fourth powers, or generally any power beyond the square into like powers; of this I have found a remarkable demonstration. This margin is too narrow to contain it.}}{\textbf{Pierre de Fermat}}

In this section, we will present formal introductory as well as advanced concepts linear algebra and then in control theory, serving as an appendix to the theoretical preliminaries required to showcase our contribution. We will define several concepts and then deduce several notions we will consider as known theorems or lemmas, which we will not prove, and others as propositions and corollaries, which we will prove.


\section{Linear Algebra}
\begin{definition}
Let $k, h \in \mathbb{N}$. Then we denote:
\begin{itemize}
    \item $\mathbf{0}_k \in \mathbb{C}^k$ as a vector of $k$ zeros;
    \item $\mathbf{1}_k \in \mathbb{C}^k$ as a vector of $k$ ones;
    \item $0_{k,h} \in \mathbb{C}^{k \times h}$ as a matrix of all zeros;
    \item $1_{k,h} \in \mathbb{C}^{k \times h}$ as a matrix of all zeros;
    \item $I_k \in \mathbb{C}^{k \times h}$ as the identity matrix of order $k$.
\end{itemize}
\end{definition}
\begin{definition}
Let $r \in \mathbb{N}$ and let $\{ \mathbf{v}_i \}_{i=1}^r \subseteq \mathbb{C}^k$ be a set of $r$ vectors of length $k$. Then:
\begin{itemize}
    \item $\{ \mathbf{v}_i \}_{i=1}^r$ is \textbf{a linearly dependant set} if there exist a series of scalars $(\alpha_i)_{i=1}^r$, not all zero, such that:
\begin{equation}
    \sum_{i=1}^r \alpha_i \mathbf{v}_i = \mathbf{0}_k
\end{equation}
\item If $\{ \mathbf{v}_i \}_{i=1}^r$ is not linearly dependant then it is called a \textbf{linearly independant set}.
\end{itemize}
\end{definition}
\begin{definition}
Let $S = \{ \mathbf{v_i} \}_{i=1}^r \subseteq \mathbb{C}^k$ be a set of $r$ vectors of length $k$. The the \textbf{linear span} or \textbf{span} of $S$ is defined as the set of all finite linear combinations of vectors of S, i.e,:
\begin{equation}
    \spn (S ) \coloneqq {\displaystyle \left\{{\left.\sum _{i=1}^{k}\alpha _{i}\mathbf{v_i}\;\right|\;k\in \mathbb {N} ,\mathbf{v_i}\in S,\alpha _{i}\in \mathbb{C}}\right\}.}
\end{equation}
\end{definition}
\begin{definition}
Let $V \subseteq \mathbb{C}^k$ be a vector space of length $k$. A \textbf{basis} $B$ of $V$ is a linearly independent subset of $V$ that spans $V$, i.e. $B \subseteq V$ and:
\begin{equation}
    \spn (B ) = V.
\end{equation}
\end{definition}
\begin{definition}
Let $M \in \mathbb{C}^{k \times m}$ be a matrix. The \textbf{rank} of $M$, denoted as $\rank (M)$, is the maximal number of linearly independent columns of $M$.
\end{definition}
\begin{theorem}
Let $M \in \mathbb{C}^{k \times m}$ be a matrix. Then:
\begin{equation}
    \rank (M) \leq \min \{k, m \}
\end{equation}
\end{theorem}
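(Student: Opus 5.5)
The plan is to prove $\rank(M) \leq \min\{k,m\}$ by bounding the rank separately by $m$ and by $k$, and then taking the smaller bound.

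The bound by $m$ is immediate from the definition. $\rank(M)$ is the maximal number of linearly independent columns of $M$, and $M$ has only $m$ columns. Any linearly independent subset of the columns therefore has at most $m$ elements, so $\rank(M)\leq m$.

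For the bound by $k$, note that every column of $M$ is a vector in $\mathbb{C}^k$. It suffices to show that any linearly independent set in $\mathbb{C}^k$ has at most $k$ elements. I would argue by contradiction. Suppose $\{\mathbf{v}_i\}_{i=1}^r\subseteq\mathbb{C}^k$ with $r>k$. Form the matrix $V=[\mathbf{v}_1\ \cdots\ \mathbf{v}_r]\in\mathbb{C}^{k\times r}$ and consider the homogeneous system $V\boldsymbol{\alpha}=\mathbf{0}_k$. This system has $k$ equations in $r>k$ unknowns.

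The key step, and the main obstacle, is showing that such an underdetermined homogeneous system has a nontrivial solution. I would prove this by Gaussian elimination: bring $V$ to row echelon form using invertible row operations, which do not change the solution set. There are at most $k$ pivots, so at least $r-k\geq 1$ columns have no pivot. Setting one corresponding free variable to $1$ and back-substituting yields a nonzero $\boldsymbol{\alpha}$ with $\sum_i \alpha_i\mathbf{v}_i=\mathbf{0}_k$. This contradicts linear independence by the paper's definition.

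Alternatively, I would use induction on $k$. The case $k=1$ is trivial. For the inductive step, eliminate the first coordinate using one vector with a nonzero first entry, which leaves $r-1>k-1$ vectors in $\mathbb{C}^{k-1}$. By the induction hypothesis these are dependent, and the dependence lifts back to the original vectors.

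Hence the $m$ columns of $M$, viewed as vectors in $\mathbb{C}^k$, contain at most $k$ linearly independent ones, so $\rank(M)\leq k$. Combining the two bounds gives $\rank(M)\leq\min\{k,m\}$.
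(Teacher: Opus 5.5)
Your proof is correct. The bound $\rank(M)\le m$ is immediate from the paper's definition of rank as the maximal number of linearly independent columns. The bound $\rank(M)\le k$ reduces, as you say, to showing that more than $k$ vectors in $\mathbb{C}^k$ are linearly dependent. Your Gaussian-elimination argument settles this: row operations preserve the solution set, there are at most $k$ pivots, so there is a free variable and hence a nonzero solution of $V\boldsymbol{\alpha}=\mathbf{0}_k$.

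There is nothing in the paper to compare against. This theorem sits in the linear-algebra appendix among the facts the author explicitly takes as known and leaves unproved, so your write-up supplies the standard argument the paper omits.

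If you keep the inductive variant, two small additions are needed. First, handle the case in which every $\mathbf{v}_i$ has zero first entry. Then no vector is available for elimination, but all $r>k>k-1$ vectors lie in a copy of $\mathbb{C}^{k-1}$ and the induction hypothesis applies directly. Second, in the lifting step, note that the resulting relation among the $\mathbf{v}_i$ is nontrivial because some coefficient attached to a $\mathbf{v}_i$ with $i\ge 2$ is nonzero.
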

\begin{definition}
Let $M \in \mathbb{C}^{k \times m}$ be a matrix. We say $M$ has \textbf{full rank} if:
\begin{equation}
    \rank (M) = \min \{k, m \}.
\end{equation}
\end{definition}
\begin{definition}
Let $M \in \mathbb{C}^{k \times m}$ be a matrix with elements $((m_{i,j})_{j=1}^m)_{i=1}^k$. The \textbf{conjugate transpose} of $M$ is denoted $M^\dagger \in \mathbb{C}^{m \times k}$ and is defined by considering the elements $m^\dagger_{i,j} \equiv (m_{j,i})^\dagger$.
\end{definition}
\begin{definition}
Let $M \in \mathbb{R}^{k \times m}$ be a real matrix with elements $((m_{i,j})_{j=1}^m)_{i=1}^k$. The \textbf{transpose} of $M$ is denoted $M^T \in \mathbb{R}^{m \times k}$ and is defined by considering the elements $m^T_{i,j} \equiv m_{j,i}$.
\end{definition}
\begin{definition}
Let $M \in \mathbb{C}^{k \times k}$ be a square matrix. Then:
\begin{enumerate}
    \item $M$ is \textbf{Hermitian} if $M = M^\dagger$;
    \item $M$ is \textbf{anti-Hermitian} if $M = -M^\dagger$.
\end{enumerate}
\end{definition}
\begin{definition}
Let $M \in \mathbb{R}^{k \times k}$ be a real square matrix. Then:
\begin{enumerate}
    \item $M$ is \textbf{symmetric} if $M = M^T$;
    \item $M$ is \textbf{antisymmetric} if $M = -M^T$.
\end{enumerate}
\end{definition}
\begin{definition}
Let $M \in \mathbb{C}^{k \times k}$ be a square matrix with elements $((m_{i,j})_{j=1}^k)_{i=1}^k$. Then:
\begin{itemize}
    \item The \textbf{main diagonal} of $M$ are the elements of the form $m_{i,i}$.
    \item The \textbf{superdiagonal} of $M$ are the elements of the form $m_{i,i+1}$.
\end{itemize}

\end{definition}
\begin{definition}
Let $M \in \mathbb{C}^{k \times k}$ be a square matrix with elements $((m_{i,j})_{j=1}^k)_{i=1}^k$. Then:
\begin{itemize}
    \item $M$ is \textbf{upper-triangular} if:
    \begin{equation}
        \forall 1 \leq j < i \leq k \ \colon \ m_{i,j} \equiv 0
    \end{equation}
    \item $M$ is \textbf{lower-triangular} if:
    \begin{equation}
        \forall 1 \leq i < j \leq k \ \colon \ m_{i,j} \equiv 0
    \end{equation}
    \item $M$ is \textbf{triangular} if it is upper-triangular or lower-triangular
\end{itemize}
\end{definition}
\begin{definition}
 Let $\mathbf{v} \coloneqq (v_i)_{i=1}^k \in \mathbb{C}^k$ be a vector of length $k$. Then the \textbf{matrix diagonal of $\mathbf{v}$}, denoted $\diag (\mathbf{v})$, is the matrix composed of the sorted elements of $\mathbf{v}$ on the main diagonal and zeros elsewhere, i.e.:
\begin{equation}
    \diag (\mathbf{v}) \coloneqq \begin{bmatrix}
    v_1 & 0 & \cdots & 0 & 0\\
    0 & v_2 & \cdots & 0 & 0\\
    \vdots & \vdots & & \vdots & \vdots\\
    0 & 0 &  \cdots & v_{k-1} & 0 \\
    0 & 0 & \cdots & 0 & v_k
    \end{bmatrix}.
\end{equation}
\end{definition}
\begin{definition}
Let $\mathbf{V} \coloneqq (V_i)_{i=1}^k \in \mathbb{C}^{k \times \sum_{i=1}^k k_i \times \sum_{i=1}^k k_i}$ be a series of $k$ square matrices where each one $V_i \in \mathbb{C}^{k_i \times k_i}$ is a square matrix with its corresponding shape $k_i$. Then a matrix $J$ is a \textbf{block matrix diagonal of $\mathbf{V}$}, denoted  $\diag (\mathbf{V})$, is the matrix composed of the matrices in $\mathbf{V}$ on the main diagonal and zeros elsewhere, i.e.:
\begin{equation}
    \diag (\mathbf{V}) \coloneqq \begin{bmatrix}
    V_1 & 0 & \cdots & 0 & 0\\
    0 & V_2 & \cdots & 0 & 0\\
    \vdots & \vdots & & \vdots & \vdots\\
    0 & 0 &  \cdots & V_{k-1} & 0 \\
    0 & 0 & \cdots & 0 & V_k
    \end{bmatrix}.
\end{equation}
\end{definition}
\subsection{Eigenvalues and Eigenvectors}
\begin{definition}
Let $M \in \mathbb{C}^{k \times k}$ be a square matrix. An \textbf{eigenvalue} $\lambda$ and an \textbf{eigenvector} $\mathbf{v}$ of the matrix $M$ are a complex scalar $\lambda \in \mathbb{C}$ and a vector $\mathbf{v} \in \mathbb{C}^{n} \setminus \{ \mathbf{0}_n \}$, respectively, satisfying:
\begin{equation}
      M \mathbf{v} = \lambda \mathbf{v}.
      \label{eqn:eig}
\end{equation}
\end{definition}
\begin{definition}
Let $M \in \mathbb{C}^{k \times k}$ be a square matrix. The \textbf{characteristic polynomial} $p_M(\lambda)$ of $M$ is defined by:
\begin{equation}
    p_M(\lambda) \coloneqq \det (\lambda I_k - M)
\end{equation}
where $\det$ is the \textbf{determinant} operator.
\end{definition}
\begin{theorem}
Let $M \in \mathbb{C}^{k \times k}$ be a square matrix and let $p_M(\lambda)$ be its characteristic polynomial. Then $p_M(\lambda)$ is a polynomial of degree $k$.
\label{the:char_1}
\end{theorem}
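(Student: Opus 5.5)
The plan is to work directly from the definition $p_M(\lambda) = \det(\lambda I_k - M)$ and expand the determinant with the Leibniz (permutation) formula
\begin{equation*}
    \det(\lambda I_k - M) = \sum_{\sigma \in S_k} \operatorname{sgn}(\sigma) \prod_{i=1}^k \left(\lambda \delta_{i,\sigma(i)} - m_{i,\sigma(i)}\right),
\end{equation*}
where $\delta_{i,j}$ is $1$ if $i=j$ and $0$ otherwise. Each entry of $\lambda I_k - M$ is a polynomial in $\lambda$ of degree at most one: diagonal entries are $\lambda - m_{i,i}$, and off-diagonal entries are the constants $-m_{i,j}$. So every summand is a product of $k$ polynomials of degree at most one. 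It is therefore a polynomial of degree at most $k$, and $p_M$, as a finite sum of such terms, is a polynomial of degree at most $k$.

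It remains to show that the coefficient of $\lambda^k$ is nonzero, and in fact equals $1$. A summand indexed by $\sigma$ reaches degree $k$ only if all $k$ factors have degree one, which means $\sigma(i) = i$ for every $i$, so $\sigma$ is the identity. For any $\sigma \neq \mathrm{id}$ there are at least two indices with $\sigma(i) \neq i$, since a permutation cannot move exactly one point. Such a summand therefore contains at least two constant factors and has degree at most $k-2$. The identity summand is $\prod_{i=1}^k (\lambda - m_{i,i})$, whose leading term is $\lambda^k$ with coefficient $1$ and sign $\operatorname{sgn}(\mathrm{id}) = +1$. Hence the $\lambda^k$ coefficient of $p_M$ is exactly $1$, and $p_M$ is monic of degree $k$.

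There is no real obstacle here. The only point needing care is the counting argument that non-identity permutations contribute degree at most $k-2$, and this guarantees they cannot cancel the leading term. As a byproduct the same count shows that the $\lambda^{k-1}$ coefficient comes only from the identity term and equals $-\sum_i m_{i,i} = -\operatorname{tr}(M)$, although this is not needed for Theorem~\ref{the:char_1}.
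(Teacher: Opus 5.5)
Your proof is correct and complete. Each entry of $\lambda I_k - M$ has degree at most one, so every Leibniz summand has degree at most $k$. A non-identity permutation must move at least two indices, so only the identity summand $\prod_{i=1}^k(\lambda - m_{i,i})$ reaches degree $k$, and the $\lambda^k$ coefficient is exactly $1$.

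The paper gives no proof to compare against. The statement sits in the appendix among the background facts the author says will be stated without proof. Your argument therefore supplies the standard textbook justification the paper omits.

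It also yields monicity and the $\lambda^{k-1}$ coefficient $-\sum_i m_{i,i}$, which go beyond what the paper later uses. The paper only needs that $p_M$ has exact degree $k$, so that the Fundamental Theorem of Algebra gives exactly $k$ eigenvalues counted with algebraic multiplicity.
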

\begin{theorem}
Let $M \in \mathbb{C}^{k \times k}$ be a square matrix and $p_M(\lambda)$ be its characteristic polynomial. Then the roots of $p_M(\lambda)$ are precisely the eigenvalues of $M$.
\label{the:char_2}
\end{theorem}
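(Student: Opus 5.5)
This statement is a standard fact, and I would prove it directly from the definitions just given rather than from any control-theoretic result. The plan has three steps: first show that $\lambda$ is an eigenvalue exactly when $\lambda I_k - M$ is singular; then use the determinant criterion for singularity; finally identify that determinant with $p_M(\lambda)$.

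\textbf{Step 1 (reformulation).} By \eqref{eqn:eig}, $\lambda\in\mathbb{C}$ is an eigenvalue of $M$ iff there is $\mathbf{v}\neq\mathbf{0}_k$ with $M\mathbf{v}=\lambda\mathbf{v}$. This is equivalent to $(\lambda I_k - M)\mathbf{v}=\mathbf{0}_k$ having a nontrivial solution. In terms of the linear-dependence definition above, that says the columns of $\lambda I_k - M$ form a linearly dependent set: the entries of $\mathbf{v}$ are scalars, not all zero, that combine the columns to $\mathbf{0}_k$. Hence $\lambda$ is an eigenvalue iff $\rank(\lambda I_k - M)<k$, i.e.\ iff $\lambda I_k - M$ is not of full rank.

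\textbf{Step 2 (determinant criterion).} I would invoke the standard fact that a square matrix $N\in\mathbb{C}^{k\times k}$ has full rank iff $\det N\neq 0$.

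\begin{itemize}
\item If the columns are dependent, one column is a combination of the others. Column operations that add multiples of columns leave the determinant unchanged, so they reduce that column to zero, and the determinant is then $0$.
\item If the columns are independent, $N$ is invertible. Multiplicativity gives $\det N\det N^{-1}=\det I_k=1$, so $\det N\neq0$.
\end{itemize}

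Applying this with $N=\lambda I_k - M$ gives: $\lambda$ is an eigenvalue iff $\det(\lambda I_k - M)=0$.

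\textbf{Step 3 (conclusion).} By definition $\det(\lambda I_k - M)=p_M(\lambda)$, so the eigenvalues are exactly the roots of $p_M$. Theorem~\ref{the:char_1} is only needed to ensure that $p_M$ is a genuine polynomial of degree $k$, so that ``roots'' is meaningful.

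The one step that carries real content is the determinant/rank equivalence in Step 2, since the paper has not stated the determinant's properties. I would treat multilinearity, alternation and multiplicativity of $\det$ as known background, in keeping with the appendix's stated convention of taking such facts as given.
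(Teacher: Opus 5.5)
Your proof is correct and complete. The chain $\lambda\in\sigma(M)\iff(\lambda I_k-M)\mathbf{v}=\mathbf{0}_k$ has a nontrivial solution $\iff\rank(\lambda I_k-M)<k\iff\det(\lambda I_k-M)=0$ is the standard argument, and you justify each link.

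The paper gives no proof to compare with. Its appendix explicitly lists this result among the background facts it states as known theorems without proof, so your write-up supplies strictly more than the source.

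Two minor points:
\begin{itemize}
\item The step you pass over quickest is ``independent columns $\Rightarrow$ $N$ invertible.'' It rests on the fact that $k$ independent vectors in $\mathbb{C}^k$ form a basis, so $N$ is a bijection.
\item The degree-$k$ part of Theorem~\ref{the:char_1} is not actually needed. The claim concerns the zeros of $\lambda\mapsto\det(\lambda I_k-M)$, whatever its degree.
\end{itemize}
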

\begin{corollary}
Let $M \in \mathbb{C}^{k \times k}$ be a square matrix. Then $M$ has exactly $k$, not necessarily unique, eigenvalues.
\label{cor:k_eig}
\end{corollary}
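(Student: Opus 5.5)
The plan is to prove Corollary~\ref{cor:k_eig} by combining Theorem~\ref{the:char_1} and Theorem~\ref{the:char_2} with the Fundamental Theorem of Algebra. First, from Theorem~\ref{the:char_1}, $p_M(\lambda)=\det(\lambda I_k - M)$ is a polynomial of degree exactly $k$ in $\lambda$. I would also note, by expanding the determinant along the Leibniz formula, that its leading coefficient is $1$: the only permutation contributing $\lambda^k$ is the identity, through the product of the diagonal entries $(\lambda - m_{i,i})$.

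Second, I would invoke the Fundamental Theorem of Algebra. A nonconstant complex polynomial of degree $k$ factors completely over $\mathbb{C}$. Applying this repeatedly with polynomial division by $(\lambda-\lambda_1)$, and inducting on the degree, gives
\begin{equation}
    p_M(\lambda) = \prod_{i=1}^{k} (\lambda - \lambda_i)
\end{equation}
for some $\lambda_1,\dots,\lambda_k \in \mathbb{C}$, which are not necessarily distinct. In the degenerate case $k=0$ there is nothing to prove.

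Third, Theorem~\ref{the:char_2} identifies the roots of $p_M$ with the eigenvalues of $M$. So the multiset $\{\lambda_i\}_{i=1}^k$ is exactly the collection of eigenvalues counted with algebraic multiplicity, and there are $k$ of them.

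The only delicate point is interpretive rather than technical. The phrase ``exactly $k$, not necessarily unique'' must mean counting with algebraic multiplicity. I would make this explicit, since the number of distinct eigenvalues may be smaller than $k$. Apart from this, the argument is a direct chaining of the two preceding theorems with the Fundamental Theorem of Algebra.
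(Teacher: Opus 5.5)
Your proposal is correct and follows the same route as the paper, which simply chains Theorem~\ref{the:char_1} (degree $k$), the Fundamental Theorem of Algebra, and Theorem~\ref{the:char_2} (roots are exactly the eigenvalues). Your extra details are correct refinements of what the paper leaves implicit: the monic leading coefficient, the inductive factorization into $k$ linear factors, and stating explicitly that the count is with algebraic multiplicity.
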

\begin{proof}
The claim follows directly from Theorems~\ref{the:char_1} and~\ref{the:char_2} with the Fundamental Theorem of Algebra.
\end{proof}
\begin{definition}
Let $M \in \mathbb{C}^{k \times k}$ be a square matrix and let $p_M(\lambda)$ be its characteristic polynomial. Let $\tilde{\lambda} \in \mathbb{C}$ be an eigenvalue of $M$, which by Theorem~\ref{the:char_2} is also a root of $p_M(\lambda)$. The \textbf{algebraic multiplicity of $\tilde{\lambda}$} is defined as the number of times it occurs as a root of $p_M(\lambda)$ and is denoted by $\mu_M (\lambda)$.
\end{definition}
\begin{definition}
Let $M \in \mathbb{C}^{k \times k}$ be a square matrix and let $p_M(\lambda)$ be its characteristic polynomial. The \textbf{spectrum} of $M$ is the multiset of its eigenvalues, i.e.:
\begin{equation}
    \sigma (M) \coloneqq \{ \lambda \in \mathbb{C} \ | \ \exists \mathbf{v} \in \mathbb{C}^k \setminus \{ \mathbf{0}_k \} \ \colon \ M \mathbf{v} = \lambda \mathbf{v} \},
\end{equation}
where each eigenvalue is listed according to its algebraic multiplicity $\mu_M (\lambda)$.
\end{definition}
\begin{corollary}
Let  $M \in \mathbb{C}^{k \times k}$ be a square matrix with corresponding spectrum $\sigma (M)$ and algebraic multiplicities $(\mu_M (\lambda))_{\lambda \in \sigma (M)}$. Then:
\begin{equation}
    \sum_{\lambda \in \sigma (M)} \mu_M (\lambda) = k.
\end{equation}
\label{cor:k_alg_sum}
\end{corollary}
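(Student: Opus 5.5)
We prove the corollary by summing algebraic multiplicities over the factorization of the characteristic polynomial. By Theorem~\ref{the:char_1}, $p_M(\lambda) = \det(\lambda I_k - M)$ is a polynomial of degree exactly $k$. Expanding the determinant, the only term contributing $\lambda^k$ is the product of the diagonal entries $\prod_{i=1}^k(\lambda - m_{i,i})$, so $p_M$ is in fact monic.

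By the Fundamental Theorem of Algebra (the fact already used in Corollary~\ref{cor:k_eig}), a monic complex polynomial of degree $k$ factors completely over $\mathbb{C}$. Grouping equal roots, we obtain
\begin{equation}
    p_M(\lambda) = \prod_{j=1}^{s} (\lambda - \lambda_j)^{e_j},
\end{equation}
where $\lambda_1, \dots, \lambda_s$ are the distinct roots and $e_j \geq 1$. By Theorem~\ref{the:char_2}, these distinct roots are exactly the distinct eigenvalues of $M$. By the definition of algebraic multiplicity, $e_j = \mu_M(\lambda_j)$, the number of times $\lambda_j$ occurs as a root.

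Comparing degrees on both sides of the factorization gives $\sum_{j=1}^s \mu_M(\lambda_j) = k$.

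The only delicate point is how the sum $\sum_{\lambda \in \sigma(M)}$ should be read, since $\sigma(M)$ is defined as a multiset with each eigenvalue listed according to its multiplicity. If the sum runs over the multiset with repetition, each $\lambda_j$ would contribute $\mu_M(\lambda_j)$ a total of $\mu_M(\lambda_j)$ times, yielding $\sum_j \mu_M(\lambda_j)^2$, which is wrong in general. The sum must therefore be interpreted over the underlying set of distinct eigenvalues, and we state this convention explicitly. Under that reading the identity follows directly from the degree comparison above, and no further obstacle arises.
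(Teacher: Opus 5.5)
Your proof is correct and takes the same route as the paper, which just cites Corollary~\ref{cor:k_eig} (the Fundamental Theorem of Algebra applied to the degree-$k$ characteristic polynomial) together with the definition of algebraic multiplicity; your factor-and-compare-degrees argument writes that reasoning out explicitly. Your caveat is also right: the paper defines $\sigma(M)$ as a multiset with repetitions, so the sum has to be read over the distinct eigenvalues, since otherwise it would evaluate to $\sum_j \mu_M(\lambda_j)^2$.
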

\begin{proof}
The proof follows by the definition of algebraic multiplicity and Corollary~\ref{cor:k_eig}.
\end{proof}
\begin{theorem}
Let $M \in \mathbb{C}^{k \times k}$ be a square matrix. Then $M$ has $O(k)$ linearly independent eigenvectors.
\end{theorem}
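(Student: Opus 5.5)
The plan is to read ``$O(k)$'' as the uniform bound ``at most $k$'' and prove that directly. Together with a lower bound of one, this says the number of linearly independent eigenvectors of $M$ lies between $1$ and $k$. The argument uses only the definitions of linear independence and rank given above, plus the rank bound $\rank(N) \leq \min\{\text{rows}, \text{columns}\}$ stated for an arbitrary matrix $N$.

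\textbf{Step 1 (upper bound).} Let $\{\mathbf{v}_i\}_{i=1}^r \subseteq \mathbb{C}^k$ be any linearly independent set of eigenvectors of $M$; being eigenvectors plays no role here. Form the matrix $V \coloneqq \begin{bmatrix} \mathbf{v}_1 & \cdots & \mathbf{v}_r \end{bmatrix} \in \mathbb{C}^{k \times r}$. All $r$ columns of $V$ are linearly independent, so by the definition of rank $\rank(V) = r$. The rank bound gives $r = \rank(V) \leq \min\{k, r\} \leq k$. Hence no linearly independent family of eigenvectors has more than $k$ members, uniformly in $M$, which is the $O(k)$ claim.

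\textbf{Step 2 (non-vacuity).} By Corollary~\ref{cor:k_eig}, $M$ has at least one eigenvalue $\lambda$. Then $\lambda I_k - M$ is singular, since $\det(\lambda I_k - M) = p_M(\lambda) = 0$ by Theorem~\ref{the:char_2}. A singular matrix has a nonzero vector in its kernel, and that vector is an eigenvector. A single nonzero vector is a linearly independent set, so the maximal number of independent eigenvectors is at least $1$.

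The only delicate point is not mathematical but interpretive. ``$O(k)$'' is asymptotic notation, while the proof delivers the sharper, non-asymptotic statement ``between $1$ and $k$'' with constant $1$. I will state the result in that form and remark that the bound $k$ is attained exactly when $M$ is diagonalizable, for example $M = I_k$. I will also note that it can be as small as $1$, as for a single $k \times k$ Jordan block.
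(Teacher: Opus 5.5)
Your proof is correct. Reading $O(k)$ as the uniform bound ``at most $k$'' is the right interpretation of this loosely worded statement, and Step 1 proves exactly that, so the asymptotic claim follows with constant $1$. The paper gives no proof to compare against: the statement sits in the appendix among the background facts that the author explicitly says will be taken as known and not proved.

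Be aware that the inequality $\rank(V)\le k$ you invoke, applied to a matrix whose $r$ columns are independent, is literally the assertion that $\mathbb{C}^k$ contains no more than $k$ linearly independent vectors. Step 1 therefore repackages the cited rank bound rather than deriving anything new from it. That is legitimate here, since the paper lists $\rank(M)\le\min\{k,m\}$ as a known theorem. If you want the argument self-contained, prove directly that any $k+1$ vectors in $\mathbb{C}^k$ are dependent: a homogeneous system of $k$ equations in $k+1$ unknowns has a nontrivial solution, for instance by Gaussian elimination or the Steinitz exchange lemma.

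Step 2 and your remarks on attainment are correct but go beyond what the statement requires. Exactly $k$ occurs for diagonalizable $M$, and as few as $1$ for a single Jordan block. Step 2 can also be shortened: by the paper's definition an eigenvalue comes paired with a nonzero eigenvector, so the existence of a root of $p_M$ already gives you one.
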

\begin{theorem}
Let $M \in \mathbb{R}^{k \times k}$ be a real square matrix. Then a complex scalar $\lambda \in \mathbb{C} \setminus \mathbb{R}$ is an eigenvalue of $M$ if and only if $\lambda^\dagger$ is also an eigenvalue of $M$, where $\lambda^\dagger$ is the complex-conjugate of $\lambda$.
\end{theorem}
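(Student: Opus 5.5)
The plan is to use only the definition of eigenvalue together with the fact that complex conjugation respects sums and products. Since $\lambda \mapsto \lambda^\dagger$ is an involution ($(\lambda^\dagger)^\dagger = \lambda$), it suffices to prove one direction: if $\lambda \in \mathbb{C}\setminus\mathbb{R}$ is an eigenvalue of the real matrix $M$, then $\lambda^\dagger$ is also an eigenvalue. The converse then follows by applying the same argument to $\lambda^\dagger$, which is also non-real.

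For the forward direction, take an eigenvector $\mathbf{v} \in \mathbb{C}^k \setminus \{\mathbf{0}_k\}$ with $M\mathbf{v} = \lambda \mathbf{v}$, as in~\eqref{eqn:eig}. Conjugate both sides entrywise, writing $\overline{\mathbf{v}}$ for the entrywise conjugate. Each entry of $M\mathbf{v}$ is a finite sum of products $m_{i,j} v_j$, and conjugation is a field automorphism of $\mathbb{C}$, so $\overline{M\mathbf{v}} = \overline{M}\,\overline{\mathbf{v}}$. Because $M$ is real, $\overline{M} = M$. Similarly $\overline{\lambda \mathbf{v}} = \lambda^\dagger \overline{\mathbf{v}}$. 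This gives
\begin{equation*}
M\overline{\mathbf{v}} = \lambda^\dagger \overline{\mathbf{v}}.
\end{equation*}
Since $\mathbf{v} \neq \mathbf{0}_k$, some entry $v_j$ is nonzero, so $\overline{v_j} \neq 0$ and $\overline{\mathbf{v}} \neq \mathbf{0}_k$. Hence $\lambda^\dagger$ is an eigenvalue with eigenvector $\overline{\mathbf{v}}$.

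An alternative route is through Theorem~\ref{the:char_2}. The polynomial $p_M(\lambda) = \det(\lambda I_k - M)$ has real coefficients because $M$ is real. Conjugating $p_M(\lambda) = 0$ therefore gives $p_M(\lambda^\dagger) = 0$. I would mention this route only as a remark; the eigenvector argument above is more self-contained.

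There is no real obstacle here. The only point needing care is justifying $\overline{M\mathbf{v}} = M\overline{\mathbf{v}}$ entrywise from the realness of $M$, and noting that $\overline{\mathbf{v}}$ is nonzero.
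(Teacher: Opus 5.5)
Your proof is correct and complete. Conjugating $M\mathbf{v}=\lambda\mathbf{v}$ entrywise and using $\overline{M}=M$ gives $M\overline{\mathbf{v}}=\lambda^\dagger\overline{\mathbf{v}}$ with $\overline{\mathbf{v}}\neq\mathbf{0}_k$. The converse then follows from $(\lambda^\dagger)^\dagger=\lambda$; the hypothesis $\lambda\notin\mathbb{R}$ is not actually needed.

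The paper gives no proof to compare against. This result sits in the linear-algebra appendix among the facts the author explicitly takes as known and does not prove, so your eigenvector argument is simply the standard one.

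Your alternative route does buy a little more, and it matters because the paper defines $\sigma(M)$ as a multiset counted by algebraic multiplicity. Write $p_M(x)=(x-\lambda)^{\mu}q(x)$ with $q(\lambda)\neq 0$. Conjugating coefficients, which leaves the real polynomial $p_M$ unchanged, gives $p_M(x)=(x-\lambda^\dagger)^{\mu}\overline{q}(x)$, where $\overline{q}$ is $q$ with conjugated coefficients and $\overline{q}(\lambda^\dagger)=\overline{q(\lambda)}\neq 0$. So $\lambda$ and $\lambda^\dagger$ have equal algebraic multiplicities. The single-eigenvector argument does not show this unless you extend it to generalized eigenspaces.
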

\subsection{Inverse Matrix}
\begin{definition}
Let $M \in \mathbb{C}^{k \times k}$ be a square matrix. Then $M$ is \textbf{invertible} if there exists a square matrix $X \in \mathbb{C}^{k \times k}$ such that:
\begin{equation}
    M X = X M = I_{k}.
\end{equation}
We then denote $M^{-1} \coloneqq X$ and refer to $M^{-1}$ as the \textbf{inverse} of $M$. A matrix which is not invertible is also called \textbf{singular}.
\end{definition}
\begin{theorem}
Let  $M \in \mathbb{C}^{k \times k}$ be a square matrix with corresponding spectrum $\sigma (M)$ and rank $\rank (M)$. The following statements are equivalent:
\begin{itemize}
    \item $M$ is invertible;
    \item $M$ has a full rank, i.e. $\rank (M) = k$;
    \item All of the eigenvalues of $M$ are non-zero, i.e. $0 \not\in \sigma (M)$.
\end{itemize}
\end{theorem}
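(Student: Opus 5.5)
We prove the equivalence of the three statements as a cycle: invertible $\Rightarrow$ full rank $\Rightarrow$ $0 \notin \sigma(M)$ $\Rightarrow$ invertible. The main tool is the observation that each statement is really a claim about the null space $\ker M \coloneqq \{\mathbf{v} \in \mathbb{C}^k \mid M\mathbf{v} = \mathbf{0}_k\}$.

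\textbf{From invertible to full rank.} Suppose $M^{-1}$ exists. If $M\mathbf{v} = \mathbf{0}_k$, then
\begin{equation*}
    \mathbf{v} = M^{-1}M\mathbf{v} = \mathbf{0}_k,
\end{equation*}
so $\ker M = \{\mathbf{0}_k\}$. Any linear dependence $\sum_j \alpha_j \mathbf{m}_j = \mathbf{0}_k$ among the columns $\mathbf{m}_j$ of $M$ is the statement $M\boldsymbol{\alpha} = \mathbf{0}_k$ with $\boldsymbol{\alpha} = (\alpha_j)_{j=1}^k$. Hence $\boldsymbol{\alpha} = \mathbf{0}_k$, the $k$ columns are linearly independent, and $\rank(M) = k$.

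\textbf{From full rank to $0 \notin \sigma(M)$.} If $0$ were an eigenvalue, the definition of eigenvalue would give some $\mathbf{v} \neq \mathbf{0}_k$ with $M\mathbf{v} = 0\cdot\mathbf{v} = \mathbf{0}_k$. This is a nontrivial linear dependence of the columns, contradicting $\rank(M) = k$.

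\textbf{From $0 \notin \sigma(M)$ to invertible.} This is the step that needs an actual construction. Since $0$ is not an eigenvalue, $\ker M = \{\mathbf{0}_k\}$, so the linear map $\mathbf{v} \mapsto M\mathbf{v}$ on $\mathbb{C}^k$ is injective. The columns of $M$ are therefore $k$ linearly independent vectors in $\mathbb{C}^k$, and so they form a basis. The one fact I take as standard is that a linearly independent set of size $k$ in a $k$-dimensional space spans it, i.e., the rank–nullity / dimension count. With this, I construct the inverse in two stages.
\begin{itemize}
    \item Each standard basis vector $\mathbf{e}_i$ can be written as $M\mathbf{x}_i$ for some $\mathbf{x}_i \in \mathbb{C}^k$. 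Setting $X \coloneqq [\mathbf{x}_1 \ \cdots \ \mathbf{x}_k]$ gives $MX = I_k$.
    \item To get the left inverse, I compute $M(XM - I_k) = MXM - M = M - M = 0$. Because $\ker M$ is trivial, every column of $XM - I_k$ must vanish, so $XM = I_k$.
\end{itemize}
Thus $X = M^{-1}$, which closes the cycle.

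Alternatively, the last implication could be obtained from the characteristic polynomial. By Theorem~\ref{the:char_2}, $0 \notin \sigma(M)$ means $p_M(0) = \det(-M) \neq 0$, so $\det M \neq 0$, and the adjugate formula then gives $M^{-1}$. However, this relies on determinant theory that the paper does not develop, so I prefer the kernel argument above.
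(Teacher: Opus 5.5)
Your proof is correct in all three steps, and you identify the one step with real content. The first two implications only restate $\ker M=\{\mathbf{0}_k\}$ under the paper's definitions of rank (maximal number of independent columns) and spectrum. The substantive point is that a trivial kernel forces the columns to span $\mathbb{C}^k$. You get this from the dimension count, and then upgrade the right inverse to a two-sided one via $M(XM-I_k)=0$, which is what the paper's two-sided definition of invertibility requires.

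There is no paper proof to compare against. This theorem sits in the linear-algebra appendix among the standard facts the author explicitly says he will not prove. Your kernel-based cycle is a self-contained substitute that uses only the paper's own definitions plus the Steinitz exchange (dimension) fact.

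The determinant route you mention would connect more directly to the characteristic-polynomial facts the paper does record (Theorems~\ref{the:char_1} and~\ref{the:char_2}). It would make the third equivalence immediate from $p_M(0)=\det(-M)=(-1)^k\det M$. However, it needs multiplicativity of $\det$ and the adjugate formula, which the paper never develops, so your choice is the more economical one.
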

\begin{definition}
Let $M \in \mathbb{R}^{k \times k}$ be a real square matrix. We say $M$ is \textbf{orthogonal} if:
\begin{equation}
    MM^T = I_{k},
\end{equation}
which is equivalent to:
\begin{equation}
    M^T = M^{-1}.
\end{equation}
Notice by this definition we have that any orthogonal matrix is also invertible.
\end{definition}
\begin{definition}
Let $M_1, M_2 \in \mathbb{C}^{k \times k}$ be two square matrices. We say $M_1$ and $M_2$ are \textbf{similar} if there exists an invertible matrix $D \in \mathbb{C}^{k \times k}$ such that:
\begin{equation}
    M_2 = D^{-1} M_1 D.
\end{equation}
\end{definition}
\begin{proposition}
Let $M_1, M_2 \in \mathbb{C}^{k \times k}$ be two similar matrices, so let $D \in \mathbb{C}^{k \times k}$ such that $M_2 = D^{-1} M_1 D$. Let $\lambda \in \mathbb{C}$ be an eigenvalue of $M_1$ associated with the eigenvector $\mathbf{v} \in \mathbb{C}^{k}$. Then $\lambda$ is also an eigenvalue of $M_2$ associated with the eigenvector $D^{-1}\mathbf{v}$.
\end{proposition}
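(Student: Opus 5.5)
The claim is the standard similarity-invariance fact: if $M_2 = D^{-1} M_1 D$ and $M_1\mathbf{v} = \lambda\mathbf{v}$ with $\mathbf{v}\neq\mathbf{0}_k$, then $\lambda$ is an eigenvalue of $M_2$ with eigenvector $D^{-1}\mathbf{v}$. The plan is a direct verification against the definition of eigenvalue/eigenvector in Equation~\eqref{eqn:eig}. That definition needs two things: the eigen-equation must hold, and the proposed vector must be nonzero.

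The first step is to compute $M_2 (D^{-1}\mathbf{v})$. Substitute $M_2 = D^{-1}M_1 D$ and use associativity of matrix multiplication. Then cancel $D D^{-1} = I_k$, which is valid because $D$ is invertible by the definition of similar matrices. This gives
\begin{equation*}
M_2 D^{-1}\mathbf{v} = D^{-1} M_1 D D^{-1}\mathbf{v} = D^{-1} M_1 \mathbf{v}.
\end{equation*}
Next apply the hypothesis $M_1\mathbf{v} = \lambda\mathbf{v}$, and use linearity of $D^{-1}$ to pull out the scalar $\lambda$:
\begin{equation*}
D^{-1} M_1 \mathbf{v} = D^{-1}(\lambda \mathbf{v}) = \lambda D^{-1}\mathbf{v}.
\end{equation*}
So $D^{-1}\mathbf{v}$ satisfies the eigen-equation for $M_2$ with the same $\lambda$.

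The second step, and the only real obstacle (a minor one), is to show $D^{-1}\mathbf{v} \neq \mathbf{0}_k$. The argument is by contradiction: if $D^{-1}\mathbf{v} = \mathbf{0}_k$, then left-multiplying by $D$ gives $\mathbf{v} = D\mathbf{0}_k = \mathbf{0}_k$. This contradicts $\mathbf{v}$ being an eigenvector. Equivalently, $D^{-1}$ is invertible, so it has trivial kernel, which can also be seen from the theorem on equivalent characterizations of invertibility.

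Together these two steps show that $(\lambda, D^{-1}\mathbf{v})$ is an eigenvalue--eigenvector pair of $M_2$, which proves the proposition.
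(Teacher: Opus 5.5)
Your proof is correct and is essentially the paper's direct verification. The paper rewrites the hypothesis as $M_1 = D M_2 D^{-1}$, substitutes into $M_1\mathbf{v} = \lambda\mathbf{v}$, and left-multiplies by $D^{-1}$ to get $M_2 D^{-1}\mathbf{v} = \lambda D^{-1}\mathbf{v}$; your explicit check that $D^{-1}\mathbf{v} \neq \mathbf{0}_k$ is a step the paper omits, so your version is slightly more complete.
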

\begin{proof}
Since $M_1$ and $M_2$ are similar:
\begin{equation}
    M_2 = D^{-1} M_1 D \Longleftrightarrow DM_2D^{-1} = M_1.
\end{equation}
For $M_1$, $\lambda$ and $\mathbf{v}$ we have by definition:
\begin{equation}
    M_1 \mathbf{v} = \lambda \mathbf{v},
\end{equation}
and thus:
\begin{equation}
    DM_2D^{-1} \mathbf{v} = \lambda \mathbf{v} \rightarrow M_2D^{-1} \mathbf{v} = \lambda D^{-1}\mathbf{v},
\end{equation}
thus $D^{-1} \mathbf{v}$ is an eigenvector of $M_2$ with eigenvalue $\lambda$.
\end{proof}
\subsection{Matrix Diagonalization}
\begin{definition}
Let $M \in \mathbb{C}^{k \times k}$ be a square matrix with corresponding spectrum $\sigma(M)$. Then:
\begin{itemize}
    \item $M$ is \textbf{diagonalizable} if it is similar to a diagonal matrix, i.e., there exists an invertible matrix $D \in \mathbb{C}^{k \times k}$ and a vector $\mathbf{v} \in \mathbb{C}^k$ such that:
\begin{equation}
    M = D^{-1} \Lambda D,
\end{equation}
and where $\Lambda \in \mathbb{C}^{k \times k}$ is defined by:
\begin{equation}
    \Lambda \coloneqq \diag \left(\mathbf{v}\right).
\end{equation}
In this case the matrix $D$ is said to be \textbf{$M$-diagonalizing}.
\item If $M$ is not diagonalizable, then it is called \textbf{defective}.
\end{itemize}
\end{definition}

\begin{theorem}
Let $M \in \mathbb{C}^{k \times k}$ be a square matrix with corresponding spectrum $\sigma (M)$. Then $M$ is diagonalizable, i.e. there exist an invertible matrix $D$ and a vector $\mathbf{v} \in \mathbb{C}^k$ such that $M = D^{-1}\diag \left( \mathbf{v} \right) D$ if and only if the following statements hold:
\begin{itemize}
    \item $M$ has exactly $k$ linearly independent eigenvectors;
    \item The eigenvectors of $M$ form a basis of $\mathbb{C}^k$;
    \item The vector $\mathbf{v}$ is composed of the eigenvalues of $M$, i.e., $\mathbf{v} \equiv (\lambda)_{\lambda \in \sigma (M)}$;
    \item The columns of the matrix $D$ are composed of the eigenvectors of $M$.
    
\end{itemize}

 The order of the eigenvalues and eigenvectors correspond, so $D$ and $\mathbf{v}$ are not necessarily unique.
\end{theorem}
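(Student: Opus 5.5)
We prove the theorem as a chain of equivalences. Throughout, $D$ has columns $\mathbf{d}_1,\dots,\mathbf{d}_k$ and $\mathbf{v}=(v_1,\dots,v_k)$.

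\textbf{Fix the convention.} As stated, the factorization is $M = D^{-1}\diag(\mathbf{v})D$. With this order it is the rows of $D$, that is, the columns of $D^{-1}$, that are eigenvectors of $M$. The last bullet of the theorem therefore only holds for the equivalent factorization $M = D\,\diag(\mathbf{v})\,D^{-1}$. Since relabeling $D \leftrightarrow D^{-1}$ preserves invertibility, both factorizations define the same notion of diagonalizability. We therefore work with $M = D\Lambda D^{-1}$, where $\Lambda=\diag(\mathbf{v})$, and note this at the outset.

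\textbf{($\Rightarrow$)} Suppose $M = D\Lambda D^{-1}$ with $D$ invertible. Multiplying on the right by $D$ gives $MD = D\Lambda$. Comparing column $j$ of both sides yields $M\mathbf{d}_j = v_j\mathbf{d}_j$. Since $D$ is invertible, no column is zero, so each $\mathbf{d}_j$ is an eigenvector with eigenvalue $v_j$. This gives the fourth bullet.

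Invertibility of $D$ also means the $\mathbf{d}_j$ are $k$ linearly independent vectors in $\mathbb{C}^k$. Hence they form a basis of $\mathbb{C}^k$, which gives the first two bullets.

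For the third bullet, $M$ and $\Lambda$ are similar. The characteristic polynomial is invariant under similarity, since $\det(\lambda I - D\Lambda D^{-1}) = \det(\lambda I-\Lambda)$. Therefore $p_M(\lambda)=\prod_j(\lambda-v_j)$. By Theorem~\ref{the:char_2} and Corollary~\ref{cor:k_alg_sum}, the entries $v_j$ are exactly the spectrum $\sigma(M)$ counted with algebraic multiplicity.

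\textbf{($\Leftarrow$)} Suppose $M$ has $k$ linearly independent eigenvectors $\mathbf{d}_j$ with $M\mathbf{d}_j=\lambda_j\mathbf{d}_j$. Place them as the columns of $D$. Then $D$ is invertible, being square with $k$ independent columns (full rank). The $k$ column equations combine into $MD = D\,\diag(\lambda_1,\dots,\lambda_k)$. Multiplying on the right by $D^{-1}$ gives the factorization.

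\textbf{Non-uniqueness.} Permuting the columns of $D$ together with the entries of $\mathbf{v}$ preserves the identity $MD=D\Lambda$, so $D$ and $\mathbf{v}$ are determined only up to a common reordering. Each $\mathbf{d}_j$ can also be rescaled by a nonzero constant.

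\textbf{Main obstacle.} The only delicate points are the convention mismatch noted in the first step and the multiplicity bookkeeping for the third bullet. The multiplicity claim is handled by similarity invariance of $p_M$. The rest is column-by-column matrix algebra.
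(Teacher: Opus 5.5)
The paper gives no proof to compare against. This theorem sits in the appendix among the background results that the authors say they take as known and do not prove.

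Judged on its own, your argument is the standard one and it is correct. Your change of convention is necessary, not cosmetic. With the literal factorization $M = D^{-1}\diag(\mathbf{v})D$ one has $MD^{-1} = D^{-1}\diag(\mathbf{v})$, so it is $D^{-1}$ whose columns are eigenvectors. The fourth bullet then fails in general, already for a real symmetric matrix: take $M = D^{-1}\diag(1,2)D$ with $D$ the rotation by $\pi/6$. Your reading $M = D\diag(\mathbf{v})D^{-1}$ is also the convention the paper itself uses in its Jordan normal form theorems, which write $D^{-1}MD = J_M$ with $D$ built from generalized eigenvectors. You also read the logical structure correctly. Only the first two bullets are conditions on $M$; the last two describe every admissible factorization, and your two directions prove exactly this.

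One sentence needs fixing. The phrase ``the rows of $D$, that is, the columns of $D^{-1}$'' conflates two different objects. From $M = D^{-1}\diag(\mathbf{v})D$ you get $DM = \diag(\mathbf{v})D$, so the rows of $D$ are \emph{left} eigenvectors of $M$. By contrast, $MD^{-1} = D^{-1}\diag(\mathbf{v})$ shows that the columns of $D^{-1}$ are right eigenvectors. The two coincide only when $D^{-1} = D^{T}$, for instance for real orthogonal $D$. Since the rest of your proof works with the swapped factorization, this slip does not affect its validity. You could also add one clause noting that ``exactly $k$'' in the first bullet is automatic, because $\mathbb{C}^k$ contains no more than $k$ linearly independent vectors.
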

\begin{theorem}[\textbf{Spectral Theorem}]
Let $M \in \mathbb{R}^{k \times k}$ be a real symmetric matrix. Then it can be diagonalized by a real orthogonal matrix.
\end{theorem}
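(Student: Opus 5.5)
The plan is the classical spectral-decomposition argument: build an orthonormal basis of real eigenvectors of $M$ by induction on $k$, and collect them as the columns of an orthogonal matrix. The argument rests on two facts about real symmetric matrices. First, all eigenvalues are real. Second, the orthogonal complement of an eigenvector is $M$-invariant.

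\textbf{Real eigenvalues.} By Corollary~\ref{cor:k_eig}, $M$ has an eigenvalue $\lambda\in\mathbb{C}$ with eigenvector $\mathbf{v}\in\mathbb{C}^k\setminus\{\mathbf{0}_k\}$. I will compute $\mathbf{v}^\dagger M\mathbf{v}$ in two ways. Directly, it equals $\lambda\,\mathbf{v}^\dagger\mathbf{v}$. Since $M$ is real symmetric, we have $M=M^\dagger$, so it also equals $(M\mathbf{v})^\dagger\mathbf{v}=\lambda^\dagger\mathbf{v}^\dagger\mathbf{v}$. Because $\mathbf{v}^\dagger\mathbf{v}>0$, this gives $\lambda=\lambda^\dagger\in\mathbb{R}$.

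\textbf{Real unit eigenvector.} Since $\lambda$ is real, $M-\lambda I_k$ is a real singular matrix. Its kernel therefore contains a nonzero real vector, which I normalize to a unit vector $\mathbf{q}_1\in\mathbb{R}^k$.

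\textbf{Induction.} The case $k=1$ is trivial. For $k>1$, take the real unit eigenvector $\mathbf{q}_1$ above and extend it to an orthonormal basis of $\mathbb{R}^k$ by Gram--Schmidt. Let $U=[\mathbf{q}_1\ \mathbf{q}_2\ \cdots\ \mathbf{q}_k]$, which is orthogonal. Then $U^TMU$ is symmetric, and its first column is $U^TM\mathbf{q}_1=\lambda U^T\mathbf{q}_1=\lambda\mathbf{e}_1$. By symmetry its first row is $\lambda\mathbf{e}_1^T$ as well, so
\[
U^TMU=\begin{bmatrix}\lambda & 0\\ 0 & M'\end{bmatrix}
\]
with $M'\in\mathbb{R}^{(k-1)\times(k-1)}$ real symmetric. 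This block-zero structure is exactly the invariance of $\mathbf{q}_1^\perp$ under $M$, which follows from $\mathbf{q}_1^TM\mathbf{w}=(M\mathbf{q}_1)^T\mathbf{w}=\lambda\,\mathbf{q}_1^T\mathbf{w}=0$ for $\mathbf{w}\perp\mathbf{q}_1$.

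By the induction hypothesis, $M'=V\Lambda'V^T$ with $V$ real orthogonal and $\Lambda'$ diagonal. Set $D=U\,\diag(1,V)$. This $D$ is a product of orthogonal matrices, hence orthogonal, and $D^TMD=\diag(\lambda,\Lambda')$. Since $D^{-1}=D^T$, this is the diagonalization in the sense of the paper's definition, up to the convention of writing $M=D^{-1}\Lambda D$ versus $M=D\Lambda D^{-1}$.

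\textbf{Main obstacle.} The delicate step is the first one: guaranteeing that the eigenvalues, and hence the eigenvectors, can be taken real. A merely complex eigenvector would only yield a unitary diagonalization, not the real orthogonal one the statement requires. The Hermitian computation above settles this, and everything afterwards is bookkeeping.
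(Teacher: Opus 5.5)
Your proof is correct and complete. It is the standard induction:
\begin{itemize}
\item the Hermitian computation $\mathbf{v}^\dagger M\mathbf{v}=\lambda\,\mathbf{v}^\dagger\mathbf{v}=\lambda^\dagger\,\mathbf{v}^\dagger\mathbf{v}$ forces a real eigenvalue;
\item singularity of the real matrix $M-\lambda I_k$ gives a real unit eigenvector;
\item conjugating by an orthogonal $U$ whose first column is that eigenvector deflates $M$ to $\diag(\lambda, M')$ with $M'$ real symmetric.
\end{itemize}

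The paper gives no proof of this statement. Its appendix explicitly treats results labelled as theorems as known facts taken without proof, so there is no competing route to compare against. Your argument supplies exactly what the paper assumes when it later writes $Q = Q_1^{-1}\diag(\cdot)\,Q_1$ with $Q_1$ orthogonal to construct $\sqrt{Q}$.

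The convention issue you flag costs nothing. You obtain $D^TMD=\Lambda$, so set $\tilde D := D^T$, which is again real orthogonal. Then $M=\tilde D^{-1}\Lambda\tilde D$ is literally the form in the paper's definition of diagonalizability.
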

\subsection{Real Matrix Functions}
\begin{definition}
Let $f \colon \mathbb{R} \rightarrow \mathbb{R}$ be a real function and $\mathcal{D} \subset \mathbb{R}$ an open interval. We say $f$ is \textbf{real analytic on $\mathcal{D}$} if it is given by a real power series which is convergent locally in $\mathcal{D}$, i.e. if for all $x_0 \in \mathcal{D}$ we can write:
\begin{equation}
    f(x) = \sum_{n=1}^\infty a_n (x - x_0)^n.
\end{equation}
\end{definition}
\begin{definition}
Let $M \in \mathbb{R}^{k \times k}$ be real a diagonalizable matrix with corresponding spectrum $\sigma(M)$, i.e., there exists an invertible matrix $D \in \mathbb{R}^{k \times k}$ such that: $M = D^{-1} \diag (( \lambda )_{\lambda \in \sigma(M)}) D$. Let $f \colon \mathbb{R} \rightarrow \mathbb{R}$ be a real analytic function on some open set $\mathcal{D} \subset \mathbb{R}$. Then we extend the definition of $f$ to act upon and return matrices, as in we define the \textbf{matrix function} $f \colon \mathbb{R}^{k \times k} \rightarrow \mathbb{R}^{k \times k}$ to be:
\begin{equation}
    f(M) = D^{-1} \diag \left(( f(\lambda) )_{\lambda \in \sigma(M)}\right) D,
\end{equation}.
\end{definition}
\begin{definition}
 Let $M \in \mathbb{R}^{k \times k}$ be a real diagonalizable matrix. As a special case of the former definition, we define the \textbf{matrix exponential}~\cite{lie_groups} of $M$ as follows:
\begin{equation}
    e^M \coloneqq \sum_{s=0}^\infty \frac{M^s}{s!}
\end{equation}
\end{definition}
\subsection{Jordan Normal Form}
\begin{definition}
Let $M \in \mathbb{C}^{k \times k}$ be a square matrix and let $\lambda \in \mathbb{C}$ be an eigenvalue of $M$. A \textbf{generalized eigenvector} of $M$ corresponding to the eigenvalue $\lambda$ is a vector $\mathbf{v} \in \mathbb{C}^k \setminus \{ \mathbf{0}_k \}$ satisfying:
\begin{equation}
    (M - \lambda I_k )^p \mathbf{v} = \mathbf{0}_k
\end{equation}
for some $p \in \mathbb{N}$. Notice that by that definition, eigenvectors are generalized eigenvectors with $p \equiv 1$.
\end{definition}

\begin{definition}
Let $M \in \mathbb{C}^{k \times k}$ be a square matrix and let $\lambda \in \mathbb{C}$ be an eigenvalue of $M$ with algebraic multiplicity $\mu_M (\lambda)$. A \textbf{$\lambda$-Jordan block of $M$}, denoted $J_{M , \lambda} \in \mathbb{C}^{\mu_M (\lambda) \times \mu_M (\lambda)}$, is a square upper-triangular matrix composed of zeroes everywhere except for the main diagonal, which is filled with the value of $\lambda$, and for the superdiagonal, which is composed of ones, i.e.:
\begin{equation}
J_{M , \lambda} \coloneqq \left.\left[ \vphantom{\begin{array}{c}1\\1\\1\\1\\1\end{array}}
                  \smash{\underbrace{
                      \begin{array}{ccccc}
                             \lambda&1&0&\cdots &0\\
                             0&\lambda&1&\cdots &0\\
                             \vdots&\vdots&\vdots&\ddots&\vdots\\
                             0&0&0&\cdots &1\\
                             0&0&0&\cdots &\lambda
                      \end{array}
                      }_{\mu_M (\lambda)\text{ columns}}}
              \right]\right\}
              \,_{\mu_M (\lambda)\text{ rows}}.
\end{equation}
\end{definition}
\vspace{10px}
\begin{definition}
Let $M \in \mathbb{C}^{k \times k}$ be a square matrix with corresponding spectrum $\sigma(M)$. With that, let us define $\mathbf{J}_M \in \mathbb{C}^{k \times \sum_{\lambda \in \sigma (M)} \mu_M (\lambda) \times \sum_{\lambda \in \sigma (M)} \mu_M (\lambda)}$ as such:
\begin{equation}
    \mathbf{J}_M \coloneqq (J_{M, \lambda})_{\lambda \in \sigma (M)}.
\end{equation}
Notice by Corollary~\ref{cor:k_alg_sum} we have $\sum_{\lambda \in \sigma (M)} \mu_M (\lambda) = k$ so $\mathbf{J}_M \in \mathbb{C}^{k \times k \times k}$. Then a matrix $J_M \in \mathbb{C}^{k \times k}$ is a \textbf{$M$-Jordan Matrix} if it satisfies:
\begin{equation}
    J_M \coloneqq \diag (\mathbf{J}_M) = \begin{bmatrix}
    J_{M, \lambda_1} & 0 & \cdots & 0 & 0\\
    0 & J_{M, \lambda_2} & \cdots & 0 & 0\\
    \vdots & \vdots & & \vdots & \vdots\\
    0 & 0 &  \cdots & J_{M, \lambda_{k-1}} & 0 \\
    0 & 0 & \cdots & 0 & J_{M, \lambda_{k}}
    \end{bmatrix},
\end{equation}
where $J_M$ is unique up to permutations of the $\lambda$-Jordan blocks.
\end{definition}
\begin{theorem}[\textbf{Jordan Normal Form}]
Let $M \in \mathbb{C}^{k \times k}$ be a square matrix with corresponding spectrum $\sigma(M)$. Then $M$ is similar a $M$-Jordan matrix $J_M$, i.e., there exist an invertible matrix $D \in \mathbb{C}^{k \times k}$ such that:
\begin{equation}
    D^{-1} M D = J_M = \diag \left((J_{M, \lambda})_{\lambda \in \sigma (M)}\right),
\end{equation}
where $D$ is composed of generalised eigenvectors of $M$. Notice $J_M$ and $D$ are unique up to permutations of the $\lambda$-Jordan blocks of $M$.
\end{theorem}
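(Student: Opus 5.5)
The statement is the Jordan Normal Form theorem: every $M \in \mathbb{C}^{k \times k}$ is similar, via a matrix $D$ of generalised eigenvectors, to a block-diagonal matrix of Jordan blocks. The plan has two stages. First reduce to a single eigenvalue via the generalised eigenspace decomposition. Then, for a single eigenvalue, reduce to a nilpotent operator and build Jordan chains.

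\textbf{Stage 1: splitting $\mathbb{C}^k$.} By Corollary~\ref{cor:k_eig}, write $p_M(\lambda)=\prod_{\lambda_i}(\lambda-\lambda_i)^{\mu_M(\lambda_i)}$ over the distinct eigenvalues. For each $i$ set $V_i \coloneqq \ker (M-\lambda_i I_k)^{\mu_M(\lambda_i)}$. Each $V_i$ consists of generalised eigenvectors (together with $\mathbf{0}_k$) and is $M$-invariant.

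Using Cayley--Hamilton, $p_M(M)=0$. The polynomials $q_i(\lambda)\coloneqq p_M(\lambda)/(\lambda-\lambda_i)^{\mu_M(\lambda_i)}$ are coprime, so Bézout's identity for polynomials gives $\sum_i a_i q_i = 1$. Substituting $M$ yields projections $E_i \coloneqq a_i(M)q_i(M)$ that sum to $I_k$ and have image $V_i$. From this we get $\mathbb{C}^k=\bigoplus_i V_i$.

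Choosing bases of the $V_i$ and concatenating them gives an invertible $D_0$ with $D_0^{-1}MD_0$ block diagonal. The $i$-th block has the single eigenvalue $\lambda_i$. Its size is $\mu_M(\lambda_i)$, by comparing characteristic polynomials with Corollary~\ref{cor:k_alg_sum}.

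\textbf{Stage 2: one eigenvalue.} On $V_i$ let $N\coloneqq (M-\lambda_i I)|_{V_i}$, which is nilpotent. It then suffices to show that a nilpotent $N$ on a space $V$ admits a basis that is a union of chains $\mathbf{v}, N\mathbf{v},\dots,N^{p-1}\mathbf{v}$ with $N^p\mathbf{v}=\mathbf{0}$. In such a basis (chains ordered from the top vector down), $N$ is a direct sum of shift matrices with ones on the superdiagonal. Hence $M|_{V_i}=\lambda_i I+N$ is a direct sum of Jordan blocks.

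I would prove the chain statement by induction on $\dim V$:
\begin{enumerate}
\item The image $W\coloneqq N(V)$ is $N$-invariant and strictly smaller, since $N$ is nilpotent and so singular. By induction $W$ has a chain basis with tops $\mathbf{w}_1,\dots,\mathbf{w}_r$.
\item Pick preimages $\mathbf{v}_j$ with $N\mathbf{v}_j=\mathbf{w}_j$, extending each chain up by one.
\item The chain bottoms lie in $\ker N\cap W$. Extend them by vectors $\mathbf{z}_1,\dots,\mathbf{z}_s$ to a basis of $\ker N$; each $\mathbf{z}_l$ becomes a chain of length one.
\item Rank--nullity shows the total count of vectors equals $\dim V$.
\item Linear independence follows by applying $N$ to a vanishing linear combination. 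This reduces it to the independence of the basis of $W$ and then to that of the chosen basis of $\ker N$.
\end{enumerate}
Assembling $D$ from $D_0$ and these chain bases gives $D^{-1}MD=J_M$, with the columns of $D$ being generalised eigenvectors.

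\textbf{Remaining points and main obstacle.} The statement as written lets a single block $J_{M,\lambda}$ have full size $\mu_M(\lambda)$, which is not correct in general: an eigenvalue may contribute several smaller blocks. I would prove the correct version, in which each eigenvalue contributes blocks whose sizes sum to $\mu_M(\lambda)$.

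Uniqueness up to permuting blocks follows by counting. The number of blocks of size at least $p$ for $\lambda$ equals $\dim\ker(M-\lambda I)^p-\dim\ker(M-\lambda I)^{p-1}$, which is a similarity invariant.

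The main obstacle is the independence argument in Stage 2. One must check that the added vectors $\mathbf{z}_l$ do not interact with the lifted chains. The first thing I would try is to apply $N$ to the relation, which kills the $\mathbf{z}_l$ and the chain bottoms, and then use the independence of the basis of $W$.
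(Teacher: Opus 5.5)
The paper gives no proof of this statement. It sits in the appendix among the results ``we will consider as known theorems or lemmas, which we will not prove,'' and is invoked (through Theorem~\ref{the:real_jordan}) in the proof of Proposition~\ref{prop:care_con}. The paper's route is therefore plain citation of the textbook theorem, which is economical and adequate for how it is used.

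Your proposal instead gives a complete, self-contained proof along the standard path. You first obtain the primary decomposition $\mathbb{C}^k=\bigoplus_i\ker(M-\lambda_i I_k)^{\mu_M(\lambda_i)}$ from Cayley--Hamilton and the polynomial B\'ezout identity. You then run the induction that lifts a chain basis of $W=N(V)$ and completes the chain bottoms to a basis of $\ker N$. Both the rank--nullity count and the independence step go through. After applying $N$, what remains is a combination of the tops $\mathbf{w}_j$ and of the vectors $N\mathbf{u}$ for non-bottom $\mathbf{u}$; these are distinct members of the chain basis of $W$. The leftover relation then lives in your chosen basis of $\ker N$.

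What your approach buys beyond the citation is that it exposes the statement as written to be false. With $J_{M,\lambda}$ defined as a single $\mu_M(\lambda)\times\mu_M(\lambda)$ block, $M=I_2$ is already a counterexample: $D^{-1}I_2D=I_2$ for every invertible $D$. Read literally, with $\sigma(M)$ a multiset, the block-diagonal matrix is in general not even $k\times k$. Proving the corrected version, in which the blocks for each distinct $\lambda$ have sizes summing to $\mu_M(\lambda)$, is the right call.

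Two small fixes remain.

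First, suppose a chain $\mathbf{v},N\mathbf{v},\dots,N^{p-1}\mathbf{v}$ is ordered from the top $\mathbf{v}$ down. Then $N$ sends each basis vector to the next one, and the ones land on the \emph{sub}diagonal. To match the paper's superdiagonal convention, order each chain from the bottom up: $N^{p-1}\mathbf{v},\dots,N\mathbf{v},\mathbf{v}$.

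Second, the last sentence of the statement is only half true. Your kernel-dimension count shows that $J_M$ is unique up to permuting blocks, but $D$ is not unique: every $cD$ with $c\neq 0$ also works, and for $M=I_k$ every invertible $D$ does. That part should be dropped rather than proved.
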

\begin{theorem}[\textbf{Real Jordan Normal Form}]
Let $M \in \mathbb{R}^{k \times k}$ be a real square matrix. Then $M$ is similar to a $M$-Jordan matrix $J_M$, i.e. $D^{-1} M D = J_M$, where $D$ is composed of generalised eigenvectors of $M$, where $D$ can be chosen to be real, i.e. $D \in \mathbb{R}^{k \times k}$.
\label{the:real_jordan}
\end{theorem}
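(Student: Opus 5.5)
The plan is to reduce the statement to the complex Jordan Normal Form theorem stated just above and then show that the similarity transformation can be realized over $\mathbb{R}$. One caveat must be settled first. If $M \in \mathbb{R}^{k \times k}$ has a non-real eigenvalue $\lambda$, then $J_M$ has the non-real entry $\lambda$ on its diagonal. But $D^{-1}MD$ is real whenever $D$ is real and invertible, so no real $D$ can produce $J_M$ in that case. I will therefore prove the statement in two parts:
\begin{itemize}
\item exactly as written, under the additional hypothesis that $\sigma(M) \subset \mathbb{R}$, which is the situation in which it is invoked for $A_{cl}$ in Proposition~\ref{prop:care_con} when the closed-loop spectrum is real;
\item in general, with $J_M$ replaced by the real Jordan matrix, in which each conjugate pair $\lambda = a \pm ib$ contributes $2 \times 2$ blocks $\begin{bmatrix} a & b \\ -b & a \end{bmatrix}$ on the block diagonal and $I_2$ on the block superdiagonal.
\end{itemize}

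\textbf{Real spectrum.} First, every $\lambda \in \sigma(M)$ is real, so the matrices $N_\lambda \coloneqq M - \lambda I_k$ and all their powers are real. Their kernels $\ker N_\lambda^{p}$ are therefore real subspaces of $\mathbb{R}^k$: Gaussian elimination on a real matrix returns a real basis of the kernel.

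Second, the characteristic polynomial splits over $\mathbb{R}$, so by Corollary~\ref{cor:k_alg_sum} and the primary decomposition the generalized eigenspaces $\ker N_\lambda^{\mu_M(\lambda)}$ give a direct-sum decomposition $\mathbb{R}^k = \bigoplus_{\lambda} \ker N_\lambda^{\mu_M(\lambda)}$. Each summand is $M$-invariant, because $M$ commutes with $N_\lambda$.

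Third, on each summand $N_\lambda$ is nilpotent. I will build Jordan chains $\mathbf{v}, N_\lambda \mathbf{v}, \dots, N_\lambda^{p-1}\mathbf{v}$ by the standard descending-kernel argument. At level $p$, choose vectors in $\ker N_\lambda^{p}$ that complete a basis of $\ker N_\lambda^{p-1} + N_\lambda\left(\ker N_\lambda^{p+1}\right)$ to a basis of $\ker N_\lambda^{p}$, then apply $N_\lambda$ repeatedly to generate each chain. This argument uses only linear algebra over the base field, so every chain vector can be taken real.

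Finally, let $D$ be the matrix whose columns are these chain vectors, each chain listed in increasing order. Then $D$ is real and invertible, and $MD = DJ_M$ holds column by column, which gives $D^{-1}MD = J_M$.

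\textbf{Complex pairs.} Consider a conjugate pair $\lambda, \lambda^\dagger$ with $\lambda = a + ib$, $b \neq 0$. Since $M$ is real, complex conjugation maps $\ker(M-\lambda I)^p$ bijectively onto $\ker(M-\lambda^\dagger I)^p$. So I will take the complex Jordan chains $\mathbf{w}_1, \dots, \mathbf{w}_p$ for $\lambda$ and use their conjugates as the chains for $\lambda^\dagger$. Writing $\mathbf{w}_j = \mathbf{x}_j + i\mathbf{y}_j$ with $\mathbf{x}_j, \mathbf{y}_j$ real, the relation $M\mathbf{w}_j = \lambda \mathbf{w}_j + \mathbf{w}_{j-1}$ splits into real and imaginary parts. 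With the ordered pair $(\mathbf{x}_j, \mathbf{y}_j)$ as basis vectors, these parts give exactly the $2\times 2$ block $\begin{bmatrix} a & b \\ -b & a \end{bmatrix}$ together with the $I_2$ coupling to the previous pair. The real vectors $\mathbf{x}_j, \mathbf{y}_j$ span the same complex space as $\mathbf{w}_j, \mathbf{w}_j^\dagger$, because the change of basis $\begin{bmatrix} 1 & 1 \\ i & -i\end{bmatrix}$ is invertible. Hence the assembled real $D$ is invertible.

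\textbf{Main obstacle.} The step I expect to need the most care is the chain construction in the real-spectrum case. One must check that the chosen top vectors together with their $N_\lambda$-images are linearly independent across all chains, so that $D$ really is invertible. I will prove this by induction on the nilpotency index, using injectivity of $N_\lambda$ from $\ker N_\lambda^{p}/\ker N_\lambda^{p-1}$ into $\ker N_\lambda^{p-1}/\ker N_\lambda^{p-2}$.
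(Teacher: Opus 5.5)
The paper gives no proof of this statement. It appears in the appendix among the standard facts the author says he ``will not prove'', so there is no argument of the paper's to compare yours with. Your proof is the standard one and it is correct.

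Your diagnosis is right: for real invertible $D$ the matrix $D^{-1}MD$ is real, so the statement as written fails whenever $\sigma(M)\not\subset\mathbb{R}$. Your two-part repair is the correct one: a genuine Jordan form with real $D$ when the spectrum is real, and the real Jordan form with blocks $\begin{bmatrix} a & b\\ -b & a\end{bmatrix}$ and $I_2$ couplings in general. Two steps check out in detail. The descending-kernel chain construction over $\mathbb{R}$ gets its independence from the injectivity of $N_\lambda$ from $\ker N_\lambda^{p}/\ker N_\lambda^{p-1}$ into $\ker N_\lambda^{p-1}/\ker N_\lambda^{p-2}$. The real/imaginary splitting of conjugate chains works through the invertible change of basis $\begin{bmatrix}1&1\\ i&-i\end{bmatrix}$.

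Four points need tightening.

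\textbf{First}, even with $\sigma(M)\subset\mathbb{R}$ you cannot prove the statement ``exactly as written''. The paper defines $J_{M,\lambda}$ as a single $\mu_M(\lambda)\times\mu_M(\lambda)$ block with ones along the whole superdiagonal. For $M=I_2$ this would make $I_2$ similar to $\begin{bmatrix}1&1\\0&1\end{bmatrix}$, which is impossible since $D^{-1}I_2D=I_2$ for every invertible $D$. Your chains produce the standard Jordan matrix instead, with one block per chain, so $\dim\ker N_\lambda$ blocks for $\lambda$. Say explicitly that this is the $J_M$ you prove the result for. The same defect affects the complex Jordan theorem you quote for the conjugate pairs. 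Obtain those complex chains by running your own descending-kernel construction over $\mathbb{C}$ rather than by citing that theorem.

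\textbf{Second}, in the general version the clause ``$D$ is composed of generalised eigenvectors'' must also be dropped. Take a nonzero real vector annihilated by $(M-\lambda I)^p$ with $\lambda\notin\mathbb{R}$. By conjugation it is also annihilated by $(M-\lambda^\dagger I)^p$, and the two generalized eigenspaces meet only in $\mathbf{0}_k$. So your columns $\mathbf{x}_j,\mathbf{y}_j$ are real and imaginary parts of generalized eigenvectors, not generalized eigenvectors themselves.

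\textbf{Third}, when $M$ has both real and non-real eigenvalues, assemble the two cases through $\mathbb{C}^k=\bigoplus_\lambda\ker(M-\lambda I)^{\mu_M(\lambda)}$. The real chains span each real-$\lambda$ summand over $\mathbb{C}$. The $\mathbf{x}_j,\mathbf{y}_j$ span each $\ker(M-\lambda I)^{\mu_M(\lambda)}\oplus\ker(M-\lambda^\dagger I)^{\mu_M(\lambda)}$. So the $k$ real columns are independent over $\mathbb{C}$, hence over $\mathbb{R}$, and $D$ is invertible.

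\textbf{Fourth}, your justification of the restricted version misreads how the result is used. Proposition~\ref{prop:care_con} invokes it for arbitrary $A_{cl}$, with no assumption that the closed-loop spectrum is real. That proof only needs some real invertible $X_1$. Your general real-Jordan version supplies one, and so, trivially, does $X_1=I_n$ (with $X_2=X$ and $\Lambda=A_{cl}$). The version ``exactly as written'' would not cover an $A_{cl}$ with non-real poles.
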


\subsection{Definite Matrices}
\begin{definition}
Let $M \in \mathbb{C}^{k \times k}$ be a Hermitian matrix. Then: 
\begin{itemize}
    \item $M$ is \textbf{positive definite} if:
    \begin{equation}
        \forall \mathbf{w} \in \mathbb{C}^k \setminus \{\mathbf{0}_k\} \ \colon \ \mathbf{w}^\dagger M \mathbf{w} > 0,
    \end{equation}
    and we denote $M > 0$.
    \item $M$ is \textbf{positive semi-definite} if:
    \begin{equation}
        \forall \mathbf{w} \in \mathbb{C}^k \setminus \{\mathbf{0}_k\} \ \colon \ \mathbf{w}^\dagger M \mathbf{w}\geq 0,
    \end{equation}
    and we denote $M \geq 0$.
\end{itemize}
The same two definitions hold when $M$ is real and symmetric, but replacing the conjugate transpose with a regular transpose.
\end{definition}
\begin{theorem}
Let $M \in \mathbb{C}^{k \times k}$ be a Hermitian matrix. Then:
\begin{enumerate}
    \item $M$ is positive definite if and only if all of its eigenvalues are all real and positive;
    \item $M$ is positive semi- definite if and only if all of its eigenvalues are all real and non-negative.
\end{enumerate}
\end{theorem}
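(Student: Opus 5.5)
The statement to prove is: for a Hermitian $M \in \mathbb{C}^{k \times k}$, $M>0$ if and only if all eigenvalues of $M$ are real and positive, and $M \geq 0$ if and only if all eigenvalues are real and non-negative. The plan is to reduce everything to a diagonal quadratic form through a unitary diagonalization, after which both equivalences become coordinatewise statements.

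The first step shows that the eigenvalues of a Hermitian matrix are real. Let $M\mathbf{v}=\lambda\mathbf{v}$ with $\mathbf{v}\neq\mathbf{0}_k$. Then $\mathbf{v}^\dagger M\mathbf{v}=\lambda\,\mathbf{v}^\dagger\mathbf{v}$. Taking the conjugate transpose of the scalar $\mathbf{v}^\dagger M\mathbf{v}$ and using $M=M^\dagger$ shows that this scalar equals its own conjugate. Since $\mathbf{v}^\dagger\mathbf{v}>0$, we get $\lambda=\lambda^\dagger$, so $\lambda\in\mathbb{R}$.

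The second step is a unitary diagonalization. The paper's Spectral Theorem is stated only for the real symmetric case, so I would prove the Hermitian analogue directly by induction on $k$. By Corollary~\ref{cor:k_eig} there is an eigenpair $(\lambda_1,\mathbf{v}_1)$; normalize $\mathbf{v}_1$ to unit length. The orthogonal complement $\mathbf{v}_1^\perp$ is $M$-invariant: if $\mathbf{v}_1^\dagger\mathbf{w}=0$, then $\mathbf{v}_1^\dagger M\mathbf{w}=(M\mathbf{v}_1)^\dagger\mathbf{w}=\lambda_1\mathbf{v}_1^\dagger\mathbf{w}=0$, using that $\lambda_1$ is real. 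Restricting $M$ to this complement gives a Hermitian operator of dimension $k-1$, so induction yields a unitary $U$ with $M=U\Lambda U^\dagger$, where $\Lambda=\diag((\lambda)_{\lambda\in\sigma(M)})$ is real. This step is the only one with real content, and I expect it to be the main obstacle because the paper supplies it only for real matrices. For real symmetric $M$ one can instead cite the Spectral Theorem directly.

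The third step is the change of variables. For $\mathbf{w}\neq\mathbf{0}_k$, set $\mathbf{y}=U^\dagger\mathbf{w}$; then $\mathbf{y}\neq\mathbf{0}_k$ because $U$ is invertible, and
\[
\mathbf{w}^\dagger M\mathbf{w}=\mathbf{y}^\dagger\Lambda\mathbf{y}=\sum_{i=1}^k\lambda_i|y_i|^2 .
\]
For the direction from the quadratic form to the eigenvalues, plug in an eigenvector: $\mathbf{v}^\dagger M\mathbf{v}=\lambda\|\mathbf{v}\|^2$. This value is $>0$ (resp. $\geq 0$) by assumption, so $\lambda>0$ (resp. $\lambda\geq 0$). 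For the converse, if every $\lambda_i>0$, the sum is positive because some $y_i\neq 0$. If every $\lambda_i\geq 0$, every term is non-negative, so the sum is non-negative. The same argument applies to real symmetric $M$ with $^\dagger$ replaced by $^T$.
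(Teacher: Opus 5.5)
The paper gives no proof of this statement. It sits in the linear-algebra appendix among the results the author explicitly declares to be known theorems that ``we will not prove,'' so there is no argument in the paper to compare routes with.

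Your proof is correct and complete, and it is the standard argument. You get real eigenvalues from $\mathbf{v}^\dagger M\mathbf{v}\in\mathbb{R}$, then a unitary diagonalization $M=U\Lambda U^\dagger$, then the identity $\mathbf{w}^\dagger M\mathbf{w}=\sum_i\lambda_i|y_i|^2$ with $\mathbf{y}=U^\dagger\mathbf{w}\neq\mathbf{0}_k$.

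You are right that the only substantive ingredient is the complex spectral theorem, which the paper states only for real symmetric matrices. Your induction supplies it: deflate by a unit eigenvector, note that $\mathbf{v}_1^\perp$ is $M$-invariant, and observe that the compression of $M$ to that complement, written in an orthonormal basis of it, is again a $(k-1)\times(k-1)$ Hermitian matrix.

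Two small remarks:
\begin{enumerate}
    \item The direction from definiteness to the sign of the eigenvalues needs no diagonalization at all. For any eigenpair, $\lambda=\mathbf{v}^\dagger M\mathbf{v}/\|\mathbf{v}\|^2$ is already real and positive (resp.\ non-negative).
    \item In your closing remark about real symmetric $M$, definiteness is tested only on real vectors, so the eigenvector you plug in must be real. Take it as a column of the real orthogonal matrix provided by the paper's Spectral Theorem.
\end{enumerate}
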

\subsection{Hamiltonian Matrices}
\label{subsect:Hamiltonian_Matrices}
\begin{definition}
Given a real symmetric matrix $M \in \mathbb{C}^{k \times k}$, an \textbf{invariant subspace} of $M$ is a subspace $\mathcal{W} \subseteq \mathbb{C}^{k}$ that is preserved by $M$, as in it satisfies:

\begin{equation}
    \forall \mathbf{w} \in \mathbb{C}^k \ ; \ \mathbf{w} \in \mathcal{W} \longrightarrow M\mathbf{w} \in \mathcal{W}.
    \label{eqn:inv_sub_cond}
\end{equation}

We then refer to $\mathcal{W}$ as \textbf{$M$-invariant}.
\end{definition}

\begin{definition}
For all $k \in \mathbb{N}$, let us define the \textbf{skew-symmetric matrix of order $k$} as a matrix $W_k \in \mathbb{R}^{2k \times 2k}$ with the following form:

\begin{equation}
    W_k \coloneqq \begin{bmatrix}
    0_{k \times k} & -I_{k}\\
    I_{k} &0_{k \times k}
    \end{bmatrix}.
\end{equation}
\end{definition}

\begin{definition}
For all $k \in \mathbb{N}$, let us consider $W_k$, the skew-symmetric matrix of order $k$, to define that a matrix $H \in \mathbb{R}^{2k \times 2k}$ is \textbf{Hamiltonian} if we have:

\begin{equation}
    W_k H^T W_k = H_k.
    \label{eqn:hamiltonian_matrix_condition}
\end{equation}
\end{definition}

\begin{proposition}
Let $k \in \mathbb{N}$. Then $W_k$, the skew-symmetric matrix of order $k$, is orthogonal.
\end{proposition}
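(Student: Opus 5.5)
We need to show $W_k W_k^T = I_{2k}$. The plan is a direct block computation, using the block form of $W_k$ from its definition.

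First, compute the transpose blockwise. Transposing a $2\times 2$ block matrix swaps the off-diagonal blocks and transposes every block. Since $I_k^T = I_k$ and $0_{k\times k}^T = 0_{k\times k}$, this gives
\begin{equation*}
    W_k^T = \begin{bmatrix}
    0_{k \times k} & I_{k}\\
    -I_{k} & 0_{k \times k}
    \end{bmatrix} = -W_k.
\end{equation*}
This also records that $W_k$ is skew-symmetric, consistent with its name.

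Next, multiply blockwise:
\begin{equation*}
    W_k W_k^T = \begin{bmatrix}
    0_{k \times k} & -I_{k}\\
    I_{k} & 0_{k \times k}
    \end{bmatrix}\begin{bmatrix}
    0_{k \times k} & I_{k}\\
    -I_{k} & 0_{k \times k}
    \end{bmatrix}.
\end{equation*}
The diagonal blocks are
\begin{equation*}
    0\cdot 0 + (-I_k)(-I_k) = I_k \quad \text{and} \quad I_k I_k + 0\cdot 0 = I_k.
\end{equation*}
The off-diagonal blocks are
\begin{equation*}
    0\cdot I_k + (-I_k)\cdot 0 = 0 \quad \text{and} \quad I_k\cdot 0 + 0\cdot I_k = 0.
\end{equation*}
Hence $W_k W_k^T = I_{2k}$, which is exactly the orthogonality condition in the paper's definition. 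Equivalently, $W_k^2 = -I_{2k}$ and $W_k^{-1} = W_k^T = -W_k$.

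There is no real obstacle here. The only care needed is the sign bookkeeping in the blockwise transpose and product, and noting that the paper's definition of orthogonality requires only $MM^T = I$. That one identity suffices, since for square matrices it implies $M^TM = I$ as well.
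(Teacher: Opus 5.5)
Your proof is correct and matches the paper's argument: a direct blockwise computation of $W_k^T$ followed by $W_k W_k^T = I_{2k}$. One small slip: the $(2,1)$ block should read $I_k\cdot 0 + 0\cdot(-I_k)$ rather than $I_k\cdot 0 + 0\cdot I_k$, which is harmless since it vanishes either way.
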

\begin{proof}
\begin{equation}
    W_k W_k^T =  \begin{bmatrix}
    O_{k \times k} & -I_{k}\\
    I_{k} &O_{k \times k}
    \end{bmatrix} \begin{bmatrix}
    O_{k \times k} & I_{k}\\
    -I_{k} &O_{k \times k}
    \end{bmatrix} = \begin{bmatrix}
    I_{k} & 0_{k \times k}\\
    0_{k \times k} &I_{k}
    \end{bmatrix}  = I_{2k \times 2k}.
\end{equation}
\end{proof}
\begin{proposition}
Let $H \in \mathbb{R}^{2k \times 2k}$ be Hamiltonian. Then its spectrum $\sigma (H)$ is symmetric about the imaginary axis.
\label{prop:H_symmetry}
\end{proposition}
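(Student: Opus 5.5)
The plan is to show that $H$ is similar to $-H^T$. Since a matrix and its transpose share a characteristic polynomial, this forces the spectrum to be symmetric about the imaginary axis.

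First I would record two properties of $W_k$, the skew-symmetric matrix of order $k$. By the preceding proposition it is orthogonal, so $W_k^{-1}=W_k^T$. A direct block computation gives $W_k^T=-W_k$, hence $W_k^{-1}=-W_k$ and $W_k^2=-I_{2k}$.

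Next I would rewrite the Hamiltonian condition~\eqref{eqn:hamiltonian_matrix_condition}, reading its right-hand side as $H$. From $W_k H^T W_k = H$ and $W_k=-W_k^{-1}$ we get
\begin{equation*}
H = W_k H^T W_k = -W_k H^T W_k^{-1} = W_k\left(-H^T\right)W_k^{-1}.
\end{equation*}
So $H$ is similar to $-H^T$, with the invertible similarity matrix $W_k$.

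Similar matrices have the same characteristic polynomial. Therefore
\begin{equation*}
p_H(\lambda)=p_{-H^T}(\lambda)=\det(\lambda I+H^T)=\det(\lambda I+H)=(-1)^{2k}\det(-\lambda I - H)=p_H(-\lambda).
\end{equation*}
It follows that $\lambda\in\sigma(H)$ if and only if $-\lambda\in\sigma(H)$, with equal algebraic multiplicities, since the polynomials coincide.

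Finally, $H$ is real, so by the earlier theorem on conjugate eigenvalues, $\lambda\in\sigma(H)$ if and only if $\lambda^\dagger\in\sigma(H)$. Combining the two symmetries, $\lambda\in\sigma(H)$ if and only if $-\lambda^\dagger\in\sigma(H)$. The map $\lambda\mapsto-\lambda^\dagger$ is exactly reflection across the imaginary axis, which gives the claim, with multiplicities preserved.

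The only delicate point is the sign bookkeeping with $W_k$: whether the defining relation produces $H$ or $-H$, and whether one uses $W_k^{-1}=W_k^T$ or $W_k^{-1}=-W_k$. I would verify it against the block computation in the CARE Hamiltonian proposition. Either sign still yields a similarity between $H$ and $\pm H^T$ up to the sign needed, and the argument goes through.
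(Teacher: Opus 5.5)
Your argument is correct and is essentially the paper's proof. Both use $W_k^{-1}=W_k^T=-W_k$ to turn the defining relation into the similarity $H\sim -H^T$; you substitute for the right-hand $W_k$ directly, while the paper substitutes for the left one and then transposes. You also spell out, via $p_H(\lambda)=p_H(-\lambda)$ and the conjugate symmetry of real matrices, the step from ``similar to $-H^T$'' to $\lambda\mapsto-\lambda^\dagger$ that the paper asserts in one line.

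One correction to your closing hedge: the net sign does matter. A relation that only gives $H\sim H^T$ holds for every matrix and yields nothing. With the paper's definition $W_kH^TW_k=H$, your computation has the right sign. The CARE Hamiltonian does satisfy this definition, even though the paper's displayed product has a sign typo in its upper-right block.
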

\begin{proof}
Notice $W_k$ is antisymmetric so $W_k = - W_k^T$. We also have from the lemma that $W_k^T = W_k^{-1}$. Thus $W_k = - W_k^{-1}$.  Let us replace $W_k$ from the left of the LHS of \eqref{eqn:hamiltonian_matrix_condition} with $-W_k^{-1}$, multiply both sides by $-1$ and transpose both sides to have:

\begin{equation}
    W_k^{-1} H W_k = -H^T.
\end{equation}

Thus we have that $H$ and $-H^T$ are similar matrices, and thus they have the same spectrum. So $\lambda \in \mathbb{C}$ is an eigenvalue of $H$ if and only if $-\lambda^{\dagger}$ is also, where $\lambda^{\dagger}$ is the complex conjugate of $\lambda$.
\end{proof}

\section{Continuous System Modeling}
\label{app:sys_modeling}
Let us set the proper definitions to describe how we define systems and the means to model them in this work.
\subsection{Basic Continuous System Definitions}
\begin{definition}
A \textbf{system} $\mathcal{S}$ is a set of objects in a geometrical space. A system can be any physical set of objects in a real-life scenario, with the condition that they behave in a deterministic manner.
\end{definition}
\begin{definition}
A \textbf{dynamical system} is a system $\mathcal{S}$ that is governed by a function that describes its time dependence in the geometrical space it is defined at.~\cite{Dynamical_Systems}
\end{definition}
\begin{definition}
Given a dynamical system $\mathcal{S}$, $t_0 \in \mathbb{R}^+ \cup \{ 0 \}$ is the \textbf{initial time} the evolution of $\mathcal{S}$ starts at. In many cases we set $t_0 \equiv 0$.
\end{definition}

\begin{definition}
Given a dynamical system $\mathcal{S}$, $T_f \in \mathbb{R}^+ \cup \{ \infty \}$ is the \textbf{horizon of evolution}, which is the terminal time $\mathcal{S}$ operates at. It can either be finite or infinite.
\end{definition}
\begin{definition}
Let $\mathcal{S}$ be a dynamical system with initial time $t_0$ and horizon $T_f$. For any $t \in [t_0, T_f)$, we define the \textbf{$t$-suffix time interval} $\mathcal{I}(t) \colon [t_0, T_f) \rightarrow \mathcal{P}(\mathbb{R})$\footnote{Given a set $S$, $\mathcal{P}(S)$ is the power set of $S$, which is the set of all subsets of $S$.} as:
      \begin{equation}
     \mathcal{I}(t) \coloneqq \begin{cases}
    [t, T_f] & ;T_f < \infty \\
    [t, T_f) & ;\textrm{else}.
\end{cases}
\end{equation}
\end{definition}
\begin{definition}
Given a dynamical system $\mathcal{S}$ with initial time $t_0$ and horizon $T_f$, the system's \textbf{time interval} $\mathcal{I}$ is defined as the $t_0$-suffix time interval, i.e.:
      \begin{equation}
     \mathcal{I} \coloneqq \mathcal{I}(t_0) = \begin{cases}
    [t_0, T_f] & ;T_f < \infty \\
    [t_0, T_f) & ;\textrm{else}.
\end{cases}
\label{eqn:I_interval}
\end{equation}
\end{definition}
\begin{definition}
  One way to model a dynamical physical system is by means of the \textbf{state-space representation}, a mathematical model of the given physical system as a set of inputs, outputs and state variables related by first-order differential equations for continuous time or difference equations for discrete time~\cite{A_First_Course_in_Differential_Equations}. Given a dynamical system $\mathcal{S}$ operating on the time interval $\mathcal{I}$, $\mathbf{x} \colon \mathcal{I} \rightarrow \mathcal{X}$ is the \textbf{state vector} of $\mathcal{S}$, where the vector subspace $\mathcal{X} \subseteq \mathbb{R}^n$ is defined for the length of $\mathbf{x}(t)$, namely $n \in \mathbb{N}$ and is composed of all possible state vectors $\mathcal{S}$ can assume. $\mathcal{X}$ is refereed to as the \textbf{reachable states set} of $\mathcal{S}$. As mentioned, it is a subspace, so the origin is among them ($\mathbf{0}_n \in \mathcal{X}$). $\mathbf{x}(t)$ is composed of all the time-dependant state variables $\big(x_i(t)\big)_{i=1}^n$ necessary to define the system at that point in time.
 \end{definition}
 \begin{definition}
 Given a dynamical system $\mathcal{S}$ operating on the time interval $\mathcal{I}$, $\mathbf{u} \colon \mathcal{I} \rightarrow \mathcal{U} $ is the \textbf{input vector} where the vector subspace $\mathcal{U} \subseteq \mathbb{R}^{m}$ is defined for the length of $\mathbf{u}(t)$, namely $m \in \mathbb{N}$ and is composed of all possible input vectors $\mathcal{S}$ can assume - which is equivalent to the condition that they are \textbf{admissible}, a definition dependant upon a given \textbf{cost function} $J(t)$, which we will describe in the optimal control chapter. $\mathcal{U}$ is refereed to as the \textbf{admissible states set} of $\mathcal{S}$, with regards to $J(t)$. $\mathbf{u}(t)$ is composed of all the time-dependant individual input components $\big(u_i(t)\big)_{i=1}^m$.
 \end{definition}
 \begin{definition}
 Given a dynamical system $\mathcal{S}$ operating on the time interval $\mathcal{I}$, with state and input vectors $\mathbf{x}(t) \in \mathcal{X}$ and $\mathbf{u}(t) \in \mathcal{U}$,  the \textbf{dynamic function} ${\displaystyle f \colon \mathcal{X} \times \mathcal{U} \rightarrow \mathbb{R}^n}$ of $\mathcal{S}$ is a continuous function for all $t \in \mathcal{I}$ and $\big(\mathbf{x}(t), \mathbf{u}(t) \big)  \in \mathcal{X} \times \mathcal{U}$.
 \end{definition}
 \begin{definition}
 Given a dynamical system $\mathcal{S}$ operating on the time interval $\mathcal{I}$ and with a dynamic function $f$, the \textbf{state-space model} of $\mathcal{S}$ is defined by means of the following first order differential equation~\cite{intelligent_control}:
 
 \begin{equation}
     \dot{\mathbf{x}}(t) = f\big(\mathbf{x}(t), \mathbf{u}(t)\big),
     \label{eqn:dynamical_system}
 \end{equation}
  \end{definition}
  \begin{remark}
  The dot notation in~\eqref{eqn:dynamical_system} indicates the first order time derivative:
 \begin{equation}
     \dot{\mathbf{x}}(t) \coloneqq \frac{\mathrm{d}}{\mathrm{d}t}\mathbf{x}(t).
 \end{equation}
  \end{remark}

 \begin{remark}
 Given an LTI system $\mathcal{S}$ and its state vector $\mathbf{x}(t)$, Equation~\eqref{eqn:dynamical_system} constitutes a timed differential equation with respect to the temporal function $\mathbf{x}(t)$. Thus, in that context, let us assume $\mathbf{x}(t)$ is differentiable for all $t \in \mathcal{I}$.
 \end{remark}
 \begin{definition}
   Given a system $\mathcal{S}$ with initial time $t_0$ and state vector $\mathbf{x}(t)$, the \textbf{initial condition} of $\mathcal{S}$ is defined as:
   \begin{equation}
       \mathbf{x_0} \coloneqq \mathbf{x}(t_0)
   \end{equation}
 \end{definition}
 \begin{definition}
 Given a dynamical system $\mathcal{S}$, its associated \textbf{Cauchy Initial Value Problem} is defined as its initial condition $\mathbf{x_0}$ and state space model as described in Equation~\eqref{eqn:dynamical_system}.
 \label{def:civp}
 \end{definition}
 \begin{definition}
 Let $\mathcal{S}$ be a dynamical system with initial condition $\mathbf{x_0}$ and let $r \in \mathbb{R}^+$. The \textbf{$r$-distanced states} of $\mathcal{S}$ are defined as:
 
 \begin{equation}
     \mathcal{X}_r \coloneqq \{ \mathbf{x}(t) \in \mathcal{X} \ | \ \| \mathbf{x}(t) - \mathbf{x_0} \| \leq r \}.
 \end{equation}
 \end{definition}
 \begin{definition}
  Let $\mathcal{S}$ be a dynamical system operating on the time interval $\mathcal{I}$ and with a dynamic function $f \colon \mathcal{X} \times \mathcal{U} \rightarrow \mathbb{R}^n$. We say $f$ is \textbf{locally Lipschitz-continuous} if:
  \begin{enumerate}
      \item There exists $r \in \mathbb{R}^+$ such that $\mathcal{X}_r$, the $r$-distanced states of $\mathcal{S}$, satisfy $\mathcal{X} \subseteq \mathcal{X}_r$.
      \item There exists $L \in \mathbb{R}^+$ such that:
      \begin{equation}
      \begin{aligned}
          \forall t \in \mathcal{I} \ ; \ & \forall \mathbf{u}(t) \in \mathcal{U} \ ; \ \forall \mathbf{x_1}(t), \mathbf{x_2}(t) \in \mathcal{X}_r \ ; \\
          & \| f\left(\mathbf{x_1}(t), \mathbf{u}(t)\right) - f\left(\mathbf{x_2}(t), \mathbf{u}(t)\right)\| \leq L \| \mathbf{x_1}(t) - \mathbf{x_2}(t) \|
      \end{aligned}
      \end{equation}
  \end{enumerate}
 \end{definition}
 \begin{theorem}[\textbf{Local Picard–Lindelöf Theorem for Dynamical Systems}~\cite{Methods_PDE}]
 \label{the:picard_lind}
Let $\mathcal{S}$ be a dynamical system with initial time $t_0$ and a dynamic function $f$. If $f$ is locally Lipschitz-continuous, then there exists $\varepsilon > 0$ such that the Cauchy Initial Value Problem associated with $\mathcal{S}$ has a unique solution in the time interval $[t_0 - \varepsilon, t_0 + \varepsilon]$.
\end{theorem}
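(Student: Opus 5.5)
We would prove the Local Picard–Lindelöf theorem (Theorem~\ref{the:picard_lind}) by recasting the Cauchy Initial Value Problem of Definition~\ref{def:civp} as a fixed-point problem and applying the Banach contraction principle. Fix the input $\mathbf{u}(\cdot)$, which is continuous on $\mathcal{I}$ by admissibility. Define $g(t,\mathbf{x}) \coloneqq f\left(\mathbf{x}, \mathbf{u}(t)\right)$. Then $g$ is continuous in $(t,\mathbf{x})$, because $f$ is continuous as a dynamic function and $\mathbf{u}$ is continuous. By the local Lipschitz hypothesis, $g$ is also $L$-Lipschitz in $\mathbf{x}$ on $\mathcal{X}_r$, uniformly in $t$. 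A differentiable $\mathbf{x}(t)$ solves $\dot{\mathbf{x}}(t) = g(t,\mathbf{x}(t))$ with $\mathbf{x}(t_0)=\mathbf{x_0}$ if and only if it is a continuous solution of the integral equation
\begin{equation*}
    \mathbf{x}(t) = \mathbf{x_0} + \int_{t_0}^{t} g\left(\tau, \mathbf{x}(\tau)\right)\mathrm{d}\tau .
\end{equation*}
This equivalence follows from the fundamental theorem of calculus in both directions.

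Next we set up the complete metric space. For a compact time window $[t_0-\varepsilon_0, t_0+\varepsilon_0]$ (intersected with the domain where $\mathbf{u}$ is defined), continuity of $g$ on the compact set $[t_0-\varepsilon_0,t_0+\varepsilon_0]\times\mathcal{X}_r$ gives a bound $\|g\|\le K$. We then choose
\begin{equation*}
    \varepsilon \coloneqq \min\left\{\varepsilon_0, \frac{r}{K}, \frac{1}{2L}\right\}.
\end{equation*}
Let $\mathcal{C}$ be the set of continuous $\mathbf{y}\colon[t_0-\varepsilon,t_0+\varepsilon]\to\mathcal{X}_r$ with the sup norm. As a closed subset of a Banach space, $\mathcal{C}$ is complete. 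Define the Picard operator $(\Phi\mathbf{y})(t) \coloneqq \mathbf{x_0}+\int_{t_0}^t g(\tau,\mathbf{y}(\tau))\mathrm{d}\tau$.

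The key steps, in order, are as follows.
\begin{enumerate}
    \item Self-mapping: $\|(\Phi\mathbf{y})(t)-\mathbf{x_0}\|\le K|t-t_0|\le K\varepsilon\le r$, so $\Phi(\mathcal{C})\subseteq\mathcal{C}$.
    \item Contraction: $\|\Phi\mathbf{y}_1-\Phi\mathbf{y}_2\|_\infty\le L\varepsilon\|\mathbf{y}_1-\mathbf{y}_2\|_\infty\le\tfrac12\|\mathbf{y}_1-\mathbf{y}_2\|_\infty$.
    \item Banach's fixed-point theorem then gives a unique fixed point in $\mathcal{C}$, which is the desired solution.
\end{enumerate}

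For uniqueness among all solutions, not merely those staying in $\mathcal{X}_r$, we would add a short argument. Any solution starts at $\mathbf{x_0}$ and has derivative bounded by $K$ while it remains in $\mathcal{X}_r$, so it cannot leave $\mathcal{X}_r$ before time $r/K\ge\varepsilon$. It therefore lies in $\mathcal{C}$ and coincides with the fixed point. Alternatively, a Grönwall estimate on $\|\mathbf{x}_1-\mathbf{x}_2\|$ gives uniqueness directly.

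We expect the main obstacle to be the setup rather than the estimates. The paper's definition of local Lipschitz continuity is stated somewhat loosely (for instance, the condition $\mathcal{X}\subseteq\mathcal{X}_r$), and $f$ depends on $t$ only through $\mathbf{u}(t)$. We must therefore be careful that the relevant set is compact, so that the bound $K$ exists and the time window stays inside the domain of $\mathbf{u}$. For the backward half $[t_0-\varepsilon,t_0]$ we would extend $\mathbf{u}$ continuously if needed, or simply restrict to $[t_0,t_0+\varepsilon]$.
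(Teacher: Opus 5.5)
The paper gives no proof of this statement. It is quoted from the cited reference and afterwards used only as a black box, chiefly to assert that the Riccati terminal-value problems in Chapter~\ref{chap:riccati} are well posed. There is therefore nothing to compare against: your Picard-operator/Banach fixed-point argument is the standard proof such a reference would contain, and it is correct.

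With $\varepsilon=\min\{\varepsilon_0,r/K,1/(2L)\}$ the self-mapping and contraction estimates hold. Your exit-time argument also correctly upgrades uniqueness in $\mathcal{C}$ to uniqueness among all solutions: a solution with speed at most $K$ inside $\mathcal{X}_r$ cannot reach distance $r$ from $\mathbf{x_0}$ before time $r/K\ge\varepsilon$. Your alternative Gr\"onwall route is not really ``direct''. It needs this exit-time step first, since the Lipschitz bound is only available on $\mathcal{X}_r$.

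What your route adds over the bare citation is an explicit existence time depending only on $r$, $K$ and $L$. That is exactly what lets local solutions be continued once an a priori bound is available, the mechanism behind Theorem~\ref{the:global_CDRE_existence}.

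Two points should be made explicit.

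First, Step~1 infers $\Phi(\mathcal{C})\subseteq\mathcal{C}$ from $\|(\Phi\mathbf{y})(t)-\mathbf{x_0}\|\le r$ alone, but membership in $\mathcal{X}_r$ also requires $(\Phi\mathbf{y})(t)\in\mathcal{X}$.
\begin{itemize}
\item This is automatic only if you read the paper's garbled condition $\mathcal{X}\subseteq\mathcal{X}_r$ as \emph{the closed ball of radius $r$ about $\mathbf{x_0}$ in $\mathbb{R}^n$ lies in $\mathcal{X}$}, so that $\mathcal{X}_r$ is that ball. That reading also supplies the compactness you need for $K$.
\item If $\mathcal{X}$ is a proper linear subspace, as the paper's definitions permit, you must add the hypothesis $f(\mathcal{X}\times\mathcal{U})\subseteq\mathcal{X}$, so that the integral stays in the closed subspace.
\item Without that hypothesis the conclusion itself fails. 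For $\mathcal{X}=\{0\}\times\mathbb{R}\subset\mathbb{R}^2$ and $f\equiv(1,0)^T$, no solution stays in $\mathcal{X}$ on any interval of positive length.
\end{itemize}

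Second, $\mathbf{u}$ is only defined on $\mathcal{I}$, so extending it to the left of $t_0$ makes the backward half of the solution, and its uniqueness, depend on the chosen extension. The intrinsic content is the one-sided statement on $[t_0,t_0+\varepsilon]$. The two-sided version is meaningful when $f$ does not depend on $\mathbf{u}$, as for the autonomous Riccati equations to which the paper actually applies the theorem.
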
 
\begin{remark}
In this work we assume we are dealing with dynamical systems in which it is reasonable to assume Theorem~\ref{the:picard_lind} and then be guaranteed an existence and uniqueness of a solution to the problem.
\end{remark}

\begin{definition}
Given a dynamical system $\mathcal{S}$ operating on the time interval $\mathcal{I}$, $\mathbf{y} \colon \mathcal{I} \rightarrow \mathcal{Y}$ is the \textbf{output vector} where the vector subspace $\mathcal{Y} \subseteq \mathbb{R}^p$ is defined for some length $p \in \mathbb{N}$, is the set of all possible output vectors. $\mathbf{y}(t)$ is composed of all the time-dependant output variables $\big(y_i(t)\big)_{i=1}^p$.
\end{definition}
\begin{definition}
Given a dynamical system $\mathcal{S}$ operating on the time interval $\mathcal{I}$ with an output vector $\mathbf{y}(t)$, the \textbf{output of the system} is defined by the model:

\begin{equation}
    \mathbf{y}(t) = g\big( \mathbf{x}(t), \mathbf{u}(t) \big),
\end{equation}

where ${\displaystyle g:\mathcal{X} \times \mathcal{U}\rightarrow \mathcal{Y}}$ is a continuous function for all $t \in \mathcal{I}$ and $\big( \mathbf{x}(t), \mathbf{u}(t)\big) \in \mathcal{X} \times \mathcal{U}$.
The output vector of the system correlates to the variables the control engineer actually measures and observes, that are some function of the state vector.
\end{definition}
\subsection{Continuous System Stability}
\begin{definition}
Given a dynamical system $\mathcal{S}$, 
the \textbf{stability} of the system tries to reason about what happens with the system state after a long period of time, i.e. as $t \rightarrow \infty$. The state vector of the system $\mathcal{S}$ can converge to some specific constant value, or maybe converge to some time-periodical function, or otherwise diverge.~\cite{Terrell+2009}

\end{definition}

\begin{definition}
Given a dynamical system $\mathcal{S}$, the simplest kind of stable behavior is exhibited by convergence of the system to a set of \textbf{equilibrium points}, or \textbf{equilibria}, which is a vector subspace of state vectors  $\overline{\mathcal{X}} \subseteq \mathcal{X}$ satisfying:

\begin{equation}
    \overline{\mathcal{X}} \coloneqq \{ \overline{\mathbf{x}}(t) \in \mathcal{X}\ | \ \forall \tau \in \mathcal{I} \colon \ f\big( \overline{\mathbf{x}}(\tau)\big) = \mathbf{0}_n \}.
    \label{eqn:eqilibrium_points}
\end{equation}

As well as the fact that $\overline{\mathcal{X}}$ is a subspace, Equation~\ref{eqn:dynamical_system} and the definition of $\overline{\mathcal{X}}$ dictates that $\mathbf{0}_n \in \overline{\mathcal{X}}$, so $\overline{\mathcal{X}} \neq \emptyset$.
\end{definition}
\begin{remark}

  Given a dynamical system $\mathcal{S}$ and its equilibrium points $\overline{\mathcal{X}}$, let $\overline{\mathbf{x}}(t) \in \overline{\mathcal{X}}$. Since $f(\overline{\mathbf{x}}(t)) = \dot{\overline{\mathbf{x}}}(t) = \mathbf{0}_n$, we can always apply a coordinate transformation of the form $\overline{\mathbf{z}}(t) \coloneqq \mathbf{x}(t) - \overline{\mathbf{x}}(t)$
 such that:
 
 \begin{equation}
     \dot{\overline{\mathbf{z}}}(t) = \dot{\mathbf{x}}(t) - \dot{\overline{\mathbf{x}}}(t)= \dot{\mathbf{x}}(t) = f\big( \mathbf{x}(t) \big) = f\big( \overline{\mathbf{z}}(t) + \overline{\mathbf{x}}(t)\big).
     \label{eqn:change_coordinates}
 \end{equation}
 Defining $g \colon \mathcal{X} \rightarrow \mathbb{R}^n$ where $g\big( \overline{\mathbf{z}}(t) \big) \coloneqq f\big( \overline{\mathbf{z}}(t) + \overline{\mathbf{x}}(t)\big)$ and plugging into~\eqref{eqn:change_coordinates}
 we'll have:
 \begin{equation}
     \dot{\overline{\mathbf{z}}}(t) = g\big( \overline{\mathbf{z}}(t) \big),
 \end{equation}
 which describes a dynamical system, denoted $\overline{\mathcal{S}}$, with its state vector $\overline{\mathbf{z}}(t)$ shifted by $\overline{\mathbf{x}}(t)$ from $\mathbf{x}(t)$. Thus, in studying equilibrium points, it is sufficient to assume the equilibrium point occurs at nullity.
\end{remark}
\begin{definition}
Given a dynamical system $\mathcal{S}$ and its equilibrium points $\overline{\mathcal{X}}$, several definitions for the \textbf{stability of equilibrium points} exist.
Let us consider the nullity equilibrium point: $\overline{\mathbf{x}}(t) \equiv \mathbf{0}_n$. Then:

 \begin{itemize}
    \item $\overline{\mathbf{x}}(t)$ is said to be a \textbf{Lyapunov stable equilibrium point} if, for any $\varepsilon > 0 $, there exists $\delta >0$ such that:
    \begin{equation}
        \| \mathbf{x}(t_0) \| < \delta \Rightarrow \forall t \in \mathcal{I} \colon \ \| \mathbf{x}(t) \| < \varepsilon.
    \end{equation}
    \item $\overline{\mathbf{x}}(t)$ is said to be \textbf{asymptotically stable equilibrium point} if
    it is Lyapunov stable and there exists $\delta >0$ such that:
    \begin{equation}
        \| \mathbf{x}(t_0) \| < \delta \Rightarrow \lim_{t \rightarrow \infty}\| \mathbf{x}(t) \| = 0.
        \label{eqn:null_asymptotic_stability}
    \end{equation}
\end{itemize}

\end{definition}
\begin{definition}
 In this work we will define \textbf{system stability} as well. Given a dynamical system $\mathcal{S}$ and its equilibrium points $\overline{\mathcal{X}}$, we say it is Lyapunov stable if there exists an equilibrium $\overline{\mathbf{x}}(t) \in \overline{\mathcal{X}}$ of $\mathcal{S}$ that is Lyapunov stable, and likewise for asymptotic stability.
\end{definition}
\subsection{Continuous LTI Systems}
\label{app:LTI}
\begin{definition}
Given a dynamical system $\mathcal{S}$ with a dynamic function $f$, we say it is \textbf{linear-time-invariant} (LTI) if $f$ is of the form: $f\big(\mathbf{x}(t), \mathbf{u}(t)\big) \equiv A\mathbf{x}(t) + B\mathbf{u}(t)$, so Equation~\eqref{eqn:dynamical_system} becomes a first-order linear differential equation with time-invariant coefficients of the form:

\begin{equation}
    \dot{\mathbf{x}}(t) =  A\mathbf{x}(t) + B \mathbf{u}(t),
    \label{eqn:basic_sys}
\end{equation}

where:
\begin{itemize}
    \item $A \in \mathbb{R}^{n \times n}$ is a time-invariant matrix representing the \textbf{unforced dynamics} of $\mathcal{S}$;
    
    \item $B \in \mathbb{R}^{n \times m}$ is the time-invariant \textbf{input coefficients} matrix of $\mathcal{S}$.
\end{itemize}
\end{definition}
\begin{definition}

Given an LTI system $\mathcal{S}$ and its output function $g$, the \textbf{output of an LTI system} is defined by assigning a linear value for $g$ of the form:

\begin{equation}
    \mathbf{y}(t) = C \mathbf{x}(t) ,
\end{equation}

where $C \in \mathbb{R}^{n \times p}$ is the \textbf{output coefficients matrix}. 
\end{definition}

\begin{definition}
Given an LTI system $\mathcal{S}$, its corresponding \textbf{open-loop block diagram}~\cite{LTI} is visualized at figure~\ref{fig:S_diagram}:

\begin{figure}[H]
\centering
\begin{tikzpicture}
\node[draw,
    circle,
    minimum size=0.8cm
] (sum) at (0,0){};
 
\draw (sum.north east) -- (sum.south west)
    (sum.north west) -- (sum.south east);
 
\draw (sum.north east) -- (sum.south west)
(sum.north west) -- (sum.south east);
 
\node[left=-1pt] at (sum.center){\tiny $+$};
\node[below] at (sum.center){\tiny $+$};
 
\node [draw,
    minimum width=2cm,
    minimum height=1.2cm,
    right=1.5cm of sum
]  (integrator) {$\int \mathrm{d}t$};
 
\node [draw,
    minimum width=2cm, 
    minimum height=1.2cm,
    left=1cm of sum
] (B) {$B$};
 
\node [draw,
    minimum width=2cm, 
    minimum height=1.2cm, 
    below=1cm of integrator
]  (A) {$A$};

\node [draw,
    minimum width=2cm, 
    minimum height=1.2cm, 
    right=1.5cm of integrator
]  (C) {$C$};
 
\draw[-stealth] (sum.east) -- (integrator.west)
    node[midway,above]{$\dot{\mathbf{x}}(t)$};
 
 \draw[-stealth] (integrator.east) -- (C.west)
    node[midway,above](x){$\mathbf{x}(t)$};
    
\draw[-stealth] (C.east) -- ++ (2,0)
    node[midway,above](y){$\mathbf{y}(t)$};
 
 
\draw[-stealth] (x.south) |- (A.east);
 
\draw[-stealth] (A.west) -| (sum.south) 
    node[near end,left]{};
 
\draw[-stealth] ++(-4.5,0) 
    node[above]{$\mathbf{u}(t)$} -- (B.west);

\draw[-stealth] (B.east) -- (sum.west)
    node[midway,above]{};
 
\end{tikzpicture}
\caption{Open-Loop Block diagram of the system model of $\mathcal{S}$ described in~\eqref{eqn:basic_sys}.} \label{fig:S_diagram} 
\end{figure}
\end{definition}
This situation is referred to as '\textit{open-loop}' as there is no immediate connection between the state vector $\mathbf{x}(t)$ and the input $\mathbf{u}(t)$, i.e. the input of the system does not necessarily take the current system state into account.
\begin{remark}
Given an LTI system $\mathcal{S}$, from now on, when we say we are considering the system, we in effect assume we posses the exact constant values of $A$, $B$, $C$, $t_0, T_f$ (and $\mathcal{I}$ consequently) and $\mathbf{x_0}$, and leave the input vector $\mathbf{u}(t)$ not defined explicitly, as this term captures the notion that \textit{some} input of length $m$ can be fed into $\mathcal{S}$, and is case-specific based on the desired application. 

\end{remark}

\begin{definition}
Given an LTI system $\mathcal{S}$, the \textbf{forced state solution} of $\mathcal{S}$ is the temporal dynamics of the state vector $\mathbf{x}(t)$ while taking the effect of the input into account.
\end{definition}

\begin{remark}

Given an LTI system $\mathcal{S}$, the \textbf{scalar forced state solution} of $\mathcal{S}$ is obtained when considering a reduced version of Equation~\eqref{eqn:basic_sys} with $n=m=1$ for the system $\mathcal{S}$, adhering:

\begin{equation}
    \dot{x}(t)=ax(t)+bu(t).
    \label{eqn:scalar_lti}
\end{equation}

where $x(t), u(t), a, b \in \mathbb{R}$. This can be written as:

\begin{equation}
    \dot{x}(t) - ax(t) = bu(t).
    \label{eqn:scalar_basic_sys_1}
\end{equation}

Multiplying both sides of Equation~\eqref{eqn:scalar_basic_sys_1} by the integrating factor $e^{-at}$ makes the LHS a product derivative such that:

\begin{equation}
    \frac{\mathrm{d}}{\mathrm{d}t} \left( e^{-at} x(t) \right) = e^{-at} b u(t),
    \label{eqn:scalar_basic_sys_2}
\end{equation}

which can then be integrated directly. Taking the definite integral $\int_{t_0}^t \mathrm{d} \tau $ of both sides of Equation~\eqref{eqn:scalar_basic_sys_2} yields:

\begin{equation}
    \int_{t_0}^t\frac{\mathrm{d}}{\mathrm{d}\tau} \left( e^{-a\tau} x(\tau) \right) \mathrm{d}\tau = \int_{t_0}^te^{-a\tau} b u(\tau)\mathrm{d}\tau.
    \label{eqn:scalar_basic_sys_3}
\end{equation}

Since the integrand at the LHS is a derivative as mentioned, we have:

\begin{equation}
      e^{-at} x(t) -  e^{-a t_0} x(t_0)  = \int_{t_0}^te^{-a\tau} b u(\tau)\mathrm{d}\tau.
    \label{eqn:scalar_basic_sys_4}
\end{equation}

Moving sides and multiplying by $e^{at}$ we have:

\begin{equation}
      x(t) = e^{a(t- t_0)
      } x(t_0)  + \int_{t_0}^te^{a(t-\tau)} b u(\tau)\mathrm{d}\tau.
    \label{eqn:scalar_basic_sys_5}
\end{equation}
\end{remark}
\begin{definition}
Given an LTI system $\mathcal{S}$, in the case where $n,m>1$, the derivation of the state solution is done in a similar fashion as for the scalar case, by using $e^{-At}$ as the integrating factor, which is the  matrix exponential of $A$. Then the \textbf{higher order LTI forced state solution} of $\mathcal{S}$ at Equation~\eqref{eqn:basic_sys} is given by~\cite{LTI_book}:

\begin{equation}
\begin{aligned}
    \mathbf {x}(t) = & \underbrace{e^{A(t-t_0)}{\mathbf  {x_0}}}_{\mathbf{x_h}(t)}+\underbrace{\int _{{t_{0}}}^{t}e^{A(t-\tau)}{B}{\mathbf  {u}}(\tau )\mathrm{d}\tau}_{\mathbf{x_p}(t)} \\
    = & \mathbf{x_h}(t) + \mathbf{x_p}(t),
\end{aligned}
    \label{eqn:LTI_solution}
\end{equation}

which is the higher-order generalization of Equation~\eqref{eqn:scalar_basic_sys_5}. 
\end{definition}
\begin{remark}
Given an LTI system $\mathcal{S}$, the final expression for its state solution derived in Equation~\eqref{eqn:scalar_basic_sys_5} is comprised of two terms - the first one $\mathbf{x_h}(t)$ is where the term\textbf{ unforced dynamics} comes into place for $A$, as one can see that in the unforced case, $\dot{\mathbf{x}}(t) = A\mathbf{x}(t)$ the matrix $A$ (along with the initial condition $\mathbf{x_0}$) dictates the dynamic behaviour of the state vector, in what is called the \textbf{zero-input response}. The second term $\mathbf{x_p}(t)$ in Equation~\eqref{eqn:scalar_basic_sys_5} is the \textbf{convolution integral} of the input $\mathbf{u}(t)$ and it represents the particular solution to the input, with an initial condition of zero.
\end{remark}
\begin{definition}
Given an LTI system $\mathcal{S}$, the \textbf{poles} of $\mathcal{S}$ are defined as the spectrum of $A$.
\end{definition}
\begin{definition}
Given an LTI system $\mathcal{S}$ and its poles, let us define \textbf{LTI stability}. Plugging in Equation~\eqref{eqn:LTI_solution} to the condition at~\eqref{eqn:null_asymptotic_stability} we have that if $\mathcal{S}$ is asymptotically stable, then:

\begin{equation}
    \lim_{t \rightarrow \infty} \| \mathbf{x}(t) \| = \lim_{t \rightarrow \infty} \Bigg\| \mathbf e^{A(t-t_0)}{\mathbf  {x_0}}+\int _{{t_{0}}}^{t}e^{A(t-\tau)}{B}{\mathbf  {u}}(\tau )\mathrm{d}\tau \Bigg\| = 0.
    \label{eqn:unforced_LTI_asymtotic_stability}
\end{equation}

This must hold for all values of  for $\mathcal{S}$, so thus this corresponds to:

\begin{equation}
     \lim_{t \rightarrow \infty} \| e^{At} \| = 0,
\end{equation}

which is satisfied if all the poles of $\mathcal{S}$ are located at left-hand side of the complex plane, meaning:

\begin{equation}
     \forall \lambda \in \sigma ( A ) \colon \ \mathrm{Re}(\lambda) < 0.
\end{equation}

\end{definition}

\section{Continuous Controlled Systems}
\label{app:controlled_systems}

Now, after we set the proper definitions to consider physical systems, let us add definitions for setting controllers that act upon them.
Given a dynamical system $\mathcal{S}$, a \textbf{controller} is an element we add to the system which is designed specifically for some purpose. Mathematically, applying control is assigning a specific value for $\mathbf{u}(t)$ that is some carefully-designed temporal vector function, expected to take the necessary information into consideration, such that it outputs the desired value required by the demands set.
\subsection{Feedback Control}
Given a dynamical system $\mathcal{S}$, a \textbf{feedback} or \textbf{closed loop controller} is a controller which takes the current state of the system into consideration, i.e. it is a function of the state vector $\mathbf{x}(t)$. We will focus on such controllers in this work.
\begin{definition}
Given a dynamical system $\mathcal{S}$, a \textbf{linear feedback controller} is a feedback controller with the following form:

\begin{equation}
    \mathbf{u}(t) = - K(t) \mathbf{x}(t),
    \label{eqn:linear_controller}
\end{equation}

where $K \colon \mathcal{I} \rightarrow \mathbb{R}^{m \times n}$ is some temporal matrix function which maps a given state vector to the desired resulting controlled inputs. 

\end{definition}
\begin{remark}
The minus sign in~\eqref{eqn:linear_controller} is assigned to indicate this is a \textbf{negative} feedback controller, which tends to reduce the fluctuations of the resulting output. Negative feedback generally promotes stability, in contrast to \textbf{positive} feedback controller, that tends to lead to instability via exponential growth, oscillation or chaotic behavior.
\end{remark}
\begin{remark}

Given an LTI $\mathcal{S}$ system with a linear feedback controller $K$, plugging~\eqref{eqn:linear_controller} into~\eqref{eqn:basic_sys} we get the following controlled model:

\begin{equation}
    \dot{\mathbf{x}}(t) = \Big(A - BK(t)\Big) \mathbf{x}(t).
    \label{eqn:controlled_system}
\end{equation}

Equation~\eqref{eqn:controlled_system} illustrates the term \textbf{closed-loop control}, as the block diagram described in figure 
\ref{fig:S_diagram} can be redrawn to connect $\mathbf{x}(t)$ to $\mathbf{u}(t)$ as induced by Equation~\eqref{eqn:linear_controller} and thus close the loop between the plant output and input:

\begin{figure}[H]
\centering
\begin{tikzpicture}
\node[draw,
    circle,
    minimum size=0.8cm
] (sum) at (0,0){};
 
\draw (sum.north east) -- (sum.south west)
    (sum.north west) -- (sum.south east);
 
\draw (sum.north east) -- (sum.south west)
(sum.north west) -- (sum.south east);
 
\node[left=-1pt] at (sum.center){\tiny $-$};
\node[below] at (sum.center){\tiny $+$};
 
\node [draw,
    minimum width=2cm,
    minimum height=1.2cm,
    right=1.5cm of sum
]  (integrator) {$\int \mathrm{d}t$};
 
\node [draw,
    minimum width=2cm, 
    minimum height=1.2cm,
    left=1cm of sum
] (B) {$B$};
 
\node [draw,
    minimum width=2cm, 
    minimum height=1.2cm, 
    below=1cm of integrator
]  (A) {$A$};

\node [draw,
    minimum width=2cm, 
    minimum height=1.2cm, 
    right=1.5cm of integrator
]  (C) {$C$};

\node [draw,
    minimum width=2cm, 
    minimum height=1.2cm,
    below=1cm of A
] (K) {$K(t)$};
 
\draw[-stealth] (sum.east) -- (integrator.west)
    node[midway,above]{$\dot{\mathbf{x}}(t)$};
 
 \draw[-stealth] (integrator.east) -- (C.west)
    node[midway,above](x){$\mathbf{x}(t)$};
    
\draw[-stealth] (C.east) -- ++ (2,0)
    node[midway,above](y){$\mathbf{y}(t)$};
 
\draw[-stealth] (x.south) |- (A.east);

\draw[-stealth] (x.south) |- (K.east);
 
\draw[-stealth] (A.west) -| (sum.south) 
    node[near end,left]{};
 
 \draw[-stealth] (K.west) -| (B.south) 
    node[near end,left]{};

\draw[-stealth] (B.east) -- (sum.west)
    node[midway,above]{};
 
\end{tikzpicture}
\caption{Closed-loop block diagram describing the controlled system model of $\mathcal{S}$ given in~\eqref{eqn:controlled_system}.} \label{fig:S_controlled_diagram}
\end{figure}
\end{remark}
\begin{definition}
Given an LTI $\mathcal{S}$ system with a linear feedback controller $K$,
the \textbf{controlled dynamics} of $\mathcal{S}$ is defined as:
\begin{equation}
    A_{cl}(t) \coloneqq A-BK(t).
\end{equation}
Thus, the controlled model in~\eqref{eqn:controlled_system} can be described as:

\begin{equation}
    \dot{\mathbf{x}}(t) = A_{cl}(t) \mathbf{x}(t),
    \label{eqn:concise_controlled_system}
\end{equation}

and the diagram in~\ref{fig:S_controlled_diagram} can be described more concisely as:

\begin{figure}[H]
\centering
\begin{tikzpicture}
 
\node [draw,
    minimum width=2cm,
    minimum height=1.2cm
]  (integrator) at (0,0){$\int \mathrm{d}t$};
 
\node [draw,
    minimum width=2cm, 
    minimum height=1.2cm, 
    below=1cm of integrator
]  (A) {$A_{cl}(t)$};

\node [draw,
    minimum width=2cm, 
    minimum height=1.2cm, 
    right=1.5cm of integrator
]  (C) {$C$};
 
 
 \draw[-stealth] (integrator.east) -- (C.west)
    node[midway,above](x){$\mathbf{x}(t)$};
    
\draw[-stealth] (C.east) -- ++ (2,0)
    node[midway,above](y){$\mathbf{y}(t)$};
 
\draw[-stealth] (x.south) |- (A.east);
 
\draw[-stealth] (A.west) node[above left] {$\dot{\mathbf{x}}(t)$} -- + (-12mm,0) |- (integrator.west);
 
\end{tikzpicture}
\caption{Concise closed-loop block diagram of the controlled model for $\mathcal{S}$ given in ~\eqref{eqn:controlled_system}.} \label{fig:S_concise_controlled_diagram}
\end{figure}
\end{definition}
\begin{definition}
 Given an LTI system $\mathcal{S}$ with its poles and a linear feedback controller $K$, let us consider the simple case in which $K$ is constant with respect to time. This also makes $A_{cl}$ constant and then Equation~\eqref{eqn:controlled_system} can be easily solved with:

\begin{equation}
    \mathbf{x}(t) = e^{A_{cl}t} \mathbf{x}_0.
    \label{eqn:x_solution_for_constant_controller}
\end{equation}

This is referred to as a \textbf{constant stabilizing controller}. Thus we get the following necessary condition for asymptotic stability of $\mathcal{S}$:

\begin{equation}
    \lim_{t \rightarrow \infty} \| e^{A_{cl}t} \| = 0.
    \label{eqn:LTI_asymtotic_stability}
\end{equation}
 
 This condition is satisfied if and only if all the poles of the controlled dynamics of $\mathcal{S}$ are in the left-hand side of the complex plane, meaning:

\begin{equation}
     \forall \lambda \in \sigma ( A_{cl} ) \colon \ \mathrm{Re}(\lambda) < 0.
     \label{eqn:cons_stab_cond}
\end{equation}

\end{definition}
\begin{remark}
Given an LTI system $\mathcal{S}$, if its unforced system design is not asymptotically stable, one can design a constant controller such that the closed-loop system will be stable, by manually setting the eigenvalues of $A_{cl}$.
\end{remark}



\subsection{Controllability}
The notion of \textbf{controllability} plays a crucial role in the analysis of a control system, as it affects the measure at which the system can be manipulated by the designing engineer.
\begin{definition}
Given a dynamical system $\mathcal{S}$ \textbf{controllability} relates to the ability to move the state vector $\mathbf{x}(t)$ of the given system $\mathcal{S}$ from any initial vector $\mathbf{x}(t_1) \in \mathcal{X}$ to any other final vector $\mathbf{x}(t_2) \in \mathcal{X}$ within some finite time window $[t_1, t_2] \subseteq \mathbb{R}$. The controllability of a system $\mathcal{S}$ is defined with respect to the pair $(A, B)$ and can be defined is several ways. 
\end{definition}
\begin{definition}
Given an LTI system $\mathcal{S}$, its associated \textbf{controllability matrix} $\mathcal{C} \in \mathbb{R}^{n \times n m}$ is defined as:
\begin{equation}
    \mathcal{C} \coloneqq \begin{bmatrix}
          B &  AB & A^2 B \ \cdots \ A^{n-1} B 
            \end{bmatrix}.
\end{equation}
\end{definition}
\begin{theorem}
Given an LTI system $\mathcal{S}$ with closed loop dynamics and its associated controllability matrix $\mathcal{C}$, the pair $(A, B)$ is controllable if the matrix $\mathcal{C} \in \mathbb{R}^{n \times n m}$ has full rank, i.e.:
\begin{equation}
    \rank \left( \mathcal{C} \right) = \min \{ n, nm \} = n.
\end{equation}
\end{theorem}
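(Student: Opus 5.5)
The plan is to prove the statement constructively: we exhibit, for arbitrary $\mathbf{x}(t_1),\mathbf{x}(t_2)\in\mathbb{R}^n$ and any $t_2>t_1$, an explicit input that steers the system between them. The vehicle is the \emph{controllability Gramian}
\begin{equation*}
    W(t_1,t_2) \coloneqq \int_{t_1}^{t_2} e^{A(t_2-\tau)}BB^Te^{A^T(t_2-\tau)}\mathrm{d}\tau \in \mathbb{R}^{n\times n}.
\end{equation*}
This matrix is symmetric and positive semi-definite by construction. The argument splits into two steps: first show that $\rank(\mathcal{C})=n$ forces $W(t_1,t_2)$ to be invertible, and then use $W^{-1}$ to build the steering input from the forced state solution \eqref{eqn:LTI_solution}.

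For the first step, suppose $W(t_1,t_2)\mathbf{v}=\mathbf{0}_n$ for some $\mathbf{v}$. Then
\begin{equation*}
    0=\mathbf{v}^TW\mathbf{v}=\int_{t_1}^{t_2}\big\|B^Te^{A^T(t_2-\tau)}\mathbf{v}\big\|^2\mathrm{d}\tau .
\end{equation*}
Since the integrand is continuous and nonnegative, it follows that $\mathbf{v}^Te^{As}B=0$ for all $s\in[0,t_2-t_1]$. Differentiating $k$ times in $s$ and evaluating at $s=0$, using the power series of the matrix exponential, gives $\mathbf{v}^TA^kB=0$ for every $k=0,\dots,n-1$. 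In other words $\mathbf{v}^T\mathcal{C}=0$. Because $\mathcal{C}$ has full row rank $n$, this forces $\mathbf{v}=\mathbf{0}_n$, so $W$ is invertible.

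I expect this to be the main obstacle, mostly in rigour rather than in idea. The differentiation under the integral sign and the term-by-term differentiation of $e^{As}$ must be justified; both rely on the convergence of the exponential series. In addition, the paper's matrix-exponential definition was stated only for diagonalizable matrices, so I would invoke the series definition directly. Cayley–Hamilton is not needed in this direction, because only the powers up to $n-1$ appearing in $\mathcal{C}$ are used.

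For the second step, define
\begin{equation*}
    \mathbf{u}(t)\coloneqq B^Te^{A^T(t_2-t)}W(t_1,t_2)^{-1}\Big(\mathbf{x}(t_2)-e^{A(t_2-t_1)}\mathbf{x}(t_1)\Big).
\end{equation*}
This input is continuous on $[t_1,t_2]$. Substituting it into \eqref{eqn:LTI_solution} with $t_0=t_1$, the convolution term becomes $W\,W^{-1}(\mathbf{x}(t_2)-e^{A(t_2-t_1)}\mathbf{x}(t_1))$. Adding the homogeneous part, the state at $t_2$ equals exactly the target $\mathbf{x}(t_2)$. Since the initial and final states and the window were arbitrary, $(A,B)$ is controllable, which is what the statement claims. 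Finally, I would remark that the equality $\min\{n,nm\}=n$ in the statement is immediate because $m\ge1$.
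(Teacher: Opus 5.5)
The paper gives no proof of this statement to compare against. It is one of the standard facts in the appendix, whose introduction says such known theorems are stated without proof.

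Your controllability-Gramian argument is the standard textbook proof of exactly the direction claimed (full rank of $\mathcal{C}$ implies controllability), and it is correct. The chain of steps holds up:
\begin{itemize}
\item $\mathbf{v}^TW(t_1,t_2)\mathbf{v}=0$ forces the continuous nonnegative integrand to vanish on an interval of positive length.
\item The derivatives of $s\mapsto\mathbf{v}^Te^{As}B$ at $s=0$ then give $\mathbf{v}^TA^kB=0$ for $0\le k\le n-1$.
\item Full row rank of $\mathcal{C}$ forces $\mathbf{v}=\mathbf{0}_n$, so $W$ is invertible.
\item Your input reproduces the target exactly through the variation-of-constants formula, since the convolution term collapses to $WW^{-1}$ applied to the required displacement.
\end{itemize}

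Two small remarks.

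First, your argument never actually differentiates under the integral sign. Once the integrand is known to vanish, you differentiate $\mathbf{v}^Te^{As}B$ itself. So the only analytic fact you need is $\frac{\mathrm{d}^k}{\mathrm{d}s^k}e^{As}=A^ke^{As}$, which follows from the everywhere absolutely convergent series. Taking derivatives at the endpoint $s=0$ is harmless, because the function is smooth on all of $\mathbb{R}$ and vanishes on $[0,t_2-t_1]$. You are right that the series definition has to be used, since the paper only defines $e^M$ for diagonalizable $M$. You are also right that Cayley--Hamilton is needed only for the converse.

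Second, you prove more than the paper's informal definition of controllability asks for. Steering succeeds on every window of positive length, not just on some finite window.
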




\begin{definition}
Given an LTI system $\mathcal{S}$ with a state vector $\mathbf{x}(t) = (x_i(t))_{i=1}^n$, let us assume it is not controllable. Thus there exist some $1 \leq r \leq n$ individual variables $X_{n.c.} = \{x_i(t)\}_{i=1}^r \subseteq \{x_i(t)\}_{i=1}^n$ that are not controllable. In this case, we say that $\mathcal{S}$ is \textbf{stabilizable} if all of its uncontrollable state variables $X_{n.c.}$ are stable, in a manner of Lyapunov or asymptotic stability, based on the context.
\end{definition}
\subsection{Observability}
\begin{definition}
Given a dynamical system $\mathcal{S}$, the property of \textbf{observability} correlates to a measure of the ability to infer the individual state variables of a given system, using information about its output. The term relates to the ability to estimate the state vector $\mathbf{x}(t)$ of a given system  using its output $\mathbf{y}(t)$. The observability of a system $\mathcal{S}$ is defined with respect to the pair $(A, C)$ and can be defined is several ways. 
\end{definition}
\begin{definition}
Given an LTI system $\mathcal{S}$, its associated \textbf{observability matrix} $\mathcal{O} \in \mathbb{R}^{np \times n}$ is defined as:
\begin{equation}
    \mathcal{O} \coloneqq \begin{bmatrix}
          C \\  CA \\ C A^2 \\ \vdots \\  C A^{n-1} 
            \end{bmatrix}.
\end{equation}
\end{definition}
\begin{theorem}
Given an LTI system $\mathcal{S}$, the pair $(A,C)$ is observable if the matrix $\mathcal{O} \in \mathbb{R}^{np \times n}$ full rank, i.e.:
\begin{equation}
    \rank \left( \mathcal{O} \right) = \min \{ np, n \} = n
\end{equation}
\end{theorem}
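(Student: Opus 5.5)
The plan is to reduce observability to injectivity of the map that sends an initial state to its zero-input output, and then to test that injectivity by differentiating at $t_0$. Since the paper only describes observability informally, I will fix the standard precise meaning first. The pair $(A,C)$ is observable if, for some (equivalently every) finite window $[t_0,t_1]$, knowledge of $\mathbf{u}$ and $\mathbf{y}$ on that window determines $\mathbf{x_0}$ uniquely. I will also read the output coefficient matrix as $C\in\mathbb{R}^{p\times n}$, which is what the product $C\mathbf{x}(t)$ requires, so that $\mathcal{O}\in\mathbb{R}^{np\times n}$ is consistent.

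First, I will use the forced state solution~\eqref{eqn:LTI_solution}. It gives $\mathbf{y}(t)=Ce^{A(t-t_0)}\mathbf{x_0}+C\int_{t_0}^{t}e^{A(t-\tau)}B\mathbf{u}(\tau)\,\mathrm{d}\tau$. The second term is known once $\mathbf{u}$ is known, so it can be subtracted. Therefore two initial conditions $\mathbf{x_0},\mathbf{x_0}'$ produce the same output on $[t_0,t_1]$ if and only if $\mathbf{z}\coloneqq\mathbf{x_0}-\mathbf{x_0}'$ satisfies
\begin{equation*}
Ce^{A(t-t_0)}\mathbf{z}=\mathbf{0}_p \quad \text{for all } t\in[t_0,t_1].
\end{equation*}
Observability is thus equivalent to this identity forcing $\mathbf{z}=\mathbf{0}_n$.

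Second, assume $\rank(\mathcal{O})=n$ and suppose the identity above holds on the window. The left-hand side is real analytic in $t$ (it is built from the matrix exponential series), so I may differentiate it $k$ times and evaluate at $t=t_0$. This gives $CA^k\mathbf{z}=\mathbf{0}_p$ for $k=0,1,\dots,n-1$, which is exactly $\mathcal{O}\mathbf{z}=\mathbf{0}_{np}$. Full column rank of $\mathcal{O}$ means its kernel is trivial, so $\mathbf{z}=\mathbf{0}_n$, and the initial state is determined uniquely. This proves the stated direction.

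For completeness, I would add a remark on the converse, although the statement only asserts the "if" direction. If $\mathcal{O}\mathbf{z}=\mathbf{0}$ with $\mathbf{z}\neq\mathbf{0}$, then Cayley--Hamilton gives $CA^k\mathbf{z}=\mathbf{0}$ for every $k\ge 0$. Hence $Ce^{A(t-t_0)}\mathbf{z}\equiv\mathbf{0}$, and $\mathbf{z}$ is indistinguishable from the zero state. I would also note the duality $\mathcal{O}^T=\begin{bmatrix}C^T & A^TC^T & \cdots & (A^T)^{n-1}C^T\end{bmatrix}$, which is the controllability matrix of $(A^T,C^T)$, so the result mirrors the preceding controllability theorem.

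The main obstacle is not computational but definitional: the paper never gives observability a precise meaning. I would therefore state the definition explicitly at the start of the proof. The only technical points are justifying termwise differentiation of $e^{A(t-t_0)}$ and the cancellation of the known input contribution, and both are routine.
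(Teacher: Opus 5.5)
The paper gives no proof of this statement. It appears in the appendix among the standard facts that, by the appendix's own convention, are quoted as known theorems rather than proved, so there is no argument in the paper to compare yours against.

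Your proof is the standard textbook argument, and it is correct. Subtracting the known forced response reduces observability to injectivity of $\mathbf{z}\mapsto Ce^{A(t-t_0)}\mathbf{z}$ on the window. Differentiating at $t_0$ then turns that identity into $\mathcal{O}\mathbf{z}=\mathbf{0}_{np}$, and full column rank forces $\mathbf{z}=\mathbf{0}_n$.

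A few small points are worth making explicit when you write it up:
\begin{itemize}
\item The window must be nondegenerate ($t_1>t_0$).
\item The derivatives at $t_0$ are one-sided. They vanish because $t\mapsto Ce^{A(t-t_0)}\mathbf{z}$ is $C^\infty$ on $\mathbb{R}$ and identically zero on $[t_0,t_1]$, so only smoothness is needed, not analyticity. Alternatively, differentiate at an interior point $t^*$ to get $\mathcal{O}e^{A(t^*-t_0)}\mathbf{z}=\mathbf{0}_{np}$, then use invertibility of the exponential.
\item You are right to read $C\in\mathbb{R}^{p\times n}$. The paper's $C\in\mathbb{R}^{n\times p}$ is inconsistent with $\mathbf{y}(t)=C\mathbf{x}(t)$ and with $\mathcal{O}\in\mathbb{R}^{np\times n}$.
\item Keep the power-series definition of $e^{At}$ for general $A$. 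The paper introduces matrix functions only for diagonalizable matrices, which would not cover an arbitrary $A$, and your termwise differentiation relies on the series.
\end{itemize}

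Your converse via Cayley--Hamilton and your duality remark are also correct. Together they show that the paper's one-directional ``if'' is in fact an equivalence.
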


\begin{definition}
Given an LTI system $\mathcal{S}$ with a state vector $\mathbf{x}(t) = (x_i(t))_{i=1}^n$, let us assume it is not observable. Thus there exist some individual variables $X_{n.o.} = \{x_i(t)\}_{i=1}^r \subseteq \{x_i(t)\}_{i=1}^n$ where $1 \leq r \leq n$ that are not observable. In this case, we say that $\mathcal{S}$ is \textbf{detectable} if all of its unobservable state variables $X_{n.o.}$ are stable, in a manner of Lyapunov or asymptotic stability, based on the context.
\end{definition}
\begin{definition}
Given an LTI system $\mathcal{S}$, we say it is \textbf{zero-state observable} if setting its input to and output to zero, i.e. $\mathbf{u}(t) \equiv \mathbf{0}_m$ and  $\mathbf{y}(t) \equiv \mathbf{0}_p$, yields that state vector is also zero, i.e. $\mathbf{x}(t) = \mathbf{0}_n$.
\end{definition}

\section{Scalar HJB Equation}
\label{app:scalar_HJB}

Let us consider the scalar HJB solution, i.e. when $n=m=1$. This considers the system described in~\eqref{eqn:scalar_lti} which reduces Equation~\eqref{eqn:HJB} to the form:

\begin{equation}
\begin{aligned}
    -\frac{1}{4} \frac{b^2}{r} \partial_x J^{*^2}(t) + ax^*(t) \partial_x J{^{*}}(t) + qx^{*^2}(t) = 0,
       \label{eqn:scalar_HJB}
\end{aligned}
\end{equation}

where $x(t), J^*(t), a, b, q, r \in \mathbb{R}$ with $r > 0$ and $q \geq 0$. Thus this can be solved directly using the scalar quadratic equation to obtain:

\begin{equation}
\begin{aligned}
    \partial_x J_1^*(t) & = \frac{ax^*(t) + \sqrt{a^2 x^{*^2}(t)+ \frac{qb^2}{r}x^{*^2}(t)}}{\frac{b^2}{2r} } \\
    & = \frac{ax^*(t) + |x^*(t)|\sqrt{a^2 + \frac{qb^2}{r}}}{\frac{b^2}{2r} };\\
    \partial_x J_2^*(t) & = \frac{ax^*(t) - |x^*(t)|\sqrt{a^2 + \frac{qb^2}{r}}}{\frac{b^2}{2r}}.
    \label{eqn:scalar_cost_1}
\end{aligned}
\end{equation}

Notice since $r > 0$ and $q \geq 0$, the expression in the square root is non-negative and thus both $\partial_x J_1^*(t)$ and $\partial_x J_2^*(t)$ are real functions. We can plug these to the optimal input from~\eqref{eqn:optimal_u} to yield:

\begin{equation}
\begin{aligned}
    u_1^*(t) = & - \frac{b}{2r} \partial_x J_1^{*}(t) = - \frac{b}{2r} \frac{2r}{b^2} \left(ax^*(t) + |x^*(t)|\sqrt{a^2 + \frac{qb^2}{r}}\right)\\
    & - \frac{1}{b} \left(ax^*(t) + |x^*(t)|\sqrt{a^2 + \frac{qb^2}{r}}\right);\\
    u_2^*(t) = & - \frac{1}{b} \left(ax^*(t) - |x^*(t)|\sqrt{a^2 + \frac{qb^2}{r}}\right).
\end{aligned}
\end{equation}

Plugging these two possible inputs to the system model yields the following two possible optimal state dynamics:

\begin{equation}
\begin{aligned}
    \dot{x}^*(t) = & ax^*(t) + bu_1^*(t) = ax^*(t) - b\frac{1}{b} \left(ax^*(t) + |x^*(t)|\sqrt{a^2 + \frac{qb^2}{r}}\right) \\
     = & - |x^*(t)|\sqrt{a^2 + \frac{qb^2}{r}} < 0\\
    \dot{x}^*(t) = & ax^*(t) + bu_2^*(t)  = |x^*(t)|\sqrt{a^2 + \frac{qb^2}{r}} >0
\end{aligned}
\end{equation}

Thus we conclude that only the positive optimal solution $u_1^*(t)$ stabilizes the optimal state and thus $u_2^*(t)$ is discarded.
\begin{definition}
Let us consider an LTI system $\mathcal{S}$. We say that the state vector $\mathbf{x}(t)$ is \textbf{positive} if:
\begin{equation}
    \forall t \in\mathcal{I} \ ; \ x^*(t) > 0
\end{equation}
\end{definition}

Let us assume without loss of generality that the optimal state trajectory is positive. Moreover, let us denote:

\begin{equation*}
    \begin{aligned}
         \psi \coloneqq & \frac{2r\left(a + \sqrt{a^2 + \frac{qb^2}{r}}\right)}{b^2}
    \end{aligned}
\end{equation*}

Plugging this to~\eqref{eqn:scalar_cost_1} and writing $\partial_x J_1^*(t)$ explicitly:

\begin{equation}
\begin{aligned}
    \frac{\partial J_1^*(x^*(t), u^*(t), t)}{\partial x^*(t)} & = \psi x^*(t),
\end{aligned}
\end{equation}

which is a simple, separable PDE that can be integrated directly with respect to $x^*(t)$ to yield:

\begin{equation}
\begin{aligned}
   J_1^*(x^*(t), u^*(t), t) & = \frac{\psi}{2} x^{*^2}(t) + c(u^*(t), t),
\end{aligned}
\end{equation}

where $c$ is a function of $u^*(t)$ and $t$ only. So we have that the optimal cost is quadratic with respect to the state.


   
   

\section{Discrete Time LTI System Modeling}
\label{sect:lti_discrete_appendix}

 In this section we will present the fundamentals for discrete-time control theory. Discrete controllers can be designed using one of these two approaches:

\begin{itemize}
    \item \textbf{Direct design}:\\
    We describe the system as a discrete-time system which is equivalent to the continuous-time system at all the discrete measuring points and then design a discrete controller for the discrete system directly.
    \item \textbf{Indirect design}:\\
    We design a continuous controller for the original continuous system and then approximate the discrete controller by sampling its continuous counterpart at discrete measuring points. 
\end{itemize}

In this work we will focus on direct design of LTI systems and compare its performance with using the continuous model. 

\subsection{LTI Model Discretization}
\label{app:lti_discretization}
We would now want to discretize $\mathcal{S}$ to an equivalent discretized system denoted $\mathcal{S}_D$.

\subsubsection{Time Interval Discretization}
To perform discretization, we would first need to consider discrete intervals for which we would sample the system.
\begin{definition}
Let $L \in \mathbb{N}$ be the number of \textbf{sampling intervals} which we wish to sample $\mathcal{S}$ at.  We would consider values of at least $L \geq 10$. 
\end{definition}
\begin{definition}
In case the system evaluation horizon is finite ($T_f < \infty$) we would discretize the performance interval of the system $\mathcal{I} = [t_0, T_f]$ into $L$ equal intervals, with a length we will refer to as the \textbf{sampling period} $\delta \in (0, 1)$ which we define as:
    \begin{equation}
        \delta \coloneqq \frac{T_f - t_0}{L}.
    \end{equation}
    Notice since we require $\delta < 1$, it must hold that $L >T_f - t_0$.
\end{definition}
\begin{definition}
For each $0 \leq k \leq L-1$, let the $k$'th \textbf{initial and terminal sampling times} $t_{0_k}$ and $T_{f_k}$ be:
    \begin{equation}
        \begin{aligned}
             t_{0_k} & \coloneqq t_0 + k \delta;\\
             T_{f_k} & \coloneqq t_0 + (k+1) \delta.
        \end{aligned}
    \end{equation}
\end{definition}
\begin{definition}
With that, let the
    \textbf{$k$'th sampling interval} be defined as:
    
    \begin{equation}
        I_k \coloneqq \left[t_{0_k}, T_{f_k}\right] \subseteq \mathcal{I}.
    \end{equation}
\end{definition}
\begin{remark}
With that we have:
\begin{equation}
    \begin{aligned}
        \bigcup_{k=0}^{L-1} I_k & = \bigcup_{k=0}^{L-1} \left[t_{0_k}, T_{f_k}\right] = \left[t_{0_0}, T_{f_0}\right] \cup \left[t_{0_1}, T_{f_1}\right] \cup \cdots \cup \left[t_{0_{L-1}}, T_{f_{L-1}}\right] \\
        & = \left[t_0 , t_0 + \delta\right] \cup \left[t_0 + \delta, t_0 + 2\delta\right] \cup \cdots \cup \left[t_0 + (L-1)\delta, t_0 + L\delta \right] \\
        & = \left[t_0 , t_0 + L\delta\right] = \left[t_0 , t_0 + L \cdot \frac{T_f - t_0}{L}\right] = \left[t_0 , T_f\right] =  \mathcal{I},
    \end{aligned}
    \end{equation}
    i.e., the union of all the sampling intervals equals the original interval.
\end{remark}
\begin{definition}
For each such interval $I_k$, let us denote the \textbf{right-open interval} and \textbf{left-open interval} as $I_{k_r}$  and $I_{k_l}$ respectively, where:
    \begin{equation}
        \begin{aligned}
             I_{k_r} \coloneqq & I_{k} \setminus \{ T_{f_k} \} =  \left[t_{0_k}, T_{f_k}\right);\\
             I_{k_l} \coloneqq & I_{k} \setminus \{ t_{0_k} \} =  \left(t_{0_k}, T_{f_k}\right].
        \end{aligned}
    \end{equation}
\end{definition}
\begin{definition}
If $T_f \rightarrow \infty$ then we would require an additional design parameter $\tilde{T}_f \in \mathcal{I}$ we will refer as the \textbf{trimming time} to determine at which finite time to trim the system progression and then divide the interval $\tilde{\mathcal{I}} \coloneqq [t_0, \tilde{T}_f]$ as mentioned for the finite horizon case.
\end{definition}

\subsubsection{State Discretization by Sampling}

\begin{definition}
Let us define the sampled values of $\mathbf{x}(t)$, the state vector of $\mathcal{S}$. For each $0 \leq k \leq L-1$, let us denote the \textbf{$k$'th state sample} as $\mathbf{x}[k]$, with square brackets and define it to be the value of $\mathbf{x}(t)$ at the beginning of the interval $I_k$. In this case we define $\mathbf{x}: \bigcup_{i=0}^{L-1}\{i\} \rightarrow \mathcal{X}_D$, where $\mathcal{X}_D \subseteq \mathcal{X}$ is the set of all sampled state measurements, and:
\begin{equation}
    \mathbf{x}[k] \coloneqq \mathbf{x}(t_{0_k}).
    \label{eqn:state_sampling}
\end{equation}
\end{definition}

\begin{remark}
Notice this definition yields $\mathbf{x}[0]=\mathbf{x}(t_{0_0}) = \mathbf{x}(t_0 + 0 \cdot \delta) = \mathbf{x}(t_0)= \mathbf{x_0}$.
\end{remark}

\subsubsection{Input Discretization by Sampling and Zero-Order-Hold}
  Let us define the sampled values for $\mathbf{u}(t)$, the input vector of $\mathcal{S}$. The go-to technique for input sampling is the \textbf{Zero-Order-Hold} (ZOH) technique, i.e., holding the control signal $\mathbf{u}(t)$ constant at each interval, by means of the following steps:
  \begin{definition}
  For each $0 \leq k \leq L-1$, let us denote the \textbf{$k$'th sampled input measurement} as $\mathbf{u}[k]$ with square brackets, and define it by the value of $\mathbf{u}(t)$ at the beginning of the interval $I_k$. In this case $\mathbf{u}: \bigcup_{i=0}^{L-1}\{i\} \rightarrow \mathcal{U}_D$, where $\mathcal{U}_D \subseteq \mathcal{U}$ is the set of all input measurements, and:
  
  \begin{equation} 
  \mathbf{u}[k] \coloneqq \mathbf{u}(t_{0_k}) .
  \label{eqn:input_sampling}
  \end{equation}
  \end{definition}
  \begin{definition}
  Given a set of input samples $ \left(\mathbf{u}[k]\right)_{k=0}^{L-1}$, we define a new input signal by holding each sample across each right-open interval $I_{k_r}$. In that we define the \textbf{zero-order-hold input signal} $\mathbf{u}_{ZOH}: \mathcal{I} \rightarrow \mathcal{U}_D$, in the following form:

  \begin{equation}
     \mathbf{u}_{ZOH}(\tau) \coloneqq \begin{cases}
    \mathbf{u}[0] & ;\tau \in I_{0_r} \\
    \mathbf{u}[1] & ;\tau \in I_{1_r} \\
    & \vdots\\
    \mathbf{u}[L-1] & ;\tau \in I_{{L-1}_r}
\end{cases}
\label{eqn:zoh}
\end{equation}
  \end{definition}
\begin{remark}
As mentioned, note that we switch $\mathbf{u}_{ZOH}(\tau)$ at each sampling point $(k\delta)_{k=0}^{L-1}$ so it is right-continuous.
\end{remark}

\subsubsection{Discrete LTI System Model}

\begin{proposition}
Given an LTI system $\mathcal{S}$ adhering the model at~\eqref{eqn:basic_sys} with an initial condition $\mathbf{x_0}$, let $L \in \mathbb{N}$ and let $ \left(\mathbf{x}[k]\right)_{k=0}^{L-1}$ and $ \left(\mathbf{u}[k]\right)_{k=0}^{L-1}$ be $L$ corresponding state and input samples measured from $\mathcal{S}$. Then for all $0 \leq k < L-1$, by applying a zero-order-hold on the input samples, an equivalent discretized system $\mathcal{S}_D$ can be expressed by the following recurrence relation:
\begin{equation}
\begin{aligned}
     \mathbf{x}[k+1] = \tilde{A}\mathbf{x}[k] + \tilde{B} \mathbf{u}[k],
\end{aligned}
\end{equation}
with $\mathbf{x}[0] \coloneqq \mathbf{x_0}$, for some matrices $\tilde{A} \in \mathbb{R}^{n \times n}, \tilde{B} \in \mathbb{R}^{n \times m}$.
\end{proposition}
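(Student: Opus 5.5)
The plan is to start from the closed-form forced state solution of the LTI model, Equation~\eqref{eqn:LTI_solution}, and evaluate it between two consecutive sampling instants rather than from $t_0$. Since the system is time-invariant, the same variation-of-constants argument used for Equation~\eqref{eqn:scalar_basic_sys_5} (with integrating factor $e^{-At}$) works with any starting point. Taking $t_{0_k}$ as the initial time and $\mathbf{x}(t_{0_k})$ as the initial condition gives, for $t \in I_k$,
\begin{equation*}
\mathbf{x}(t) = e^{A(t-t_{0_k})}\mathbf{x}(t_{0_k}) + \int_{t_{0_k}}^{t} e^{A(t-\tau)} B \mathbf{u}(\tau)\,\mathrm{d}\tau .
\end{equation*}
I then evaluate this at $t = T_{f_k} = t_{0_{k+1}}$. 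By the state sampling definition~\eqref{eqn:state_sampling}, the left-hand side becomes $\mathbf{x}[k+1]$ and the first term becomes $e^{A\delta}\mathbf{x}[k]$, because $T_{f_k}-t_{0_k}=\delta$.

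Next, $\mathbf{u}(\tau)$ inside the integral is replaced by the zero-order-hold signal $\mathbf{u}_{ZOH}(\tau)$ from~\eqref{eqn:zoh}. On $I_{k_r}$ this signal equals the constant $\mathbf{u}[k]$. The integrand differs from it only at the single endpoint $T_{f_k}$, which has measure zero and so does not affect the integral. The constant vector can therefore be pulled out:
\begin{equation*}
\int_{t_{0_k}}^{T_{f_k}} e^{A(T_{f_k}-\tau)}\,\mathrm{d}\tau \; B\,\mathbf{u}[k] = \int_{0}^{\delta} e^{As}\,\mathrm{d}s \; B\,\mathbf{u}[k],
\end{equation*}
using the substitution $s = T_{f_k}-\tau$. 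Setting $\tilde{A} \coloneqq e^{\delta A}$ and $\tilde{B} \coloneqq \int_0^\delta e^{sA}\,\mathrm{d}s\,B$ gives the claimed recurrence. This matches the form later used in~\eqref{eqn:S_N_D_model}, and $\tilde{A}$, $\tilde{B}$ are real and of the stated sizes because $A$, $B$ are real. They also do not depend on $k$, since $\delta$ is the same for every interval. The initial condition $\mathbf{x}[0]=\mathbf{x_0}$ was already noted after the state-sampling definition.

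The main point of care is justifying the restarted solution formula on each subinterval. Existence and uniqueness come from Theorem~\ref{the:picard_lind}, since an affine right-hand side is globally Lipschitz. Rather than appeal to that theorem alone, I would verify the formula directly by differentiating it, which holds once the input is piecewise constant and hence integrable. I would also remark that the matrix exponential should be understood through its power series, not only for diagonalizable $A$, so that $\frac{\mathrm{d}}{\mathrm{d}t}e^{At}=Ae^{At}$ is available in general.
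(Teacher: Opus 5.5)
Your proposal is correct and follows essentially the same route as the paper. You restart the forced LTI solution at $t_{0_k}$, pull the constant zero-order-hold input $\mathbf{u}[k]$ out of the convolution integral, and change variables to obtain $\tilde{A} = e^{\delta A}$ and $\tilde{B} = \int_0^\delta e^{tA}\,\mathrm{d}t\,B$. The paper differs only in detail: it first derives the intermediate relation~\eqref{eqn:CDPE} for arbitrary $\eta_k \in [0,\delta]$ before setting $\eta_k = \delta$, whereas you evaluate directly at $T_{f_k}$ and add the endpoint-of-measure-zero and general-$A$ matrix-exponential remarks, which the paper leaves implicit.
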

\begin{proof}
Let us define $L$ continuous LTI systems $\left(\mathcal{S}_k\right)_{k=0}^{L-1}$, operating at each interval $I_k$. Note that for all $0 \leq k \leq L-1$ we have $T_{f_k} - t_{0_k} = \delta$. 

\begin{itemize}
\item We define each system $\mathcal{S}_k$ such that it adheres the LTI model at Equation~\eqref{eqn:basic_sys} for all $\tau_k \in I_k$ for which it is defined, but with the assigned input $\mathbf{u}(\tau_k) \equiv \mathbf{u}_{ZOH}(\tau_k)$, when $\mathbf{u}_{ZOH}(\tau_k)$ is defined using the input samples $ \left(\mathbf{u}[k]\right)_{k=0}^{L-1}$ as demonstrated in Equation~\eqref{eqn:zoh}.
    \item For each system $\mathcal{S}_k$ and for any $\tau_k \in I_k$, using the expression for the solution of an LTI system given in~\eqref{eqn:LTI_solution}, we can express the value $\mathbf{x}\left(\tau_k\right)$. Since $\tau_k \in I_k$ we can write: $\tau_k = t_{0_k} + \eta_k$ where $\eta_k \in [0, \delta]$. With that, we have: 
    \begin{equation}
    \begin{aligned}
        \mathbf{x}\left(\tau_k\right) & = \mathbf{x}\left(t_{0_k} + \eta_k\right) \\
        & = e^{A\left(t_{0_k} + \eta_k-t_{0_k}\right)}{\mathbf{x}(t_{0_k})}+\int _{t_{0_k}}^{t_{0_k} + \eta_k}e^{A\left(t_{0_k} + \eta_k-\zeta_k\right)}{B}{\mathbf  {u}}_{ZOH}(\zeta_k)\mathrm{d}\zeta_k\\
        & = e^{\eta_k A}{\mathbf{x}(t_{0_k})} +\int _{t_{0_k}}^{t_{0_k} + \eta_k}e^{A\left(t_{0_k} + \eta_k-\zeta_k\right)}{B}{\mathbf  {u}}_{ZOH}(\zeta_k)\mathrm{d}\zeta_k.
    \end{aligned}
    \end{equation}
    Using the ZOH condition at~\eqref{eqn:zoh} we have that each term $\mathbf{u}_{ZOH}(\zeta_k)$ is constant in the integral bounds and is equal to $\mathbf{u}[k]$. Thus we have:
    \begin{equation}
    \begin{aligned}
        \mathbf{x}(t_{0_k} + \eta_k) & = e^{\eta_k A}{\mathbf{x}(t_{0_k})} +\int _{t_{0_k}}^{t_{0_k} + \eta_k}e^{A\left(t_{0_k} + \eta_k-\zeta_k\right)}{B}\mathrm{d}\zeta_k \ \mathbf{u}[k].
    \end{aligned}
    \end{equation}
    \item For the integral, let us employ $L$ changes of variables of the form: $t_k \coloneqq \tau_k - \zeta_k = t_{0_k} +\eta_k-\zeta_k$. With that and the fact that $B$ is also constant since each system $\mathcal{S}_k$ is LTI, we have:
    \begin{equation}
    \begin{aligned}
        \mathbf{x}(t_{0_k} + \eta_k) = e^{\eta_k A}{\mathbf{x}(t_{0_k})}+\int _{0}^{\eta_k}e^{At_k}\mathrm{d}t_k \ {B}{\mathbf  {u}}[k].
    \end{aligned}
    \end{equation}
    
    Let us denote:

\begin{equation}
    \begin{aligned}
\tilde{A}(\eta) \coloneqq & \ e^{\eta A};\\
\tilde{B}(\eta) \coloneqq & \int_{0}^\eta \tilde{A}(t) \mathrm{d}t \ B = \int_{0}^\eta e^{t A} \mathrm{d}t \ B.
\label{eqn:A_B_tilde}
\end{aligned}
\end{equation}
\begin{definition}

Using the state samples  $ \left(\mathbf{x}[k]\right)_{k=0}^{L-1}$ by their definition at~\eqref{eqn:state_sampling} we define the \textbf{Continuous-Discrete Perturbation Equation} (CDPE) :
\begin{equation}
    \begin{aligned}
        \mathbf{x}(t_{0_k} + \eta_k) = \tilde{A}(\eta_k){\mathbf{x}[k]}+\tilde{B}(\eta_k){\mathbf  {u}}[k].
        \label{eqn:CDPE}
    \end{aligned}
    \end{equation}
\end{definition}

Setting $\eta_k \equiv \delta$, let us further denote:

\begin{equation}
    \begin{aligned}
\tilde{A} \coloneqq & \ e^{\delta A};\\
\tilde{B} \coloneqq & \int_{0}^\delta e^{t A} \mathrm{d}t \ B.
\end{aligned}
\end{equation}

Plugging this to the CDPE at~\eqref{eqn:CDPE} we have that in each interval end we have:

\begin{equation}
    \begin{aligned}
        \mathbf{x}(t_{0_k} + \delta) = \tilde{A}{\mathbf{x}[k]}+\tilde{B}{\mathbf  {u}}[k].
        \label{eqn:discrete_LTI_model_1}
    \end{aligned}
    \end{equation}

Defining $\mathbf{x}[k+1] \coloneqq \mathbf{x}\left(T_{f_k}\right)$ and since $t_{0_k} + \delta = T_{f_k}$ we have that the terminal values $\left(\mathbf{x}\left(T_{f_k}\right)\right)_{k=0}^{L-1}$ plugged into~\eqref{eqn:discrete_LTI_model_1} define a recurrence relation of the form:

\begin{equation}
\begin{aligned}
     \mathbf{x}[k+1] = \tilde{A}\mathbf{x}[k] + \tilde{B} \mathbf{u}[k].
    \label{eqn:discrete_LTI_model_2}
\end{aligned}
\end{equation}

\item The discrete equivalent of $\mathcal{S}$, denoted $\mathcal{S}_D$ is then defined by simulating each system  $\mathcal{S}_k$ according to the recurrence relation for its corresponding left-open interval interval $I_{k_l}$ such that $\mathcal{S}_D$ adheres Equation~\eqref{eqn:discrete_LTI_model_2} with the initial condition $\mathbf{x}[0]=\mathbf{x_0}$.
\end{itemize}
\end{proof}

\subsection{LTI Model Performance Index Discretization}
\label{app:performance_discretization}

As we mentioned, in performing direct design, we convert a continuous system into a discrete one. In that procedure the performance index is also to be discretized. There are several ways to go about this, so we will go about the method suggested at~\cite{Digital_Control_Dynamic_Systems}. Let us consider the infinite and finite horizon cases separately:

\subsubsection{Discretizing LTI Infinite Horizon Performance}

\begin{proposition}
Given an LTI system $\mathcal{S}$ adhering the model at~\eqref{eqn:basic_sys}, with an LQR cost function $J(t)$, let us consider the optimal control problem given in~\eqref{eqn:optimal_control_problem_formulation} with $T_f \rightarrow \infty$. Let $L \in \mathbb{N}$ and let $ \left(\mathbf{x}[k]\right)_{k=0}^{L-1}$ and $ \left(\mathbf{u}[k]\right)_{k=0}^{L-1}$ be corresponding $L$ measured state and input samples of $\mathcal{S}$. Then for all $0 \leq k \leq L-1$, by applying zero-order-hold on the input samples and first-order-hold on the state samples, a discretized cost function $J_D$ can be approximated by:
\begin{equation}
    \begin{aligned}
        J_D\Big(\mathbf{x}[k], \mathbf{u}[k], k\Big) 
        \equiv \sum_{t=k}^{L-1} \Big[  {\mathbf{x}[t]}^T Q \mathbf{x}[t] +  \mathbf{u}^T[t] R \mathbf{u}[t] \Big].
    \end{aligned}
    \end{equation}
\end{proposition}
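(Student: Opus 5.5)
The plan is to start from the continuous infinite-horizon cost~\eqref{eqn:infinte_value_integral} evaluated at the $k$'th sampling time $t_{0_k}$. Using the partition of the time axis into the sampling intervals $I_t = [t_{0_t}, T_{f_t}]$ (with the trimming time $\tilde{T}_f$ chosen so that $L$ intervals cover $[t_0,\tilde{T}_f]$), I would split the integral as
\begin{equation*}
J(t_{0_k}) = \sum_{t=k}^{L-1} \int_{t_{0_t}}^{T_{f_t}} \Big[\mathbf{x}^T(\tau)Q\mathbf{x}(\tau) + \mathbf{u}^T(\tau)R\mathbf{u}(\tau)\Big]\mathrm{d}\tau ,
\end{equation*}
where the tail beyond $\tilde{T}_f$ is dropped as the trimming approximation. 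The problem then reduces to approximating each interval integral in terms of $\mathbf{x}[t]$ and $\mathbf{u}[t]$ only.

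On each $I_t$ I would use the zero-order-hold assumption~\eqref{eqn:zoh}, namely $\mathbf{u}(\tau)=\mathbf{u}[t]$, so the input term integrates exactly to $\delta\,\mathbf{u}^T[t]R\mathbf{u}[t]$. For the state term there are two routes.

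The first is to use the CDPE~\eqref{eqn:CDPE}, $\mathbf{x}(t_{0_t}+\eta)=\tilde{A}(\eta)\mathbf{x}[t]+\tilde{B}(\eta)\mathbf{u}[t]$. Substituting it gives an exact quadratic form in $(\mathbf{x}[t],\mathbf{u}[t])$ with weights $\int_0^\delta \tilde{A}^T(\eta)Q\tilde{A}(\eta)\mathrm{d}\eta$, a cross term, and $\int_0^\delta \tilde{B}^T(\eta)Q\tilde{B}(\eta)\mathrm{d}\eta + \delta R$.

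The second route, the one needed for the stated form, is a low-order (first-order-hold) approximation of the state on the interval. Since $\tilde{A}(\eta)=I_n+O(\eta)$ and $\tilde{B}(\eta)=O(\eta)$, we get $\mathbf{x}(\tau)=\mathbf{x}[t]+O(\delta)$ on $I_t$. Hence
\begin{equation*}
\int_{I_t}\mathbf{x}^TQ\mathbf{x}\,\mathrm{d}\tau=\delta\,\mathbf{x}^T[t]Q\mathbf{x}[t]+O(\delta^2),
\end{equation*}
and the cross term is $O(\delta^2)$ as well.

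Summing over $t$ gives $J(t_{0_k}) \approx \delta\sum_{t=k}^{L-1}[\mathbf{x}^T[t]Q\mathbf{x}[t]+\mathbf{u}^T[t]R\mathbf{u}[t]]$. The constant factor $\delta>0$ can then be dropped, or equivalently absorbed into $Q,R$, since positive scaling does not change the minimiser $\mathbf{u}^*$ and the Riccati solution merely rescales. This is consistent with the infinite-horizon DLQR cost used earlier, which carries no $\delta$. This yields the claimed $J_D$.

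The main obstacle is making \emph{approximated} precise. The $O(\delta^2)$ local errors accumulate over $L$ intervals to $O(\delta)$ times the total cost. Keeping this bounded requires that the integrand stay bounded along the trajectory, which holds for admissible (stabilising) inputs: $\mathbf{x}$ then decays, so the sum of the local errors converges. I would argue this using admissibility, that is, finiteness of $J$. The truncation at $\tilde{T}_f$ then contributes a tail that vanishes as $\tilde{T}_f\to\infty$.
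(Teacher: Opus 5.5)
Your proposal is correct and follows essentially the same route as the paper: trim the horizon at $\tilde{T}_f$, split the integral over the $L$ sampling intervals, use zero-order hold for the input and the CDPE with $\tilde{A}(\eta)\approx I_n+\eta A$ and $\tilde{B}(\eta)\approx \eta B$ to obtain $Q_D\approx\delta Q$, $N_D\approx 0$ and $R_D\approx\delta R$, then drop the common factor $\delta$ because it does not change the minimiser. Your remark on how the $O(\delta^2)$ local errors accumulate to $O(\delta)$ times the cost for admissible inputs goes slightly beyond the paper, which leaves the approximation informal.
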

\begin{proof}

    In the LTI infinite horizon case Equation~\eqref{eqn:general_cost} reduces to Equation~\ref{eqn:infinte_value_integral}. Let $\tilde{T}_f \in \mathcal{I}$ be the trimming time. We would like to consider the cost in the trimmed interval $\tilde{\mathcal{I}} = [t_0, \tilde{T}_f] $, while applying ${\mathbf  {u}}_{ZOH}(t)$ as the input. Let us denote this cost as $\tilde{J}\big( \mathbf{x}(t), \mathbf{u}(t), \tilde{T}_f\big)$ and define it by:
    \begin{equation}
    \begin{aligned}
        \tilde{J}\big( \mathbf{x}(t), \mathbf{u}(t), \tilde{T}_f\big) & \coloneqq J\big( \mathbf{x}(t), \mathbf{u}(t), t_0 \big) - J\big( \mathbf{x}(t), \mathbf{u}(t), \tilde{T}_f \big)\\
        & = \int_{t_0}^{\infty} \Big[ \mathbf{x}^T(\tau)Q\mathbf{x}(\tau) + \mathbf{u}_{ZOH}^T(\tau)R\mathbf{u}_{ZOH}(\tau) \Big]\mathrm{d}\tau \\
        - & \int_{\tilde{T}_f}^{\infty} \Big[ \mathbf{x}^T(\tau)Q\mathbf{x}(\tau) + \mathbf{u}_{ZOH}^T(\tau)R\mathbf{u}_{ZOH}(\tau) \Big] 
        \mathrm{d}\tau \\
        = & \int_{t_0}^{\tilde{T}_f} \Big[ \mathbf{x}^T(\tau)Q\mathbf{x}(\tau) + \mathbf{u}_{ZOH}^T(\tau)R\mathbf{u}_{ZOH}(\tau) \Big] \mathrm{d}\tau\\
        = & \int_{\tilde{\mathcal{I}}} \Big[ \mathbf{x}^T(\tau)Q\mathbf{x}(\tau) + \mathbf{u}_{ZOH}^T(\tau)R\mathbf{u}_{ZOH}(\tau) \Big] \mathrm{d}\tau.
        \label{eqn:J_tilde_1}
    \end{aligned}
    \end{equation}
    With the number of intervals $L \in \mathbb{N}$, let the sampling period be $\delta \equiv \frac{\tilde{T}_f - t_0}{L}$, defining a partitioning of the trimmed interval into $L$ equal intervals $\tilde{I}_k$ such that  $\bigcup_{k=0}^{L-1}\tilde{I}_k = \tilde{\mathcal{I}}$. Now we can rewrite~\eqref{eqn:J_tilde_1} as:
    \begin{equation}
    \begin{aligned}
        \tilde{J}\big( \mathbf{x}(t), \mathbf{u}(t), \tilde{T}_f\big) & = \int_{\tilde{\mathcal{I}}} \Big[ \mathbf{x}^T(\tau)Q\mathbf{x}(\tau) + \mathbf{u}_{ZOH}^T(\tau)R\mathbf{u}_{ZOH}(\tau) \Big] \mathrm{d}\tau \\
        & = \int_{\tilde{I}_0} \Big[ \mathbf{x}^T(\tau_0)Q\mathbf{x}(\tau_0) + \mathbf{u}^T_{ZOH}(\tau_0)R\mathbf{u}_{ZOH}(\tau_0) \Big] \mathrm{d}\tau_0  \\
        + & \int_{\tilde{I}_1} \Big[ \mathbf{x}^T(\tau_2)Q\mathbf{x}(\tau_1) + \mathbf{u}^T_{ZOH}(\tau_1)R\mathbf{u}_{ZOH}(\tau_1) \Big] \mathrm{d}\tau_1 +  \cdots 
        \\
          + & \int_{\tilde{I}_{L-1}} \Big[ \mathbf{x}^T(\tau_{L-1})Q\mathbf{x}(\tau_{L-1}) + \mathbf{u}^T_{ZOH}(\tau_{L-1})R\mathbf{u}_{ZOH}(\tau_{L-1}) \Big] \mathrm{d}\tau_{L-1}
        \\
         = &\sum_{k=0}^{L-1} \int_{\tilde{I}_k} \Big[ \mathbf{x}^T(\tau_k)Q\mathbf{x}(\tau_k) + \mathbf{u}_{ZOH}^T(\tau_k)R\mathbf{u}_{ZOH}(\tau_k) \Big] \mathrm{d}\tau_k\\
         = & \sum_{k=0}^{L-1} \int_{t_{0_k}}^{T_{f_k}} \Big[ \mathbf{x}^T(\tau_k)Q\mathbf{x}(\tau_k) + \mathbf{u}^T_{ZOH}(\tau_k)R\mathbf{u}_{ZOH}(\tau_k) \Big] \mathrm{d}\tau_k .
    \end{aligned}
    \end{equation}
   Let us now use $L$ changes of variables $\eta_k \coloneqq \tau_k - t_{0_k}$. Notice since $\tau_k \in [t_{{0_k}}, T_{{f_k}}] = [t_{{0_k}}, t_{{0_k}} + \delta]$, we have: $\eta_k \in [0, \delta]$. Plugging in, we have:
   \begin{equation}
    \begin{aligned}
        \tilde{J}\big( \mathbf{x}(t), \mathbf{u}(t), \tilde{T}_f\big) 
        & = \sum_{k=0}^{L-1} \int_{0}^{\delta} \Big[ \mathbf{x}^T(\eta_k + t_{0_k})Q\mathbf{x}(\eta_k + t_{0_k}) \\
        & + \mathbf{u}_{ZOH}^T(\eta_k + t_{0_k})R\mathbf{u}_{ZOH}(\eta_k + t_{0_k}) \Big] \mathrm{d}\eta_k .
    \end{aligned}
    \end{equation}
    Since $\eta_k \in [0, \delta]$ we can use the ZOH condition at~\eqref{eqn:zoh} and the CDPE equation at~\eqref{eqn:CDPE} to give:
    \begin{equation}
    \begin{aligned}
         \tilde{J}\big( \mathbf{x}(t), \mathbf{u}(t), \tilde{T}_f\big) & \\
          = \sum_{k=0}^{L-1} \int_{0}^{\delta} \Big[& \left(\tilde{A}(\eta_k){\mathbf{x}[k]}+\tilde{B}(\eta_k){\mathbf  {u}}[k]\right)^TQ\left(\tilde{A}(\eta_k){\mathbf{x}[k]}+\tilde{B}(\eta_k){\mathbf  {u}}[k]\right)
        \\
        & + \mathbf{u}^T[k]R\mathbf{u}[k] \Big] \mathrm{d}\eta_k  \\
          =\sum_{k=0}^{L-1} \int_{0}^{\delta} \Big[& \left({\mathbf{x}^T[k]}\tilde{A}^T(\eta_k)+{\mathbf  {u}}^T[k]\tilde{B}^T(\eta_k)\right)Q\left(\tilde{A}(\eta_k){\mathbf{x}[k]}+\tilde{B}(\eta_k){\mathbf  {u}}[k]\right)
        \\
        & + \mathbf{u}^T[k]R\mathbf{u}[k] \Big] \mathrm{d}\eta_k \\
        = \sum_{k=0}^{L-1} \int_{0}^{\delta} \Big[& {\mathbf{x}^T[k]}\tilde{A}^T(\eta_k)Q\tilde{A}(\eta_k)\mathbf{x}[k]+{\mathbf{x}[k]}^T\tilde{A}^T(\eta_k)Q\tilde{B}(\eta_k){\mathbf  {u}}[k] \\
         + {\mathbf  {u}}^T[k]&\tilde{B}^T(\eta_k)Q\tilde{A}(\eta_k){\mathbf{x}[k]}
         + \mathbf{u}^T[k]\left(\tilde{B}^T(\eta)Q\tilde{B}(\eta_k) + R \right)\mathbf{u}[k] \Big] \mathrm{d}\eta_k \\
         = \sum_{k=0}^{L-1} \Big[ {\mathbf{x}^T[k]}&\int_{0}^{\delta}\tilde{A}^T(\eta_k)Q\tilde{A}(\eta_k)\mathrm{d}\eta_k \  \mathbf{x}[k]+{\mathbf{x}[k]}^T\int_{0}^{\delta}\tilde{A}^T(\eta_k)Q\tilde{B}(\eta_k)\mathrm{d}\eta_k \ {\mathbf  {u}}[k] \\
         + {\mathbf  {u}}^T[k]\int_{0}^{\delta}\tilde{B}^T&(\eta_k)Q\tilde{A}(\eta_k)\mathrm{d}\eta_k \ {\mathbf{x}[k]}
         + \mathbf{u}^T[k]\int_{0}^{\delta}\left(\tilde{B}^T(\eta_k)Q\tilde{B}(\eta_k) + R \right)\mathrm{d}\eta_k \ \mathbf{u}[k] \Big] \\
         = \sum_{k=0}^{L-1} \Big[ {\mathbf{x}^T[k]}&\int_{0}^{\delta}\tilde{A}^T(\eta_k)Q\tilde{A}(\eta_k)\mathrm{d}\eta_k \  \mathbf{x}[k]+2{\mathbf{x}^T[k]}\int_{0}^{\delta}\tilde{A}^T(\eta_k)Q\tilde{B}(\eta_k)\mathrm{d}\eta_k \ {\mathbf  {u}}[k] \\
         & + \mathbf{u}^T[k]\int_{0}^{\delta}\left(\tilde{B}^T(\eta_k)Q\tilde{B}(\eta_k) + R \right)\mathrm{d}\eta_k \ \mathbf{u}[k] \Big].
    \label{eqn:J_tilde_2}
    \end{aligned}
    \end{equation}
    
    Let us denote:
    
    \begin{equation}
    \begin{aligned}
        Q_D \coloneqq & \int_{0}^{\delta}\tilde{A}^T(\eta)Q\tilde{A}(\eta)\mathrm{d}\eta; \\
        N_D \coloneqq & \int_{0}^{\delta}\tilde{A}^T(\eta)Q\tilde{B}(\eta)\mathrm{d}\eta; \\
        R_D \coloneqq & \int_{0}^{\delta}\left(\tilde{B}^T(\eta)Q\tilde{B}(\eta) + R \right)\mathrm{d}\eta.
    \label{eqn:Q_N_R}
    \end{aligned}
    \end{equation}
    Plugging in~\eqref{eqn:J_tilde_2}, we have:
    \begin{equation}
    \begin{aligned}
         \tilde{J}\big( \mathbf{x}(t), \mathbf{u}(t), \tilde{T}_f\big) = \sum_{k=0}^{L-1} \Big[ {\mathbf{x}[k]}^T Q_D  \mathbf{x}[k]+2{\mathbf{x}[k]}^TN_D {\mathbf  {u}}[k] + \mathbf{u}^T[k] R_D \mathbf{u}[k] \Big].
    \end{aligned}
    \end{equation}
    So we transformed (a version of) the continuous cost function to account for the discrete state and input terms. We can then generalize this by considering this sum not necessarily from 0, but for any $0 \leq k \leq L -1$  onwards, to define the discretized cost function:
    \begin{equation}
    \begin{aligned}
        J_D\Big(\mathbf{x}[k], \mathbf{u}[k], k\Big) \equiv \sum_{t=k}^{L-1} \Big[ {\mathbf{x}[t]}^T Q_D  \mathbf{x}[t]+2{\mathbf{x}[t]}^TN_D {\mathbf  {u}}[t] + \mathbf{u}^T[t] R_D \mathbf{u}[t] \Big].
        \label{eqn:discretized_cost}
    \end{aligned}
    \end{equation}
    Let us denote $J_D[k]$ for conciseness.
    \subsubsection{State-Input Coupling}
    With the result we got at~\eqref{eqn:discretized_cost}, we see that a novel complexity has risen in the form of a cost of the coupling between the state and input vectors. There are numerous ways to handle this, though as since we have limited the scope of this work to not consider state-input coupling, we will strive to circumvent this matter by use of an appropriate approximation, covered by the following Lemma:
    \begin{lemma}
    Given an LTI system $\mathcal{S}$ adhering the model at~\eqref{eqn:basic_sys}, with an LQR cost function $J(t)$, let us consider the optimal control problem given in~\eqref{eqn:optimal_control_problem_formulation} with $T_f \rightarrow \infty$. Given $L \in \mathbb{N}$, the sampling period $\delta \in (0,1)$ and corresponding $L$ measured state and input samples
$ \left(\mathbf{x}[k]\right)_{k=0}^{L-1}$ and $ \left(\mathbf{u}[k]\right)_{k=0}^{L-1}$ of $\mathcal{S}$ that yield the discretized cost function $J_D[k]$ given at~\eqref{eqn:discretized_cost}. Then for all $0 \leq k \leq L-1$, by applying first-order-hold on the state samples, $J_D[k]$ can be approximated by:
\begin{equation}
    \begin{aligned}
        J_D\Big(\mathbf{x}[k], \mathbf{u}[k], k\Big) 
        \approx \delta \sum_{t=k}^{L-1} \Big[  {\mathbf{x}[t]}^T Q \mathbf{x}[t] +  \mathbf{u}^T[t] R \mathbf{u}[t] \Big],
    \end{aligned}
    \end{equation}
    i.e., the input-state cross-term can be neglected and when multiplying by $\delta$.
    \end{lemma}
    \begin{proof}
    The matrix $\tilde{A}(\eta)$ is defined at~\eqref{eqn:A_B_tilde} using the matrix exponential of $A$, times some constant $\eta \in (0, \delta]$. This can be written by terms of its power series representation:
    \begin{equation}
        \tilde{A}(\eta) = e^{\eta A} = \sum_{p = 0}^\infty \frac{1}{p!} \left(\eta A\right)^p = \frac{1}{0!} \left(\eta A\right)^0 + \frac{1}{1!} \left(\eta A\right)^1 + \frac{1}{2!} \left(\eta A\right)^2 + \cdots,
    \end{equation}
    where $A^0 = I_{n}$.
    This is where the approximation comes in: we now assume we take a considerably small sampling period, i.e., we assume $\delta \ll 1$. This means $\eta \ll 1$. Taking a linear approximation, we then consider $\eta ^ 2 \approx 0$. Thus we have:
    \begin{equation}
        \tilde{A}(\eta)  = I_{n} + \eta A + \frac{1}{2} \eta^2 A^2 + \cdots  = I_{n} + \eta A + O(\eta ^ 2) \approx  I_{n} + \eta A,
    \end{equation}
    which is equivalent to assuming a \textbf{first-order-hold} (FOH) for the state $\mathbf{x}(t)$, i.e. in discretizing the performance index, we assume the sampling period is so small that the change in state in between each interval is linear. Plugging this to the definition of $\tilde{B}(\eta)$ at~\eqref{eqn:A_B_tilde} we have:
    \begin{equation}
    \begin{aligned}
    \tilde{B}(\eta) = & \int_{0}^\eta e^{t A} \mathrm{d}t \approx \int_{0}^\eta \left( I_{n}  + t A \right) \mathrm{d}t \ B = \int_{0}^\eta  \mathrm{d}t B + \int_{0}^\eta t \mathrm{d}t AB \\
    = & \eta B + \frac{\eta ^2 }{2} AB  = \eta B + O(\eta ^2) \approx \eta B ;
    \end{aligned}
    \end{equation}
    since $\eta \in (0, \delta]$.
    Using the same approximation while plugging  $\tilde{A}(\eta)$ and $\tilde{B}(\eta)$ to~\eqref{eqn:Q_N_R} we have:
    \begin{equation}
    \begin{aligned}
        Q_D = & \int_{0}^{\delta}\tilde{A}^T(\eta)Q\tilde{A}(\eta)\mathrm{d}\eta \approx \int_{0}^{\delta}\left( I_{n}  + \eta A^T \right)Q\left( I_{n}  + \eta A \right)\mathrm{d}\eta \\
        = &  \int_{0}^{\delta}  \mathrm{d}\eta \ Q +  \int_{0}^{\delta} \eta \mathrm{d}\eta \ QA + \int_{0}^{\delta} \eta \mathrm{d}\eta \ A^T Q + \int_{0}^{\delta} \eta^2 \mathrm{d}\eta \ A^T Q A \approx \delta Q; \\
        N_D \coloneqq & \int_{0}^{\delta}\tilde{A}^T(\eta)Q\tilde{B}(\eta)\mathrm{d}\eta \approx \int_{0}^{\delta} \left( I_{n}  + \eta \tilde{A}^T \right)  \eta  \mathrm{d}\eta \ QB \approx 0_{n \times m}; \\
        R_D \coloneqq & \int_{0}^{\delta}\left(\tilde{B}^T(\eta)Q\tilde{B}(\eta) + R \right)\mathrm{d}\eta \approx \int_{0}^{\delta}\left(\eta ^ 2 B^T Q B + R \right)\mathrm{d}\eta \approx \delta R.
        \label{eqn:approx_Q_N_D}
    \end{aligned}
    \end{equation}
    So we see that applying FOH for the state causes the state-input coupling to vanish.
    Plugging the values from~\eqref{eqn:approx_Q_N_D} to the discretized cost at~\eqref{eqn:discretized_cost}, we have:
    
    \begin{equation}
    \begin{aligned}
        J_D\Big(\mathbf{x}[k], \mathbf{u}[k], k\Big) & \approx \sum_{t=k}^{L-1} \Big[ \delta {\mathbf{x}[t]}^T Q \mathbf{x}[t] + \delta \mathbf{u}^T[t] R \mathbf{u}[t] \Big]\\
        & = \delta \sum_{t=k}^{L-1} \Big[  {\mathbf{x}[t]}^T Q \mathbf{x}[t] +  \mathbf{u}^T[t] R \mathbf{u}[t] \Big].
    \end{aligned}
    \end{equation}
\end{proof}    
    \subsubsection{Approximated LTI LQR Discrete Cost}
     We got that the discretized cost is just the sum of the original costs evaluated at the measurement points, times the sampling period length. Since we are in the infinite horizon case, we do not an additional term for the final state, so this means we are just multiplying the discrete cost values by the same fixed constant. Since we are trying to find the minimum of this function, multiplying it by a factor will just cause the minimum to scale by that same factor, so it is redundant. Thus we define:
    \begin{equation}
    \begin{aligned}
        J_D\Big(\mathbf{x}[k], \mathbf{u}[k], k\Big) 
        \equiv \sum_{t=k}^{L-1} \Big[  {\mathbf{x}[t]}^T Q \mathbf{x}[t] +  \mathbf{u}^T[t] R \mathbf{u}[t] \Big].
    \end{aligned}
    \end{equation}
\end{proof}
\subsubsection{Discretizing LTI LQR Finite Horizon Performance}
\begin{proposition}
Given an LTI system $\mathcal{S}$ adhering the model at~\eqref{eqn:basic_sys}, with an LQR cost function $J(t)$, let us consider the optimal control problem given in~\eqref{eqn:optimal_control_problem_formulation} with $T_f < \infty$. Let $L \in \mathbb{N}$ and let $ \left(\mathbf{x}[k]\right)_{k=0}^{L-1}$ and $ \left(\mathbf{u}[k]\right)_{k=0}^{L-1}$ be corresponding $L$ measured state and input samples of $\mathcal{S}$. Then for all $0 \leq k \leq L-1$, by approximating the input by zero-order-hold and state as first-order-hold, a discretized cost function $J_D$ can be approximated by:
\begin{equation}
    \begin{aligned}
        J_D\Big(\mathbf{x}[k], \mathbf{u}[k], k\Big) 
        \equiv \delta \sum_{t=k}^{L-1} \Big[  {\mathbf{x}[t]}^T Q \mathbf{x}[t] +  \mathbf{u}^T[t] R \mathbf{u}[t] \Big] + \mathbf{x}^T(T_f)Q_f\mathbf{x}(T_f).
    \end{aligned}
    \end{equation}
\end{proposition}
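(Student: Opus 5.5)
The plan is to mirror the infinite-horizon discretization proposition verbatim, treating the running integral and the terminal penalty separately. First, specialise the cost \eqref{eqn:general_cost} to the finite-horizon LQR cost \eqref{eqn:finite_value_integral}. This splits $J$ into the integral over $\mathcal{I}=[t_0,T_f]$ plus the terminal term $\mathbf{x}^T(T_f)Q_f\mathbf{x}(T_f)$. No trimming time $\tilde{T}_f$ is needed, since the interval is already finite. The sampling period is $\delta=(T_f-t_0)/L$, and the partition $(I_k)_{k=0}^{L-1}$ from the time interval discretization satisfies $\bigcup_k I_k=\mathcal{I}$, as shown in the remark following that definition.

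Next, apply the zero-order-hold input $\mathbf{u}_{ZOH}$ from \eqref{eqn:zoh} and split the integral into a sum over the subintervals $I_k$. On each $I_k$, change variables to $\eta_k=\tau_k-t_{0_k}\in[0,\delta]$. The CDPE \eqref{eqn:CDPE} then expresses $\mathbf{x}(t_{0_k}+\eta_k)=\tilde{A}(\eta_k)\mathbf{x}[k]+\tilde{B}(\eta_k)\mathbf{u}[k]$. Expanding the quadratic form reproduces exactly the computation leading to \eqref{eqn:J_tilde_2} and \eqref{eqn:Q_N_R}. The running part is therefore $\sum_{k}\bigl[\mathbf{x}[k]^TQ_D\mathbf{x}[k]+2\mathbf{x}[k]^TN_D\mathbf{u}[k]+\mathbf{u}^T[k]R_D\mathbf{u}[k]\bigr]$. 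Summing from an arbitrary $k$ rather than from $0$ gives the cost-to-go form, as in \eqref{eqn:discretized_cost}.

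Then invoke the state-input coupling lemma proved inside the infinite-horizon proposition. Its argument depends only on the expressions \eqref{eqn:Q_N_R} and on $\delta\ll 1$, not on the horizon. Under the first-order-hold approximation $\tilde{A}(\eta)\approx I_n+\eta A$ and $\tilde{B}(\eta)\approx\eta B$, it gives $Q_D\approx\delta Q$, $N_D\approx 0$ and $R_D\approx\delta R$, as in \eqref{eqn:approx_Q_N_D}. The running part thus becomes $\delta\sum_{t=k}^{L-1}[\mathbf{x}[t]^TQ\mathbf{x}[t]+\mathbf{u}^T[t]R\mathbf{u}[t]]$.

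Finally, the terminal term is a pointwise evaluation, so discretization leaves it untouched. It equals $\mathbf{x}^T(T_f)Q_f\mathbf{x}(T_f)$, which can also be written as $\mathbf{x}[L]^TQ_f\mathbf{x}[L]$, because $\mathbf{x}[L]=\mathbf{x}(T_{f_{L-1}})=\mathbf{x}(T_f)$ by the recurrence \eqref{eqn:discrete_LTI_model_2}.

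The main obstacle is the step where the infinite-horizon proof dropped the factor $\delta$. There, every term was multiplied by $\delta$, so rescaling the cost left the minimizer unchanged. Here the terminal penalty carries no factor $\delta$, so the factor cannot be removed without also rescaling $Q_f$ to $Q_f/\delta$. I would handle this by keeping $\delta$ explicitly in front of the sum, exactly as the statement does, and making no rescaling argument. This is also consistent with the $\delta Q$, $\delta R$ weights used in the DDRE \eqref{eqn:DDRE}. The remaining care point is to state clearly that the FOH approximation error is $O(\delta^2)$ per interval. This keeps the accumulated error over $L$ intervals of order $O(\delta)$, which justifies the approximation on a finite horizon.
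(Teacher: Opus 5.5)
Your proposal is correct and follows essentially the same route as the paper. Like the paper, you separate out the terminal penalty, reuse the ZOH/CDPE computation and the approximations $Q_D\approx\delta Q$, $N_D\approx 0$, $R_D\approx\delta R$ from the infinite-horizon case, and keep the factor $\delta$ because the unscaled terminal term rules out the rescaling argument. Your two additions are sound but not made explicit in the paper: writing the terminal term as $\mathbf{x}[L]^TQ_f\mathbf{x}[L]$, and the $O(\delta)$ bound on the accumulated approximation error.
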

\begin{proof}

In the LTI finite horizon case Equation~\eqref{eqn:general_cost} reduces to Equation~\ref{eqn:finite_value_integral}. We evaluate the integral in the already finite interval $\mathcal{I}  = [t_0, T_f]$ and given $L \in \mathbb{N}$, define $\delta \equiv \frac{T_f - t_0}{L}$. Notice now we have an additional term for the cost of the final state, in the form of $\mathbf{x}^T(T_f)Q_f\mathbf{x}(T_f)$. Thus in this case we define:

\begin{equation}
    \begin{aligned}
        \tilde{J}\big( \mathbf{x}(t), \mathbf{u}(t), T_f\big) \coloneqq & J\big( \mathbf{x}(t), \mathbf{u}(t), t_0 \big) - \mathbf{x}^T(T_f)Q_f\mathbf{x}(T_f)\\
        = & \int_{t_0}^{T_f} \Big[ \mathbf{x}^T(\tau)Q\mathbf{x}(\tau) + \mathbf{u}^T(\tau)R\mathbf{u}_{ZOH}(\tau) \Big] \mathrm{d}\tau\\
        = & \int_{\mathcal{I}} \Big[ \mathbf{x}^T(\tau)Q\mathbf{x}(\tau) + \mathbf{u}^T(\tau)R\mathbf{u}_{ZOH}(\tau) \Big] \mathrm{d}\tau.
      \end{aligned}
\end{equation}

We now have the same expression as was at~\eqref{eqn:J_tilde_1}, but for the interval $\mathcal{I}$ instead of $\tilde{\mathcal{I}}$. We can apply the same reasoning as we did for the infinite horizon case to obtain:

 \begin{equation}
    \begin{aligned}
        \tilde{J}\big( \mathbf{x}(t), \mathbf{u}(t), T_f\big) & \approx  \delta \sum_{k=0}^{L-1} \Big[  {\mathbf{x}[k]}^T Q \mathbf{x}[k] +  \mathbf{u}^T[k] R \mathbf{u}[k] \Big].
    \end{aligned}
    \end{equation}

This time, the multiplication by $\delta$ cannot be neglected since we have an additional term in the original cost function the integration did not apply on. So in this case we define:

\begin{equation}
    \begin{aligned}
        J_D\Big(\mathbf{x}[k], \mathbf{u}[k], k\Big) \equiv \delta \sum_{t=k}^{L-1} \Big[  {\mathbf{x}[t]}^T Q \mathbf{x}[t] +  \mathbf{u}^T[t] R \mathbf{u}[t] \Big] + \mathbf{x}^T(T_f)Q_f\mathbf{x}(T_f)
    \end{aligned}
    \end{equation}
    
    \end{proof}
    
\section{\textit{PyDiffGame} Package}

\subsection{Main \texttt{PyDiffGame} Class}
\label{pyth:main}

The following is the constructor to the main \pyth{PyDiffGame} class:

\begin{minted}[linenos,tabsize=2,breaklines]{python}
from __future__ import annotations

import time
import numpy as np
from numpy.linalg import eigvals, norm, LinAlgError
import matplotlib.pyplot as plt
from scipy.linalg import solve_continuous_are, solve_discrete_are
import warnings
from typing import Callable, Union

from abc import ABC, abstractmethod

class PyDiffGame(ABC):
    """
    Differential game abstract base class
    """

    # class fields
    __T_f_default: int = 2
    _L_default: int = 1000
    _epsilon_default: float = 10 ** (-7)
    _eta_default: int = 5

    def __init__(self,
                 A: np.array,
                 B: Union[list[np.array], np.array],
                 Q: Union[list[np.array], np.array],
                 R: Union[list[np.array], np.array],
                 x_0: np.array = None,
                 x_T: np.array = None,
                 T_f: float = None,
                 P_f: list[np.array] = None,
                 show_legend: bool = True,
                 epsilon: float = _epsilon_default,
                 L: int = _L_default,
                 eta: int = _eta_default,
                 force_finite_horizon: bool = False,
                 debug: bool = False
                 ):
\end{minted}

And these are the noteable class methods:

\begin{minted}[linenos,tabsize=2,breaklines]{python}
    def __verify_input(self):
        """
        Input checking method

        Raises
        ------
        Case-specific errors
        """

    def _converge_DREs_to_AREs(self):
        """
        Solves the game as backwards convergence of the differential
        finite-horizon game for repeated consecutive steps until the matrix norm converges
        """

        

    def _post_convergence(method: Callable) -> Callable:
        """
        A decorator static-method to apply on methods that need only be called after convergence
        """


    @_post_convergence
    def _plot(self, t: np.array, mat: np.array, is_P: bool, title: str = None):
        """
        Displays plots for the state variables with respect to time and the convergence of the values of P

        Parameters
        ----------
        t: numpy array of len(data_points)
            The time axis information to plot
        mat: numpy array
            The y-axis information to plot
        is_P: boolean
            Indicates whether to accommodate plotting for all the values in P_i or just for x
        title: str, optional
            The plot title to display
        """


    def __plot_variables(self, mat: np.array):
        """
        Displays plots for the state variables with respect to time

        Parameters
        ----------

        mat: numpy array
            The y-axis information to plot
        """


    def _plot_state_space(self):
        """
        Plots the state vector variables wth respect to time
        """


    def _plot_Y(self, C: np.array):
        """
        Plots the output vector variables wth respect to time

        Parameters
        ----------

        C: numpy array
            The output coefficients with respect to state
        """

    def __get_are_P_f(self) -> list[np.array]:
        """
        Solves the uncoupled set of algebraic Riccati equations to use as initial guesses for fsolve and odeint

        Returns
        ----------
        P_f: numpy array of numpy 2-d arrays, of len(N), of shape(n, n), solution of scipy's solve_are
            Final condition for the Riccati equation matrix
        """


    @abstractmethod
    def _update_K_from_last_state(self, *args):
        """
        Updates the controllers K after forward propagation of the state through time
        """

        pass

    @abstractmethod
    def _get_K_i(self, *args) -> np.array:
        """
        Returns the i'th element of the controllers K
        """

        pass

    @abstractmethod
    def _get_P_f_i(self, i: int) -> np.array:
        """
        Returns the i'th element of the final condition matrices P_f

        Parameters
        ----------
        i: int
            The required index
        """

        pass

    def _update_A_cl_from_last_state(self, k: int = None):
        """
        Updates the closed-loop dynamics with the updated controllers based on the relation:
        A_cl = A - sum_{i=1}^N B_i K_i

        Parameters
        ----------
        k: int, optional
            The current k'th sample index, in case the controller is time-dependant.
            In this case the update rule is:
            A_cl[k] = A - sum_{i=1}^N B_i K_i[k]
        """


    @abstractmethod
    def _update_Ps_from_last_state(self, *args):
        """
        Updates the matrices {P_i}_{i=1}^N after forward propagation of the state through time
        """

        pass

    @abstractmethod
    def is_A_cl_stable(self, *args) -> bool:
        """
        Tests Lyapunov stability of the closed loop

        Returns
        ----------
        is_stable: boolean
            Indicates whether the system has Lyapunov stability or not
        """

        pass

    @abstractmethod
    def _solve_finite_horizon(self):
        """
        Solves for the finite horizon case
        """

        pass

    @_post_convergence
    def _plot_finite_horizon_convergence(self):
        """
        Plots the convergence of the values for the matrices P_i
        """
        
    def _solve_and_plot_finite_horizon(self):
        """
        Solves for the finite horizon case and plots the convergence of the values for the matrices P_i
        """

    @abstractmethod
    def _solve_infinite_horizon(self):
        """
        Solves for the infinite horizon case using the finite horizon case
        """

        pass

    @abstractmethod
    @_post_convergence
    def _solve_state_space(self):
        """
        Propagates the game through time and solves for it
        """

        pass

    def solve_game_and_simulate_state_space(self):
        """
        Propagates the game through time, solves for it and plots the state with respect to time
        """


    @_post_convergence
    def get_costs(self) -> np.array:
        """
        Calculates the cost function value using the formula:
        J_i = int_{t=0}^T_f [ x(t)^T ( Q_i + sum_{j=1}^N  K_j(t)^T R_{ij} K_j(t) ) x(t) ] dt
        """

\end{minted}

\subsection{\texttt{ContinuousPyDiffGame} Class}

\begin{minted}[linenos,tabsize=2,breaklines]{python}
import numpy as np
from scipy.integrate import odeint
from numpy.linalg import eigvals, inv
from typing import Union

from PyDiffGame.PyDiffGame import PyDiffGame


class ContinuousPyDiffGame(PyDiffGame):
    """
    Continuous differential game base class


    Considers the system:
    dx(t)/dt = A x(t) + sum_{j=1}^N B_j v_j(t)
    """

    def __init__(self,
                 A: np.array,
                 B: Union[list[np.array], np.array],
                 Q: Union[list[np.array], np.array],
                 R: Union[list[np.array], np.array],
                 x_0: np.array = None,
                 x_T: np.array = None,
                 T_f: float = None,
                 P_f: list[np.array] = None,
                 show_legend: bool = True,
                 epsilon: float = PyDiffGame._epsilon_default,
                 L: int = PyDiffGame._L_default,
                 eta: int = PyDiffGame._eta_default,
                 force_finite_horizon: bool = False,
                 debug: bool = False
                 ):


        def __solve_N_coupled_diff_riccati(_: float, P_t: np.array) -> np.array:
            """
            odeint Coupled Differential Matrix Riccati Equations Solver function

            Parameters
            ----------
            _: float
                Integration point in time
            P_t: numpy array of numpy 2-d arrays, of len(N), of shape(n, n)
                Current integrated solution

            Returns
            ----------
            dP_tdt: numpy array of numpy 2-d arrays, of len(N), of shape(n, n)
                Current calculated value for the time-derivative of the matrices P_t
            """


    def _update_K_from_last_state(self, t: int):
        """
        After the matrices P_i are obtained, this method updates the controllers at time t by the rule:
        K_i(t) = R^{-1}_ii B^T_i P_i(t)

        Parameters
        ----------
        t: int
            Current point in time
        """

    def _update_Ps_from_last_state(self):
        """
        With P_f as the terminal condition, evaluates the matrices P_i by backwards-solving this set of
        coupled time-continuous Riccati differential equations:

        dP_idt = - A^T P_i(t) - P_i(t) A - Q_i + P_i(t) sum_{j=1}^N B_j R^{-1}_{jj} B^T_j P_j(t) +
                    [sum_{j=1, j!=i}^N P_j(t) B_j R^{-1}_{jj} B^T_j] P_i(t)
        """

    def _solve_finite_horizon(self):
        """
        Considers the following set of finite-horizon cost functions:
        J_i = x(T_f)^T F_i(T_f) x(T_f) + int_{t=0}^T_f [x(t)^T Q_i x(t) + sum_{j=1}^N u_j(t)^T R_{ij} u_j(t)] dt

        In the continuous-time case, the matrices P_i can be solved for, unrelated to the controllers K_i.
        Then when simulating the state space progression, the controllers K_i can be computed as functions of P_i
        for each time interval
        """


    def _solve_infinite_horizon(self):
        """
        Considers the following finite-horizon cost functions:
        J_i = int_{t=0}^infty [x(t)^T Q_i x(t) + sum_{j=1}^N u_j(t)^T R_{ij} u_j(t)] dt
        """

    def is_A_cl_stable(self) -> bool:
        """
        Tests Lyapunov stability of the closed loop:
        A continuous dynamic system governed by the matrix A_cl has Lyapunov stability iff:
        Re(eig) <= 0 forall eigenvalues of A_cl and there is at most one real eigenvalue at the origin
        """


    @PyDiffGame._post_convergence
    def _solve_state_space(self):
        """
        Propagates the game forward through time and solves for it by solving the continuous differential equation:
        dx(t)/dt = A_cl(t) x(t)
        In each step, the controllers K_i are calculated with the values of P_i evaluated beforehand.
        """
        
        def state_diff_eqn(x_t: np.array, _: float) -> np.array:
            """
            Scipy's odeint State Variables Solver function

            Parameters
            ----------
            x_t: numpy 1-d array of shape(n)
                Current integrated state variables
            _: float
                Current time

            Returns
            ----------
            dx_t_dt: numpy 1-d array of shape(n)
                Current calculated value for the time-derivative of the state variable vector x_t
            """
\end{minted}

\subsection{\texttt{DiscretePyDiffGame} Class}

\begin{minted}[linenos,tabsize=2,breaklines]{python}
import numpy as np
from scipy.optimize import fsolve
import quadpy
from numpy.linalg import eigvals, inv
from typing import Union

from PyDiffGame.PyDiffGame import PyDiffGame


class DiscretePyDiffGame(PyDiffGame):
    """
    Discrete differential game base class


    Considers the system:
    x[k+1] = A_tilda x[k] + sum_{j=1}^N B_j_tilda v_j[k]

    where A_tilda and each B_j_tilda are in discrete form, meaning they correlate to a discrete system,
    which, at the sampling points, is assumed to be equal to some equivalent continuous system of the form:
    dx(t)/dt = A x(t) + sum_{j=1}^N B_j v_j(t)

    Parameters
    ----------
    is_input_discrete: boolean, optional, default = False
        Indicates whether the input matrices A, B, Q, R are in discrete form or whether they need to be discretized
    """

    def __init__(self,
                 A: np.array,
                 Q: list[np.array],
                 B: list[np.array],
                 R: list[np.array],
                 is_input_discrete: bool = False,
                 x_0: np.array = None,
                 x_T: np.array = None,
                 T_f: Union[float, int] = None,
                 P_f: list[np.array] = None,
                 show_legend: bool = True,
                 epsilon: float = PyDiffGame._epsilon_default,
                 L: int = PyDiffGame._L_default,
                 eta: int = PyDiffGame._eta_default,
                 force_finite_horizon: bool = False,
                 debug: bool = False
                 ):

        def __solve_for_K_k(K_k_previous: np.array, k_1: int) -> np.array:
            """
            fsolve Controllers Solver function

            Parameters
            ----------
            K_k_previous: numpy array
                The previous estimation for the controllers at the k'th sample point
            k_1: int
                The k+1 index

            Returns
            ----------
            dP_tdt: list of numpy 2-d arrays, of len(N), of shape(n, n)
                Current calculated value for the time-derivative of the matrices P_t
            """


    def __discretize_game(self):
        """
        The input to the discrete game can either be given in continuous or discrete form.
        If it is not in discrete form, the model gets discretized using this method in the following manner:
        A = exp(delta_T A)
        B_i = int_{t=0}^delta_T exp(tA) dt B_i

        The performance index gets discretized in the following manner:
        Q_i = Q_i * delta_T
        R_i = Q_i / delta_T
        """


    def __simpson_integrate_Q(self):
        """
        Use Simpson's approximation for the definite integral of the state cost function term:
        x(t) ^T Q_f_i x(t) + int_{t=0}^T_f x(t) ^T Q_i x(t) ~ sum_{k=0)^K  x(t) ^T Q_i_k x(t)

        where:
        Q_i_0 = 1 / 3 * Q_i
        Q_i_1, ..., Q_i_K-1 = 4 / 3 * Q_i
        Q_i_2, ..., Q_i_K-2 = 2 / 3 * Q_i
        """
        

    def __initialize_finite_horizon(self):
        """
        Initializes the calculated parameters for the finite horizon case by the following steps:
            - The matrices P_i and controllers K_i are initialized randomly for all sample points
            - The closed-loop dynamics matrix A_cl is first set to zero for all sample points
            - The terminal conditions P_f_i are set to Q_i
            - The resulting closed-loop dynamics matrix A_cl for the last sampling point is updated
            - The state is initialized with its initial value, if given
         """


    def __get_K_i_shape(self, i: int) -> range:
        """
        Returns the i'th controller shape indices

        Parameters
        ----------
        i: int
            The desired controller index
        """

    def _update_K_from_last_state(self, k_1: int):
        """
        After initializing, in order to solve for the controllers K_i and matrices P_i,
        we start by solving for the controllers by the update rule:
        K_i[k] = [ R_ii + B_i^T P_i[k+1] B_i ] ^ {-1} B_i^T P_i[k+1] [ A - sum_{j=1, j!=i}^N B_j K_j[k] ]

        Parameters
        ----------
        k_1: int
            The current k+1 sample index
        """


    def _update_Ps_from_last_state(self, k_1: int):
        """
        Updates the matrices P_i with
        A_cl[k] = A - sum_{i=1}^N B_i K_i[k]

        Parameters
        ----------
        k_1: int
            The current k'th sample index
        """

    def is_A_cl_stable(self, k: int) -> bool:
        """
        Tests Lyapunov stability of the closed loop:
        A discrete dynamic system governed by the matrix A_cl has Lyapunov stability iff:
        |eig| <= 1 forall eigenvalues of A_cl
        """


    def _solve_finite_horizon(self):
        """
        Solves the system with the finite-horizon cost functions:
        J_i = sum_{k=1}^T_f [ x[k]^T Q_i x[k] + sum_{j=1}^N u_j[k]^T R_{ij} u_j[k] ]

        In the discrete-time case, the matrices P_i have to be solved simultaneously with the controllers K_i
        """


    def _solve_infinite_horizon(self):
        """
        Solves the system with the infinite-horizon cost functions:
        J_i = sum_{k=1}^infty [ x[k]^T Q_i x[k] + sum_{j=1}^N u_j[k]^T R_{ij} u_j[k] ]
        """


    @PyDiffGame._post_convergence
    def _solve_state_space(self):
        """
        Propagates the game through time and solves for it by solving the discrete difference equation:
        x[k+1] = A_cl[k] x[k]
        In each step, the controllers K_i are calculated with the values of P_i evaluated beforehand.
        """

\end{minted}

\subsection{\texttt{PyDiffARE} Python Implementation}

\begin{minted}[linenos,tabsize=2,breaklines]{python}
import numpy as np

def _converge_DREs_to_AREs(self):
    """
    Solves the game as backwards convergence of the differential
    finite-horizon game for repeated consecutive steps until the matrix norm converges
    """

    x_converged = False
    P_converged = False
    last_norms = []
    curr_iteration_T_f = self._T_f
    x_T = self._x_T if self._x_T is not None else np.zeros_like(self._x_0)

    while not x_converged:

        while not P_converged:
            self._backward_time = np.linspace(start=self._T_f,
                                              stop=self._T_f - self._delta,
                                              num=self._L)
            self._update_Ps_from_last_state()
            self._P_f = self._P[-1]
            last_norms += [norm(self._P_f)]

            if len(last_norms) > self.__eta:
                last_norms.pop(0)

            if len(last_norms) == self.__eta:
                P_converged = all([abs(norm_i - norm_i1) < self.__epsilon for norm_i, norm_i1
                                   in zip(last_norms, last_norms[1:])])
            self._T_f -= self._delta

        self._T_f = curr_iteration_T_f

        if self._x_0 is not None:
            curr_x_T_f_norm = self.simulate_x_T_f()
            x_converged = norm(curr_x_T_f_norm - x_T) < self.__epsilon
            curr_iteration_T_f += 1
            self._T_f = curr_iteration_T_f
            self._forward_time = np.linspace(start=0, stop=self._T_f, num=self._L)
        else:
            x_converged = True
\end{minted}






\cleardoublepage
\phantomsection
\addnumberlesstotoc{chapter}{Bibliography}
\bibliography{references.bib}
\bibliographystyle{template/myieeetr.bst}
\clearpage

\pagenumbering{gobble} 
\begin{titlepage}
    \begin{center}
        \vspace*{1cm}
        
        \includegraphics[width=0.1\textwidth]{logos/bgu.png}\\
        \selectlanguage{hebrew}
        אוניברסיטת בן-גוריון בנגב\\
        הפקולטה למדעי הטבע\\
        המחלקה למדעי המחשב
        
        \vspace{2cm}
        
        {\Large שימוש במשחקים דיפרנציאליים להרכבת הטיפול במשימות בקרה מתחרות}
    
        \vspace{1.5cm}

        יהושע שי קריחלי
        
        \vspace{1cm}
        
        חיבור לשם קבלת התואר ''מגיסטר'' בפקולטה למדעי הטבע
        
        \vspace{1cm}
        
        בהנחיית פרופ' גרא וייס, המחלקה למדעי המחשב\\
        וד"ר שי ארוגטי, המחלקה להנדסת מכונות
        
        \vfill
        חשוון תשפ''ג %
        \selectlanguage{english}
    \end{center}
\end{titlepage}
\clearpage
\selectlanguage{hebrew}
\begin{center}
    \Large
    שימוש במשחקים דיפרנציאליים להרכבת הטיפול
במשימות בקרה מתחרות

    \vspace{0.4cm}
    \large
    יהושע שי קריחלי
       
    \vspace{0.4cm}
    \large
    עבודת גמר לתואר מוסמך למדעי הטבע
      
    \vspace{0.2cm}
    \large
    אוניברסיטת בן-גוריון בנגב
    
    \vspace{0.2cm}
    \large
    
    \selectlanguage{english}
    2022
    
    \selectlanguage{hebrew}
    \vspace{0.6cm}
    \Large
    \textbf{תקציר}
\end{center}

אנו מציגים מתודולוגיית בקרת מערכות חדשנית בשיטת הפרד ומשול
ליישום של משחקים דיפרנציאליים במערכות דינאמיות בעלות סוכן יחיד ומספר רב של משימות. הגישה מבוססת על שיוך כל משימת בקרה לקלט וירטואלי שמתפקד כשחקן במשחק דיפרנציאלי לא שיתופי, בעל מספר צעדים סופי או אינסופי. המשחק מורכב מקבוצה מתואמת של שחקנים נציגים, כאשר כל אחד מנסה להשיג את האסטרטגיה הטובה ביותר עבור המטרה שלו, תוך שהוא מכיר את המדיניות האופטימלית שנבחרה על ידי שאר השחקנים. הגישה המוצגת הינה גמישה בכך שהיא משייכת פונקציית עלות וירטואלית לכל שחקן ובכך למעשה מספקת לכל שחקן פרמטרי משקול משלו עבור הקלט שלו, מצב המערכת כולה וכל שאר הקלטים הווירטואליים. על ידי הבטחת אי-שוויון נאש למשחק הזה, אנחנו מקבלים בקר משולב המספק איזון יציב בין המטרות ומאפשר למהנדס הבקרה לכוון את הפרמטרים מחדש בצורה נגישה לאורך מחזור תהליך התכנון. אנו מספקים פיתוח מתמטי מלא של השיטה המוצעת, הן עבור מערכות רציפות והן עבור מערכות בדידות בזמן, במטרה לשימוש במשימות סוכן יחיד בקנה מידה גדול, שבהן לעתים קרובות ריבוי משימות בקרה משולבות יכול לגרום לקונפליקטים ובכך לגרום לשקלול המערכת הכוללת מראש למאתגר ביותר. כדי להדגים את השימוש בגישה המוצעת, אנו מציגים חבילת פייתון הזמינה כקוד פתוח שבה ישנו שימוש באלגוריתם שפיתחנו לפתרון משוואות ריקטי אלגבריות אשר מופיעות בפיתוח המשחק הדיפרנציאלי לאורך אופק אינסופי. אנו בוחנים שתי מערכות בקרה מוכרות לגיבוש משחקים דיפרנציאליים מתאימים והבטחת פתרון; מטוטלת הפוכה על עגלה נעה ורחפן ארבע-מנועי עם בקרה היררכית, כמערכת לא ליניארית. אנו משווים את הביצועים המתקבלים עם אלה של טכניקת הבקרה האופטימלית RQL על פני מספר מטריקות בקרה ומציגים תוצאות עדיפות.
\selectlanguage{english}
\clearpage
\newpage

a

\newpage
a

\newpage
a

\newpage
a

\newpage

\end{document}